\newcounter{protocol}
\newenvironment{protocol}[1]
  {\par\addvspace{\topsep}
   \noindent
   \tabularx{\linewidth}{@{} X @{}}
    \hline \textbf{Protocol} \refstepcounter{protocol}\textbf{\theprotocol} #1 \\
    \hline}
  { \\
    \hline
   \endtabularx
   \par\addvspace{\topsep}}
\renewcommand\onecolumngrid{
\do@columngrid{one}{\@ne}%
\def\set@footnotewidth{\onecolumngrid}
\def\footnoterule{\kern-6pt\hrule width 1.5in\kern6pt}%
}
\newtheorem{definition}{Definition}
\newtheorem{theorem}{Theorem}
\newcommand{\vars}[1]{\mathcal{#1}}
\newcommand{\bigepsilon}{\mathcal{E}}
\newcommand{\leakEC}{\textup{leak}_{\textup{EC}}}
\newcommand{\expE}[1]{\mathbb{E}[#1]}
\newcommand{\PRNG}{\textup{PRNG}}
\newcommand{\TRNG}{\textup{IRNG}}
\newcommand{\RNG}{\textup{RNG}}
\newcommand{\MAC}{\textup{MAC}}
\newcommand{\tol}{\textup{tol}}
\newcommand{\AV}{\textup{AV}}
\newcommand{\BV}{\textup{BV}}
\newcommand{\CV}{\textup{CV}}
\newcommand{\SV}{\textup{SV}}
\newcommand{\SP}{\textup{SP}}
\newcommand{\SA}{\textup{SA}}
\newcommand{\rr}{\textup{r}}
\newcommand{\LB}{\textup{LB}}
\newcommand{\UB}{\textup{UB}}
\newcommand{\PA}{\textup{PA}}
\newcommand{\rob}{\textup{rob}}
\newcommand{\ds}{\textup{ds}}
\newcommand{\EC}{\textup{EC}}
\newcommand{\dett}{\textup{det}}
\newcommand{\secc}{\textup{sec}}
\newcommand{\reall}{\textup{real}}
\newcommand{\ideal}{\textup{ideal}}
\newcommand{\intt}{\textup{int}}
\newcommand{\PNR}{\textup{PNR}}
\newcommand{\trunc}{\textup{trunc}}
\newcommand{\labeltextup}[2]{#1^{\textup{#2}}}
\newcommand{\KOTP}{\labeltextup{K}{mask}}
\newcommand{\kOTP}{\labeltextup{k}{mask}}
\newcommand{\Kh}{\labeltextup{K}{h}}
\newcommand{\kh}{\labeltextup{k}{h}}
\newcommand{\rhoin}{\labeltextup{\rho}{in}}
\newcommand{\rhoout}{\labeltextup{\rho}{out}}
\newcommand{\rhoideal}{\labeltextup{\rho}{ideal}}
\newcommand{\rhoreal}{\labeltextup{\rho}{real}}
\newcommand{\alphaU}{\labeltextup{\alpha}{U}}
\newcommand{\jU}{\labeltextup{j}{U}}
\newcommand{\jOTP}{\labeltextup{j}{mask}}
\newcommand{\Psift}{\labeltextup{P}{sift}}
\newcommand{\labeltextdown}[2]{#1_{\textup{#2}}}
\newcommand{\eph}{\labeltextdown{e}{ph}}
\newcommand{\ebit}{\labeltextdown{e}{bit}}
\newcommand{\Hmin}{\labeltextdown{H}{min}}
\newcommand{\Hmax}{\labeltextdown{H}{max}}
\newcommand{\fsyn}{\labeltextdown{f}{syn}}
\newcommand{\fEC}{\labeltextdown{f}{EC}}
\newcommand{\hateph}{\labeltextdown{\hat{e}}{ph}}
\newcommand{\deltagap}{\labeltextdown{\delta}{gap}}
\newcommand{\hatebit}{\labeltextdown{\hat{e}}{bit}}
\newcommand{\hbin}{\labeltextdown{h}{bin}}
\newcommand{\hPA}{\labeltextdown{h}{PA}}
\newcommand{\FA}{\labeltextdown{F}{A}}
\newcommand{\FB}{\labeltextdown{F}{B}}
\newcommand{\FS}{\labeltextdown{F}{S}}
\newcommand{\fA}{\labeltextdown{f}{A}}
\newcommand{\fB}{\labeltextdown{f}{B}}
\newcommand{\KA}{\labeltextdown{K}{A}}
\newcommand{\KB}{\labeltextdown{K}{B}}
\newcommand{\kA}{\labeltextdown{k}{A}}
\newcommand{\kB}{\labeltextdown{k}{B}}
\newcommand{\DPE}{\labeltextdown{D}{PE}}
\newcommand{\DC}{\labeltextdown{D}{C}}
\newcommand{\tildeFA}{\labeltextdown{\tilde{F}}{A}}
\newcommand{\tildeFB}{\labeltextdown{\tilde{F}}{B}}
\newcommand{\tildeFS}{\labeltextdown{\tilde{F}}{S}}
\newcommand{\tildeDPE}{\labeltextdown{\tilde{D}}{PE}}
\newcommand{\epsrob}{\labeltextdown{\varepsilon}{rob}}
\newcommand{\epsKS}{\labeltextdown{\varepsilon}{KS}}
\newcommand{\epsMS}{\labeltextdown{\varepsilon}{MS}}
\newcommand{\epsEA}{\labeltextdown{\varepsilon}{EA}}
\newcommand{\epsEAf}{\labeltextdown{\varepsilon}{EA,f}}
\newcommand{\epsEAaf}{\labeltextdown{\varepsilon}{EA,af}}
\newcommand{\epsph}{\labeltextdown{\varepsilon}{ph}}
\newcommand{\epsbit}{\labeltextdown{\varepsilon}{bit}}
\newcommand{\epssec}{\labeltextdown{\varepsilon}{sec}}
\newcommand{\epsSP}{\labeltextdown{\varepsilon}{SP}}
\newcommand{\epsds}{\labeltextdown{\varepsilon}{ds}}
\newcommand{\epssm}{\labeltextdown{\varepsilon}{sm}}
\newcommand{\epsSO}{\labeltextdown{\varepsilon}{SO}}
\newcommand{\epsguess}{\labeltextdown{\varepsilon}{guess}}
\newcommand{\epsCV}{\labeltextdown{\varepsilon}{CV}}
\newcommand{\epsmatch}{\labeltextdown{\varepsilon}{match}}
\newcommand{\Sout}{\labeltextdown{\vars{S}}{out}}
\newcommand{\Sin}{\labeltextdown{\vars{S}}{in}}
\newcommand{\Ssec}{\labeltextdown{\vars{S}}{sec}}
\newcommand{\OmegaPE}{\labeltextdown{\Omega}{PE}}
\newcommand{\tildeOmegaPE}{\labeltextdown{\tilde{\Omega}}{PE}}
\newcommand{\pguess}{\labeltextdown{p}{guess}}
\newcommand{\ephtol}{e_{\textup{ph},\textup{tol}}}
\newcommand{\epssecint}{\varepsilon_{\textup{sec},\textup{int}}}
\newcommand{\tildethetapbetain}{\tilde{\theta}_{\beta}^{'\textup{in}}}
\newcommand{\tildethetabetain}{\tilde{\theta}_{\beta}^{\textup{in}}}
\newcommand{\ebittol}{e_{\textup{bit},\textup{tol}}}
\newcommand{\epsMACa}{\varepsilon_{\textup{MAC},1}}
\newcommand{\epsMACb}{\varepsilon_{\textup{MAC},2}}
\newcommand{\epsserfa}{\varepsilon_{\textup{serf},1}}
\newcommand{\epsserfb}{\varepsilon_{\textup{serf},2}}
\newcommand{\epssma}{\varepsilon_{\textup{sm},1}}
\newcommand{\epssmb}{\varepsilon_{\textup{sm},2}}
\newcommand{\epssmc}{\varepsilon_{\textup{sm},3}}
\newcommand{\fsyndec}{f_{\textup{syn}}^{\textup{dec}}}
\newcommand{\rhoidealint}{\rho^{\textup{ideal},\textup{int}}}
\newcommand{\rhopidealint}{\rho^{'\textup{ideal},\textup{int}}}
\newcommand{\rhoreall}[1]{\rho^{\textup{real},#1}}
\newcommand{\rhoideall}[1]{\rho^{\textup{ideal},#1}}
\newcommand{\rhoidealintt}[1]{\rho^{\textup{ideal},\textup{int},#1}}
\newcommand{\rhooutt}[1]{\rho^{\textup{out},#1}}
\newcommand{\rhopoutt}[1]{\rho^{'\textup{out},#1}}
\newcommand{\rhoinn}[1]{\rho^{\textup{in},#1}}
\newcommand{\ebitt}[1]{e_{\textup{bit},#1}}
\newcommand{\hatebitt}[1]{\hat{e}_{\textup{bit},#1}}
\newcommand{\KAA}[1]{K_{\textup{A},#1}}
\newcommand{\KBB}[1]{K_{\textup{B},#1}}
\newcommand{\FAA}[1]{F_{\textup{A},#1}}
\newcommand{\FBB}[1]{F_{\textup{B},#1}}
\newcommand{\epssecc}[1]{\varepsilon_{\textup{sec},#1}}
\begin{document}

\title{Quantum Authenticated Key Expansion with Key Recycling}

\author{Wen Yu Kon}
\email{wenyu.kon@jpmchase.com}
\affiliation{Global Technology Applied Research, JPMorganChase}
 
\author{Jefferson Chu}
\affiliation{Global Technology Applied Research, JPMorganChase}

\author{Kevin Han Yong Loh}
\affiliation{Global Technology Applied Research, JPMorganChase}

\author{Obada Alia}
\affiliation{Global Technology Applied Research, JPMorganChase}

\author{Omar Amer}
\affiliation{Global Technology Applied Research, JPMorganChase}

\author{Marco Pistoia}
\affiliation{Global Technology Applied Research, JPMorganChase}

\author{Kaushik Chakraborty}
\affiliation{Global Technology Applied Research, JPMorganChase}

\author{Charles Lim}
\affiliation{Global Technology Applied Research, JPMorganChase}

\date{\today}

\begin{abstract}
Data privacy and authentication are two main security requirements for remote access and cloud services.
While QKD has been explored to address data privacy concerns, oftentimes its use is separate from the client authentication protocol despite implicitly providing authentication.
Here, we present a \emph{quantum authentication key expansion} (QAKE) protocol that (1) integrates both authentication and key expansion within a single protocol, and (2) provides key recycling property -- allowing all authentication keys to be reused.
We analyse the security of the protocol in a QAKE framework adapted from a classical authentication key exchange (AKE) framework, providing separate security conditions for authentication and data privacy.
An experimental implementation of the protocol, with appropriate post-selection, was performed to demonstrate its feasibility.
\end{abstract}

\maketitle

\section{Introduction}
\label{sec:Introduction}
In this distributed digital era, remote access and cloud services are becoming increasingly popular, where a client device can access another device or a server, at any time, and from anywhere.
Such services typically has two main security requirements: authentication and data privacy.
To restrict access to these devices/servers for security or compliance reasons, such as in banking, healthcare, or government, it is necessary for the identity of the client to be authenticated.
Data privacy, on the other hand, ensures that the data transferred between the client and server, which may contain highly sensitive business or personal information (e.g. while looking at a remote desktop display of client details), is private from any eavesdroppers.
Fig.~\ref{fig:Application} presents a scenario where such a central server can identify its end-users via authentication and aid in the key generation between distant end-users.\\

\begin{figure*}
    \centering
    \includegraphics[width=.75\linewidth]{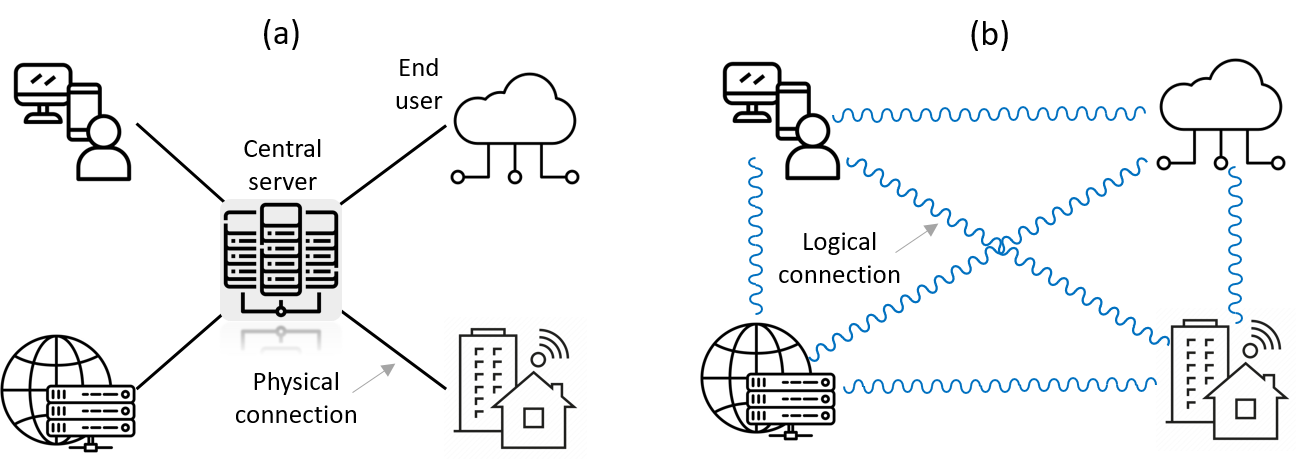}
    \caption{\label{fig:Application} Schematic for end user key generation via a central server in a star network topology. (a) End users share a direct physical connection to the central server, and can establish keys with it via the authenticated key exchange protocol. (b) By performing operations on the keys, the central server is able aid the end users establish shared keys between them without the end users sharing a direct link.} 
\end{figure*}

Many current remote access and cloud services establishes a \emph{virtual private networks} (VPNs) by means of internet protocols like IPsec.
Such protocols rely on either \textit{public key infrastructure} (PKI) or symmetric key cryptography with pre-shared secret keys (e.g. passwords) to provide the necessary security requirements.
If long-term data privacy is necessary, the use of PKI should be avoided due to known vulnerability of certain public key cryptography schemes~\cite{RSA1978} to quantum algorithms~\cite{Shor1994}.
On the other hand, the need for pre-shared secrets keys to achieve long-term data privacy is challenging to attain, since they need to be regularly refreshed after a fixed number of uses or duration~\cite{NIST_Key_Management}.\\

To address this, internet service providers like Juniper Networks \cite{Juniper2023} and Fortinet \cite{Fort2023} explored the use of \emph{quantum key distribution} (QKD) for key expansion and key refresh. 
In this process, a separate authentication protocol is utilised to authenticate the client of the remote access service before utilising QKD to expand secret keys shared between the client and server.
Therefore, we propose a \emph{quantum authenticated key expansion} (QAKE) protocol that can deliver an integrated solution, providing both client-server authentication and key expansion within a single protocol run.
This can thus improve the efficiency of the protocol by avoiding the extra round of authentication.\\

Another useful property for authentication to have in practice is key recycling~\cite{Fehr2017_SID}, where all secret keys (e.g. authentication keys) remain secure if the protocol succeeds.
In the context of cloud services, it eliminates the need for cloud service providers to update the clients' authentication key across all its cloud servers after every session, saving significant communication resources.
The QAKE proposed here is designed with key recycling in mind, using the bitstring $X$, which is expected to have high min-entropy when the protocol passes, to mask the authentication keys and allow them to be reused.\\

To analyse both the authentication and key exchange tasks in QAKE, the QKD framework may not be sufficient since (1) the proposed protocol does not perform message authentication at each communication step, and (2) some authentication steps are multi-purpose (e.g. authentication tag can be used to check for correctness as well).
Therefore, we adapt the classical authenticated key exchange (AKE) framework~\cite{Guilhem2020_AKE}, where separate security conditions for authentication and key exchange are presented.
This highlights the role of various protocol components in providing data privacy and authentication, and allows for comparison with the case of two separate protocols -- one for key exchange and one for authentication.
This also provides a step towards the integration of classical and quantum AKE systems in a single QAKE framework, allowing their interaction and security to be studied and providing insights into how they interplay in practice.\\

We demonstrated the protocol using commercial QKD devices.
The raw data collected from the QKD devices are post-processed offline through a software stack.
This includes error correction by adapting the LDPC IP cores from Xilinx~\cite{LDPC_IP}, and privacy amplification using Toeplitz hashing.
With a detection rate of about \SI{1e-4}{} and a quantum bit error rate (QBER) of about \SI{2}{\percent}, the QAKE protocol is able to provide a key rate of about \SI{1e-5}{}.

\section{QAKE Protocol}
\label{sec:Results}

The QAKE protocol focuses on incorporating the entity authentication check into the QKD process.
Taking inspiration from Ref.~\cite{Kiktenko2020_Lightweight}, we compress the authentication steps into the final two communication rounds to reduce the authentication costs.
Moreover, we design the authentication to allow for key recycling by including part of the bit string $X$ as the message, protecting the authentication key when the protocol succeeds (i.e. when the bit string used to generate the key after privacy amplification has high min-entropy).\\

The protocol begins with two parties, Alice and Bob, sharing a set of secrets $\Ssec$ containing: (1) authentication keys $\Kh_1$ and $K_2$, (2) privacy amplification seed $R$, and (3) authentication masking key $\KOTP_1=\{\KOTP_{1,j}\}_{j=1,\cdots,m}$.
Alice and Bob each has a label, $\alpha$ and $\alpha'$ respectively, that notes the secrets to utilise for the round, thereby allowing secrets to be replaced (by never using them again) when authentication fails.
They also publicly share a $\epsMACa$-almost XOR 2-universal hash function $h_1$ (authentication key $K_1^h)$, a $\epsMACb$-almost strongly 2-universal hash function $h_2$ (authentication key $K_2$) to generate the authentication tags, and a 2-universal hash function $\hPA$ for privacy amplification to generate secret keys.
Alice and Bob also agrees on an error correction protocol with an error correction efficiency of $\fEC$, which includes syndrome generation function $\fsyn$ and decoding function $\fsyndec$.
We note here that the subscript $\rr$ on any random variables represent the received variable, e.g. $\alpha_{\rr}$ is received by Bob when $\alpha$ is sent by Alice.
The protocol steps are summarised in Fig.~\ref{fig:RoundEfficientQAKE}, with additional details provided below. 
The full protocol description is provided in Appendix~\ref{app:Protocol_Details}.

\begin{figure}[!ht]
    \centering
    \includegraphics[width=0.45\textwidth]{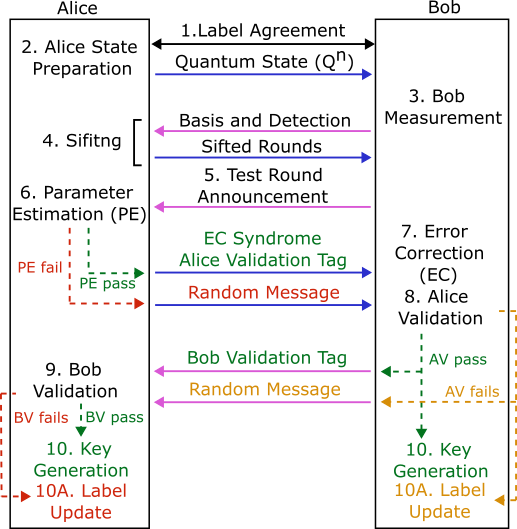}
    \caption{Summary of the QAKE protocol proposed. The protocol begins with both parties establishing the label to use, followed by Alice sending quantum states to Bob. Bob announces his basis choice and detected round and Alice response with the sifted rounds. Bob announces the test round results, for which Alice can use for parameters estimation. If this passes, error correction and validation of Alice's identity would occur followed by validation of Bob's identity. Finally, the two parties independently decide whether to generate secret keys or update their label.}
    \label{fig:RoundEfficientQAKE}
\end{figure}

\onecolumngrid

\begin{protocol}{Quantum Authenticated Key Exchange}
\textit{Goal.} Alice and Bob authenticates one another, and performs key exchange.
\begin{enumerate}
    \item \textbf{Label Agreement}: Alice and Bob exchange information to decide on labels $\beta$ and $\beta'$ respectively.
    \item \textbf{Alice State Preparation}: Alice prepares $n$ phase-randomised coherent BB84 states $\left\{\rho_{Q_i}^{\theta_i,x_i,\mu_{v_i}}\right\}_{i\in[1,n]}$, with basis $\theta_i$, bit value $x_i$, and intensity $\mu_{v_i}$.
    \item \textbf{Bob Measurement}: Alice sends $Q^n$ to Bob, who measures subsystems $Q_i$ using a randomly chosen basis $\theta_i'$, and records outcome $x_i'$.
    \item \textbf{Sifting}: Alice and Bob perform sifting to arrive at the sifted set $\Psift$.
    \item \textbf{Test Round Announcement}: Bob randomly splits the sifted rounds into $\Psift_1$ and $\Psift_2$, with fraction $f_{P_1}$ of rounds in $\Psift_1$, and announces $\Psift_1$ and $\Psift_2$, and the test round results $x_{\Psift_1}'$.
    \item \textbf{Parameter Estimation}: Alice performs parameter estimation, including checking the sifted round size, $\abs{\Psift_{\rr}}\geq \Psift_{\tol}$, single-photon events in set $\Psift_{2,\rr}$, $\hat{N}^{\LB}_{\Psift_{2,\rr},1}\geq N_{\Psift_{2,\rr},1}^{\tol}$, single-photon bit error rate, $\hatebitt{\Psift_{1,\rr},1}^{\UB}\leq \ebitt{1,\tol}$, and bit error rate $\ebitt{\Psift_{1,\rr}}\leq \ebitt{\tol}$. If these thresholds are satisfied, Alice sets $\DPE=1$. 
    \item \textbf{Error Correction}: Alice generates a syndrome $s=\fsyn(x_{\Psift_{2,\rr}})$ and sends it to Bob, who computes a corrected bit string $\hat{x}_{\Psift_2,\rr}=\fsyndec(x_{\Psift_2}',s_\rr)$ if $\DPE=1$. Otherwise, a random syndrome is sent instead. 
    \item \textbf{Alice Validation}: If $\DPE=1$, Alice generates a tag with the classical messages of the past 7 steps ($M_{17}$), $t_{\AV}=h_1(\Kh_1,x_{\Psift_{2,\rr}}||M_{17})\oplus \KOTP_{1,\beta}$, and forwards it to Bob; Otherwise, Alice sends a random string in place of the tag. Bob generates a verification tag with the corresponding messages ($M_{17}'$), $\tilde{t}_{\AV}=h_1(\Kh_1,\hat{x}_{\Psift_{2,\rr}}||M_{17}')\oplus \KOTP_{1,\beta'}$. If the tags matches, Bob validates Alice and sets $D_{\AV}=1$.
    \item \textbf{Bob Validation}: Bob decides whether the round succeeds, $\FB=D_{\AV}$. If $\FB=1$, Bob generates a tag, $t_{\BV}=h_2(K_2,\hat{x}_{\Psift_{2,\rr}})$, and sends $t_{\BV}$ to Alice. If $\FB=0$, Bob sends a random string instead. Alice computes the verification tag, $\tilde{t}_{\BV}=h_2(K_2,x_{\Psift_{2,\rr}})$. If the tags matches, Alice validates Bob, $D_{\BV}=1$.
    \item \textbf{Secret Key Generation and Label Update}: Alice decides whether to accept the round based on her parameter estimation and validation of Bob, i.e. $\FA=\DPE\land D_{\BV}$. If Alice (resp. Bob) decides to perform key generation, $\FA=1$ (resp. $\FB=1$), she (resp. he) performs privacy amplification $\KA=\hPA(R,x_{\Psift_{2,\rr}})$ (resp. $\KB=\hPA(R,\hat{x}_{\Psift_{2,\rr}})$). If key generation is not performed, the labels are updated, i.e. if $\FA=0$, Alice updates her label $\alpha=\beta+1$ and if $\FB=0$, Bob updates his label $\alpha'=\beta'+1$.
\end{enumerate}
\end{protocol}

\twocolumngrid

\section{Protocol Security}
\label{sec:AKEProtocolSec}

\subsection{Security Conditions}

QKD, while achieving the task of authenticated key exchange (AKE)~\cite{Mosca2013_QKDAKE}, separates the message authentication and key exchange tasks completely, and combines their security using composability arguments.
This makes it difficult to analyse the QAKE protocol proposed in the QKD security framework since (1) message authentication is only used in the final two communication steps, which may allow the adversary to alter the ordering of prior steps, and (2) the message authentication steps they are dual-use, such as the use of $T_{\BV}$ both to authentication Bob and check for matching bit strings (for correctness).
As such, we adapt a recently proposed classical authenticated key exchange (AKE) framework~\cite{Guilhem2020_AKE}, where the tasks of authentication and key secrecy are defined and their relation made explicit.\\

We highlight several changes made to the classical AKE framework~\cite{Guilhem2020_AKE} to adapt the conditions to better fit the QAKE security model.
These changes include: (1) Requiring mutual authentication instead of allowing anonymous key generation, (2) expanding the public-private key pair utilised to include matching private key pairs between different parties, (3) assuming Alice and Bob are honest, which simplifies some of the security conditions, and (4) replacing the assumption of a PPT adversary to an unbounded one, with $\varepsilon$ to quantify the ``negligible" probability in the security conditions.
The resulting security requirements can be listed as
\begin{enumerate}
    \item \textbf{Robustness}: In the absence of any adversary, the authentication should pass with high probability, i.e. $\Pr[\FA=\FB=1]\geq 1-\epsrob$.
    \item \textbf{Explicit Entity Authentication}: When Alice accepts the QAKE round, $\FA=1$, Bob must have accepted as well, $\FB=1$. This condition on Alice, termed full explicit entity authentication, requires that $\Pr[\FA=1,\FB\neq1]\leq\epsEAf$. A similar condition, termed almost-full explicit entity authentication, requires that when Bob accepts the round, $\FB=1$, Alice must have generated an identifier for the round, i.e. she is involved in the round, $\FA\neq\phi$. As such, the condition requires that $\Pr[\FA=\phi,\FB=1]\leq\epsEAaf$, with $\epsEA=\epsEAaf+\epsEAf$.
    \item \textbf{Match Security}: When sessions (i.e. Alice and Bob) are partnering (accepting and share a common session id), they will generate the same key, i.e. $\Pr[\KA\neq\KB,\FA=1,\FB=1]\leq\epsMS$.
    \item \textbf{Key Secrecy}: When keys are generated by Bob, they should remain uniform and secret from the adversary, i.e. $\rho_{K_BE}$ is $\epsKS$-close to $\tau_{K_B}\otimes\rho_E$.
\end{enumerate}
The property of explicit entity authentication highlights an observation, also discussed in the context of QKD~\cite{Portmann2014_Authentication,Kiktenko2020_Lightweight,Portmann2022_Security}, where the abort decision may not be shared by both Alice and Bob, and one party could choose to generate keys while the other does not.
This condition is imposed on Alice and Bob differently.
Since Alice is the receiver of the final authentication tag, when Bob accepts and sends this final tag, he is unable to guarantee that Alice will accept since this final tag can be modified by the adversary, as noticed also in Ref.~\cite{Kiktenko2020_Lightweight}.
We note that the match security condition matches the correctness condition in QKD.
More detailed arguments for the selection of these security conditions are presented in Appendix~\ref{app:AKESecDefn}.
\\

\subsection{Reduction to Single-Round Security}

The security of this multi-round protocol is defined by the satisfaction of the security conditions for every round, i.e. $\Delta(\vars{P}^i(\rho_0),\rhoout)$ for all $i=1,\cdots m$, where $\rhoout\in \Sout$ is an ideal output state satisfying the security conditions above.
For simplicity, we reduce the multi-round analysis to a single-round security by defining a set of possible input state, $\Sin$, and a different set of ideal output state $\Sout'$.
We note that the set of ideal output state here $\Sout'\supseteq\Sout$ would contain an additional condition that the privacy for the secrets that are utilised in the next round be maintained.
If the protocol maps $\vars{S}_{in}$ to $\Sout'$, and $\Sout'\subseteq \Sin$, an inductive argument can be made that the protocol is secure.
This is presented formally as
\begin{restatable}[]{theorem}{multitosinglered}
\label{thm:Multi_to_Single_Red}
Suppose we have the conditions:
\begin{enumerate}
    \item The initial state is an ideal input state $\rho_0\in\Sin$.
    \item There exists an ideal output state which preserves any necessary secrets that is $\epssecint$-close to the output state of a single protocol run $\vars{P}$, i.e. for any $\sigma\in \Sin$, there exists a state $\rho\in \Sout'$ such that
    \begin{equation*}
        \Delta(\vars{P}(\sigma),\rho)\leq\epssecint.
    \end{equation*}
    \item The set of ideal output states is a subset of the set of ideal input states $\Sout'\subseteq\Sin$.
    \item There exists an ideal output state that is $\epssec$-close to the output state of a single protocol run $\vars{P}$, i.e. for any $\sigma\in\Sin$, there exists a state $\rho\in \Sout$ such that
    \begin{equation*}
        \Delta(\vars{P}(\sigma),\rho)\leq\epssec.
    \end{equation*}
\end{enumerate}
Then, there exists $\rho_i\in \Sout$ such that for all $i\in[1,m]$,
\begin{equation*}
    \Delta(\vars{P}^i(\rho_0),\rho_i)\leq (i-1)\epssecint+\epssec.
\end{equation*}
\end{restatable}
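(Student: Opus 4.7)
The plan is to prove this by induction on $i$, using two standard facts about trace distance: the triangle inequality and monotonicity under CPTP maps (data-processing). Condition~2 will be invoked repeatedly to ``refresh'' the state back into $\Sin$ after each round (at the cost of $\epssecint$ each time), and condition~4 will be invoked once at the final round to land in the target set $\Sout$ (at the cost of $\epssec$).

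More concretely, I first establish by induction on $k\in[0,m-1]$ the auxiliary claim that there exists a state $\tilde{\rho}_k\in\Sin$ with
\begin{equation*}
    \Delta\bigl(\vars{P}^k(\rho_0),\tilde{\rho}_k\bigr)\leq k\,\epssecint.
\end{equation*}
The base case $k=0$ is immediate by condition~1, taking $\tilde{\rho}_0=\rho_0$. For the inductive step, assume the claim holds at $k$. Apply condition~2 to $\tilde{\rho}_k\in\Sin$ to obtain a state $\tilde{\rho}_{k+1}\in\Sout'$ with $\Delta(\vars{P}(\tilde{\rho}_k),\tilde{\rho}_{k+1})\leq\epssecint$, and note that $\tilde{\rho}_{k+1}\in\Sin$ by condition~3. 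Using data-processing on the inductive hypothesis gives $\Delta(\vars{P}^{k+1}(\rho_0),\vars{P}(\tilde{\rho}_k))\leq k\,\epssecint$, and combining with the previous bound via the triangle inequality yields $\Delta(\vars{P}^{k+1}(\rho_0),\tilde{\rho}_{k+1})\leq (k+1)\,\epssecint$, completing the induction.

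To conclude the theorem, for any $i\in[1,m]$ I apply the auxiliary claim at $k=i-1$ to produce $\tilde{\rho}_{i-1}\in\Sin$ with $\Delta(\vars{P}^{i-1}(\rho_0),\tilde{\rho}_{i-1})\leq (i-1)\,\epssecint$. Then condition~4 applied to $\tilde{\rho}_{i-1}$ supplies a state $\rho_i\in\Sout$ with $\Delta(\vars{P}(\tilde{\rho}_{i-1}),\rho_i)\leq\epssec$. Another use of data-processing on $\vars{P}$ and one triangle inequality gives
\begin{equation*}
    \Delta\bigl(\vars{P}^i(\rho_0),\rho_i\bigr)\leq \Delta\bigl(\vars{P}^i(\rho_0),\vars{P}(\tilde{\rho}_{i-1})\bigr)+\Delta\bigl(\vars{P}(\tilde{\rho}_{i-1}),\rho_i\bigr)\leq (i-1)\,\epssecint+\epssec,
\end{equation*}
which is the desired bound.

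The argument itself is routine, so the main conceptual point to get right is the distinction between $\Sout$ and $\Sout'$: only the larger set $\Sout'$ (which additionally preserves secrets for the next round) closes under one more application of $\vars{P}$ via condition~3, so only it can be used in the inductive hop; the final step must land in $\Sout$, which is why condition~4 is invoked exactly once at the end rather than being iterated. The only implicit assumption I am relying on is that $\vars{P}$ is a CPTP map so that data-processing applies; this is natural for a protocol viewed as a quantum channel and I will state it explicitly at the start of the proof.
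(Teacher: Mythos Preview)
Your proposal is correct and follows essentially the same approach as the paper: an inductive auxiliary claim using conditions~1--3 (triangle inequality plus monotonicity of $\Delta$ under the CPTP map $\vars{P}$), followed by a single application of condition~4 for the final round. The only cosmetic difference is that you index the auxiliary claim from $k=0$ with $\tilde\rho_0=\rho_0\in\Sin$, whereas the paper indexes from $i=1$ with $\rho_1\in\Sout'$; your choice is slightly cleaner since it handles the $i=1$ case of the theorem uniformly rather than as a separate remark.
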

\begin{proof} See Appendix~\ref{app:Multi_to_Single_Red}.
\end{proof}
The theorem reduces the security analysis into the analysis of two single-round protocols, one for the final round (condition 4), and one for intermediate rounds (condition 2).
Since the definition of $\Sout'$ contains an additional condition that the shared secrets should remain private, we can define a fifth security condition for the intermediate rounds:
\begin{enumerate}
    \item[5.] \textbf{Shared Secrets Privacy}: The privacy of the set of shared secrets $\Ssec$ should be maintained as necessary, i.e. the output state of $\vars{P}(\rho)$ should be $\epsSP$-close to an ideal state with the necessary shared secrets remaining private. 
\end{enumerate}
With this final security condition, we can compute the secrecy parameters $\epssecint$ and $\epssec$ from the respective security conditions,
\begin{restatable}[]{theorem}{secparacompute}
    Consider a protocol $\vars{P}$ with an ideal input state $\rho_0\in\Sin$. Then, overall protocol security is satisfied when the intermediate round security parameter is
    \begin{equation*}
        \epssecint=\epsEA+\epsMS+\epsKS+\epsSP,
    \end{equation*}
    and the final round security parameter is
    \begin{equation*}
        \epssec=\epsEA+\epsMS+\epsKS.
    \end{equation*}
\end{restatable}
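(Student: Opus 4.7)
The plan is to invoke Theorem~\ref{thm:Multi_to_Single_Red} and separately bound the trace distance from the real single-round output state $\vars{P}(\sigma)$ to an ideal state lying in $\Sout$ (for $\epssec$) and in $\Sout'$ (for $\epssecint$). Both bounds will be obtained through a hybrid argument that progressively enforces each security condition on the output state, paying a trace distance equal to the corresponding $\varepsilon$ parameter at each step, and then summing the costs via the triangle inequality.

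Concretely, starting from $\rho^{(0)}=\vars{P}(\sigma)$, I would construct hybrids $\rho^{(1)},\rho^{(2)},\rho^{(3)}$ as follows. The explicit entity authentication condition bounds the probability of the classical event $\{\FA=1,\FB\neq 1\}\cup\{\FA=\phi,\FB=1\}$ by $\epsEAf+\epsEAaf=\epsEA$, so I may replace $\rho^{(0)}$ by a state $\rho^{(1)}$ that agrees with $\rho^{(0)}$ on the complement of this event and redefines the flags on the event to satisfy entity authentication, incurring $\Delta(\rho^{(0)},\rho^{(1)})\leq\epsEA$. Next, match security says that conditioned on $\FA=\FB=1$ the event $\KA\neq\KB$ has probability at most $\epsMS$, so forcing $\KB=\KA$ on this event produces $\rho^{(2)}$ with $\Delta(\rho^{(1)},\rho^{(2)})\leq\epsMS$. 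The key secrecy condition then directly lets me replace Bob's key register with a uniform key independent of the adversary, giving $\rho^{(3)}\in\Sout$ at additional cost $\epsKS$. The triangle inequality yields
\begin{equation*}
\Delta(\vars{P}(\sigma),\rho^{(3)})\leq\epsEA+\epsMS+\epsKS=\epssec.
\end{equation*}

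For the intermediate-round bound I would append one further hybrid: applying the shared secrets privacy condition to $\rho^{(3)}$, I replace the registers carrying the shared secrets in $\Ssec$ that are to be reused in subsequent rounds with ones uncorrelated with the adversary's side information, at cost $\epsSP$. The resulting state $\rho^{(4)}$ satisfies all five conditions and therefore lies in $\Sout'\subseteq\Sin$, yielding
\begin{equation*}
\Delta(\vars{P}(\sigma),\rho^{(4)})\leq\epsEA+\epsMS+\epsKS+\epsSP=\epssecint.
\end{equation*}
Substituting these two single-round bounds into Theorem~\ref{thm:Multi_to_Single_Red} delivers the claimed multi-round security with the stated $\epssec$ and $\epssecint$.

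The main obstacle is verifying that successive hybrid modifications remain mutually compatible: altering a later register (for example replacing $\KB$ by a uniform string in the key-secrecy hybrid) must not undo the authentication or match properties already established. This should be routine because each step acts on registers disjoint from those defining the previously-enforced predicate — the flags $\FA,\FB$ are untouched by key replacement, and once both keys have been jointly set equal, further joint randomization preserves match security — but it needs to be spelled out at each hybrid. A secondary subtlety is confirming that the SP hybrid genuinely produces a state in $\Sin$, so that the inductive reduction of Theorem~\ref{thm:Multi_to_Single_Red} actually closes; this reduces to checking that the freshness properties built into $\rho^{(4)}$ match the definition of $\Sin$ used in condition 3 of that theorem.
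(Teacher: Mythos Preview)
Your proposal is correct and follows essentially the same hybrid/triangle-inequality decomposition as the paper's proof in Appendix~\ref{app:Sec_Para_Compute}. The only refinement worth noting is that in the match-security step the paper replaces $\KA$ by $\KB$ (via the map $\vars{M}_{\KB\rightarrow\KA}$) rather than the reverse, so that the $\KB LSE$-marginal of the hybrid coincides with that of the real output state and the key-secrecy bound $\epsKS$ can be applied directly; your phrasing ``forcing $\KB=\KA$'' should be read in that direction.
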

\begin{proof}
See Appendix~\ref{app:Sec_Para_Compute}.
\end{proof}

\subsection{Single-Round Security}

To analyse the single-round security, we need to define the set of input states by determining the set of shared secrets to be kept private.
We observe that the privacy amplification seed $R$ and authentication key $K_2$ should always be secret between rounds, while authentication masking keys $\KOTP_{1,i}$ that will no longer be utilised (i.e. $i<\alpha,\alpha'$) can be traced out.
However, the secrets $(\Kh_1,\KOTP_{1,i})$ cannot in general be guaranteed to remain private since $T_{\AV}$ would always be revealed to the adversary without an authentication check occurring prior to Alice's validation step.
This would leak some information on $(\Kh_1,\KOTP_{1,i})$, which we quantify by allowing the adversary in later rounds to possess an oracle call to generate $T_{\AV}$ for any chosen message $M$ prior to that protocol round.
The set of ideal output are defined such that the ideal state satisfies the security requirements, and be a subset of the input states, $\Sout'\subseteq \Sin$.
More details of the input and output states can be found in Appendix~\ref{app:AKEProtocolSec_OverallSec}.\\

The defined input and output state description satisfies conditions 1 and 3 in Thm.~\ref{thm:Multi_to_Single_Red} trivially, while the remaining conditions are demonstrated in the theorem below, where we assume an ideal state preparation, $q_1=1$.
A sketch of the security proof is provided below, with the full security proof presented in Appendix~\ref{app:AKEProtocolSec}.

\begin{restatable}[]{theorem}{AKEProtMainProof}\label{thm:AKEProtocolMain}
    The QAKE protocol $\vars{P}$ is $\left(\varepsilon_{\phi1}+\frac{1}{\abs{\vars{T}_{\BV}}},\epsMACb,\epsKS'+\varepsilon_{\vars{P}'},\varepsilon_{\SP,01,11}'+\varepsilon_{\vars{P}'}\right)$-secure.
    The probability of $(\FA,\FB)=\phi1$ event is
    \begin{equation*}
        \varepsilon_{\phi1}=\epsMACa+2^{-\Psift_{\tol}[1+\hbin(f_{P_1})]+\log_2(\Psift_{\tol}+1)},
    \end{equation*}
    the penalty associated with shifting to an idealised protocol $\vars{P}'$ is
    \begin{gather*}
        \varepsilon_{\vars{P}'}=2\left(\epsMACa+\epsMACb+\epsds+\epsSO\right)\\
        \epsSO=2^{-\Psift_{tol}}+2^{-N_{\Psift_{2,1}}^{\tol}}+2^{-\Psift_{\tol}\hbin(f_{P_1})+\log_2(\Psift_{\tol}+1)},
    \end{gather*}
    the key secrecy parameter for the idealised protocol is
    \begin{gather*}
        \epsKS'=2\sqrt{2\epsserfa}+\frac{1}{2}\times 2^{-\frac{1}{2}[H'-l_{\KB}]},
    \end{gather*}
    the shared secrets privacy parameter for the idealised protocol when $\FB=1$ is
    \begin{gather*}
        \varepsilon_{\SP,01,11}'=4\sqrt{2\epsserfa}+2(\varepsilon_2+\varepsilon_3)+\varepsilon_{\SP,\MAC,1}+\varepsilon_{\SP,\MAC,2}\\
        \varepsilon_{\SP,\MAC,1}=\sqrt{(\abs{\vars{T}_{\AV}}\epsMACa-1)+2^{\log_2\left(\frac{2+\varepsilon_3}{\varepsilon_3\abs{\vars{T}_{\BV}}}\right)-H'}}\\
        \varepsilon_{\SP,\MAC,2}=\sqrt{(\abs{\vars{T}_{\BV}}\epsMACb-1)+2^{\log_2\left(\frac{2}{\varepsilon_2}+1\right)-H'}},
    \end{gather*}
    where 
    \begin{equation*}
        H'=N_{\Psift_2,1}^{\tol}[1-\hbin(\ephtol')]-2-\log_2\abs{\vars{T}_{\AV}}\abs{\vars{T}_{\BV}}-\leakEC,
    \end{equation*}
    $\varepsilon_2$ and $\varepsilon_3$ are parameters to optimise over, $\Psift_{\tol}$ is the minimum tolerated number of sifted rounds, $\epsds$ are errors associated with decoy state analysis, $N_{\Psift_{2,1}}^{\tol}$ is the minimum tolerated number of single-photon rounds in set $\Psift_2$, and $\ephtol'$ is the maximum tolerated single-photon phase error rate in set $\Psift_2$, and $\leakEC$ is the information leakage due to error correction.
\end{restatable}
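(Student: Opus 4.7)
The plan is to decompose the proof along the four security parameters and reduce each to a calculation on an idealized single-round protocol. The common first move is to introduce an idealized protocol $\vars{P}'$ that differs from $\vars{P}$ in three respects: the two hash families $h_1,h_2$ are replaced by truly random functions on their domains (costing $2(\epsMACa+\epsMACb)$ in trace distance), the decoy-state bounds on single-photon statistics are treated as exact (costing $\epsds$), and the sifting partition and total sifted size are assumed to satisfy their tolerated thresholds (costing $\epsSO$, which is a sum of Chernoff/Serfling bounds for the sifted size, the single-photon count in $\Psift_2$, and the partition of single-photon rounds between $\Psift_1$ and $\Psift_2$). Together this is the $\varepsilon_{\vars{P}'}$ penalty that appears in the key-secrecy and shared-secrets-privacy bounds.

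For explicit entity authentication, I split into $(\FA,\FB)=(1,0)$ and $(\FA,\FB)=(\phi,1)$. The former requires Alice to accept a random tag $t_{\BV}$ sent by Bob after he aborts, which happens with probability at most $1/\abs{\vars{T}_{\BV}}$. The latter requires Bob to accept on messages that are not paired with any Alice session, which I reduce to either a collision in $h_1$ (paying $\epsMACa$) or to the adversary's matching a tag masked by an unguessed $\KOTP_{1,\beta}$ while meeting the tolerated partition size; this second event is bounded by $2^{-\Psift_{\tol}[1+\hbin(f_{P_1})]+\log_2(\Psift_{\tol}+1)}$, giving the stated $\varepsilon_{\phi1}$. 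For match security I use that, when both accept, $h_2(K_2,x_{\Psift_{2,\rr}})=h_2(K_2,\hat x_{\Psift_{2,\rr}})$, so by the $\epsMACb$-almost strongly 2-universal property either $x_{\Psift_{2,\rr}}=\hat x_{\Psift_{2,\rr}}$ (in which case $\KA=\KB$ since the same seed $R$ is used by $\hPA$) or one pays $\epsMACb$.

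For key secrecy, after switching to $\vars{P}'$ the parameter-estimation tests together with random partitioning into $\Psift_1,\Psift_2$ imply, up to a Serfling failure $\epsserfa$, that the smooth min-entropy of the single-photon contributions in $\Psift_2$ conditioned on Eve and the two disclosed tags is at least $H'=N_{\Psift_2,1}^{\tol}[1-\hbin(\ephtol')]-2-\log_2\abs{\vars{T}_{\AV}}\abs{\vars{T}_{\BV}}-\leakEC$, where the $\log_2\abs{\vars{T}_{\AV}}\abs{\vars{T}_{\BV}}$ term charges the length of the disclosed tags and $\leakEC$ the syndrome leakage. The leftover hash lemma applied with the 2-universal $\hPA$ then yields $\epsKS'=2\sqrt{2\epsserfa}+\tfrac{1}{2}\cdot 2^{-\frac{1}{2}[H'-l_{\KB}]}$. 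Shared secrets privacy in the $\FB=1$ branches is handled by showing, via the same min-entropy bound together with the fact that the adversary's information about $\Kh_1$ and $K_2$ comes only through the disclosed tags, that each key's guessing probability given its tag and the associated high-entropy input is exponentially small; optimising over the free parameters $\varepsilon_2,\varepsilon_3$ produces the $\varepsilon_{\SP,\MAC,1}$ and $\varepsilon_{\SP,\MAC,2}$ terms.

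The main obstacle is the shared secrets privacy argument. The subtlety is that $t_{\AV}$ is revealed before any authentication check succeeds, so in later rounds one must model the adversary as having access to a tag oracle, and yet the privacy of $(\Kh_1,\KOTP_{1,\beta})$ for unused indices must still be argued. The key insight driving the $\log_2\abs{\vars{T}_{\AV}}\abs{\vars{T}_{\BV}}$ penalty in $H'$ is that the tags leak at most their own length in entropy; combined with the one-time-pad on $\KOTP_{1,\beta}$ and with acceptance of Alice (for $\varepsilon_{\SP,\MAC,1}$) or of Bob (for $\varepsilon_{\SP,\MAC,2}$), this allows a clean control of the residual entropy and closes the inductive step required by Thm.~\ref{thm:Multi_to_Single_Red}. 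The remaining parts of the calculation reduce to fairly standard QKD arguments once $\vars{P}'$ is in hand.
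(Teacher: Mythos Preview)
Your overall decomposition into the four security parameters is right, and the match-security and key-secrecy pieces line up with the paper. But your characterisation of the idealised protocol $\vars{P}'$ and of $\epsSO$ is wrong, and this is not cosmetic: it is where the work in this proof actually sits.

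In the paper, $\vars{P}'$ is \emph{not} obtained by replacing $h_1,h_2$ with truly random functions. Rather, the decisions $\DPE,D_{\AV},D_{\BV}$ are replaced by idealised decisions $\tildeDPE,\tilde D_{\AV},\tilde D_{\BV}$ that directly check (i) that labels agree, (ii) that the messages sent and received match, (iii) that the authentication tags are unaltered, and crucially (iv) that the \emph{step ordering} $\Omega_{2\rightarrow3\rightarrow4\rightarrow5\rightarrow6}$ is obeyed. This last point is the reason the proof is nontrivial: because only the last two communication rounds are authenticated, the adversary could in principle reorder earlier steps (e.g.\ obtain Alice's sifting announcement before Bob has announced his bases). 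The $\epsMACa$ and $\epsMACb$ contributions to $\varepsilon_{\vars{P}'}$ are the probabilities that the real check $D_{\AV}$ (resp.\ $D_{\BV}$) passes while the ideal $\tilde D_{\AV}$ (resp.\ $\tilde D_{\BV}$) fails, i.e.\ that the adversary forges a valid tag on a modified message. The $\epsds$ term is the probability that the decoy-state estimators certify the thresholds while the actual single-photon counts fall below them.

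Correspondingly, $\epsSO$ is \emph{not} a sum of Chernoff/Serfling bounds for sifted-size and partition tolerances. Each term in $\epsSO$ is the probability that the adversary can violate a particular step ordering while still producing a message that matches the honest party's: $2^{-\Psift_{\tol}}$ is the probability of guessing Bob's basis string $\theta_P'$ before step 2; $2^{-N_{\Psift_{2,1}}^{\tol}}$ is the probability of guessing Alice's basis on the single-photon rounds before step 3; and $2^{-\Psift_{\tol}\hbin(f_{P_1})+\log_2(\Psift_{\tol}+1)}$ is the probability of guessing Bob's random test-round choice $\Psift_1$ before step 4. The same mechanism explains the second term in $\varepsilon_{\phi1}$: when $\FA=\phi$, the adversary's only route to $\FB=1$ beyond forging a fresh tag ($\epsMACa$) is to have its earlier oracle message coincide with Bob's message $M'$, which requires guessing $\theta_P'$ and $\Psift_1$ --- not ``meeting the tolerated partition size''.

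Without the step-ordering replacement, your key-secrecy argument does not go through: the standard QKD min-entropy bound assumes that Alice's sifting and test-round announcements are made after Bob's basis announcement, and that assumption has to be \emph{earned} here. Your shared-secrets-privacy sketch is closer in spirit (the paper also uses that the high-entropy input $\hat X_{\Psift_{2,\rr}}$ protects the keys via the strong-extractor/QLHL property of $h_1'=h_1\oplus\KOTP$ and $h_2$), but it too relies on first having reduced to $\vars{P}'$ in the paper's sense.
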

\begin{proof}[Proof Sketch] 
The explicit entity authentication security condition relies on the fact that when $\FA=\phi$ or $\FB\neq1$, the respective tags $T_{\AV}$ and $T_{\BV}$ are not generated.
As such, the adversary has to guess a correct message tag pair to trigger a ``bad event".
By the nature of the strong 2-universal hash function, the probability of a correct guess would be small.\\

The match security condition matches the correctness condition in QKD, and thus can be proven in a similar way.
Bob's tag $T_{\BV}$ involves a hash of $\hat{X}_{\Psift_2}$, which has to be checked against the hash of $X_{\Psift_2}$.
If the messages do not match (and thus the keys are mismatched as well), the probability of matching tags are small as well.\\

The proof for key secrecy is similar to that of QKD, where we seek to lower bound the smooth min-entropy $\Hmin^{\epssma}(\hat{X}_{\Psift_{2,\rr}}|\beta\Kh_1K_2\KOTP_{1,\beta}E)$.
This is performed after a series of switches of the authentication checks and decoy state check to ``idealised" versions, which allows us to prove the smooth min-entropy lower bound using similar tools to QKD which assumes ideal message authentication.\\

Shared secrets privacy requires that necessary secrets remain private.
The addition of the oracle access of $(\Kh_1,\KOTP_{1,\beta})$ can be shown to be a result of $T_{\AV}$ announcement when $\beta>\alpha'$.
The main protection on $K_2$ and $(\Kh_1,\KOTP_{1,\beta})$ when the protocol succeeds and they are reused is based on the value of $\hat{X}_{\Psift_2}$ included as part of the message while generating tags $T_{\BV}$ and $T_{\AV}$.
Based on the ideal key privacy property~\cite{Fehr2017_SID} along with quantum leftover hash lemma~\cite{Tomamichel2011_QLHL}, the overall hash function $h_1$ with mask and $h_2$ are $\epsMACa$ and $\epsMACb$ strong extractors respectively -- preserving the secrecy of the authentication and mask keys $K_2$ and $(\Kh_1,\KOTP_{1,\beta})$.
\end{proof}

The final condition to consider is the robustness of the protocol.
The main sources of non-acceptance of the round would be either failure of the parameter estimation checks or error correction failure.
We can use concentration bounds~\cite{Zubkov2013_SVBound} to compute the probability of failure of the parameter estimation checks, which we label as $\varepsilon_{\ds,\rob}$ for terms associated with the decoy state checks, $\varepsilon_{\bigepsilon,P_1,\rob}$ for the bit error check in set $P_1$, and $\varepsilon_{\rob,P}$ for terms associated with bounding $\abs{P}$ (bounds $P^{\tol}\leq\abs{P}\leq P^{\UB}$).
The error correction failures can either be due to the bit error rate in set $P_2$ exceeding some tolerance bound, or failure to correct for errors even within the tolerance bound.
In the former case, knowing the bit error rate in set $P_1$ falls below $\ebitt{1,\tol}$, the probability that the error rate in set $P_2$ exceeding some tolerance value $\ebittol'$ can be computed based on the modified Serfling bound~\cite{Tomamichel2012_BB84FiniteKey,Curty2014_DecoyMDIQKD}, with failure probability $\epsserfb$.
The failure probability of the latter case depends on the error correction protocol utilised, and we define the error probability as $\varepsilon_{\EC}$ for total communication $\fEC\hbin(\ebittol')$ during the error correction phase.
As such, the overall robustness can be quantified by $\epsrob$,
\begin{equation}
    \epsrob= \varepsilon_{\rob,P}+\epsserfb+\varepsilon_{\ds,\rob}+\varepsilon_{\bigepsilon,P_1,\rob}+\varepsilon_{\EC}.
\end{equation}

\section{Numerical Analysis}

We analyse the performance of the QAKE protocol by simulating the length of keys generated based on Thm.~\ref{thm:AKEProtocolMain}. 
We assume a simple experimental model with Alice preparing decoy BB84 states, sending it through a channel with loss $\eta$, and Bob performing measurement in a random basis, i.e. with detection probability $p_{\dett,\mu_j}=p_{\mu_j}(1-e^{-\eta\mu_j})$.
Additionally, Bob's detector is assumed to have zero dark counts.
We fix $\mu_0=0.45$, $\mu_1=0.225$, $\mu_2=0$ as the decoy intensities, $p_{\mu_0}=0.2$, $p_{\mu_1}=0.6$ and $p_{\mu_2}=0.2$ as their respective probabilities.
We also fix the length of the authentication tags to be 80 bits each, with $\epsMACa=\epsMACb=2^{-80}$, i.e. the authentication is 2-universal, allowing us to use the tighter bound for $\epsSP$ shown in Thm.~\ref{thm:QAKESecProofSSFB1}.
The error correction is assumed to utilise \SI{30}{\percent} of the sifted bits.\\

The simulation is then performed by rearranging the result in Thm.~\ref{thm:AKEProtocolMain}, and optimising $l_{\KB}$ over the splitting ratio (also size of test set) $f_{P_1}$, robustness and secrecy parameters (components of $\varepsilon_{rob}$ and $\varepsilon_{sec}$) in Matlab, with other parameters fixed as described in the results.
We choose the security parameter of each round of the protocol as $\epssec=$ \SI{1e-15}{}, which yields an overall protocol security of \SI{1e-6}{} when we allow it to run up to \SI{1e9}{} rounds.
The robustness parameter is chosen at $\epsrob=$ \SI{1e-10}{}, and each round involves $N=$ \SI{1e10}{} signals being sent by Alice to Bob.
The concentration bound used for robustness parameters is the tight bound on binomial distribution~\cite{Zubkov2013_SVBound} due to i.i.d. state preparation noting independence of the detection probability on basis selection by fair sampling assumption.
The concentration bound used for estimation of the expectation values, $\expE{N_{\mu_v,P_i}}$, from the observed values, $N_{\mu_v,P_i}$, associated with decoy state estimation (errors form part of decoy state related parameters $\epsds$) is Kato's bound~\cite{Kato2020_KatoIneq,Curras2021_KatoIneqPara}.
The simulation results are shown in Fig.~\ref{fig:Key_Rate_Simulation_Main} for various QBER values $Q$.\\

\begin{figure}[ht]
    \centering
    \includegraphics[width=\linewidth]{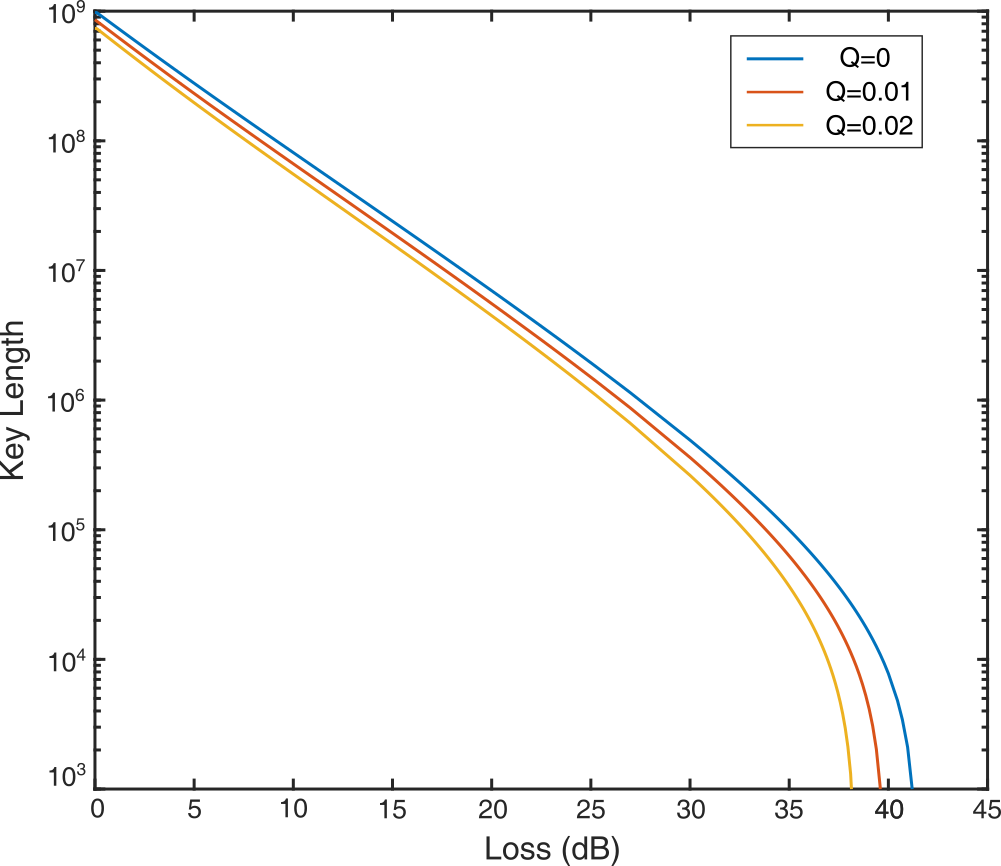}
    \caption{Simulated key length for QAKE protocol at various signal loss values (including detector loss), various QBER values $Q$, with $N=$ \SI{2e10}{} signals sent, $\epsrob=$ \SI{1e-10}{}, and $\epssec=$ \SI{1e-15}{}. The authentication tag used is 80-bits, i.e. $\abs{\vars{T}_{\AV}}=\abs{\vars{T}_{\BV}}=80$, and we assume the error correction to utilise \SI{30}{\percent} of the sifted bits. The system parameters are chosen to be $\mu_0=0.45$, $\mu_1=0.225$, $\mu_2=0$ as the decoy intensities, $p_{\mu_0}=0.2$, $p_{\mu_1}=0.6$ and $p_{\mu_2}=0.2$ as their respective probabilities, and we assume the system to have zero dark counts.}
    \label{fig:Key_Rate_Simulation_Main}
\end{figure}

\section{Experimental Validation}

\subsection{Experimental Setup}

We validated our results by implementing the QAKE protocol on the ID Quantique Clavis XGR QKD systems, with the setup shown in Fig.~\ref{Exp_setup}.
We connected Alice and Bob in a back-to-back configuration with an optical attenuator of 10~dB emulating a fiber distance of \SI{45}{\kilo\metre} (\SI{0.22}{\decibel\per\kilo\metre}). 
We also configured the intensity $\mu_0=0.45$ and the probability of Alice selecting each basis to about \SI{50}{\percent}. 
The QKD systems provide QKD raw data, which includes the encoded qubit values from the transmitter (Alice) and the measured qubit values from the receiver (Bob) during the key exchange process. 
We collected the raw data using an Application Programming Interface (API) command in a text format, which was utilised to estimate the QBER and detection rate.\\ 

\begin{figure}[!ht]
    \centering
    \includegraphics[width=0.7\linewidth]{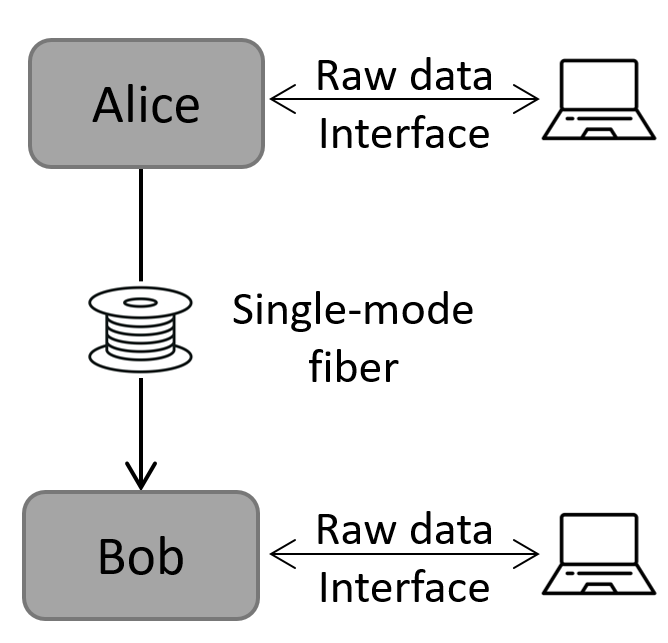}
    \caption{\label{Exp_setup}Experimental setup implementing the protocol, with \SI{10}{\decibel} of channel loss from a \SI{45}{\kilo\meter} fiber (emulated with an optical attenuator). The quantum state preparation and measurement are performed by the ID Quantique Clavis XGR QKD systems, which outputs the relevant raw data (including state prepared and measurement outcomes). Post-selection is performed on the raw data to match the protocol requirements of having balanced (\SI{50}{\percent}) detector basis choice. The remaining steps of the protocols are then simulated (including error correction, authentication tag generation and privacy amplification) using the post-selected data.}
\end{figure}

Due to the lack of a direct interface between the QKD system's raw data output and our FPGA, the post-processing of the raw data was performed offline (not in real-time).
Firstly, to balance the basis detection rates, we performed post-selection of the data such that the detection probability in each basis is about \SI{50}{\percent}.
Secondly, sifting is performed to ensure that Alice and Bob has matching basis.
The data is then utilised to run the protocol in a software stack.
This includes the use of Toeplitz hashing as the XOR two-universal hash function and as the the two-universal hash function respectively for the authentication step and privacy amplification steps, and the use of a two-way \emph{low-density parity-check} (LDPC) code for error correction.
Details of the LDPC and Toeplitz hashing implementation are presented in the subsequent subsections.\\

We utilise similar parameters for authentication tag size as the numerical simulation, while the number of bits communicated for error correction is set to a maximum of \SI{30}{\percent} of the size of set $P_2$ (containing bits used for key generation) based on LDPC performance.

\subsection{Toeplitz Hashing}

Toeplitz hashing is implemented on a Virtex Ultrascale+ XCVU9p \emph{field programmable gate array} (FPGA), running at a clock frequency of 332 MHz. 
Due to the limitation of resources in FPGA, we decomposed the large Toeplitz matrix evenly into small submatrix multiplication steps~\cite{Zhang2016_Toeplitz}. 
We designed pipelined modules for Toeplitz submatrix construction, matrix multiplication and intermediate accumulation for processing 32 input bits from the FPGA interface on every clock cycle. 
The matrix construction module consists of a 32-bit wide FIFO that stores the large Toeplitz matrix, and a shift register that stores the submatrix. 
On every clock cycle, the submatrix is constructed by reading 32-bits from the FIFO and using it to shifting the register. 
Matrix multiplication is realized by bitwise-AND and XOR-reduction. 
An accumulation register performs bitwise-XOR for matrix multiplication outputs on each clock cycle, to generate the final output bits.

\subsection{Error Correction}

For error correction, we implemented a two-way \emph{low-density parity-check} (LDPC) code on the same FPGA by adapting LDPC IP cores from Xilinx~\cite{LDPC_IP}.
We used the following settings on the LDPC IP: base matrix choice $\text{bg}=0$, syndrome length control $\text{mb}=5$, codeword length controls $z_{\text{set}}=5$ and $z_j=5$, normalisation factor $\text{sc}_{\text{idx}}=14$, and scaling $\text{llr}_{\text{scale}}=14$.
This results in a codeword size of 1188 bytes and syndrome size of 220 bytes.
The LDPC IP core is designed for standard error correction usage (e.g. in 5G), where the syndrome is also sent through the channel and experiences the same error rate as the data, unlike in QAKE where the syndrome is assumed to be error-free.
Let $X$ be a 1188-byte data string ($X'$ being Bob's data string), $X_1$ and $X_2$ be the first 968 bytes and last 220 bytes of data, and $S$ be the syndrome. The adapted protocol we use is as follows:
\begin{enumerate}
    \item For every 1188 bytes of data, Alice generates a 220-byte syndrome from the first 968 bytes of data, $X_1$, and XOR the syndrome with $X_2$, i.e. $S'=S\oplus X_1$.
    \item Alice computes a 16-byte (128-bit) hash of the data string using SHA-256 (second half of the hash is discarded), $T=\text{SHA256}(X)$.
    \item Alice sends $S'$ and $T$ to Bob.
    \item Bob performs XOR of his last 220 bytes of data with the syndrome, i.e. $S''=S'\oplus X_2'$. Note this is equivalent to introducing the error bits $X_2'\oplus X_2$ onto the syndrome, $S''$.
    \item Bob performs error correction with $X_1'$ and $S''$ using the LDPC IP core, and obtains a corrected data string $\hat{X}_1$ and corrected syndrome $\hat{S}$.
    \item Bob computes the correction on the second part of his data string, with $\hat{X}_2=X_2'\oplus\hat{S}\oplus S'$.
    \item Bob computes the hash of the corrected data string using SHA-256, $T'=\text{SHA256}(\hat{X})$, and compares it with the hash sent by Alice.
    \item If the hash does not match, Bob will inform Alice to increase the syndrome length by 44 bytes, and send that additional 44 bytes to Bob.
    \item Alice and Bob repeat steps 4 to 8 until the hash matches. If the syndrome length longer than the maximum transmission unit (MTU) size the FPGA can support (mb = 10, syndrome = 440 bytes, maximum of 6 rounds), Alice will simply reveal all 1188 bytes of data to Bob instead.
\end{enumerate}
Based on our tests for a bit error rate of \SI{2}{\percent}, the LDPC code can almost always succeed with a sacrifice of at most \SI{30}{\percent} of the sifted bitstring.

\begin{table*}[!ht]
\caption{\label{table:ExptResults} Experimental results for three runs of the protocol with $N$ number of signals sent (after post-selection) for each run. The detection probability per signal, $p_{\dett}$, and average bit error rate is computed from the measurement results. The key length is optimised over the various epsilon terms, including errors associated with decoy state method, $\epsds$, and the modified Serfling bound error for single-photon bit error rate $\epsserfa$, with the optimal values presented here, for secrecy parameter $\epssec=$ \SI{1e-15}{}.}
\begin{tabular}{cccccccc}
    Run & $N$ & $p_{\dett}$ & QBER & Key Length & $\epsds$ & $\epsserfa$ \\
    \hline
    1 & \SI{2.04e10}{} & \SI{1.03e-4}{} & 0.0196 & \SI{1.77e5}{} & \SI{1.30e-16}{} & \SI{3.17e-33}{} \\
    2 & \SI{2.36e10}{} & \SI{1.03e-4}{} & 0.0202 & \SI{2.13e5}{} & \SI{1.29e-16}{} & \SI{3.24e-33}{} \\
    3 & \SI{2.13e10}{} & \SI{1.03e-4}{} & 0.0203 & \SI{2.13e5}{} & \SI{1.30e-16}{} & \SI{3.20e-34}{} \\
    \hline
\end{tabular}
\end{table*}

\subsection{Experimental Results}

We run the protocol a total of three separate times, and fix a testing probability of \SI{25}{\percent}, which is close to the optimal testing probability at the same transmission range in the numerical simulation. 
Table~\ref{table:ExptResults} show the results obtained for the various runs. 
All three runs recorded around \SI{2e10}{} signals sent after post-selection, with an average bit error rate of \SI{2}{\percent} and key length of \SI{2e5}{}. 
This is roughly in line with the simulated results for an ideal setting presented in Fig.~\ref{fig:SimulationResult}.

\section{Discussion}
\label{sec:Discussion}
We presented a QAKE protocol that incorporates entity authentication with key generation.
It compresses the authentication steps to the last two communication rounds to reduce authentication cost.
The protocol is also designed to allow for key recycling, where the authentication keys and masking keys need not be refreshed after each protocol round.
Key recycling can be useful for applications since communication overhead from the sharing of the refreshed secrets by the server (e.g. between all remote access servers) can be removed when the protocol passes.\\

\textbf{Open Problems:} There are some limitations to the presented protocol and security analysis, which can open up areas that can be explored in the future.\\

An interesting area to explore is the expansion of QAKE or QKD to achieve other tasks within the classical AKE framework.
For instance, classical AKE can allow for anonymous key exchange by not requiring mutual authentication for all parties (perhaps necessary only to authenticate the client but not for the client to authenticate the server), or allow for client-server authentication (without key generation).
Such expansion of tasks may require protocol changes that can be informed by the modifications within the classical AKE framework, which is well-studied by the classical cryptography community.\\

It is also worthwhile to have a more comprehensive AKE framework that incorporates both classical and quantum AKE.
This brings the analysis closer to a practical setting, where both classical authentication and key exchange methods may live with QAKE and other quantum protocols.
Analysing the interplay between these protocols and their respective guarantees can provide insight into how these diverse systems can be incorporated in practice, and allow us to understand the types of attacks and risks we should consider in QKD or QAKE design.
Moreover, such an endeavour can build a common language with classical cryptography that would make QKD or similar quantum cryptographic protocol security more accessible.\\

\textbf{Additional Results:} Incorporating authentication into QKD is not the only method of developing QAKE protocols.
It is also possible to begin with authentication protocols and end up with QAKE protocols by integrating key exchange.
We developed a QAKE protocol from the authentication protocol by Fehr et. al.~\cite{Fehr2017_SID}, which utilises pre-shared pseudorandom basis.
This result is presented in Appendix~\ref{app:PRNG_QAKE}, alongside a formal analysis of the use of pseudorandom basis in decoy state BB84 in Appendix~\ref{app:PRNGBasisChoiceProof} necessary for the security analysis.
Inspired by the single-round authentication achieved by Fehr et. al.~\cite{Fehr2017_SID}, we also developed a two-round (one challenge and one response) client authentication protocol that is secure in the practical setting (with channel loss, noise and multi-photon events).
This result is presented in Appendix~\ref{app:CAProtocol}.

\begin{acknowledgments}
This paper was prepared for informational purposes by the Global Technology Applied Research center of JPMorgan Chase \& Co. This paper is not a product of the Research Department of JPMorgan Chase \& Co. or its affiliates. Neither JPMorgan Chase \& Co. nor any of its affiliates makes any explicit or implied representation or warranty and none of them accept any liability in connection with this paper, including, without limitation, with respect to the completeness, accuracy, or reliability of the information contained herein and the potential legal, compliance, tax, or accounting effects thereof. This document is not intended as investment research or investment advice, or as a recommendation, offer, or solicitation for the purchase or sale of any security, financial instrument, financial product or service, or to be used in any way for evaluating the merits of participating in any transaction.
\end{acknowledgments}

\bibliography{SID_Biblio}

\begin{thebibliography}{47}%
\makeatletter
\providecommand \@ifxundefined [1]{%
 \@ifx{#1\undefined}
}%
\providecommand \@ifnum [1]{%
 \ifnum #1\expandafter \@firstoftwo
 \else \expandafter \@secondoftwo
 \fi
}%
\providecommand \@ifx [1]{%
 \ifx #1\expandafter \@firstoftwo
 \else \expandafter \@secondoftwo
 \fi
}%
\providecommand \natexlab [1]{#1}%
\providecommand \enquote  [1]{``#1''}%
\providecommand \bibnamefont  [1]{#1}%
\providecommand \bibfnamefont [1]{#1}%
\providecommand \citenamefont [1]{#1}%
\providecommand \href@noop [0]{\@secondoftwo}%
\providecommand \href [0]{\begingroup \@sanitize@url \@href}%
\providecommand \@href[1]{\@@startlink{#1}\@@href}%
\providecommand \@@href[1]{\endgroup#1\@@endlink}%
\providecommand \@sanitize@url [0]{\catcode `\\12\catcode `\$12\catcode `\&12\catcode `\#12\catcode `\^12\catcode `\_12\catcode `\%12\relax}%
\providecommand \@@startlink[1]{}%
\providecommand \@@endlink[0]{}%
\providecommand \url  [0]{\begingroup\@sanitize@url \@url }%
\providecommand \@url [1]{\endgroup\@href {#1}{\urlprefix }}%
\providecommand \urlprefix  [0]{URL }%
\providecommand \Eprint [0]{\href }%
\providecommand \doibase [0]{https://doi.org/}%
\providecommand \selectlanguage [0]{\@gobble}%
\providecommand \bibinfo  [0]{\@secondoftwo}%
\providecommand \bibfield  [0]{\@secondoftwo}%
\providecommand \translation [1]{[#1]}%
\providecommand \BibitemOpen [0]{}%
\providecommand \bibitemStop [0]{}%
\providecommand \bibitemNoStop [0]{.\EOS\space}%
\providecommand \EOS [0]{\spacefactor3000\relax}%
\providecommand \BibitemShut  [1]{\csname bibitem#1\endcsname}%
\let\auto@bib@innerbib\@empty
\bibitem [{\citenamefont {Rivest}\ \emph {et~al.}(1978)\citenamefont {Rivest}, \citenamefont {Shamir},\ and\ \citenamefont {Adleman}}]{RSA1978}%
  \BibitemOpen
  \bibfield  {author} {\bibinfo {author} {\bibfnamefont {R.~L.}\ \bibnamefont {Rivest}}, \bibinfo {author} {\bibfnamefont {A.}~\bibnamefont {Shamir}},\ and\ \bibinfo {author} {\bibfnamefont {L.}~\bibnamefont {Adleman}},\ }\bibfield  {title} {\bibinfo {title} {A method for obtaining digital signatures and public-key cryptosystems},\ }\href {https://doi.org/10.1145/359340.359342} {\bibfield  {journal} {\bibinfo  {journal} {Communications of the ACM}\ }\textbf {\bibinfo {volume} {21}},\ \bibinfo {pages} {120–126} (\bibinfo {year} {1978})}\BibitemShut {NoStop}%
\bibitem [{\citenamefont {Shor}(1994)}]{Shor1994}%
  \BibitemOpen
  \bibfield  {author} {\bibinfo {author} {\bibfnamefont {P.}~\bibnamefont {Shor}},\ }\bibfield  {title} {\bibinfo {title} {Algorithms for quantum computation: discrete logarithms and factoring},\ }in\ \href {https://doi.org/10.1109/SFCS.1994.365700} {\emph {\bibinfo {booktitle} {Proceedings 35th Annual Symposium on Foundations of Computer Science}}}\ (\bibinfo {year} {1994})\ pp.\ \bibinfo {pages} {124--134}\BibitemShut {NoStop}%
\bibitem [{\citenamefont {Barker}(2020)}]{NIST_Key_Management}%
  \BibitemOpen
  \bibfield  {author} {\bibinfo {author} {\bibfnamefont {E.}~\bibnamefont {Barker}},\ }\href@noop {} {\emph {\bibinfo {title} {{Recommendation for Key Management: Part 1 – General}}}},\ \bibinfo {type} {Standard}\ (\bibinfo  {institution} {National Institute of Standards and Technology},\ \bibinfo {address} {Washington, D.C.},\ \bibinfo {year} {2020})\BibitemShut {NoStop}%
\bibitem [{\citenamefont {Aelmans}\ \emph {et~al.}(2023)\citenamefont {Aelmans}, \citenamefont {Grammel}, \citenamefont {Joseph}, \citenamefont {Mukhopadhyay}, \citenamefont {Saha}, \citenamefont {Sinha},\ and\ \citenamefont {Surendran}}]{Juniper2023}%
  \BibitemOpen
  \bibfield  {author} {\bibinfo {author} {\bibfnamefont {M.}~\bibnamefont {Aelmans}}, \bibinfo {author} {\bibfnamefont {G.}~\bibnamefont {Grammel}}, \bibinfo {author} {\bibfnamefont {S.}~\bibnamefont {Joseph}}, \bibinfo {author} {\bibfnamefont {S.}~\bibnamefont {Mukhopadhyay}}, \bibinfo {author} {\bibfnamefont {P.}~\bibnamefont {Saha}}, \bibinfo {author} {\bibfnamefont {R.}~\bibnamefont {Sinha}},\ and\ \bibinfo {author} {\bibfnamefont {A.}~\bibnamefont {Surendran}},\ }\href {https://www.juniper.net/documentation/en_US/day-one-books/DayOne-Quantum-safeIPsec-2.pdf} {\bibinfo {title} {Juniper networks}} (\bibinfo {year} {2023}),\ \bibinfo {note} {https://www.juniper.net/documentation/en\_US/day-one-books/DayOne-Quantum-safeIPsec-2.pdf}\BibitemShut {NoStop}%
\bibitem [{For(2023)}]{Fort2023}%
  \BibitemOpen
  \href {https://docs.fortinet.com/document/fortigate/7.4.0/new-features/775314/ipsec-key-retrieval-with-a-qkd-system-using-the-etsi-standardized-api-7-4-2} {\bibinfo {title} {Fortinet}} (\bibinfo {year} {2023}),\ \bibinfo {note} {https://docs.fortinet.com/document/ fortigate/7.4.0/new-features/775314/ipsec-key-retrieval-with-a-qkd-system-using-the-etsi-standardized-api-7-4-2}\BibitemShut {NoStop}%
\bibitem [{\citenamefont {Fehr}\ and\ \citenamefont {Salvail}(2017)}]{Fehr2017_SID}%
  \BibitemOpen
  \bibfield  {author} {\bibinfo {author} {\bibfnamefont {S.}~\bibnamefont {Fehr}}\ and\ \bibinfo {author} {\bibfnamefont {L.}~\bibnamefont {Salvail}},\ }\bibfield  {title} {\bibinfo {title} {Quantum authentication and encryption with key recycling},\ }in\ \href {https://doi.org/10.1007/978-3-319-56617-7_11} {\emph {\bibinfo {booktitle} {Advances in Cryptology -- EUROCRYPT 2017}}},\ \bibinfo {editor} {edited by\ \bibinfo {editor} {\bibfnamefont {J.-S.}\ \bibnamefont {Coron}}\ and\ \bibinfo {editor} {\bibfnamefont {J.~B.}\ \bibnamefont {Nielsen}}}\ (\bibinfo  {publisher} {Springer International Publishing},\ \bibinfo {address} {Cham},\ \bibinfo {year} {2017})\ pp.\ \bibinfo {pages} {311--338}\BibitemShut {NoStop}%
\bibitem [{\citenamefont {de~Saint~Guilhem}\ \emph {et~al.}(2020)\citenamefont {de~Saint~Guilhem}, \citenamefont {Fischlin},\ and\ \citenamefont {Warinschi}}]{Guilhem2020_AKE}%
  \BibitemOpen
  \bibfield  {author} {\bibinfo {author} {\bibfnamefont {C.~D.}\ \bibnamefont {de~Saint~Guilhem}}, \bibinfo {author} {\bibfnamefont {M.}~\bibnamefont {Fischlin}},\ and\ \bibinfo {author} {\bibfnamefont {B.}~\bibnamefont {Warinschi}},\ }\bibfield  {title} {\bibinfo {title} {Authentication in key-exchange: Definitions, relations and composition},\ }in\ \href {https://doi.org/10.1109/CSF49147.2020.00028} {\emph {\bibinfo {booktitle} {2020 IEEE 33rd Computer Security Foundations Symposium (CSF)}}}\ (\bibinfo {year} {2020})\ pp.\ \bibinfo {pages} {288--303}\BibitemShut {NoStop}%
\bibitem [{\citenamefont {{Advanced Micro Devices}}()}]{LDPC_IP}%
  \BibitemOpen
  \bibfield  {author} {\bibinfo {author} {\bibnamefont {{Advanced Micro Devices}}},\ }\href {https://www.xilinx.com/products/intellectual-property/ef-di-ldpc-enc-dec.html} {\bibinfo {title} {Ldpc encoder/decoder}},\ \bibinfo {note} {https://www.xilinx.com/products/intellectual-property/ef-di-ldpc-enc-dec.html}\BibitemShut {NoStop}%
\bibitem [{\citenamefont {Kiktenko}\ \emph {et~al.}(2020)\citenamefont {Kiktenko}, \citenamefont {Malyshev}, \citenamefont {Gavreev}, \citenamefont {Bozhedarov}, \citenamefont {Pozhar}, \citenamefont {Anufriev},\ and\ \citenamefont {Fedorov}}]{Kiktenko2020_Lightweight}%
  \BibitemOpen
  \bibfield  {author} {\bibinfo {author} {\bibfnamefont {E.~O.}\ \bibnamefont {Kiktenko}}, \bibinfo {author} {\bibfnamefont {A.~O.}\ \bibnamefont {Malyshev}}, \bibinfo {author} {\bibfnamefont {M.~A.}\ \bibnamefont {Gavreev}}, \bibinfo {author} {\bibfnamefont {A.~A.}\ \bibnamefont {Bozhedarov}}, \bibinfo {author} {\bibfnamefont {N.~O.}\ \bibnamefont {Pozhar}}, \bibinfo {author} {\bibfnamefont {M.~N.}\ \bibnamefont {Anufriev}},\ and\ \bibinfo {author} {\bibfnamefont {A.~K.}\ \bibnamefont {Fedorov}},\ }\bibfield  {title} {\bibinfo {title} {Lightweight authentication for quantum key distribution},\ }\href {https://doi.org/10.1109/TIT.2020.2989459} {\bibfield  {journal} {\bibinfo  {journal} {IEEE Transactions on Information Theory}\ }\textbf {\bibinfo {volume} {66}},\ \bibinfo {pages} {6354} (\bibinfo {year} {2020})}\BibitemShut {NoStop}%
\bibitem [{\citenamefont {Mosca}\ \emph {et~al.}(2013)\citenamefont {Mosca}, \citenamefont {Stebila},\ and\ \citenamefont {Ustao{\u{g}}lu}}]{Mosca2013_QKDAKE}%
  \BibitemOpen
  \bibfield  {author} {\bibinfo {author} {\bibfnamefont {M.}~\bibnamefont {Mosca}}, \bibinfo {author} {\bibfnamefont {D.}~\bibnamefont {Stebila}},\ and\ \bibinfo {author} {\bibfnamefont {B.}~\bibnamefont {Ustao{\u{g}}lu}},\ }\bibfield  {title} {\bibinfo {title} {Quantum key distribution in the classical authenticated key exchange framework},\ }in\ \href {https://link.springer.com/chapter/10.1007/978-3-642-38616-9_9} {\emph {\bibinfo {booktitle} {Post-Quantum Cryptography}}},\ \bibinfo {editor} {edited by\ \bibinfo {editor} {\bibfnamefont {P.}~\bibnamefont {Gaborit}}}\ (\bibinfo  {publisher} {Springer Berlin Heidelberg},\ \bibinfo {address} {Berlin, Heidelberg},\ \bibinfo {year} {2013})\ pp.\ \bibinfo {pages} {136--154}\BibitemShut {NoStop}%
\bibitem [{\citenamefont {Portmann}(2014)}]{Portmann2014_Authentication}%
  \BibitemOpen
  \bibfield  {author} {\bibinfo {author} {\bibfnamefont {C.}~\bibnamefont {Portmann}},\ }\bibfield  {title} {\bibinfo {title} {Key recycling in authentication},\ }\href {https://doi.org/10.1109/TIT.2014.2317312} {\bibfield  {journal} {\bibinfo  {journal} {IEEE Transactions on Information Theory}\ }\textbf {\bibinfo {volume} {60}},\ \bibinfo {pages} {4383} (\bibinfo {year} {2014})}\BibitemShut {NoStop}%
\bibitem [{\citenamefont {Portmann}\ and\ \citenamefont {Renner}(2022)}]{Portmann2022_Security}%
  \BibitemOpen
  \bibfield  {author} {\bibinfo {author} {\bibfnamefont {C.}~\bibnamefont {Portmann}}\ and\ \bibinfo {author} {\bibfnamefont {R.}~\bibnamefont {Renner}},\ }\bibfield  {title} {\bibinfo {title} {Security in quantum cryptography},\ }\href {https://doi.org/10.1103/RevModPhys.94.025008} {\bibfield  {journal} {\bibinfo  {journal} {Reviews of Modern Physics}\ }\textbf {\bibinfo {volume} {94}},\ \bibinfo {pages} {025008} (\bibinfo {year} {2022})}\BibitemShut {NoStop}%
\bibitem [{\citenamefont {Tomamichel}\ \emph {et~al.}(2011)\citenamefont {Tomamichel}, \citenamefont {Schaffner}, \citenamefont {Smith},\ and\ \citenamefont {Renner}}]{Tomamichel2011_QLHL}%
  \BibitemOpen
  \bibfield  {author} {\bibinfo {author} {\bibfnamefont {M.}~\bibnamefont {Tomamichel}}, \bibinfo {author} {\bibfnamefont {C.}~\bibnamefont {Schaffner}}, \bibinfo {author} {\bibfnamefont {A.}~\bibnamefont {Smith}},\ and\ \bibinfo {author} {\bibfnamefont {R.}~\bibnamefont {Renner}},\ }\bibfield  {title} {\bibinfo {title} {Leftover hashing against quantum side information},\ }\href {https://doi.org/10.1109/TIT.2011.2158473} {\bibfield  {journal} {\bibinfo  {journal} {IEEE Transactions on Information Theory}\ }\textbf {\bibinfo {volume} {57}},\ \bibinfo {pages} {5524} (\bibinfo {year} {2011})}\BibitemShut {NoStop}%
\bibitem [{\citenamefont {Zubkov}\ and\ \citenamefont {Serov}(2013)}]{Zubkov2013_SVBound}%
  \BibitemOpen
  \bibfield  {author} {\bibinfo {author} {\bibfnamefont {A.~M.}\ \bibnamefont {Zubkov}}\ and\ \bibinfo {author} {\bibfnamefont {A.~A.}\ \bibnamefont {Serov}},\ }\bibfield  {title} {\bibinfo {title} {A complete proof of universal inequalities for the distribution function of the binomial law},\ }\href {https://doi.org/10.1137/S0040585X97986138} {\bibfield  {journal} {\bibinfo  {journal} {Theory of Probability \& Its Applications}\ }\textbf {\bibinfo {volume} {57}},\ \bibinfo {pages} {539} (\bibinfo {year} {2013})}\BibitemShut {NoStop}%
\bibitem [{\citenamefont {Tomamichel}\ \emph {et~al.}(2012)\citenamefont {Tomamichel}, \citenamefont {Lim}, \citenamefont {Gisin},\ and\ \citenamefont {Renner}}]{Tomamichel2012_BB84FiniteKey}%
  \BibitemOpen
  \bibfield  {author} {\bibinfo {author} {\bibfnamefont {M.}~\bibnamefont {Tomamichel}}, \bibinfo {author} {\bibfnamefont {C.~C.~W.}\ \bibnamefont {Lim}}, \bibinfo {author} {\bibfnamefont {N.}~\bibnamefont {Gisin}},\ and\ \bibinfo {author} {\bibfnamefont {R.}~\bibnamefont {Renner}},\ }\bibfield  {title} {\bibinfo {title} {Tight finite-key analysis for quantum cryptography},\ }\href {https://www.nature.com/articles/ncomms1631} {\bibfield  {journal} {\bibinfo  {journal} {Nature Communications}\ }\textbf {\bibinfo {volume} {3}},\ \bibinfo {pages} {634} (\bibinfo {year} {2012})}\BibitemShut {NoStop}%
\bibitem [{\citenamefont {Curty}\ \emph {et~al.}(2014)\citenamefont {Curty}, \citenamefont {Xu}, \citenamefont {Cui}, \citenamefont {Lim}, \citenamefont {Tamaki},\ and\ \citenamefont {Lo}}]{Curty2014_DecoyMDIQKD}%
  \BibitemOpen
  \bibfield  {author} {\bibinfo {author} {\bibfnamefont {M.}~\bibnamefont {Curty}}, \bibinfo {author} {\bibfnamefont {F.}~\bibnamefont {Xu}}, \bibinfo {author} {\bibfnamefont {W.}~\bibnamefont {Cui}}, \bibinfo {author} {\bibfnamefont {C.~C.~W.}\ \bibnamefont {Lim}}, \bibinfo {author} {\bibfnamefont {K.}~\bibnamefont {Tamaki}},\ and\ \bibinfo {author} {\bibfnamefont {H.-K.}\ \bibnamefont {Lo}},\ }\bibfield  {title} {\bibinfo {title} {Finite-key analysis for measurement-device-independent quantum key distribution},\ }\href {https://doi.org/10.1038/ncomms4732} {\bibfield  {journal} {\bibinfo  {journal} {Nature Communications}\ }\textbf {\bibinfo {volume} {5}},\ \bibinfo {pages} {3732} (\bibinfo {year} {2014})}\BibitemShut {NoStop}%
\bibitem [{\citenamefont {Kato}(2020)}]{Kato2020_KatoIneq}%
  \BibitemOpen
  \bibfield  {author} {\bibinfo {author} {\bibfnamefont {G.}~\bibnamefont {Kato}},\ }\href@noop {} {\bibinfo {title} {Concentration inequality using unconfirmed knowledge}} (\bibinfo {year} {2020}),\ \Eprint {https://arxiv.org/abs/2002.04357} {arXiv:2002.04357 [math.PR]} \BibitemShut {NoStop}%
\bibitem [{\citenamefont {Currás-Lorenzo}\ \emph {et~al.}(2021)\citenamefont {Currás-Lorenzo}, \citenamefont {Álvaro Navarrete}, \citenamefont {Azuma}, \citenamefont {Kato}, \citenamefont {Curty},\ and\ \citenamefont {Razavi}}]{Curras2021_KatoIneqPara}%
  \BibitemOpen
  \bibfield  {author} {\bibinfo {author} {\bibfnamefont {G.}~\bibnamefont {Currás-Lorenzo}}, \bibinfo {author} {\bibnamefont {Álvaro Navarrete}}, \bibinfo {author} {\bibfnamefont {K.}~\bibnamefont {Azuma}}, \bibinfo {author} {\bibfnamefont {G.}~\bibnamefont {Kato}}, \bibinfo {author} {\bibfnamefont {M.}~\bibnamefont {Curty}},\ and\ \bibinfo {author} {\bibfnamefont {M.}~\bibnamefont {Razavi}},\ }\bibfield  {title} {\bibinfo {title} {Tight finite-key security for twin-field quantum key distribution},\ }\href {https://doi.org/10.1038/s41534-020-00345-3} {\bibfield  {journal} {\bibinfo  {journal} {npj Quantum Information}\ }\textbf {\bibinfo {volume} {7}},\ \bibinfo {pages} {22} (\bibinfo {year} {2021})}\BibitemShut {NoStop}%
\bibitem [{\citenamefont {Zhang}\ \emph {et~al.}(2016)\citenamefont {Zhang}, \citenamefont {Nie}, \citenamefont {Liang},\ and\ \citenamefont {Zhang}}]{Zhang2016_Toeplitz}%
  \BibitemOpen
  \bibfield  {author} {\bibinfo {author} {\bibfnamefont {X.}~\bibnamefont {Zhang}}, \bibinfo {author} {\bibfnamefont {Y.-Q.}\ \bibnamefont {Nie}}, \bibinfo {author} {\bibfnamefont {H.}~\bibnamefont {Liang}},\ and\ \bibinfo {author} {\bibfnamefont {J.}~\bibnamefont {Zhang}},\ }\bibfield  {title} {\bibinfo {title} {Fpga implementation of toeplitz hashing extractor for real time post-processing of raw random numbers},\ }in\ \href {https://doi.org/10.1109/RTC.2016.7543094} {\emph {\bibinfo {booktitle} {2016 IEEE-NPSS Real Time Conference (RT)}}}\ (\bibinfo {year} {2016})\ pp.\ \bibinfo {pages} {1--5}\BibitemShut {NoStop}%
\bibitem [{\citenamefont {Tomamichel}(2015)}]{Tomamichel2015_ITBook}%
  \BibitemOpen
  \bibfield  {author} {\bibinfo {author} {\bibfnamefont {M.}~\bibnamefont {Tomamichel}},\ }\href {https://doi.org/10.1007/978-3-319-21891-5} {\emph {\bibinfo {title} {Quantum Information Processing with Finite Resources}}},\ SpringerBriefs in Mathematical Physics\ (\bibinfo  {publisher} {Springer Cham},\ \bibinfo {year} {2015})\BibitemShut {NoStop}%
\bibitem [{\citenamefont {Doosti}\ \emph {et~al.}(2021)\citenamefont {Doosti}, \citenamefont {Delavar}, \citenamefont {Kashefi},\ and\ \citenamefont {Arapinis}}]{Doosti2021_QUnforgeability}%
  \BibitemOpen
  \bibfield  {author} {\bibinfo {author} {\bibfnamefont {M.}~\bibnamefont {Doosti}}, \bibinfo {author} {\bibfnamefont {M.}~\bibnamefont {Delavar}}, \bibinfo {author} {\bibfnamefont {E.}~\bibnamefont {Kashefi}},\ and\ \bibinfo {author} {\bibfnamefont {M.}~\bibnamefont {Arapinis}},\ }\href@noop {} {\bibinfo {title} {A unified framework for quantum unforgeability}} (\bibinfo {year} {2021}),\ \Eprint {https://arxiv.org/abs/2103.13994} {arXiv:2103.13994 [quant-ph]} \BibitemShut {NoStop}%
\bibitem [{\citenamefont {Stinson}(1994)}]{Stinson1994_UniversalHashing}%
  \BibitemOpen
  \bibfield  {author} {\bibinfo {author} {\bibfnamefont {D.}~\bibnamefont {Stinson}},\ }\bibfield  {title} {\bibinfo {title} {Universal hashing and authentication codes},\ }\href {https://doi.org/10.1007/BF01388651} {\bibfield  {journal} {\bibinfo  {journal} {Designs, Codes and Cryptography}\ }\textbf {\bibinfo {volume} {4}},\ \bibinfo {pages} {369} (\bibinfo {year} {1994})}\BibitemShut {NoStop}%
\bibitem [{\citenamefont {Rogaway}(1999)}]{Rogaway2002_BucketHashing}%
  \BibitemOpen
  \bibfield  {author} {\bibinfo {author} {\bibfnamefont {P.}~\bibnamefont {Rogaway}},\ }\bibfield  {title} {\bibinfo {title} {Bucket hashing and its application to fast message authentication},\ }\href {https://doi.org/10.1007/PL00003822} {\bibfield  {journal} {\bibinfo  {journal} {Journal of Cryptology}\ }\textbf {\bibinfo {volume} {12}},\ \bibinfo {pages} {91} (\bibinfo {year} {1999})}\BibitemShut {NoStop}%
\bibitem [{\citenamefont {Carter}\ and\ \citenamefont {Wegman}(1979)}]{Carter1977_WegmanCarter}%
  \BibitemOpen
  \bibfield  {author} {\bibinfo {author} {\bibfnamefont {J.~L.}\ \bibnamefont {Carter}}\ and\ \bibinfo {author} {\bibfnamefont {M.~N.}\ \bibnamefont {Wegman}},\ }\bibfield  {title} {\bibinfo {title} {Universal classes of hash functions},\ }\href {https://doi.org/10.1016/0022-0000(79)90044-8} {\bibfield  {journal} {\bibinfo  {journal} {Journal of Computer and System Sciences}\ }\textbf {\bibinfo {volume} {18}},\ \bibinfo {pages} {143} (\bibinfo {year} {1979})}\BibitemShut {NoStop}%
\bibitem [{\citenamefont {Tomamichel}\ and\ \citenamefont {Leverrier}(2017)}]{Tomamichel2017_QKDProof}%
  \BibitemOpen
  \bibfield  {author} {\bibinfo {author} {\bibfnamefont {M.}~\bibnamefont {Tomamichel}}\ and\ \bibinfo {author} {\bibfnamefont {A.}~\bibnamefont {Leverrier}},\ }\bibfield  {title} {\bibinfo {title} {A largely self-contained and complete security proof for quantum key distribution},\ }\href {https://doi.org/10.22331/q-2017-07-14-14} {\bibfield  {journal} {\bibinfo  {journal} {Quantum}\ }\textbf {\bibinfo {volume} {1}},\ \bibinfo {pages} {14} (\bibinfo {year} {2017})}\BibitemShut {NoStop}%
\bibitem [{\citenamefont {Katz}\ and\ \citenamefont {Lindell}(2014)}]{Katz2014_Crypto}%
  \BibitemOpen
  \bibfield  {author} {\bibinfo {author} {\bibfnamefont {J.}~\bibnamefont {Katz}}\ and\ \bibinfo {author} {\bibfnamefont {Y.}~\bibnamefont {Lindell}},\ }\href@noop {} {\emph {\bibinfo {title} {Introduction to Modern Cryptography, Second Edition}}},\ \bibinfo {edition} {2nd}\ ed.\ (\bibinfo  {publisher} {Chapman \& Hall/CRC},\ \bibinfo {year} {2014})\BibitemShut {NoStop}%
\bibitem [{\citenamefont {Bellare}\ \emph {et~al.}(1997)\citenamefont {Bellare}, \citenamefont {Desai}, \citenamefont {Jokipii},\ and\ \citenamefont {Rogaway}}]{Bellare1997_PRNG}%
  \BibitemOpen
  \bibfield  {author} {\bibinfo {author} {\bibfnamefont {M.}~\bibnamefont {Bellare}}, \bibinfo {author} {\bibfnamefont {A.}~\bibnamefont {Desai}}, \bibinfo {author} {\bibfnamefont {E.}~\bibnamefont {Jokipii}},\ and\ \bibinfo {author} {\bibfnamefont {P.}~\bibnamefont {Rogaway}},\ }\bibfield  {title} {\bibinfo {title} {A concrete security treatment of symmetric encryption},\ }in\ \href {https://doi.org/10.1109/SFCS.1997.646128} {\emph {\bibinfo {booktitle} {Proceedings 38th Annual Symposium on Foundations of Computer Science}}}\ (\bibinfo {year} {1997})\ pp.\ \bibinfo {pages} {394--403}\BibitemShut {NoStop}%
\bibitem [{\citenamefont {Dodis}\ \emph {et~al.}(2013)\citenamefont {Dodis}, \citenamefont {Pointcheval}, \citenamefont {Ruhault}, \citenamefont {Vergniaud},\ and\ \citenamefont {Wichs}}]{Dodis2013_PRNG}%
  \BibitemOpen
  \bibfield  {author} {\bibinfo {author} {\bibfnamefont {Y.}~\bibnamefont {Dodis}}, \bibinfo {author} {\bibfnamefont {D.}~\bibnamefont {Pointcheval}}, \bibinfo {author} {\bibfnamefont {S.}~\bibnamefont {Ruhault}}, \bibinfo {author} {\bibfnamefont {D.}~\bibnamefont {Vergniaud}},\ and\ \bibinfo {author} {\bibfnamefont {D.}~\bibnamefont {Wichs}},\ }\bibfield  {title} {\bibinfo {title} {Security analysis of pseudo-random number generators with input: /dev/random is not robust},\ }in\ \href {https://doi.org/10.1145/2508859.2516653} {\emph {\bibinfo {booktitle} {Proceedings of the 2013 ACM SIGSAC Conference on Computer \& Communications Security}}},\ \bibinfo {series and number} {CCS '13}\ (\bibinfo  {publisher} {Association for Computing Machinery},\ \bibinfo {address} {New York, NY, USA},\ \bibinfo {year} {2013})\ p.\ \bibinfo {pages} {647–658}\BibitemShut {NoStop}%
\bibitem [{\citenamefont {Brzuska}\ \emph {et~al.}(2011)\citenamefont {Brzuska}, \citenamefont {Fischlin}, \citenamefont {Warinschi},\ and\ \citenamefont {Williams}}]{Brzuska2011_BR}%
  \BibitemOpen
  \bibfield  {author} {\bibinfo {author} {\bibfnamefont {C.}~\bibnamefont {Brzuska}}, \bibinfo {author} {\bibfnamefont {M.}~\bibnamefont {Fischlin}}, \bibinfo {author} {\bibfnamefont {B.}~\bibnamefont {Warinschi}},\ and\ \bibinfo {author} {\bibfnamefont {S.~C.}\ \bibnamefont {Williams}},\ }\bibfield  {title} {\bibinfo {title} {Composability of bellare-rogaway key exchange protocols},\ }in\ \href {https://doi.org/10.1145/2046707.2046716} {\emph {\bibinfo {booktitle} {Proceedings of the 18th ACM Conference on Computer and Communications Security}}},\ \bibinfo {series and number} {CCS '11}\ (\bibinfo  {publisher} {Association for Computing Machinery},\ \bibinfo {address} {New York, NY, USA},\ \bibinfo {year} {2011})\ p.\ \bibinfo {pages} {51–62}\BibitemShut {NoStop}%
\bibitem [{\citenamefont {Lim}\ \emph {et~al.}(2014)\citenamefont {Lim}, \citenamefont {Curty}, \citenamefont {Walenta}, \citenamefont {Xu},\ and\ \citenamefont {Zbinden}}]{Lim2014_DecoyQKD}%
  \BibitemOpen
  \bibfield  {author} {\bibinfo {author} {\bibfnamefont {C.~C.~W.}\ \bibnamefont {Lim}}, \bibinfo {author} {\bibfnamefont {M.}~\bibnamefont {Curty}}, \bibinfo {author} {\bibfnamefont {N.}~\bibnamefont {Walenta}}, \bibinfo {author} {\bibfnamefont {F.}~\bibnamefont {Xu}},\ and\ \bibinfo {author} {\bibfnamefont {H.}~\bibnamefont {Zbinden}},\ }\bibfield  {title} {\bibinfo {title} {Concise security bounds for practical decoy-state quantum key distribution},\ }\href {https://doi.org/10.1103/PhysRevA.89.022307} {\bibfield  {journal} {\bibinfo  {journal} {Physical Review A}\ }\textbf {\bibinfo {volume} {89}},\ \bibinfo {pages} {022307} (\bibinfo {year} {2014})}\BibitemShut {NoStop}%
\bibitem [{\citenamefont {Cover}\ and\ \citenamefont {Thomas}(2005)}]{Cover2005_InfTheory}%
  \BibitemOpen
  \bibfield  {author} {\bibinfo {author} {\bibfnamefont {T.~M.}\ \bibnamefont {Cover}}\ and\ \bibinfo {author} {\bibfnamefont {J.~A.}\ \bibnamefont {Thomas}},\ }\bibinfo {title} {Information theory and statistics},\ in\ \href {https://doi.org/https://doi.org/10.1002/047174882X.ch11} {\emph {\bibinfo {booktitle} {Elements of Information Theory}}}\ (\bibinfo  {publisher} {John Wiley \& Sons, Ltd},\ \bibinfo {year} {2005})\ Chap.~\bibinfo {chapter} {11}, pp.\ \bibinfo {pages} {347--408}\BibitemShut {NoStop}%
\bibitem [{\citenamefont {Hwang}(2003)}]{Hwang2003_Decoy}%
  \BibitemOpen
  \bibfield  {author} {\bibinfo {author} {\bibfnamefont {W.-Y.}\ \bibnamefont {Hwang}},\ }\bibfield  {title} {\bibinfo {title} {Quantum key distribution with high loss: Toward global secure communication},\ }\href {https://doi.org/10.1103/PhysRevLett.91.057901} {\bibfield  {journal} {\bibinfo  {journal} {Physical Review Letters}\ }\textbf {\bibinfo {volume} {91}},\ \bibinfo {pages} {057901} (\bibinfo {year} {2003})}\BibitemShut {NoStop}%
\bibitem [{\citenamefont {Lo}\ \emph {et~al.}(2005)\citenamefont {Lo}, \citenamefont {Ma},\ and\ \citenamefont {Chen}}]{Lo2005_Decoy}%
  \BibitemOpen
  \bibfield  {author} {\bibinfo {author} {\bibfnamefont {H.-K.}\ \bibnamefont {Lo}}, \bibinfo {author} {\bibfnamefont {X.}~\bibnamefont {Ma}},\ and\ \bibinfo {author} {\bibfnamefont {K.}~\bibnamefont {Chen}},\ }\bibfield  {title} {\bibinfo {title} {Decoy state quantum key distribution},\ }\href {https://doi.org/10.1103/PhysRevLett.94.230504} {\bibfield  {journal} {\bibinfo  {journal} {Physical Review Letters}\ }\textbf {\bibinfo {volume} {94}},\ \bibinfo {pages} {230504} (\bibinfo {year} {2005})}\BibitemShut {NoStop}%
\bibitem [{\citenamefont {Ma}\ \emph {et~al.}(2005)\citenamefont {Ma}, \citenamefont {Qi}, \citenamefont {Zhao},\ and\ \citenamefont {Lo}}]{Ma2005_Decoy}%
  \BibitemOpen
  \bibfield  {author} {\bibinfo {author} {\bibfnamefont {X.}~\bibnamefont {Ma}}, \bibinfo {author} {\bibfnamefont {B.}~\bibnamefont {Qi}}, \bibinfo {author} {\bibfnamefont {Y.}~\bibnamefont {Zhao}},\ and\ \bibinfo {author} {\bibfnamefont {H.-K.}\ \bibnamefont {Lo}},\ }\bibfield  {title} {\bibinfo {title} {Practical decoy state for quantum key distribution},\ }\href {https://doi.org/10.1103/PhysRevA.72.012326} {\bibfield  {journal} {\bibinfo  {journal} {Physical Review A}\ }\textbf {\bibinfo {volume} {72}},\ \bibinfo {pages} {012326} (\bibinfo {year} {2005})}\BibitemShut {NoStop}%
\bibitem [{\citenamefont {Bennett}\ and\ \citenamefont {Brassard}(2014)}]{Bennett2017_BB84}%
  \BibitemOpen
  \bibfield  {author} {\bibinfo {author} {\bibfnamefont {C.~H.}\ \bibnamefont {Bennett}}\ and\ \bibinfo {author} {\bibfnamefont {G.}~\bibnamefont {Brassard}},\ }\bibfield  {title} {\bibinfo {title} {Quantum cryptography: Public key distribution and coin tossing},\ }\href {https://doi.org/https://doi.org/10.1016/j.tcs.2014.05.025} {\bibfield  {journal} {\bibinfo  {journal} {Theoretical Computer Science}\ }\textbf {\bibinfo {volume} {560}},\ \bibinfo {pages} {7} (\bibinfo {year} {2014})}\BibitemShut {NoStop}%
\bibitem [{\citenamefont {Dworkin}\ \emph {et~al.}(2001)\citenamefont {Dworkin}, \citenamefont {Barker}, \citenamefont {Nechvatal}, \citenamefont {Foti}, \citenamefont {Bassham}, \citenamefont {Roback},\ and\ \citenamefont {Dray}}]{NIST2001_AES}%
  \BibitemOpen
  \bibfield  {author} {\bibinfo {author} {\bibfnamefont {M.}~\bibnamefont {Dworkin}}, \bibinfo {author} {\bibfnamefont {E.}~\bibnamefont {Barker}}, \bibinfo {author} {\bibfnamefont {J.}~\bibnamefont {Nechvatal}}, \bibinfo {author} {\bibfnamefont {J.}~\bibnamefont {Foti}}, \bibinfo {author} {\bibfnamefont {L.}~\bibnamefont {Bassham}}, \bibinfo {author} {\bibfnamefont {E.}~\bibnamefont {Roback}},\ and\ \bibinfo {author} {\bibfnamefont {J.}~\bibnamefont {Dray}},\ }\href {https://doi.org/https://doi.org/10.6028/NIST.FIPS.197} {\bibinfo {title} {Advanced encryption standard (aes)}} (\bibinfo {year} {2001})\BibitemShut {NoStop}%
\bibitem [{\citenamefont {Trushechkin}\ \emph {et~al.}(2018)\citenamefont {Trushechkin}, \citenamefont {Tregubov}, \citenamefont {Kiktenko}, \citenamefont {Kurochkin},\ and\ \citenamefont {Fedorov}}]{Trushechkin2018_PRNGQKD}%
  \BibitemOpen
  \bibfield  {author} {\bibinfo {author} {\bibfnamefont {A.~S.}\ \bibnamefont {Trushechkin}}, \bibinfo {author} {\bibfnamefont {P.~A.}\ \bibnamefont {Tregubov}}, \bibinfo {author} {\bibfnamefont {E.~O.}\ \bibnamefont {Kiktenko}}, \bibinfo {author} {\bibfnamefont {Y.~V.}\ \bibnamefont {Kurochkin}},\ and\ \bibinfo {author} {\bibfnamefont {A.~K.}\ \bibnamefont {Fedorov}},\ }\bibfield  {title} {\bibinfo {title} {Quantum-key-distribution protocol with pseudorandom bases},\ }\href {https://doi.org/10.1103/PhysRevA.97.012311} {\bibfield  {journal} {\bibinfo  {journal} {Physical Review A}\ }\textbf {\bibinfo {volume} {97}},\ \bibinfo {pages} {012311} (\bibinfo {year} {2018})}\BibitemShut {NoStop}%
\bibitem [{\citenamefont {Price}\ \emph {et~al.}(2021)\citenamefont {Price}, \citenamefont {Rarity},\ and\ \citenamefont {Erven}}]{Price2021_DDOSPRNGEncryptBasis}%
  \BibitemOpen
  \bibfield  {author} {\bibinfo {author} {\bibfnamefont {A.~B.}\ \bibnamefont {Price}}, \bibinfo {author} {\bibfnamefont {J.~G.}\ \bibnamefont {Rarity}},\ and\ \bibinfo {author} {\bibfnamefont {C.}~\bibnamefont {Erven}},\ }\bibfield  {title} {\bibinfo {title} {A quantum key distribution protocol for rapid denial of service detection},\ }\href {https://doi.org/10.1140/epjqt/s40507-020-00084-6} {\bibfield  {journal} {\bibinfo  {journal} {EPJ Quantum Technology}\ }\textbf {\bibinfo {volume} {7}},\ \bibinfo {pages} {8} (\bibinfo {year} {2021})}\BibitemShut {NoStop}%
\bibitem [{\citenamefont {Hoang}\ and\ \citenamefont {Shen}(2020)}]{Hoang2020_PRNG}%
  \BibitemOpen
  \bibfield  {author} {\bibinfo {author} {\bibfnamefont {V.~T.}\ \bibnamefont {Hoang}}\ and\ \bibinfo {author} {\bibfnamefont {Y.}~\bibnamefont {Shen}},\ }\bibfield  {title} {\bibinfo {title} {Security analysis of nist ctr-drbg},\ }in\ \href {https://doi.org/10.1007/978-3-030-56784-2_8} {\emph {\bibinfo {booktitle} {Advances in Cryptology – CRYPTO 2020}}},\ Vol.\ \bibinfo {volume} {12170}\ (\bibinfo  {publisher} {Springer, Cham},\ \bibinfo {year} {2020})\BibitemShut {NoStop}%
\bibitem [{\citenamefont {Barker}\ and\ \citenamefont {Kelsey}(2015)}]{NIST_PRNG}%
  \BibitemOpen
  \bibfield  {author} {\bibinfo {author} {\bibfnamefont {E.}~\bibnamefont {Barker}}\ and\ \bibinfo {author} {\bibfnamefont {J.}~\bibnamefont {Kelsey}},\ }\href@noop {} {\emph {\bibinfo {title} {{Recommendation for Random Number Generation Using Deterministic Random Bit Generators}}}},\ \bibinfo {type} {Standard}\ (\bibinfo  {institution} {National Institute of Standards and Technology},\ \bibinfo {address} {Washington, D.C.},\ \bibinfo {year} {2015})\BibitemShut {NoStop}%
\bibitem [{\citenamefont {Xue}\ and\ \citenamefont {Calabretta}(2022)}]{xue2022nanosecond}%
  \BibitemOpen
  \bibfield  {author} {\bibinfo {author} {\bibfnamefont {X.}~\bibnamefont {Xue}}\ and\ \bibinfo {author} {\bibfnamefont {N.}~\bibnamefont {Calabretta}},\ }\bibfield  {title} {\bibinfo {title} {Nanosecond optical switching and control system for data center networks},\ }\href {https://www.nature.com/articles/s41467-022-29913-1} {\bibfield  {journal} {\bibinfo  {journal} {Nature Communications}\ }\textbf {\bibinfo {volume} {13}},\ \bibinfo {pages} {2257} (\bibinfo {year} {2022})}\BibitemShut {NoStop}%
\bibitem [{\citenamefont {Yeo}\ \emph {et~al.}(2010)\citenamefont {Yeo}, \citenamefont {Xu}, \citenamefont {Liaw}, \citenamefont {Wang}, \citenamefont {Wang},\ and\ \citenamefont {Cheng}}]{yeo2010448}%
  \BibitemOpen
  \bibfield  {author} {\bibinfo {author} {\bibfnamefont {Y.-K.}\ \bibnamefont {Yeo}}, \bibinfo {author} {\bibfnamefont {Z.}~\bibnamefont {Xu}}, \bibinfo {author} {\bibfnamefont {C.-Y.}\ \bibnamefont {Liaw}}, \bibinfo {author} {\bibfnamefont {D.}~\bibnamefont {Wang}}, \bibinfo {author} {\bibfnamefont {Y.}~\bibnamefont {Wang}},\ and\ \bibinfo {author} {\bibfnamefont {T.-H.}\ \bibnamefont {Cheng}},\ }\bibfield  {title} {\bibinfo {title} {A 448$\times$ 448 optical cross-connect for high-performance computers and multi-terabit/s routers},\ }in\ \href {https://opg.optica.org/abstract.cfm?uri=OFC-2010-OMP6} {\emph {\bibinfo {booktitle} {Optical Fiber Communication Conference}}}\ (\bibinfo {organization} {Optica Publishing Group},\ \bibinfo {year} {2010})\ p.\ \bibinfo {pages} {OMP6}\BibitemShut {NoStop}%
\bibitem [{\citenamefont {Tomamichel}\ and\ \citenamefont {Renner}(2011)}]{Tomamichel2011_EUR}%
  \BibitemOpen
  \bibfield  {author} {\bibinfo {author} {\bibfnamefont {M.}~\bibnamefont {Tomamichel}}\ and\ \bibinfo {author} {\bibfnamefont {R.}~\bibnamefont {Renner}},\ }\bibfield  {title} {\bibinfo {title} {Uncertainty relation for smooth entropies},\ }\href {https://doi.org/10.1103/PhysRevLett.106.110506} {\bibfield  {journal} {\bibinfo  {journal} {Physical Review Letters}\ }\textbf {\bibinfo {volume} {106}},\ \bibinfo {pages} {110506} (\bibinfo {year} {2011})}\BibitemShut {NoStop}%
\bibitem [{\citenamefont {Renes}\ and\ \citenamefont {Renner}(2012)}]{Renes2012_MaxEnt}%
  \BibitemOpen
  \bibfield  {author} {\bibinfo {author} {\bibfnamefont {J.~M.}\ \bibnamefont {Renes}}\ and\ \bibinfo {author} {\bibfnamefont {R.}~\bibnamefont {Renner}},\ }\bibfield  {title} {\bibinfo {title} {One-shot classical data compression with quantum side information and the distillation of common randomness or secret keys},\ }\href {https://doi.org/10.1109/TIT.2011.2177589} {\bibfield  {journal} {\bibinfo  {journal} {IEEE Transactions on Information Theory}\ }\textbf {\bibinfo {volume} {58}},\ \bibinfo {pages} {1985} (\bibinfo {year} {2012})}\BibitemShut {NoStop}%
\bibitem [{\citenamefont {Konig}\ \emph {et~al.}(2009)\citenamefont {Konig}, \citenamefont {Renner},\ and\ \citenamefont {Schaffner}}]{Konig2009_OprMinMax}%
  \BibitemOpen
  \bibfield  {author} {\bibinfo {author} {\bibfnamefont {R.}~\bibnamefont {Konig}}, \bibinfo {author} {\bibfnamefont {R.}~\bibnamefont {Renner}},\ and\ \bibinfo {author} {\bibfnamefont {C.}~\bibnamefont {Schaffner}},\ }\bibfield  {title} {\bibinfo {title} {The operational meaning of min- and max-entropy},\ }\href {https://doi.org/10.1109/TIT.2009.2025545} {\bibfield  {journal} {\bibinfo  {journal} {IEEE Transactions on Information Theory}\ }\textbf {\bibinfo {volume} {55}},\ \bibinfo {pages} {4337} (\bibinfo {year} {2009})}\BibitemShut {NoStop}%
\bibitem [{\citenamefont {Tomamichel}\ \emph {et~al.}(2010)\citenamefont {Tomamichel}, \citenamefont {Colbeck},\ and\ \citenamefont {Renner}}]{Tomamichel2010_MaxMinDuality}%
  \BibitemOpen
  \bibfield  {author} {\bibinfo {author} {\bibfnamefont {M.}~\bibnamefont {Tomamichel}}, \bibinfo {author} {\bibfnamefont {R.}~\bibnamefont {Colbeck}},\ and\ \bibinfo {author} {\bibfnamefont {R.}~\bibnamefont {Renner}},\ }\bibfield  {title} {\bibinfo {title} {Duality between smooth min- and max-entropies},\ }\href {https://doi.org/10.1109/TIT.2010.2054130} {\bibfield  {journal} {\bibinfo  {journal} {IEEE Transactions on Information Theory}\ }\textbf {\bibinfo {volume} {56}},\ \bibinfo {pages} {4674} (\bibinfo {year} {2010})}\BibitemShut {NoStop}%
\bibitem [{\citenamefont {Mitzenmacher}\ and\ \citenamefont {Upfal}(2017)}]{Mitzenmacher2017_Prob}%
  \BibitemOpen
  \bibfield  {author} {\bibinfo {author} {\bibfnamefont {M.}~\bibnamefont {Mitzenmacher}}\ and\ \bibinfo {author} {\bibfnamefont {E.}~\bibnamefont {Upfal}},\ }\href@noop {} {\emph {\bibinfo {title} {Probability and Computing: Randomization and Probabilistic Techniques in Algorithms and Data Analysis}}},\ \bibinfo {edition} {2nd}\ ed.\ (\bibinfo  {publisher} {Cambridge University Press},\ \bibinfo {address} {USA},\ \bibinfo {year} {2017})\BibitemShut {NoStop}%
\end{thebibliography}%

\onecolumngrid

\appendix

\section{Preliminaries}
\label{app:Prelim}
\subsection{Quantum Systems}

The state of a generic quantum system $A$ can be represented by a density matrix $\rho_{A}$.
For a classical random variable $Y$ that takes on values $y\in\vars{Y}$ according to probability distribution $p_y$, it can be expressed as a quantum state $\rho_Y=\sum_{y\in\vars{Y}}p_y\tilde{\Pi}_Y^y$,
where $\tilde{\Pi}_Y^y=\dyad{y}_Y$ for simplicity.
Note that we typically represent random variables as capital letters, e.g. $Y$, while small letters denote a particular value that the random value can take, e.g. $y$.
Two special classical states are the uniformly distributed system, with $p_y=\frac{1}{\abs{\vars{Y}}}$ represented by $\tau_Y$, and the uniformly distributed systems with matching values, $\tilde{\tau}_{YY'}$, where $p_y=\frac{1}{\abs{\vars{Y}}}$ and $y=y'$ for all $y,y'\in\vars{Y}$.
A classical-quantum state with classical random variable $A$ and quantum subsystem $B$ can be expressed as
\begin{equation*}
    \rho_{AB}=\sum_{b}p_b\tilde{\Pi}_B^b\otimes\rho_B^a,
\end{equation*}
where $\rho_B^a$ is the quantum state of subsystem $B$ conditioned on $A=a$.
We label a quantum channel that maps inputs $I$ to output $O$, with fixed inputs $a$ by $\bigepsilon_{I\rightarrow O}^a$.
The symbol $\bigepsilon$ is used to label quantum channels controlled by an adversary while letters $\vars{A}$, $\vars{B}$, $\vars{C}$ are used to label quantum channels corresponding to actions by honest or hypothetical parties.
We label a quantum measurement on system $A$ with outcome $X=x$ by the general measurement operator $F_A^x$, with corresponding positive operator-value measure (POVM) $\Pi_A^x=(F_A^x)^{\dagger}F_A^x$.
For projections onto an orthonormal basis, the POVMs (also general measurement operators) are expressed as projectors, $\{\tilde{\Pi}_{Y}^y\}_{y\in\vars{Y}}$.

\subsection{Quantum Information Theory}
\label{sec:Prelim_QInfo}

We use the trace distance measure to measure the distinguishability of two quantum systems $\rho$ and $\sigma$,
\begin{equation}
    \Delta(\rho,\sigma):=\frac{1}{2}\norm{\rho-\sigma}_1,
\end{equation}
where $\norm{\cdot}_1$ is the trace norm.
Another useful distance measure is the purified distance, $P(\rho,\sigma)$, which can be bounded by the trace distance, $P(\rho,\sigma)\leq\sqrt{2\Delta(\rho,\sigma)}$.
The min-entropy of a quantum state is defined as~\cite{Tomamichel2015_ITBook}
\begin{equation}
    \Hmin(A|E)_{\rho_{AE}}:=\sup\{\lambda\in\mathbb{R},\sigma_B:\rho_{AE}\leq 2^{-\lambda}\mathbb{I}_A\otimes\sigma_B\},
\end{equation}
with max-entropy defined as a min-entropy through the duality relation, $\Hmax(A|B)_{\rho}:=-\Hmin(A|E)_{\rho}$ for pure $\rho_{ABE}$.
The smooth min- and max-entropy are defined as
\begin{equation}
\begin{gathered}
    \Hmin^{\varepsilon}(A|E)_{\rho_{AE}}:=\max_{\substack{\rho'_{AE}:\Tr[\rho_{AE}']\leq 1,\\ P(\rho_{AE},\rho_{AE}')\leq\varepsilon}}\Hmin(A|E)_{\rho_{AE}'}\\
    \Hmax^{\varepsilon}(A|E)_{\rho_{AE}}:=\min_{\substack{\rho'_{AE}:\Tr[\rho_{AE}']\leq 1,\\ P(\rho_{AE},\rho_{AE}')\leq\varepsilon}}\Hmax(A|E)_{\rho_{AE}'}
\end{gathered}
\end{equation}

\subsection{Worse Case Scenario}

In general, the quantum states we describe are of the form $\rho_{AB}=\bigepsilon(\sigma_{AB})$, where $\bigepsilon$ is a quantum channel controlled by the adversary.
As such, when we consider the trace distance with such a state with a target form of the state, e.g. an ideal state $\tau_A\otimes\rho_B$ (with $A$ uniform and independent from $B$), we refer to the maximum trace distance over all possible channels, 
\begin{equation}
    \Delta(\rho_{AB},\tau_{A}\otimes\rho_{B})=\max_{\bigepsilon\in \vars{S}}\Delta(\bigepsilon(\sigma_{AB}),\tau_A\otimes\Tr_A[\bigepsilon(\sigma_{AB})]),
\end{equation}
where $\vars{S}$ is some set of allowed quantum channels that can be implemented by the adversary.
We call $\rho_{AB}'=\bigepsilon'(\sigma_{AB})$ a worse-case event of $\rho_{AB}$, labelled by $\rho_{AB}\subseteq\rho_{AB}'$, if the set of possible $\bigepsilon'$ channels, $\vars{S}'$, is a superset of $\vars{S}$, i.e. $\vars{S}\subseteq \vars{S}'$.
The trace distance for the worse-case event would be an upper bound of the original trace distance,
\begin{equation}
    \Delta(\rho_{AB},\tau_{A}\otimes\rho_{B})\leq\Delta(\rho_{AB}',\tau_{A}\otimes\rho_{B}'),
\end{equation}
since the optimal $\bigepsilon$ is a valid quantum channel in set $\vars{S}$ and therefore also in $\vars{S}'$.
We can similarly consider the smooth min-entropy of such a state, where we refer to the minimum value,
\begin{equation}
    \Hmin^{\varepsilon}(A|B)_{\rho_{AB}}=\min_{\bigepsilon\in \vars{S}}\Hmin^{\varepsilon}(A|B)_{\bigepsilon(\sigma_{AB})}.
\end{equation}
If $\rho_{AB}\subseteq\rho_{AB}'$, then we similarly have $\Hmin^{\varepsilon}(A|B)_{\rho_{AB}}\geq \Hmin^{\varepsilon}(A|B)_{\rho_{AB}'}$, where the worse case state has a smaller min-entropy.

\subsection{Two-Universal Hash Functions}

For information-theoretic authentication schemes constructed from universal hash functions, a client generates a valid tag from the hash function with an authentication key and the message, which is then verified by a server with the same hash function and authentication key.
We note that there are alternative frameworks of authentication with quantum access to the hash function based on unforgeability~\cite{Doosti2021_QUnforgeability}, but we assume here that only classical access to the hash function is allowed, given that the adversary in our proposed protocols has no direct access to the hash function.
Utilising a $\varepsilon$-almost strong 2-universal hash function prevents message tampering,
\begin{definition}[$\varepsilon$-almost strong 2-universal hash function~\cite{Stinson1994_UniversalHashing,Portmann2014_Authentication}]
\label{defn:AS2U}
    A family of hash functions $\{h(k,.):\vars{X}\rightarrow\vars{T}\}_{k\in\vars{K}}$ is $\varepsilon$-almost strongly 2-universal if for all $x_1,x_2\in\vars{X}$ and $x_1\neq x_2$, and all $t_1,t_2\in\vars{T}$,
    \begin{equation*}
        \frac{1}{\abs{\vars{K}}}\sum_{k\in\vars{K}}\Pr[h(k,x_1)=t_1\land h(k,x_2)=t_2]\leq\frac{\varepsilon}{\abs{\vars{T}}},
    \end{equation*}
    and
    \begin{equation*}
        \frac{1}{\abs{\vars{K}}}\sum_{k\in\vars{K}}\Pr[h(k,x)=t]=\frac{1}{\abs{\vars{T}}}.
    \end{equation*}
\end{definition}
An $\varepsilon$-almost strong 2-universal hash function $h$ can be formed from an $\varepsilon$-almost XOR 2-universal hash function $h'$ where the average probability of $h'(k,x_1)\oplus h'(k,x_2)=t$ for any $t$ is bounded by $\varepsilon$~\cite{Rogaway2002_BucketHashing}, and a authentication masking key (used as a one-time pad), $h(k,x)=h'(\kh,x)\oplus \kOTP$~\cite{Carter1977_WegmanCarter,Portmann2014_Authentication}.
A weaker notion of universality can be defined, where an $\varepsilon$-almost 2-universal hash function refers to one where the average probability of $h(x_1)=h(x_2)$ is bounded by $\varepsilon$~\cite{Tomamichel2011_QLHL}.\\

It is clear that if the authentication masking key is never used again after the hash is performed, $\Kh$ remains secure and tag $T$ appears as a random string to the adversary,
\begin{theorem}
\label{thm:OTPWegCarSec}
    Let $h'$ be a $\varepsilon$-almost XOR 2-universal hash function with authentication key $K^h$, and $h(k,x)=h'(\kh,x)\oplus \kOTP$, with a fully random masking string $\KOTP$. Consider an adversary which was provided with a valid message-tag pair $(x,t)$, and the honest party which discards (trace away) $\KOTP$ (i.e. $\KOTP$ acts as a one-time pad). The seed $\Kh$ remains private, in the sense that
    \begin{equation*}
        \rho_{\Kh XTE}=\tau_{\Kh}\otimes\tau_T\otimes\rho_{XE}.
    \end{equation*}
\end{theorem}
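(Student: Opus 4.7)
The plan is to reduce this to a classical one-time pad argument. The only real ingredient is that $\KOTP$ is a freshly sampled uniform string independent of everything else, and it is XORed onto $h'(\Kh,X)$ to form $T$; once $\KOTP$ is discarded, the pad perfectly randomises $T$ and conceals any information about $\Kh$ that might have been carried by it. I want to note up front that the $\varepsilon$-almost XOR 2-universality of $h'$ plays no role in this statement --- it only matters for unforgeability of a second tag --- so the proof will work for any function $h'$ whatsoever.

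First, I would record the pre-tag state in product form, $\rho_{\Kh \KOTP X E}=\tau_{\Kh}\otimes\tau_{\KOTP}\otimes\rho_{XE}$, which follows directly from the hypothesis that the two keys are uniform and independent of the adversary's side information and of the message register. Next, I would introduce the tag register via the deterministic classical map $T=h'(\Kh,X)\oplus \KOTP$, which acts only on the classical registers $\Kh$, $\KOTP$, and $X$.

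The central step is a change of variables on the classical registers. For every fixed value $\kh$ of $\Kh$ and every fixed value $x$ of $X$, the map $\kOTP\mapsto h'(\kh,x)\oplus\kOTP$ is a bijection of the tag alphabet, so exchanging the uniform independent pair $(\Kh,\KOTP)$ for the pair $(\Kh,T)$ leaves the joint distribution uniform and product, conditionally on each $x$. Because $\KOTP=h'(\Kh,X)\oplus T$ is now a deterministic function of the registers we are keeping, tracing out $\KOTP$ does not alter anything else, and the remaining state is exactly $\tau_{\Kh}\otimes\tau_T\otimes\rho_{XE}$, which is the claim.

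The only subtlety --- not really an obstacle --- is checking that the bijection argument commutes with arbitrary correlations between $X$ and $E$. It does, because for each fixed $(\kh,x)$ the change of variables acts only on the $\KOTP$ register and leaves the conditional state on $E$ untouched; one can write the full state as a classical mixture over $(\kh,x)$ and apply the bijection fibrewise. Consequently no entropic inequality, smoothing, or leftover-hash-type bound is required, and the identity holds exactly rather than up to some $\varepsilon$.
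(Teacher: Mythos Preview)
Your proof is correct. The paper states this theorem without proof (it is treated as an elementary one-time pad fact in the preliminaries), and your bijection/change-of-variables argument is precisely the standard way to establish it; your remark that the $\varepsilon$-almost XOR 2-universality of $h'$ is irrelevant to this particular claim is also correct.
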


A property of $\varepsilon$-almost 2-universal hash functions that is useful is that they are strong extractors.
This is demonstrated from the quantum leftover hash lemma (QLHL)~\cite{Tomamichel2011_QLHL,Tomamichel2017_QKDProof},
\begin{theorem}[Quantum leftover hash lemma]
    Let $h$ be a $\varepsilon$-almost 2-universal hash function with key $K$. Consider a protocol which begins with state $\tau_K\otimes\rho_{XE}$, and the hash is computed as $Y=h(K,X)$.
    Then, for any $\epssm\geq0$ and $\varepsilon'>0$,
    \begin{equation*}
        \Delta(\rho_{YKE},\tau_{Y}\otimes\tau_K\otimes\rho_E)\leq 2\epssm+2\varepsilon' +\sqrt{2^l\varepsilon-1+ 2^{l-\Hmin^{\epssm}(X|E)+\log_2\left(\frac{2}{\varepsilon^{'2}}+1\right)}},
    \end{equation*}
    where $l=\abs{Y}$ is the length of tag $Y$.
    We call the hash function a $\delta$-strong extractor, where $\delta$ matches the RHS of the inequality.
    If $h$ is instead 2-universal, then
    \begin{equation*}
        \Delta(\rho_{YKE},\tau_{Y}\otimes\tau_K\otimes\rho_E)\leq 2\epssm+\frac{1}{2}\times 2^{-\frac{1}{2}[\Hmin^{\epssm}(X|E)-l]}.
    \end{equation*}
\end{theorem}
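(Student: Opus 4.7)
The plan is to combine a smoothing step on the input state with a weighted Hilbert--Schmidt bound on the trace distance (à la Tomamichel--Schaffner--Smith--Renner), and finish by converting conditional collision entropy to smooth min-entropy.

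First, I would replace $\rho_{XE}$ by a subnormalized $\tilde{\rho}_{XE}$ satisfying $P(\rho_{XE},\tilde{\rho}_{XE})\le\epssm$ that achieves the smooth min-entropy, $\Hmin(X|E)_{\tilde{\rho}}=\Hmin^{\epssm}(X|E)_{\rho}$. Because the hash $(K,X)\mapsto (K,Y)$ together with partial trace on $E$ are CPTP, and CPTP maps do not increase purified distance (hence nor trace distance), the triangle inequality applied twice --- once to swap $\rho_{YKE}$ for $\tilde{\rho}_{YKE}$ and once to swap $\tau_Y\otimes\tau_K\otimes\rho_E$ for $\tau_Y\otimes\tau_K\otimes\tilde{\rho}_E$ --- reduces the task to bounding $\Delta(\tilde{\rho}_{YKE},\tau_Y\otimes\tau_K\otimes\tilde{\rho}_E)$ with an additive $2\epssm$ penalty.

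Second, I would invoke the weighted Cauchy--Schwarz / Hilbert--Schmidt inequality, bounding $\|M\|_1^2$ by the dimension of the classical register times a $\tilde{\rho}_E^{-1/4}$-weighted $\|\cdot\|_2^2$, which gives
\begin{equation*}
\Delta(\tilde{\rho}_{YKE},\tau_Y\otimes\tau_K\otimes\tilde{\rho}_E)^{2} \le \tfrac{1}{4}\,2^{l}\,\Tr\!\bigl[\tilde{\rho}_E^{-1/2}(\tilde{\rho}_{YKE}-\tau_Y\otimes\tau_K\otimes\tilde{\rho}_E)^{2}\tilde{\rho}_E^{-1/2}\bigr].
\end{equation*}
Expanding the square and averaging over the uniform, independent key $K$, the cross terms collapse; the $\varepsilon$-almost 2-universal property (which bounds the average collision probability $\Pr_K[h(K,x_1)=h(K,x_2)]$ by $\varepsilon$ whenever $x_1\ne x_2$) then cleanly separates the remainder into a ``collision'' piece equal to $2^{-H_2(X|E)_{\tilde{\rho}}}$, where $H_2(X|E)_{\tilde{\rho}} = -\log_2\Tr[\tilde{\rho}_E^{-1/2}\tilde{\rho}_{XE}^{2}\tilde{\rho}_E^{-1/2}]$, plus a slack piece proportional to $(2^{l}\varepsilon-1)/2^{l}$. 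This reproduces the square-root expression with $H_2(X|E)$ in the exponent.

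The step I expect to be the main obstacle is the final conversion of the collision entropy $H_2(X|E)_{\tilde{\rho}}$ into the smooth min-entropy $\Hmin^{\epssm}(X|E)_{\rho}$, since this is where the additive $2\varepsilon'$ and the $\log_2(2/\varepsilon'^{2}+1)$ inside the exponent enter. I would apply the standard entropy-smoothing inequality that, for every $\varepsilon'>0$, yields a subnormalized $\tilde{\rho}'$ with $P(\tilde{\rho},\tilde{\rho}')\le\varepsilon'$ and $H_2(X|E)_{\tilde{\rho}'}\ge \Hmin(X|E)_{\tilde{\rho}}-\log_2(2/\varepsilon'^{2}+1)$, and pay an additional $2\varepsilon'$ via another double triangle-inequality step. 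For the strictly 2-universal case ($\varepsilon=2^{-l}$), the slack term $(2^{l}\varepsilon-1)$ vanishes and the direct inequality $H_2(X|E)_{\tilde{\rho}}\ge \Hmin(X|E)_{\tilde{\rho}}$ can be used in place of the conversion lemma, eliminating the $\varepsilon'$ parameter and leaving only the simpler bound $2\epssm+\tfrac{1}{2}\cdot 2^{-(\Hmin^{\epssm}(X|E)-l)/2}$.
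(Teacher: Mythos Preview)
The paper does not supply its own proof of this statement: the quantum leftover hash lemma is quoted in the preliminaries (Appendix~A) as a known result, with citations to Tomamichel et al.\ (the QLHL paper and the ``largely self-contained'' QKD proof). Your sketch is precisely the standard argument from those references---smoothing the source state, the weighted Hilbert--Schmidt/Cauchy--Schwarz bound to pass from trace norm to a collision-entropy quantity, the $\varepsilon$-almost 2-universal averaging to isolate the $2^{l}\varepsilon-1$ slack, and the $H_2\to\Hmin$ conversion that produces the $2\varepsilon'$ and $\log_2(2/\varepsilon'^{2}+1)$ terms---so there is nothing to compare against in the paper itself beyond the citation.
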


\subsection{Pseudorandom Number Generation}

The security of cryptographically-secure PRNGs are defined by the probability that any distinguishing protocol with limited resources can distinguish between the PRNG output and an ideal random number generator (IRNG) output~\cite{Katz2014_Crypto}.
We can consider quantum adversaries in general since the QAKE protocol involves quantum steps. 
Since we define our protocol with seed size that we utilise in the experiment, we adopt a more concrete security definition with resources $t$~\cite{Bellare1997_PRNG,Dodis2013_PRNG} (e.g. time) instead of presenting security against quantum polynomial-time (QPT) adversaries.
\begin{definition}[Quantum-secure PRNG]
\label{defn:QSRPNG}
    A pseudorandom number generator (PRNG) is a function $G^{\PRNG}:\vars{K}\rightarrow \vars{Y}$ that takes as input a seed $k\in\vars{K}$ with length $l$ and outputs a bit string $y\in\vars{Y}$ with length $n$, where $l < n$. 
    A $(l,n)$-standard PRNG is $(t,\varepsilon)$ quantum-secure if for any quantum algorithm $\vars{A}_D^{t}$ constrained by some resources $t$ with output $D$, the PRNG equipped with a random seed $G^{\PRNG}_{\vars{K}}$ is indistinguishable from an ideal random number generator $G^{\TRNG}$ that samples random numbers $r$ uniformly at random from $\vars{Y}$,
    \begin{equation*}
        \abs{\Pr[\vars{A}^{G^{\PRNG}_{\vars{K}},t}_{D}=1]-\Pr[\vars{A}^{G^{\TRNG},t}_{D}=1]}\leq \varepsilon.
    \end{equation*}
\end{definition}

One important property of the PRNG is the difficulty in guessing the seed of the PRNG when given partial information of the output, which can only succeed with a small probability $\epsguess$, as summarised below.

\begin{theorem}
\label{thm:PRNGGuess}
    Consider a scenario where a random number $y\in\{0,1\}^n$ is prepared from a $(t',\varepsilon_{\PRNG})$-quantum-secure PRNG with a random seed $k$ of length $l$ with security parameter $\varepsilon_{\PRNG}$, i.e. $y=G^{\PRNG}(k)$, and a $n$-bit bit string $z$, with $\Pr[z_i=1]=p_1$, for $i=1,\cdots,n$, with $n>2l$. Suppose an adversary is provided with $y_{z=1}:=\{y_i:z_i=1\}$, $G^{\PRNG}$ and $z$, and is tasked to guess the seed $k$. When $t'\geq t+t_{SG,PR}$, the probability that any quantum adversary with resource upper bounded by $t$ can correctly guess $k$ is bounded by
    \begin{equation*}
        \Pr[\hat{k}=k]\leq\epsguess
    \end{equation*}
    where $\epsguess \leq \varepsilon_{\PRNG}+2^{-n+l}$, and $t_{SG,PR}$ is the total resources required for generating string $z$ and generating $y$ from a master key using the PRNG.
\end{theorem}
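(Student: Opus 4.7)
The plan is a direct reduction to the PRNG indistinguishability game in Def.~\ref{defn:QSRPNG}. Given a guessing adversary $\vars{A}$ with resources bounded by $t$ and success probability $p=\Pr[\hat{k}=k]$, I would construct a distinguisher $\vars{A}_D$ as follows: on input $y\in\{0,1\}^n$ (sampled from either $G^{\PRNG}_{\vars{K}}$ or $G^{\TRNG}$), sample $z$ with each bit independently equal to $1$ with probability $p_1$, extract $y_{z=1}$, invoke $\vars{A}$ on $(y_{z=1},z)$ to obtain a guess $\hat{k}$, compute $G^{\PRNG}(\hat{k})$, and output $1$ iff $G^{\PRNG}(\hat{k})=y$. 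The resources consumed by $\vars{A}_D$ are those of $\vars{A}$ plus $t_{SG,PR}$ (the overhead for sampling $z$ and performing one extra PRNG evaluation), so $\vars{A}_D$ satisfies the resource bound $t'\geq t+t_{SG,PR}$ required by the PRNG security definition.

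I would then bound each side of the distinguishing advantage separately. In the PRNG branch, a correct guess of the adversary forces $G^{\PRNG}(\hat{k})=G^{\PRNG}(k)=y$ deterministically, so $\Pr[\vars{A}_D^{G^{\PRNG}_{\vars{K}},t'}=1]\geq p$. In the IRNG branch, I would use a fixed-point counting argument: for any fixed $z$ and any fixed realization of the adversary's internal randomness, the map $f:y\mapsto G^{\PRNG}(\vars{A}(y_{z=1},z))$ factors through $\{0,1\}^l$, so its image has size at most $2^l$, and thus $f$ has at most $2^l$ fixed points in $\{0,1\}^n$. Since $y$ is uniform on $\{0,1\}^n$ in this branch, $\Pr_y[f(y)=y]\leq 2^{l-n}$, and this bound survives averaging over $z$ and the adversary's randomness, giving $\Pr[\vars{A}_D^{G^{\TRNG},t'}=1]\leq 2^{l-n}$.

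Combining the two bounds via Def.~\ref{defn:QSRPNG} yields $p-2^{l-n}\leq\Pr[\vars{A}_D^{G^{\PRNG}_{\vars{K}},t'}=1]-\Pr[\vars{A}_D^{G^{\TRNG},t'}=1]\leq\varepsilon_{\PRNG}$, which rearranges to the claimed bound $\epsguess\leq\varepsilon_{\PRNG}+2^{-n+l}$. The main subtlety, and what I expect to be the only nontrivial step, is justifying the fixed-point counting argument for a \emph{quantum} adversary — here I would argue that since $\hat{k}$ is a classical output and the adversary can be purified and measured, we may condition on the measurement outcomes to reduce to a deterministic $\vars{A}$, apply the image-size bound pointwise, and then average back. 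The hypothesis $n>2l$ does not appear to enter the guessing bound directly; it presumably ensures $2^{l-n}$ is much smaller than the naive seed-guessing probability $2^{-l}$ for the downstream applications rather than being required by the proof itself.
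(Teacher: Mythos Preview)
Your proposal is correct and follows essentially the same reduction as the paper: the same distinguisher, the same lower bound $\Pr[\vars{A}_D^{G^{\PRNG}_{\vars{K}}}=1]\geq p$ in the PRNG branch, and the same $2^{l-n}$ upper bound in the IRNG branch (the paper phrases the latter as ``$y$ must lie in the range of $G^{\PRNG}$,'' which is exactly your image-size/fixed-point observation). The paper frames the argument by contradiction and invokes $n>2l$ together with the a priori assumption $\epsguess>2^{-l}$ only to fix the sign of the distinguishing advantage before dropping the absolute value; your direct argument sidesteps this, and you are right that the hypothesis $n>2l$ is not actually needed for the bound itself.
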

\begin{proof}
We start with noting that correctly guessing the seed allows one to guess all the outcomes $y$ from the PRNG. If there is an algorithm which can guess the seed from the set $y_{z=1}$, it can be used to distinguish whether a string $y \in \{0,1\}^n$ is being generated from a PRNG or from an IRNG. Therefore, it will violate the security of the $(t',\varepsilon_{\PRNG})$-PRNG.  

Suppose for contradiction that an algorithm $\vars{A}$ with resources $t$ is able to guess the seed $k$ correctly with probability at least $\epsguess>2^{-l}$ (note $\epsguess\leq 2^{-l}$ is trivially achievable with random guess), i.e.,
\begin{equation} 
\Pr[\vars{A}(y_{z=1})=k] > \epsguess.
\end{equation}

We construct an algorithm $\vars{\hat A}$ that distinguishes $G^{\PRNG}(k)$ from $G^{\TRNG}$ as follows:
\begin{enumerate}
    \item Upon receiving a string $y \in \{0,1\}^n$, it generates a string $z = z_1\ldots z_n$, where $z_i \in \{0,1\}$, and $\Pr [z_i = 1] = p_1$, for all $i \in \{1, \ldots ,n\}$. 
    \item From $z$, and $y$, it constructs the set $y_{z=1}$, and feed it to the algorithm $\vars{A}$ to guess the seed. Suppose $\vars{A}(y_{z=1}) = \hat{k}$.
    \item If $G^{\PRNG}(\hat{k}) = y$, the algorithm returns $1$, indicating $y$ is being generated via a PRNG. 
    \item Otherwise, the algorithm returns $0$, indicating $y$ is sampled via an IRNG.
\end{enumerate}
The additional resources required for the generation of string $z$ and generation of $y$ from the master key guess $\hat{k}$ is defined as $t_{SG,PR}$.
As such, $\hat{\vars{A}}$ requires $t+t_{SG,PR}$ resources.\\

When the protocol is implemented on the PRNG, it is clear that if the guess of $\hat{k}$ is correct, i.e. $\hat{k}=k$, the algorithm returns 1.
Therefore, 
\begin{equation*}
    \Pr[\vars{\hat A}^{G^{\PRNG},t+t_{SG,PR}} = 1]>\epsguess.
\end{equation*}
When the protocol is implemented on an IRNG, the range of $G^{PRNG}$ is limited by the size of the seed, $2^l$.
For any $y$ outside the range of $G^{PRNG}$, it is clear that $G^{PRNG}(k)\neq y$ for any $k$, and $\vars{\hat A}^{G^{\TRNG},t+t_{SG,PR}}$ returns 0.
Therefore, for uniformly generated $y\in\{0,1\}^n$,
\begin{equation*}
    \Pr [\vars{\hat A}^{G^{\TRNG},t+t_{SG,PR}} = 1]\leq 2^{-n+l}.
\end{equation*}
Given the condition that $n>2l$ and $\epsguess>2^{-l}$, the IRNG term is smaller and the ability of $\vars{\hat{A}}$ to distinguish between the PRNG and IRNG is then
\begin{equation*}
    \abs{\Pr[\vars{\hat A}^{G^{\PRNG},t+t_{SG,PR}} = 1] - \Pr [\vars{\hat A}^{G^{\TRNG},t+t_{SG,PR}} = 1]}
    > \epsguess -2^{-n+l}.
\end{equation*}
According to the definition of the PRNG, $\abs{\Pr[\vars{\hat A}^{G^{\PRNG},t'}_D = 1] - \Pr [\vars{\hat A}^{G^{\TRNG},t'}_D = 1]} \leq \varepsilon_{\PRNG}$.
Since $t'\geq t+t_{SG,PR}$, the bound holds in particular for algorithm $\vars{\hat A}^{G^{\RNG},t+t_{SG,PR}}$.
Therefore, when $\epsguess -2^{-n+l} =\varepsilon_{\PRNG}$, we encounter a contradiction and $\Pr[\hat{k}=k]\leq\epsguess=\varepsilon_{\PRNG}+2^{-n+l}$.
\end{proof}

\section{QAKE Security Definition}
\label{app:AKESecDefn}
We seek to incorporate standard QKD with authentication security in a QAKE security framework, which utilises the same terminology and tools as a classical AKE framework.
Here, we present only the main points of the classical AKE framework relevant in the context of QKD, and we refer the reader to Ref.~\cite{Guilhem2020_AKE} for more details on the classical AKE framework.

\subsection{Security Model}
\label{app:General_Security_Model}

We begin by first examining the security model of QAKE, based off the QKD security model and authentication requirements before linking the model to the classical AKE framework.
QAKE adopts a security model where the target key generation partner is between honest Alice and Bob, with an adversary (or eavesdropper) present in their communication channel that could seek to impersonate either party or steal the shared keys between Alice and Bob.
The protocol can be generalised to multiple parties, with any pre-shared secrets used for authentication being shared pairwise.
In general, the protocol involve input registers $S$ representing the pre-shared secrets (e.g. authentication keys), $L$ for any labels (e.g. to indicate the secret to use), with $E$ labelling any side-information of the adversary.
After the protocol, the output would include $\FA\FB$, which represent the respective parties' choice: (1) $F=\phi$ representing that it is not involved in the round, (2) $F=0$ representing that it chose not to authenticate the other party, and (3) $F=1$ representing that it chose to authenticate the other party.
It would also output keys $\KA$ and $\KB$ respectively if $\FA=1$ and $\FB=1$ (Note that this can in principle be decoupled -- i.e. parties can choose to authenticate, but not generate keys, but we couple them for simplicity).
Otherwise, it would output $\perp$ signalling that no secure keys are generated.\\

The classical AKE framework is broader in its model, examining multiple parties, each with its own unique identity.
While classical AKE do not require all parties to provide information for other parties to authenticate its identity (to provide for instances where anonymous key generation is acceptable), we require mutual authentication for the QAKE protocol, i.e. setting the set of parties that need to provide authentication information to be the set of all parties.
During protocol runs, each party can set up local sessions, labelled by $l=(i,j,k)$, with its identity $i$ (e.g. Alice), intended peer $j$ (e.g. Bob), and session number $k$ between the parties.
The parties are assumed to hold public-private key pairs $(\text{pk}_i,\text{sk}_i)$ for authentication purposes.
For QAKE, we generalize to allow for the use of symmetric keys by having parties hold ``matching keys" $\text{mk}_i$ instead, which can either contain $\text{sk}_i$ for symmetric keys or $\text{pk}_i$ for public keys\footnote{We note that this generalization has an impact on certain attacks. For instance, the classical AKE model captures the key-compromise impersonation attack, where an adversary that knows the secret key of Alice attempts to impersonate Bob and get Alice to accept the authentication. The leakage of $\text{sk}_A$ alone would not cause this compromise, but with the use of symmetric keys, the protocol would no longer provide resistance to key-compromise impersonation. However, since this attack is not relevant to the QAKE of interest, where both Alice and Bob are assumed to be honest, we leave the account of such changes to future work.}.
We note also that the corresponding peers should hold the corresponding matching key $\text{mk}_i$ that allows them to authenticate party $i$.\\

During the protocol run, the adversary is present in the network between all parties, and is able to interfere with the communication, e.g. delay, redirect or alter messages.
In addition, the adversary is in general allowed to call on the session to reveal the session key and corrupt any party, which forces the reveal of the party's secret keys $\text{sk}_i$.
We note that the security conditions (e.g. key secrecy) will impose conditions (e.g. party not corrupted) on the use of such attacks.
For QAKE, since Alice and Bob are assumed to be honest, the reveal of the session key, along with corrupting of Alice and Bob is disallowed.\\

During the protocol or at the end of the protocol, various outputs or information would be present with the local sessions.
These are:
\begin{enumerate}
    \item Session acceptance: Indicates if a party has accepted or rejected the session. If a party is not involved, the session acceptance is maintained as $\perp$. This matches exactly the choice $F_AF_B$ of the QAKE protocol.
    \item Keys: The session outputs keys, which is set to $\perp$ unless the session is accepted -- matching $K_AK_B$ in QAKE.
    \item Session identifier: Initialized as $\perp$, and changed to other values when the session is accepted. This value is meant to identify sessions that are partners -- sessions with same session identifier are partners (note that they have to both accept the session). This value is implicitly part of QAKE protocols, where Alice's and Bob's generated sessions are partners. To explicitly include the identifier, one can simply modify the local session label $l$ to an identifier, e.g. setting identifier as $(i,j,k)$ if $i<j$ and $(j,i,k)$ otherwise, or include a random string that is sent and authenticated during the protocol. We take the first case for simplicity, and also note that for the QAKE protocol, these partnering sessions are always between two different parties (Alice and Bob).
    \item Entity confirmation identifier: Indicates the sessions that eventually would partner, i.e. similar to session identifier, but can be set earlier than protocol acceptance stage. For the QAKE protocol, we can set these values as $(i,j,k)$ (or $(j,i,k)$) after the first message is sent or arrives at the party.\footnote{The choice of the confirmation identifier should coincide with when the identity of the peer (e.g. Bob) is made known to the the party (e.g. Alice). In classical AKE and QAKE, a pre-specified peer model is assumed, where each session knows its intended partner's identity. As such, these information should be exchange at the start of the protocol in practice.}
    \item Key confirmation identifier: Indicates the sessions that eventually would generate the same key, and this can be chosen similarly to the entity confirmation identifier for QAKE.
\end{enumerate}

\subsection{Security Conditions}
\label{app:General_Security_Cond}

With the input and output registers defined, let us formally introduce the security definitions.
We note that the classical AKE framework has multiple security conditions that are related and corresponds to desirable properties. 
However, we choose only the most suitable, with many of the other properties being derivable from these conditions.\\

The first condition, robustness, is not part of the classical AKE framework, but is important nonetheless.
It necessitates that the protocol can succeed when the adversary is absent, and the protocol is not trivially rejecting all authentication to always guarantee security.
\begin{definition}[$\epsrob$-robustness]
    A QAKE protocol is $\epsrob$-robust if it passes with high probability in the absence of any adversary, i.e. 
    \begin{equation*}
        \Pr[\FA=\FB=1]\geq 1-\epsrob.
    \end{equation*}
\end{definition}

The second condition, explicit entity authentication, addresses the security associated with mutual authentication between Alice and Bob.
It comes in two variants: full explicit entity authentication and almost-full explicit entity authentication.\\

We impose full explicit entity authentication on Alice, requiring that Bob will generate a partnering session when Alice accepts.
More formally, the definition in Ref.~\cite{Guilhem2020_AKE} presents a predicate that states that for all sessions $l$, when the session accepts, there exists a partner session $l'$, and this session belongs to the intended peer of the session\footnote{We note there is a separate subtle implication that any partner to $l$ must belong to the intended peer defined in explicit entity authentication. This means that the predicate does not preclude the possibility that two sessions can be established on Bob that partners the same Alice. Since Alice and Bob are considered honest in QAKE, this can be simply prevented by having them to individually check that they do not have repeating session identifiers.}.
This includes a simplification noting that for QAKE, mutual authentication is expected, and both Alice and Bob are assumed to be honest.
As such, we have the implication that $F_A=1\implies F_B=1$, which the security condition requires to be true with high probability.\\

The condition on almost-full explicit entity authentication is imposed on Bob, who sends the final authentication message to Alice.
As such, he is unable to guarantee that at the end of the protocol, Alice would generate an accepting session (necessary for session identifier generation and partnering) since the adversary can interfere with the final authentication message transmission.
More formally, the definition in Ref.~\cite{Guilhem2020_AKE} presents a predicate that states that for all sessions $l$, when the session accepts, there exists a session $l'$ with the same entity confirmation identifier, which is a partner to $l$ if $l'$ generates a session identifier, and this partner must be the peer of $l$.
This includes the same simplification that mutual authentication is expected, and both Alice and Bob are assumed to be honest.
As such, we have the implication that $F_B=1\implies F_A=0,1$, where Alice either rejects the session (same entity confirmation identifier, but not partners) or accepts the session (partners)\footnote{There is similar subtle implication as per full explicit entity authentication where the definition does not prevent multiple accepting sessions to be established on Alice, thought that can be addressed similarly since Alice is honest.}.\\

The security of both conditions requires that the probability of the predicates being false is small, i.e. the probability of ``bad events" is small.
We note that the security is defined relative to a PPT adversary, with the small probability given as negligible relative to a security parameter.
We can generalise here (and in the following security definitions) to a general unbounded adversary, and define some small parameter $\varepsilon$ to quantify the ``negligible" probability.
As such, we define the explicit entity authentication as
\begin{definition}[($\epsEAf$,$\epsEAaf$)- explicit entity authentication]
    A QAKE protocol has $\epsEAf$-full explicit entity authentication for Alice if the authentication fails with high probability when Bob is not accepting, i.e.
    \begin{equation*}
        \Pr[\FA=1,\FB=0]+\Pr[\FA=1,\FB=\phi]\leq\epsEAf,
    \end{equation*}
    and it has $\epsEAaf$-almost full explicit entity authentication for Bob if the authentication fails with high probability when Alice does not generate a session, i.e. \begin{equation*}
        \Pr[\FA=\phi,\FB=1]\leq\epsEAaf.
    \end{equation*}
\end{definition}
Note that we label $\epsEA=\epsEAf+\epsEAaf$ for brevity.\\

The third security condition is termed Match-security, though in the context of the QAKE protocol, it reduces to the correctness condition.
This security condition mainly elevates the entity authentication security condition to provide key authentication as well -- where Alice and Bob can guarantee that the other party will generate the same key confirmation identifier and keys if both sessions are accepting\footnote{There is a separate condition termed key-match soundness that is required to elevate explicit entity authentication to explicit key authentication. This condition is proven in classical AKE framework based on match secrecy and key secrecy conditions.}.
Formally, Ref.~\cite{Guilhem2020_AKE} defines the Match-predicate as four conditions: (1) partner sessions generate the same key, (2) partner sessions generate same key confirmation identifier, (3) at most two sessions can generate the same session identifier, and (4) sessions with the same key confirmation identifier would generate the same key if both sessions are accepting.
The four conditions, in the context of QAKE, can be simplified:
\begin{enumerate}
    \item This condition can be summarised as when partnering sessions are accepting (in QAKE this means that they are key generating as well), they will generate the same keys, i.e. $F_A=F_B=1\implies K_A=K_B$.
    \item Since we set key confirmation identifier and session identifier in the same way for QAKE, this condition is always true.
    \item This condition is always guaranteed in QAKE since the choice of session identifier as $(i,j,k)$ (or $(j,i,k)$), along with honest Alice and Bob, and the fact that only one pair of sessions have same set of parties $(i,j)$ and same index $k$, means that only two sessions can generate the same session identifier.
    \item Equivalent to condition 1 since key confirmation identifier and session identifier are set in the same way.
\end{enumerate}
Condition 1 in the context of QAKE therefore matches the correctness condition of QKD, while the remaining are either always true or reduce to condition 1.
Therefore, we can define the Match-security as
\begin{definition}[$\epsMS$-Match security]
    A QAKE protocol has $\epsMS$-Match security if
    \begin{equation*}
        \Pr[\KA\neq \KB,\FA=1,\FB=1]\leq\epsMS.
    \end{equation*}
\end{definition}

The final condition is key secrecy (also termed BR-secrecy), which requires that any keys generated be secret from the adversary. 
Ref.~\cite{Guilhem2020_AKE} defines the secrecy as a secrecy game, being the ability for an adversary to distinguish between an ideal key sampled from the key distribution and the actual key.
Certain secrecy freshness conditions are necessary for the game, such as the owner and peer being honest, the session key not directly revealed to the adversary (e.g. announced), and that there is expected mutual authentication, which are all part of the assumptions of QAKE, i.e. the freshness condition is satisfied.
More formally, the secrecy condition is defined~\cite{Guilhem2020_AKE} 
\begin{equation}
    \abs{\Pr[\vars{A}_D^{\reall}=1]-\Pr[\vars{A}_D^{\ideal}=1]}=\text{negl},
\end{equation}
where the guess $D$ is set as $\perp$ if the session is not accepting~\cite{Brzuska2011_BR}, and negl representing neglible in some security parameter.
Noting that when $D=\perp$, the ideal and real cases are identical, and with the upgrade from PPT adversary to an unbounded adversary with quantum capabilities, the secrecy condition can be redefined in terms of trace distance, representing the ability to distinguish between the ideal and real $K$.
Note that we explicitly take only the cases of $(\FA,\FB)$ being $(0,1)$ and $(1,1)$, since the probability of $\FB=1$ when $\FA=\phi$ is small, and accounted for in entity authentication.
As such, we define key secrecy as
\begin{definition}[$\epsKS$-key secrecy]
    A QAKE protocol has $\epsKS$-key secrecy if 
    \begin{equation*}
        \Delta(\rho_{\FA\FB\KB LSE\land (\FA,\FB)\in\{01,11\}},\tau_{\KB}\otimes\rho_{\FA\FB LSE\land (\FA,\FB)\in\{01,11\}})\leq\epsKS.
    \end{equation*}
\end{definition}

With the security definitions (except robustness), We can define an ideal state of the QAKE protocol, 
\begin{equation}
\begin{split}
    \rhoideal=&\dyad{\perp\perp}_{\KA\KB}\otimes\left[p_{0\phi}\dyad{0\phi}_{\FA\FB}\otimes\rho^{\phi 0}_{LSE}+p_{\phi0}\dyad{\phi0}_{\FA\FB}\otimes\rho^{\phi0}_{LSE}+p_{00}\dyad{00}_{\FA\FB}\otimes \rho^{00}_{LSE}\right]\\
    &+p_{01}\dyad{01,\perp}_{\FA\FB\KA}\otimes\tau_{\KB}\otimes\rho^{01}_{LSE}+p_{11}\dyad{11}_{\FA\FB}\otimes\tilde{\tau}_{\KA\KB}\otimes\rho^{11}_{LSE},
\end{split}
\end{equation}
where $\rho^{ab}_{LSE}$ is the output state of subsystems $LSE$ conditioned on $\FA=a$ and $\FB=b$.
We can thus define an overall security definition
\begin{theorem}[$\epssec$-security]
    A QAKE protocol has $\epssec$-security if
    \begin{equation*}
        \Delta(\rho_{\KA\KB\FA\FB LSE},\rhoideal_{\KA\KB\FA\FB LSE})\leq\epssec.
    \end{equation*}
\end{theorem}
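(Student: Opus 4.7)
The statement is framed as a definition of $\epssec$-security via trace-distance closeness to the ideal state $\rhoideal$, and a proof proposal should justify that this composite bound is the correct notion. Concretely, the plan is to establish two points: (i) $\rhoideal$ itself satisfies each of the four individual security conditions of the previous subsection with parameter zero, and (ii) any state $\epssec$-close to $\rhoideal$ in trace distance inherits each individual condition with parameter at most $\epssec$. Both points will be obtained by reading off the $(\FA,\FB)$-block structure of $\rhoideal$ and invoking monotonicity of trace distance under classical post-processing to transfer the composite bound to each individual quantity.

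For point (i), I plan to match each block of the defining expression for $\rhoideal$ against the four conditions in turn. The events $\FA=1,\FB\neq 1$ and $\FA=\phi,\FB=1$ carry zero weight in $\rhoideal$, so full and almost-full explicit entity authentication hold with parameter zero. On the $(1,1)$ block, $\KA\KB$ is in the state $\tilde{\tau}_{\KA\KB}$, so $\KA=\KB$ deterministically and match security holds with parameter zero. On the joint event $(\FA,\FB)\in\{01,11\}$, $\KB$ appears as a tensor factor $\tau_{\KB}$ independent of the remaining registers, so key secrecy holds with parameter zero.

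For point (ii), I would apply monotonicity of trace distance under CPTP maps to the hypothesis $\Delta(\rho_{\KA\KB\FA\FB LSE},\rhoideal)\leq\epssec$. Entity authentication and match security are bounds on probabilities of events measurable from $\FA\FB\KA\KB$; projecting onto those outcomes and using $|\Pr_\rho[E]-\Pr_{\rhoideal}[E]|\leq\Delta(\rho,\rhoideal)$ delivers each at parameter $\epssec$. Key secrecy is itself a trace distance on the sub-normalised restriction to $(\FA,\FB)\in\{01,11\}$; since on that subevent $\rhoideal$ is already of the product form required, one triangle-inequality step bounds the key-secrecy trace distance by the composite one, again at $\epssec$.

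The main subtlety I expect is the worst-case-over-adversaries structure noted in the preliminaries: the real state is produced by an adversarial channel $\bigepsilon$ drawn from some class, and both sides of the hypothesised inequality are implicitly taken as a supremum over this class. The concern is that the adversary optimising an individual condition might exploit a strategy not optimising the composite distance. Because each individual condition is obtained by a fixed CPTP post-processing of the composite output, however, the supremum over adversaries for the individual condition is upper bounded by the supremum for the composite distance, so the reduction goes through. With this adversarial accounting in place, the block-diagonal structure of $\rhoideal$ closes the argument cleanly.
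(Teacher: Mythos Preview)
In the paper this statement functions purely as a \emph{definition}: having constructed $\rhoideal$ so that each block matches the desired behaviour, the paper simply declares that ``$\epssec$-security'' means trace-distance closeness to $\rhoideal$, with no accompanying proof. The substantive analytical work is done in the \emph{subsequent} theorem (the one computing $\epssec=\epsEA+\epsMS+\epsKS$), and it runs in the opposite direction to yours: starting from the individual conditions, the paper shows via explicit triangle-inequality splits over the $(\FA,\FB)$ blocks that the composite trace distance is bounded by the sum of the individual parameters.

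Your proposal is therefore not wrong, but it is proving a different (converse) statement: that composite $\epssec$-security implies each individual condition with a comparable parameter. This is a perfectly reasonable sanity check on the definition, and your block-reading and monotonicity arguments for entity authentication and match security are sound. One quantitative slip: for key secrecy, a single triangle step against $\rhoideal$ gives
\[
\Delta(\rho_{\KB\cdots\land\FB=1},\tau_{\KB}\otimes\rho_{\cdots\land\FB=1})
\leq \Delta(\rho_{\KB\cdots},\rhoideal_{\KB\cdots})+\Delta(\tau_{\KB}\otimes\rhoideal_{\cdots},\tau_{\KB}\otimes\rho_{\cdots}),
\]
and both terms are bounded by $\epssec$, so you get $2\epssec$ rather than $\epssec$. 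If you want to align with the paper, you should instead argue the forward implication (individual $\Rightarrow$ composite), which is what the paper actually needs for its multi-round reduction.
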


\subsection{Multi-Round to Single-Round Security Reduction}
\label{app:Multi_to_Single_Red}

In general, we expect QAKE to be utilised for multiple rounds, e.g. for a client visiting the ATM over the lifetime of the ATM card or establish multiple sessions connecting to the cloud services over the lifetime of the authentication key. 
As such, we consider a security definition for multiple rounds.
In particular, we consider a $m$-round protocol, with all rounds satisfying the security condition.
At the end of each round, the protocol should remain secure, where
\begin{equation}
    \Delta(\rho_{\KAA{i}\KBB{i}\FAA{i}\FBB{i}L^iSE_i},\rhoideal_{\KAA{i}\KBB{i}\FAA{i}\FBB{i}L^iSE_i})\leq\epssecc{i},
\end{equation}
with the output state generated from the input 
\begin{equation}
    \rho_{\KAA{i}\KBB{i}\FAA{i}\FBB{i}L^iSE_i}=\vars{P}(\rho_{L^{i-1}SE_{i-1}'}),
\end{equation}
and the adversary having $E_{i-1}'=\KAA{i-1}\KBB{i-1}\FAA{i-1}\FBB{i-1}E_{i-1}$, noting that the keys from previous rounds may have been used and leaked to the adversary.
The worst case $\epssecc{i}$ would be the $m$-th round, with the most opportunity for the adversary to gain information to violate the security condition.
Proving the security for all $m$ rounds can be done by proving a series of conditions, as described by Thm.~\ref{thm:Multi_to_Single_Red}, which we repeat below.
\multitosinglered*

\begin{proof}
The proof is by induction. 
Let us consider a statement that for any $i\in\mathbb{Z}^+$, there exists $\rho_{i}\in\Sout'$ such that
\begin{equation}
    \Delta(\vars{P}^{i}(\rho_0),\rho_{i})\leq i\epssecint.
\end{equation}
For $i=1$, the statement is trivially true from conditions 1 and 2.
Suppose the statement is true for index $j$, i.e. there exists $\rho_j\in\Sout'$ such that $\Delta(\vars{P}^j(\rho_0),\rho_j)\leq j\epssecint$.
From condition 3, we know that $\rho_j\in \Sout'\subseteq \Sin$ is a valid input state as well.
As such, by condition 2, there exists a state $\rho_{j+1}\in \Sout'$ such that
\begin{equation}
    \Delta(\vars{P}(\rho_j),\rho_{j+1})\leq\epssecint.
\end{equation}
We can expand the LHS term for the $(j+1)$-th index, where there exists $\rho_{j+1}\in \Sout'$ such that
\begin{equation}
\begin{split}
    &\Delta(\vars{P}^{j+1}(\rho_0),\rho_{j+1})\\
    \leq&\Delta(\vars{P}^{j+1}(\rho_0),\vars{P}(\rho_j))+\Delta(\vars{P}(\rho_j),\rho_{j+1})\\
    \leq&\Delta(\vars{P}^j(\rho_0),\rho_j)+\epssecint\\
    \leq& (j+1)\epssecint,
\end{split}
\end{equation}
where the first line applies the triangle inequality, and the second line notes that the CPTP map $\vars{P}$ cannot increase trace distance.
Since the statement is true for $i=1$, and the statement being true for $j$ implies that it is true for $j+1$, by mathematical induction, the statement is true for any $i\in [1,m]$.\\

To prove the statement of Thm.~\ref{thm:Multi_to_Single_Red}, we first note that since $\rho_i\in\Sout'\subseteq\Sin$, condition 4 states that there exists a state $\rho_{i+1}\in\Sout$ such that
\begin{equation}
    \Delta(\vars{P}(\rho_i),\rho_{i+1})\leq\epssec.
\end{equation}
As such, we can use the triangle inequality to arrive at
\begin{equation}
\begin{split}
    \Delta(\vars{P}^{i+1}(\rho_0),\rho_{i+1})\leq&\Delta(\vars{P}^{i+1}(\rho_0),\vars{P}(\rho_m))+\Delta(\vars{P}(\rho_m),\rho_{i+1})\\
    \leq&\Delta(\vars{P}^{i}(\rho_0),\rho_i)+\epssec\\
    \leq& i\epssecint+\epssec
\end{split}
\end{equation}
for any $i\in[1,m]$.
The same statement is true for $i=0$ by condition 4, and thus we can arrive at the theorem.
\end{proof}

\subsection{Security Parameter Computation}
\label{app:Sec_Para_Compute}

The two main conditions in Thm.~\ref{thm:Multi_to_Single_Red} to prove are the single-round security conditions.
Condition 4 requires the trace distance to an ideal state with the three security conditions satisfied to be small.
Condition 2 requires an additional condition for the set of ideal states, which we express as $\rhoidealint\in\Sout'$, to match $\Sin$, where the necessary secrets $S$ are kept private.
Crucially, the main difference between $\rhoidealint\in\Sout'$ and $\rhoideal\in\Sout$ is the form of the $LSE$ subsystem.
In general, $\rhoideal$ has a $LSE$ subsystem that can take any form, including one matching the actual output state, $\rhoreal_{LSE}$, while the form of $\rhoidealint_{LSE}$ is of a specific form (depending on the protocol requirement).
This allows us to define a separate security condition for the intermediate rounds:
\begin{definition}[$\epsSP$-Shared Secrets Privacy]
    A QAKE protocol has $\epsSP$-shared secrets privacy if
    \begin{equation*}
        \Delta\left(\sum_{(\fA,\fB)\not\in\vars{S}_{\SP}}p_{\fA\fB}^{\reall}\rho_{LSE|(\FA,\FB)=(\fA,\fB)},\sum_{(\fA,\fB)\not\in\vars{S}_{\SP}}p_{\fA\fB}^{\reall}\rhoidealint_{LSE|(\FA,\FB)=(\fA,\fB)}\right)\leq\epsSP,
    \end{equation*}
    where $p_{\fA\fB}^{\reall}$ is the probability of the output state recording $(\FA,\FB)=(\fA,\fB)$ and $\vars{S}_{\SP}=\{\phi1,1\phi,10\}$.
\end{definition}
In classical AKE~\cite{Guilhem2020_AKE}, this requirement of maintaining secret values from one round to the next is part of the parameter crypt.
Here, we make the requirement explicit with $\epsSP$ distance from full secrecy.
We note here that several $(\FA,\FB)$ conditions are excluded from the definition since they are already taken into account in the explicit entity authentication and match security conditions.
We call a QAKE protocol satisfying the four security conditions a $(\epsEA,\epsMS,\epsKS,\epsSP)$-secure.\\

With the formal definition, we can further split the single-round analysis into a computation of the various security conditions, which is restated in the theorem below.
\secparacompute*
\begin{proof}
We begin with the trace distance defining the ideal probabilities $p_{\FA\FB}^{\ideal}$ in $\rhoideal$, where we note the additional superscript $\ideal$ to distinguish it from the actual output distribution $p_{\FA\FB}^{\reall}$.
We define
\begin{equation}
    p_{\FA\FB}^{\ideal}=\begin{cases}
        p_{\phi0}^{\reall}+p_{\phi1}^{\reall} & \FA=\phi,\FB=0\\
        p_{0\phi}^{\reall}+p_{1\phi}^{\reall} & \FA=0,\FB=\phi\\
        p_{00}^{\reall}+p_{10}^{\reall} & \FA=0,\FB=0\\
        p_{\FA\FB}^{\reall} & Otherwise
    \end{cases},
\end{equation}
while the subsystems are defined with $\rhoideal_{LSE|\FA\FB}=\rhoreal_{LSE|\FA\FB}$ for $(\FA,\FB)\not\in\{\phi1,1\phi\}$.
With this definition, we can split the trace distance for $\epssec$-security into components, first extracting the components where $\FA$ or $\FB$ is $\phi$,
\begin{equation}
\begin{split}
    \Delta(\rho,\rhoideal)\leq &\Delta(\rho_{\land (\FA,\FB)\in\{01,11\}},\rhoideal_{\land (\FA,\FB)\in\{01,11\}})+\Delta(\rho_{\land \FA=\phi},\rhoideal_{\land \FA=\phi})+\Delta(\rho_{\land \FB=\phi},\rhoideal_{\land \FB=\phi})\\
    &+\Delta(\rho_{\land (\FA,\FB)\in\{00,10\}},\rhoideal_{\land (\FA,\FB)\in\{00,10\}}),
\end{split}
\end{equation}
where the subsystem labels $\KA\KB\FA\FB LSE$ are dropped for brevity.\\

We now show that the latter three trace distances are bounded by the explicit entity authentication security condition.
Expanding the first term explicitly, we have
\begin{equation}
\begin{split}
    &\Delta(\rho_{\land \FA=\phi},\rhoideal_{\land \FA=\phi})\\
    \leq&p_{\phi0}^{\reall}\Delta(\rho_{|\FA=\phi,\FB=0},\rhoideal_{| \FA=\phi,\FB=0})+p_{\phi1}^{\reall}\Delta(\rho_{|\FA=\phi,\FB=1},\rhoideal_{|\FA=\phi,\FB=0})\\
    \leq& \Delta(\dyad{\perp\perp\phi0}_{\KA\KB\FA\FB}\otimes\rhoreal_{LSE|\phi0},\dyad{\perp\perp\phi0}_{\KA\KB\FA\FB}\otimes\rhoreal_{LSE|\phi0})+p_{\phi1}^{\reall}\\
    =&p_{\phi1}^{\reall},
\end{split}
\end{equation}
where we note that the trace distance in the third line is 0.
A similar argument can be made for the second term for $\FB=\phi$, resulting in an upper bound of $p_{1\phi}^{\reall}$.
The final term gives
\begin{equation}
\begin{split}
    &\Delta(\rho_{\land (\FA,\FB)\in\{00,10\}},\rhoideal_{\land (\FA,\FB)\in\{00,10\}})\\
    \leq&\Delta(\dyad{\perp\perp00}_{\KA\KB\FA\FB}\otimes\rhoreal_{LSE|00},\dyad{\perp\perp00}_{\KA\KB\FA\FB}\otimes\rhoreal_{LSE|00})+p_{10}^{\reall}\\
    =&p_{10}^{\reall}
\end{split}
\end{equation}
since the terms are matching.
As such, the three trace distances are upper bounded by
\begin{equation}
    \Pr[F_A=1,F_B=0]+\Pr[F_A=1,F_B=\phi]+\Pr[F_A=1,F_B=0]\leq\epsEA.
\end{equation}
This accounts for the explicit entity authentication security condition contribution to the security.\\

The second security condition to extract is match security.
To isolate its contribution, we need to first introduce an intermediate state $\tilde{\rho}$, for which the value of $\KA$ when $\FA=1,\FB=1$ is replaced by $\KB$, i.e. $\tilde{\rho}=\vars{M}_{\KB\rightarrow \KA}(\rho)$, where $\vars{M}_{\KB\rightarrow \KA}=\dyad{11}_{\FA\FB}\otimes\left(\sum_{k}\dyad{kk}_{\KA\KB}\circ\Tr_{\KA}\right)$.
As such, we can apply the triangle inequality to expand
\begin{multline}
    \Delta(\rho_{\land (\FA,\FB)\in\{01,11\}},\rhoideal_{\land (\FA,\FB)\in\{01,11\}})\\
    \leq\Delta\left(\tilde{\rho}_{\land (\FA,\FB)\in\{01,11\}},\rhoideal_{\land(\FA,\FB)\in\{01,11\}}\right)+\Delta\left(\rho_{\land \FA=\FB=1},\tilde{\rho}_{\land \FA=\FB=1}\right),
\end{multline}
noting that $\rho_{\land\FA=0,\FB=1}$ is identical to $\tilde{\rho}_{\land\FA=0,\FB=1}$ since $\vars{M}_{\KB\rightarrow\KA}$ does not alter the input state.
We now focus on the latter term, where we can expand the trace distance
\begin{equation}
\begin{split}
    &\Delta\left(\rho_{\land \FA=\FB=1},\tilde{\rho}_{\land \FA=\FB=1}\right)\\
    =&p_{11}^{\reall}\Delta\left(\sum_{\kA\kB}p_{\kA\kB|11}\dyad{\kA\kB}\otimes\rhoreall{\kA\kB}_{LSE|11},\sum_{\kA\kB}p_{\kA\kB|11}\dyad{\kA\kA}\otimes\rhoreall{\kA\kB}_{LSE|11}\right)\\
    =&\frac{p_{11}^{real}}{2}\norm{\sum_{\kA\neq k_B}p_{\kA\kB|11}(\dyad{\kA\kB}-\dyad{k_Bk_B})\otimes\rhoreall{\kA\kB}_{LSE|11}}_1\\
    \leq&\frac{p_{11}^{\reall}}{2}\sum_{\kA\neq \kB}p_{\kA\kB|11}\norm{\dyad{\kA\kB}-\dyad{\kB\kB}}_1\norm{\rhoreall{\kA\kB}_{LSE|11}}_1\\
    =&\Pr[\KA\neq \KB,\FA=1,\FB=1]\\
    \leq&\epsMS.
\end{split}
\end{equation}
This accounts for the contribution of match security.\\

The final term can be reduced to the key secrecy condition.
We first note that when $\FA=\FB=1$, whenever the key $\KA$ is generated, it will be identical to $\KB$ in both $\tilde{\rho}$ and $\rhoideal$.
When $\FA=0,\FB=1$ instead, $\KA$ is simply $\perp$.
As such, we can define a CPTP map to get a value of $\KA$ based on $\FA$ and $\KB$.
Since CPTP maps cannot increase trace distance, we can remove the $\KA$ subsystem from the trace distance.
Expanding the final term, we get
\begin{equation}
\begin{split}
    &\Delta\left(\tilde{\rho}_{\land(\FA,\FB)\in\{01,11\}},\rhoideal_{\land (\FA,\FB)\in\{01,11\}}\right)\\
    \leq&\Delta\left(\sum_{(\fA,\fB)\in\vars{S}'}p_{\fA\fB}^{\reall}\dyad{\fA\fB}\otimes\rho_{\KB LSE|\fA\fB},\sum_{(\fA,\fB)\in\vars{S}'}p_{\fA\fB}^{\reall}\dyad{\fA\fB}\otimes\tau_{\KB}\otimes\rho_{LSE|\fA\fB}\right)\\
    =&\Delta\left(\rho_{\KB\FA\FB LSE\land(\FA,\FB)\in\{01,11\}},\tau_{\KB}\otimes\rho_{\KB\FA\FB LSE\land(\FA,\FB)\in\{01,11\}}\right)\\
    \leq&\epsKS
\end{split}
\end{equation}
where $\vars{S}'$ refers to the set $\{01,11\}$.
Combining the results, we have that
\begin{equation}
    \Delta(\rho,\rhoideal)\leq\epsEA+\epsMS+\epsKS,
\end{equation}
which we can use to define $\epssec$.\\

In the case of the intermediate rounds, there is an additional condition that $\rhoidealint_{LSE|\FA\FB}$ is of a specific form (e.g. $S$ uncorrelated to $E$).
We can define $\rhoidealint_{\KA\KB\FA\FB LSE}$ with the same probability distribution $p_{\FA\FB}^{\ideal}$, but with a general $\rhoideal_{LSE|\FA\FB}$.
We begin the proof by introducing an intermediate state $\rhoideall{1}$ where $\rhoideal_{LSE|\FA\FB}$ is replaced with $\rhoreal_{LSE|\FA\FB}$ for all cases of $(\FA,\FB)\notin\{1\phi,\phi1,01\}$.
As such, we can expand
\begin{equation}
    \Delta(\rho,\rhoidealint)\leq\Delta(\rho,\rhoideall{1})+\Delta(\rhoideall{1},\rhoidealint).
\end{equation}
Since $\rhoideall{1}$ matches the form of $\rhoideal$ when $\rhoideal_{LSE|\FA\FB}=\rhoidealint_{LSE|\FA\FB}$ for $(\FA,\FB)\in\vars{S}_{\SP}=\{\phi1,1\phi,10\}$, the first trace distance is bounded by $\epssec$ from earlier analysis.
The second trace distance can be expanded as
\begin{equation}
\begin{split}
    &\Delta(\rhoideall{1},\rhoidealint)\\
    \leq&\Delta\left(\sum_{(\fA,\fB)\notin\vars{S}_{\SP}}p_{\fA\fB}^{\reall}\sigma_{\KA\KB}^{\fA\fB}\otimes\rho_{LSE|(\FA,\FB)=(\fA,\fB)},\sum_{(\fA,\fB)\notin\vars{S}_{\SP}}p_{\fA\fB}^{\reall}\sigma_{\KA\KB}^{\fA\fB}\otimes\rhoideal_{LSE|(\FA,\FB)=(\fA,\fB)}\right)\\
    \leq&\Delta\left(\sum_{(\fA,\fB)\notin\vars{S}_{\SP}}p_{\fA\fB}^{\reall}\rho_{LSE|(\FA,\FB)=(\fA,\fB)},\sum_{(\fA,\fB)\notin\vars{S}_{\SP}}p_{\fA\fB}^{\reall}\rhoideal_{LSE|(\FA,\FB)=(\fA,\fB)}\right)\\
    \leq&\epsSP.
\end{split}
\end{equation}
where $\sigma_{\KA\KB}^{\fA\fB}$ is the ideal form of the keys (either $\perp$ or uniform) depending on $\fA,\fB$, and $\sigma_{\KA\KB}^{\fA\fB}$ is identical in both $\rhoideall{1}$ and $\rhoidealint$.
The second inequality stems from reversing the CPTP map of selecting $\KA\KB$ from $\FA\FB$, which is implicitly part of $E$ since these values can be public in general.
Combining the results, we get that
\begin{equation}
    \Delta(\rho,\rhoidealint)\leq\epsEA+\epsMS+\epsKS+\epsSP,
\end{equation}
which we can use to define $\epssecint$.
\end{proof}

\section{Security Analysis of QAKE Protocol}
\label{app:AKEProtocolSec}
\subsection{Protocol Details}
\label{app:Protocol_Details}

We provide the protocol in detail here.

\setcounter{protocol}{0}
\begin{protocol}{Quantum Authenticated Key Exchange}
\textit{Goal.} Alice and Bob authenticates one another, and performs key exchange.
\begin{enumerate}
    \item \textbf{Label Agreement}: Alice and Bob exchange $\alpha$ and $\alpha'$. Alice (resp. Bob) sends $\alpha$ (resp. $\alpha'$) and receives $\alpha_\rr'$ (resp. $\alpha_\rr$), which results in a label choice $\beta=\max\{\alpha,\alpha_\rr'\}$ (resp. $\beta'=\max\{\alpha_\rr,\alpha'\}$).
    \item \textbf{Alice State Preparation}: Alice randomly chooses a n-bit basis string $\theta\in\{0,1\}^n$, a n-bit string $x\in\{0,1\}^n$ and a n-trit string $v\in\{0,1,2\}^n$ according to probability distribution $p_v$. She then prepares $n$ phase-randomised coherent BB84 states $\left\{\rho_{Q_i}^{\theta_i,x_i,\mu_{v_i}}\right\}_{i\in[1,n]}$, with basis $\theta_i$, bit value $x_i$, and intensity $\mu_{v_i}$.
    \item \textbf{Bob Measurement}: Alice sends $Q^n$ to Bob, who measures subsystems $Q_i$ using a randomly chosen basis $\theta'\in\{0,1\}^n$, and records outcome $x_i'$. If Bob detects no clicks, he declares $x_i'=\perp$. If Bob detects multiple clicks, he randomly selects $x_i'\in\{0,1\}$.
    \item \textbf{Sifting}: Bob records the detection rounds, $P=\{i:x_i'\neq\perp\}$, and announces $P$ and $\theta_{P}'=\{\theta_i':i\in P\}$ to Alice. Alice computes the sifted rounds $\Psift=\{i:i\in P_{\rr},\theta_{i,\rr}'=\theta_i\}$, and forwards it to Bob.
    \item \textbf{Test Round Announcement}: Bob randomly splits the sifted rounds it into $\Psift_1$ and $\Psift_2$, with $\abs{\Psift_1}=\lceil f_{P_1}\abs{\Psift}\rceil$, where $f_{P_1}$ is some pre-determined fraction of rounds for parameter estimation. Bob announces $\Psift_1$, $\Psift_2$ and $x_{\Psift_1}'$. 
    \item \textbf{Parameter Estimation}: Alice estimates a lower bound on single-photon events in the sets $\Psift_{2,\rr}$ and $\Psift_{1,\rr}$, $\hat{N}^{\LB}_{\Psift_{2,\rr},1}$ and $\hat{N}^{\LB}_{\Psift_{1,\rr},1}$, and the single-photon bit error rate, $\hatebitt{\Psift_{1,\rr},1}^{\UB}$, via decoy-state analysis, and the upper bound on the bit error rate in set $\Psift_{2,\rr}$, $\hatebitt{\Psift_{2,\rr}}^{\UB}$ via the Serfling bound. Alice checks if $\abs{\Psift_{1,\rr}}=\lceil f_{P_1}\abs{\Psift}\rceil$, $\abs{\Psift_{\rr}}\geq \Psift_{\tol}$, $\hat{N}^{\LB}_{\Psift_{2,\rr},0}\geq N_{\Psift_{2,\rr},0}^{\tol}$, $\hat{N}^{\LB}_{\Psift_{2,\rr},1}\geq N_{\Psift_{2,\rr},1}^{\tol}$, $\hatebitt{\Psift_{1,\rr},1}^{\UB}\leq \ebitt{1,\tol}$, and $\ebitt{\Psift_{1,\rr}}\leq \ebitt{\tol}$. If these are satisfied, Alice sets $\DPE=1$, otherwise she sets $\DPE=0$.
    \item \textbf{Error Correction}: Alice computes the length of the syndrome, $\abs{S}=\fEC\hbin(\ebitt{\tol}')$, where $\ebitt{\tol}'$ is the modified bit error tolerance (defined in security analysis) and $\hbin$ is the binary entropy. If $\DPE=1$, Alice generates a syndrome $s=\fsyn(x_{\Psift_{2,\rr}})$ and forwards it to Bob, otherwise, Alice sends a random string of length $\abs{S}$ to Bob. Bob receives the syndrome $s_\rr$ and computes the corrected bit string $\hat{x}_{\Psift_2,\rr}=\fsyndec(x_{\Psift_2}',s_\rr)$.
    \item \textbf{Alice Validation}: If $\DPE=1$, Alice generates a tag $t_{\AV}=h_1(\Kh_1,x_{\Psift_{2,\rr}}||P_{\rr}||\theta_{P,\rr}'||\Psift||x_{\Psift_1,\rr}'||\Psift_{1,\rr}||\Psift_{2,\rr}||s)\oplus \KOTP_{1,\beta}$ and forwards it to Bob, otherwise, Alice sends a random string of the same length as the tag. Bob receives tag $t_{\AV,\rr}$ and generates verification tag $\tilde{t}_{\AV}=h_1(\Kh_1,\hat{x}_{\Psift_{2,\rr}}||P||\theta_P'||\Psift_{\rr}||x_{\Psift_1}'||\Psift_1||\Psift_2||s_\rr)\oplus \KOTP_{1,\beta'}$ and checks if $t_{\AV,\rr}=\tilde{t}_{\AV}$. If the tags matches, Bob validates Alice and output $D_{\AV}=1$, otherwise, he sets $D_{\AV}=0$.
    \item \textbf{Bob Validation}: Bob decides whether the authentication round succeeds, $\FB=D_{\AV}$. If $\FB=1$, Bob generates a tag, $t_{\BV}=h_2(K_2,\hat{x}_{\Psift_{2,\rr}})$, and sends $t_{\BV}$ to Alice. If $\FB=0$, Bob sends a random string of length $\abs{T_{\BV}}$ instead. Alice receives the tag $t_{\BV,\rr}$, and computes the verification tag, $\tilde{t}_{\BV}=h_2(K_2,x_{\Psift_{2,\rr}})$. If $t_{\BV,\rr}=\tilde{t}_{\BV}$, Alice validates Bob, $D_{\BV}=1$, otherwise, she sets $D_{\BV}=0$.
    \item \textbf{Secret Key Generation and Label Update}: Alice decides whether to accept the round based on her parameter estimation and validation of Bob, i.e. $\FA=\DPE\land D_{\BV}$. If Alice (resp. Bob) decides to perform key generation, $\FA=1$ (resp. $\FB=1$), she (resp. he) performs privacy amplification $\KA=\hPA(R,x_{\Psift_{2,\rr}})$ (resp. $\KB=\hPA(R,\hat{x}_{\Psift_{2,\rr}})$), where $\KA\KB$ are the cryptographically secure keys that can be used for other purposes. If key generation is not performed, the labels are updated, i.e. if $\FA=0$, Alice updates her label $\alpha=\beta+1$ and if $\FB=0$, Bob updates his label $\alpha'=\beta'+1$.
\end{enumerate}
\end{protocol}

\subsection{Overall Protocol Security}
\label{app:AKEProtocolSec_OverallSec}

We focus here on the single-round security of the protocol.
We first define the set of ideal input and output states.
The input state include labels $\alpha$ and $\alpha'$, which label the indices of the set of shared secrets $\Ssec$ where Alice/Bob believe remain secure, namely the privacy amplification seed $R$, authentication keys $K_1^h$ and $K_2$ and hash masking keys $\{\KOTP_{1,i}\}_i$.
The secrets $R$ and $K_2$ are expected to remain private across protocol rounds, but the authentication key $K_1^h$ and hash masking keys $\KOTP_{1,i}$ may be partially known.
Specifically, the tag $T_{\AV}$ would always be released to the adversary since no authentication check occurs prior to Alice validation step.
As such, let us define a channel $\bigepsilon_{K_1^h\KOTP_{1,i}E\rightarrow E'}$, where the adversary $E$ provides a message $M$ and receives output $T_{\AV}=h_1(K_1^h,M)\oplus\KOTP_{1,i}$.
This is similar to allowing the adversary has a single-round access to an oracle implementing the hash function, which receives the message $M$ from the adversary, with $K_1^h$ and $\KOTP_{1,i}$ stored in its memory. 
We also note that hash masking keys corresponding to indices $i<\alpha_{\min}-1$ ($\alpha_{\min}=\min\{\alpha,\alpha'\}$) would no longer be used, and we can trace them out of the state.\\

As such, we define the ideal input state $\rhoin=\sum_{jj'}p_{jj'}\rho^{jj'}_{LSE}$, where
\begin{equation}
\begin{split}
    \rho^{jj'}_{LSE}=&\dyad{jj'}_{\alpha\alpha'}\otimes\tau_{K_2R\KOTP_{1,j}\cdots\KOTP_{1,m}}\otimes\bigepsilon_{K_1^h\KOTP_{1,j-1}E_{j-1}\rightarrow E}\circ\cdots\circ\bigepsilon_{K_1^h\KOTP_{1,j'}E_{j'}\rightarrow E_{j'+1}}(\tau_{\Kh_1\KOTP_{1,j'}\cdots \KOTP_{1,j-1}}\otimes\rho_{E_{j'}}),
\end{split}
\end{equation}
where $L=\alpha\alpha'$ and $S$ refer to the secrets.
We note that when $j'>j-1$, there are no corresponding channels $\bigepsilon_{K_1^h\KOTP_{1,j-1}E_{j-1}\rightarrow E}\circ\cdots\circ\bigepsilon_{K_1^h\KOTP_{1,j'}E_{j'}\rightarrow E_{j'+1}}$.
Therefore, the ideal state is simply $\dyad{jj'}_{\alpha\alpha'}\otimes\tau_{K_2R\KOTP_{1,j}\cdots\KOTP_{1,m}}\otimes\tau_{K^h}\otimes\rho_{E}$, where $E_{j'}$ is re-labelled as $E$.
We note here an observation that when a partial trace of $\KOTP_{K_{1,j'}}$ is applied to the state, it effectively removes the corresponding channel $\bigepsilon_{K_1^h\KOTP_{1,j'}E_{j'}\rightarrow E_{j'+1}}$, with appropriate relabelling of $E_{j'}$ as $E_{j'+1}$.\footnote{More formally, the partial trace can be shifted to before the corresponding channel $\bigepsilon_{K_1^h\KOTP_{1,j'}E_{j'}\rightarrow E_{j'+1}}$, which describes the adversary querying the oracle, followed by $\KOTP_{1,j'}$ being discarded. Since $\KOTP_{1,j'}$ acts as an OTP on the tag $T_{\AV}$, Thm.~\ref{thm:OTPWegCarSec} implies that $T_{\AV}$ would appear random to the adversary. As such, this is equivalent to the adversary randomly sampling $T_{\AV}$ and using that value instead. As such, the process can be described by simply $\bigepsilon_{E_{j'}\rightarrow E_{j'+1}}$, an internal update of the adversary's internal state.}\\

The ideal output state in the intermediate rounds contains the secret key variables $\KA\KB$, the decision labels $\FA\FB$, and $LSE$.
To match the input state, $\Sout'\subseteq \Sin$, the output state should have $LSE$ of the form $\rho_{LSE}^{\tilde{j}\tilde{j}'}$.
Since the protocol $\vars{P}$ is a linear map, we can break the output state into components based on the input state, i.e. $\rhooutt{jj'}=\vars{P}(\rho_{LSE}^{jj'})$, and the examine their trace distances separately,
\begin{equation}
    \Delta(\rhoout,\rhoidealint)\leq\sum_{jj'}p_{jj'}\Delta(\rhooutt{jj'},\rhoidealintt{jj'}).
\end{equation}
Since agreed labels $\beta$ and $\beta'$ can be influenced by the adversary with the alteration of index announcement $\alpha_r$ and $\alpha_r'$ in the first step, the ideal output state for intermediate rounds, with definition of specific forms of $LSE$, is
\begin{equation}
\label{eqn:QAKE_ideal_output_state}
\begin{split}
    \rhoidealintt{jj'}=&\dyad{\perp\perp}_{\KA\KB}\otimes\left[\dyad{0\phi}_{\FA\FB}\otimes\sum_{\tilde{j}\geq j}p_{0\phi,\tilde{j}}\rho^{\tilde{j}+1,j'}_{LSE}\right.+\dyad{\phi0}_{\FA\FB}\otimes\sum_{\tilde{j}'\geq j'}p_{\phi 0,\tilde{j}'}\rho^{j,\tilde{j}'+1}_{LSE}\\
    &\left.+\dyad{00}_{\FA\FB}\otimes \sum_{\tilde{j}\geq j,\tilde{j}'\geq j'}p_{00,\tilde{j}\tilde{j}'}\rho^{\tilde{j}+1,\tilde{j}'+1}_{LSE}\right]+\dyad{01,\perp}_{\FA\FB\KA}\otimes\tau_{\KB}\otimes\sum_{\tilde{j}\geq\max\{j,j'\}}p_{01,\tilde{j}}\rho^{\tilde{j}+1,\tilde{j}}_{LSE}\\
    &+\dyad{11}_{\FA\FB}\otimes\tilde{\tau}_{\KA\KB}\otimes\sum_{\tilde{j}\geq\max\{j,j'\}}p_{11,\tilde{j}}\rho^{\tilde{j}\tilde{j}}_{LSE}.
\end{split}
\end{equation}
Note that $\beta=\beta'$ when $\FB=1$ since an index mismatch would result in failed authentication.\\

With the output state description, and noting that $\KA\KB\FA\FB$ can be incorporated into $E$ for the next round, we can observe that the output state is a linear combination of $\rho_{LSE}^{\alpha\alpha'}$, matching the general input state and thus satisfying condition 3 in Thm.~\ref{thm:Multi_to_Single_Red}.
Condition 1 in the same theorem is trivially satisfied since all secrets begin as private with index $\alpha=\alpha'=1$.
As such, the security analysis reduces to the single-round security analysis for a single component,
\begin{equation}
\begin{gathered}
    \Delta(\rhooutt{jj'},\rhoidealintt{jj'})\leq\varepsilon_{\secc,\intt,jj'}\\\Delta(\rhooutt{jj'},\rhoideall{jj'})\leq\varepsilon_{\secc,jj'},
\end{gathered}
\end{equation}
with overall security $\epssec=\max_{jj'}\varepsilon_{\secc,jj'}$.\\

To simplify the analysis for some security conditions, we introduce more ``idealised" versions of the parameter estimation and authentication checks.
We begin with the replacement of the decoy state parameter estimation, which can be defined as
\begin{equation}
    \tildeDPE=\begin{cases}
        1 & \substack{N_{\Psift_{2,\rr},1}\geq N_{\Psift_2,1}^{\tol},\, N_{\Psift_{1,\rr},1}\geq N_{\Psift_1,1}^{\tol},\,\DPE=1\\
        \frac{wt(X_{\Psift_{1,\rr},1}\oplus X_{\Psift_1,\rr}'[\{\Psift_{1,\rr},1\}])}{N_{\Psift_{1,\rr},1}}\leq \ebitt{1,\tol}}  \\
        0 & otherwise
    \end{cases},
\end{equation}
where $X_{\Psift_1,\rr}'[\{\Psift_{1,\rr},1\}]$ refers to the bit values of the bitstring $X_{\Psift_1,\rr}'$ at indices $\{\Psift_{1,\rr},1\}=\{i:i\in \Psift_{1,\rr},n_{\PNR,i}=1\}$, noting that we added the three desired conditions (actual single-photon detection/error fall within the bounds).
This replacement would result in a difference only when $\DPE=1$ but $\tildeDPE=0$.\\

The probability of this event is tied to decoy state analysis.
From decoy state analysis~\cite{Lim2014_DecoyQKD}, the single-photon quantities of interest can always be bounded by
\begin{align*}
    N_{\Psift_i,1}\geq &\frac{p_1\mu_0}{(\mu_1-\mu_2)(\mu_0-\mu_1-\mu_2)}\times\left[\frac{e^{\mu_1}}{p_{\mu_1}}\left(1-\frac{\mu_2(\mu_1+\mu_2)}{\mu_0^2}\right)\mathbb{E}[N_{\Psift_i,\mu_1}]-\frac{(\mu_1^2-\mu_2^2)e^{\mu_0}}{\mu_0^2p_{\mu_0}}\mathbb{E}[N_{\Psift_i,\mu_0}]\right.\\
    &\left.-\frac{e^{\mu_2}}{p_{\mu_2}}\left(1-\frac{\mu_1(\mu_1+\mu_2)}{\mu_0^2}\right)\mathbb{E}[N_{\Psift_i,\mu_2}]\right]\\
    \bigepsilon_{\Psift_1,1}\leq & \frac{p_1}{\mu_1-\mu_2}\left(\frac{e^{\mu_1}}{p_{\mu_1}}\mathbb{E}[N_{\bigepsilon,\Psift_1,\mu_1}]-\frac{e^{\mu_2}}{p_{\mu_2}}\mathbb{E}[N_{\bigepsilon,\Psift_1,\mu_2}] \right),
\end{align*}
where $\bigepsilon_{\Psift_1,1}$ is the number of error bits in the single photon events in set $\Psift_1$, and $p_1$ is the probability of single-photon signals sent by the source.
The estimated quantities $\hat{N}_{\Psift_i,1}^{\LB}$ and $\hatebitt{\Psift_1,1}^{\UB}$ can be computed similarly from the above bounds, but utilises appropriate bounds on the expectation values $\mathbb{E}[N_{\Psift_i,\mu_j}]$ and $\mathbb{E}[N_{\bigepsilon,\Psift_i,\mu_j}]$ derived from the values observed during the protocol run.
For instance, $\hat{N}_{P_i,1}^{\LB}$ utilises the estimated lower bound $\mathbb{E}[N_{\Psift_i,\mu_1}^{\LB}]$ instead of $\mathbb{E}[N_{\Psift_i,\mu_1}]$ (which it does not have access to), estimated using concentration bounds (we use Kato's inequality~\cite{Kato2020_KatoIneq,Curras2021_KatoIneqPara} as the concentration bound) from the observed $N_{\Psift_i,\mu_1}$ value.
The probability of the estimated lower bound $\mathbb{E}[N_{\Psift_i,\mu_1}^{\LB}]$ exceeding the true expectation value $\mathbb{E}[N_{\Psift_i,\mu_1}]$ is quantified by $\varepsilon_{N_{\Psift_i,\mu_1}}$.
As such, when these concentration bounds are respected, we have that $N_{\Psift_i,1}\geq\hat{N}_{\Psift_i,1}^{\LB}$, which implies that the probability that $N_{\Psift_i,1}<N_{\Psift_i,1}^{\tol}$ while $\hat{N}_{\Psift_i,1}^{\LB}\geq N_{\Psift_i,1}^{\tol}$ is bounded by the events where the concentration bounds are violated.
As such, the probability of $\DPE=1$ but $\tildeDPE=0$ is the sum of all such concentration bound violation events and is labelled $\epsds$.
Therefore, the trace distance on the output states gains a $2\epsds$ penalty, when we replace the decision $\DPE$ with its idealised version $\tildeDPE$.\\

The purposes of authentication checks are twofold; They serve to ensure that the message transmitted between Alice and Bob match, and that the order of the communication rounds are obeyed.
As such, when we define the idealised authentication checks, these are the conditions that we impose.
Crucially, there are several events that we desire in an ideal check which are provided with high probability:
\begin{enumerate}
    \item Matching label agreement, $\beta=\beta'$, is desirable to ensure the label agreement step is obeyed. Its security is provided by the fact that the masking key $\KOTP_{1,\beta}$ choice is label-dependent, and mismatching masking key would lead to failed authentication.
    \item Parameter estimation passing, $\DPE=1$, is meant to indicate to Bob that parameter estimation has passed. Its security is guaranteed by Alice's decision to not send a valid $T_{\AV}$ when $\DPE=0$.
    \item Alice's tag generation occurs before Alice's tag is validated, which indicates the ordering of Alice's Validation step is obeyed, i.e. the event $\Omega_{5\rightarrow 6}$, where $\Omega_{i\rightarrow j}$ labels that the $i$-th step occur before the $j$-th step. Note here that the steps refers a set of action by either Alice or Bob, receiving an input and providing and output from the communication channel. For QAKE, there are a total of seven steps.
    \item Matching message, $M_{\AV}=M_{\AV}'$, where $M_{\AV}=(X_{\Psift_{2,\rr}},P_{\rr},\theta_{P,\rr}',\Psift,X_{\Psift_1,\rr}',\Psift_{1,\rr},\Psift_{2,\rr},S)$ and $M_{\AV}'=(\hat{X}_{\Psift_{2,\rr}},P,\theta_P',\Psift_{\rr},X_{\Psift_1}',\Psift_1,\Psift_2,S_\rr)$, which also implicitly contains the error verification check ($X_{\Psift_{2,\rr}}=\hat{X}_{\Psift_{2,\rr}})$.
    \item Tags unaltered, $T_{\AV}=T_{\AV,\rr}$.
    \item Protocol steps before Alice's validation are obeyed, namely the event $\{\Omega_{i\rightarrow i+1}\}_{i\in[2,4]}$. We note that the first step, Alice's state preparation, would WLOG occur before step 2 (Bob responding with $P$ and basis $\theta_P$)\footnote{If the state preparation occurs after step 2, the state is always equivalent to one where the state preparation occurs before step 2 but with the adversary not interacting with the quantum state sent by Alice prior to step 2. This is because the state preparation does not require any input from the channel.}.
\end{enumerate}
As such, we label a replacement decision $\tilde{D}_{\AV}$, which is made at the same time as $D_{\AV}$, but with separate checks\footnote{The checks can be considered to be performed by a hypothetical third-party, though such a party need not be physically present in the protocol run.}.
For validation of Alice, this replacement made is 
\begin{equation}
    \tilde{D}_{\AV}=\begin{cases}
        1 & \substack{\beta=\beta',\,\DPE=1,\,T_{\AV}=T_{\AV,\rr},\\M_{\AV}=M_{\AV}',\Omega_{2\rightarrow3\rightarrow4\rightarrow5\rightarrow 6}}\\
        0 & otherwise
    \end{cases},
\end{equation}
where $\Omega_{2\rightarrow3\rightarrow4\rightarrow5\rightarrow6}$ is the event where the step ordering is $2\rightarrow3\rightarrow4\rightarrow5\rightarrow6$.
The list of steps are:
\begin{enumerate}
    \item Alice's state preparation and sending of quantum state.
    \item Bob's measurement and reply of $P$ and $\theta_P'$.
    \item Alice's sifting and response $\Psift$.
    \item Bob's choice of test round and announcement of $\Psift_1$, $\Psift_2$ and $X_{\Psift_1}'$.
    \item Alice's parameter estimation and validation tag generation, sending syndrome $S$ and tag $T_{\AV}$ to Bob.
    \item Bob performs Alice's validation, and and responds with his own validation tag $T_{\BV}$.
    \item Alice validates Bob's tag.
\end{enumerate}

We make the switch to the idealised authentication check for scenarios where $(K_1^h,\KOTP_{1,\beta'})$ is private (where $\beta'\geq\alpha$).
This replacement results in a penalty of $2\Pr[D_{\AV}=1,\tilde{D}_{\AV}=0]$, where $D_{\AV}$ and $\tilde{D}_{\AV}$ values are mismatched.
This probability can be computed,
\begin{equation}
\label{eqn:QAKE_DAV_change}
\begin{split}
    &\Pr[D_{\AV}=1,\tilde{D}_{\AV}=0]\\
    \leq&\Pr[D_{\AV}=1,(\Omega^c_{5\rightarrow6}\lor\DPE=0)]+\Pr[D_{\AV}=1,\DPE=1,\Omega_{5\rightarrow6},\beta\neq\beta']\\
    &+\Pr[D_{\AV}=1,\DPE=1,\Omega_{5\rightarrow6},\beta=\beta',M_{\AV}\neq M_{\AV}']\\
    &+\Pr[D_{\AV}=1,\DPE=1,\Omega_{5\rightarrow6},\beta=\beta',M_{\AV}= M_{\AV}',T_{\AV}\neq T_{\AV,r}]\\
    &+\Pr[D_{\AV}=1,\DPE=1,\Omega_{5\rightarrow6},\beta=\beta',M_{\AV}= M_{\AV}',T_{\AV}=T_{\AV,r},\Omega^c_{1\rightarrow2\rightarrow3\rightarrow4}]\\
    \leq&\frac{\Pr[\Omega^c_{5\rightarrow6}\lor\DPE=0]}{\abs{\vars{T}_{\AV}}}+\frac{\Pr[\DPE=1,\Omega_{5\rightarrow6},\beta\neq\beta']}{\abs{\vars{T}_{\AV}}}+\Pr[\DPE=1,\Omega_{5\rightarrow6},\beta=\beta',M_{\AV}\neq M_{\AV}']\epsMACa\\
    &+\Pr[M_{\AV}= M_{\AV}',\Omega^c_{2\rightarrow3}]+\Pr[M_{\AV}= M_{\AV}',\Omega_{2\rightarrow3},\Omega^c_{3\rightarrow4}]+\Pr[M_{\AV}= M_{\AV}',\Omega_{2\rightarrow3\rightarrow4},\Omega^c_{4\rightarrow5}]\\
    \leq&\epsMACa+\pguess(\theta_P')+\pguess(\theta_P)+\pguess(\Psift_1),
\end{split}
\end{equation}
where $\pguess(Z)$ is the probability of the adversary guessing the parameter $Z$ successfully before $Z$ is announced.
The first inequality splits the ``bad events" into different cases.
The second inequality uses the fact that when $\Omega_{5\rightarrow 6}^c$ or $\DPE=0$, no valid tag is generated before Bob checks Alice's authentication tag.
Therefore, by the property of the strong 2-universal hash function, the probability of $D_{\AV}=1$ is no better than a random guess of the tag.
When $\beta\neq\beta'$, since authentication key $K_{1,\beta}$ is random and uncorrelated to $K_{1,\beta'}$, the adversary can do no better than make a random guess of the tag, which results in a $\frac{1}{\abs{\vars{T}_{\AV}}}$ penalty.
As for message mismatch, the property of strong 2-universal hash function ensures that the probability of generating a correct tag for $M_{\AV}'$ is bounded by $\epsMACa$.
The final inequality combines the first three terms to upper bound it by $\epsMACa$, while events $\Omega_{i\rightarrow j}^c$ with matching messages indicate that the adversary made a correct guess of the message sent in step $i$, which we label $\pguess(Z)$ for message $Z$.
For step 2 (Bob response after quantum state measurement), the adversary has to guess $\theta_P'$, while for step 3, this quantity is $\theta_P$ (equivalent to guessing $\Psift$ with knowledge of $\theta_P'$ from step 2), and in step 4, the quantity is the choice of test round $\Psift_1$ from $\Psift$.\\

We can bound the guessing probability explicitly.
The guessing probability of $\theta_P'$ is $2^{-\abs{P}}\leq2^{-\Psift_{\tol}}$ since the basis are randomly chosen, and $\abs{P}\leq\Psift_{\tol}$.
The probability of guessing $\theta_P$ can be bounded by the probability of guessing Alice's basis choice for the $N_{\Psift_{2,1}}^{\tol}$ single-photon rounds, where the quantum state cannot provide any information on the basis.
Finally, the probability of guessing the test round choice is~\cite{Cover2005_InfTheory} 
\begin{equation}
    \begin{pmatrix}\abs{\Psift}\\  f_{\Psift_1}\abs{\Psift}\end{pmatrix}^{-1}\leq2^{-\abs{\Psift}\hbin(f_{\Psift_1})+\log_2(\abs{\Psift}+1)}\leq 2^{-\Psift_{tol}\hbin(f_{P_1})+\log_2(\Psift_{tol}+1)},
\end{equation}
where we drop the ceiling function ($\lceil\rceil$) for simplicity and note that the second term is decreasing in $\abs{P}$ for any chosen number of test rounds (at least 1 test and key round).
The overall penalty to satisfy step order can be summarised as $\epsSO=2^{-\Psift_{tol}}+2^{-N_{\Psift_{2,1}}^{\tol}}+2^{-\Psift_{\tol}\hbin(f_{P_1})+\log_2(\Psift_{\tol}+1)}$.\\

To make the same switch for cases when $(\Kh_1,\KOTP_{1,\beta'})$ is accessible via a single oracle round, we first note that such cases require $\beta'<\alpha$ and by extension $\beta'<\beta$, i.e. $\tilde{D}_{\AV}$ should always return 0 since $\beta$ and $\beta'$ are mismatched.
We first note that $T_{\AV}$ generated with private $\KOTP_{1,\beta}$ provides no advantage to the adversary in passing $D_{\AV}$.
As such, we can follow the argument in Thm.~\ref{thm:QAKESecProofEA} to show that the probability of $D_{\AV}=1$ is upper bounded by 
the probability of $\FB=1$ when $\FA=\phi$.
Combining both results, we can thus replace $D_{\AV}$ by $\tilde{D}_{\AV}$ in general in this protocol with a penalty of $2(\epsMACa+\epsSO)$.\\

The replacement of $D_{\BV}$ is simpler, with 
\begin{equation}
    \tilde{D}_{\BV}=\begin{cases}
        1 & \tilde{D}_{\AV}=1,\,T_{\BV}=T_{\BV,\rr},\,\Omega_{6\rightarrow 7}\\
        0 & otherwise
    \end{cases},
\end{equation}
where $\Omega_{6\rightarrow 7}$ indicates that Bob sends $T_{\BV}$ before Alice is requested to verify, and $\tildeFA:=\tilde{D}_{\BV}\land \DPE$.
By similar arguments, this yields an error of up to $\epsMACb$.
As such, these changes result in a similar penalty to the trace distance, with
\begin{equation}  
    \Delta(\vars{P}(\rhoinn{jj'}),\rhoidealint)  \leq\Delta(\vars{P}'(\rhoinn{jj'}),\rhopidealint)+2\left(\epsMACa+\epsSO\right)+2\epsMACb+2\epsds,
\end{equation}
where $\vars{P}'$ and $\rhopidealint$ refers to the protocol and the ideal output state after the replacements.
We note here that the ideal output state remains of the same form as the original ideal output state, with the difference being that $\rho_E$ is a partial trace of a state generated by $\vars{P}'$ instead.\\

The results of the latter sections can be summarised as Thm.~\ref{thm:AKEProtocolMain}, which is restated below.
\AKEProtMainProof*
We note that while the overall security parameters $\epssec$ and $\epssecint$ can be determined by a sum of the respective security parameters, the bound may be tightened due to overlapping considerations for different security conditions.

\subsection{Explicit Entity Authentication}

Here, we consider the first security condition of explicit entity authentication, which consists of Alice's full explicit entity authentication and Bob's almost-full explicit entity authentication.
We note that the events of $\FA=\phi$, $\FB=\phi$ and $\FA,\FB\neq\phi$ are mutually exclusive and the adversary is assumed to select a particular attack at the start of the protocol.
\begin{theorem}
\label{thm:QAKESecProofEA}
Consider the QAKE protocol $\vars{P}$, the full explicit entity authentication security parameter is
\begin{equation*}
    \epsEAf=\frac{\Pr[\FB\neq 1]}{\abs{\vars{T}_{\BV}}}.
\end{equation*}
and the almost-full explicit entity authentication security parameter is
\begin{equation*}
    \epsEAaf=\Pr[\FA=\phi](\epsMACa+2^{-\Psift_{\tol}[1+\hbin(f_{P_1})]+\log_2(\Psift_{\tol}+1)}).
\end{equation*}
\end{theorem}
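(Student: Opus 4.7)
The plan is to bound the two components of explicit entity authentication separately, reusing the idealised-decision machinery of Appendix~\ref{app:AKEProtocolSec_OverallSec}.

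For $\epsEAf$ I would invoke the standard Wegman--Carter-style argument. When $\FB\neq 1$, Bob never emits an authentic $T_{\BV}$: he either stays silent ($\FB=\phi$) or substitutes a uniformly random string ($\FB=0$), so the tag $t_{\BV,\rr}$ reaching Alice is statistically independent of $K_2$. Since the ideal input state keeps $K_2$ uniform and private and $h_2$ is $\epsMACb$-almost strong 2-universal, the second clause of Def.~\ref{defn:AS2U} forces $\tilde t_{\BV}=h_2(K_2,x_{\Psift_{2,\rr}})$ to be uniform on $\vars T_{\BV}$ for any fixed message, giving $\Pr[D_{\BV}=1\mid \FB\neq1]\leq 1/|\vars T_{\BV}|$. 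Combining this with the implication $\FA=1\Rightarrow D_{\BV}=1$ yields $\Pr[\FA=1,\FB\neq1]\leq\Pr[\FB\neq1]/|\vars T_{\BV}|$.

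For $\epsEAaf$ I would first replace $D_{\AV}$ by the idealised check $\tilde D_{\AV}$ of Appendix~\ref{app:AKEProtocolSec_OverallSec} and note that $\tilde D_{\AV}=1$ is unsatisfiable when $\FA=\phi$, since neither $\beta=\beta'$ nor $\DPE=1$ is defined on the absent-Alice side; hence $\Pr[\tilde D_{\AV}=1,\FA=\phi]=0$ and it suffices to bound $\Pr[D_{\AV}=1,\tilde D_{\AV}=0\mid \FA=\phi]$ in the spirit of Eq.~\eqref{eqn:QAKE_DAV_change}, with every Alice-side message injected by the adversary. I would split on $\beta'$: if $\beta'\geq\alpha$ the mask $\KOTP_{1,\beta'}$ is a fresh OTP in the ideal input state, so $\tilde t_{\AV}$ is uniform and any submitted tag matches with probability $1/|\vars T_{\AV}|$; if $\beta'<\alpha$ the adversary holds a single pre-round oracle sample $T^*=h_1(\Kh_1,M^*)\oplus\KOTP_{1,\beta'}$ and either wins for free by having committed to $M^*=M_{\AV}'$, or submits a forged tag on a different message, which the $\epsMACa$-almost XOR 2-universal property of $h_1$ bounds by $\epsMACa$ irrespective of the mask. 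The free-win branch requires correctly predicting every Bob-chosen component of $M_{\AV}'$ still secret at commit time, which in the absent-Alice regime reduces to the basis $\theta_P'$ and the test-split $\Psift_1$; since these are independent random choices of Bob, their joint guess probability factorises into the product $\pguess(\theta_P')\cdot\pguess(\Psift_1)\leq 2^{-\Psift_{\tol}[1+\hbin(f_{P_1})]+\log_2(\Psift_{\tol}+1)}$ using the bounds already derived in Appendix~\ref{app:AKEProtocolSec_OverallSec}. Summing the two subcases and multiplying by $\Pr[\FA=\phi]$ gives the claimed $\epsEAaf$.

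The main obstacle I anticipate is formalising the ``guess-and-commit'' structure of the oracle attack in the $\FA=\phi$ regime. I would have to verify that (a) additional oracle samples at mask indices $i\neq\beta'$ confer no advantage because the masks $\KOTP_{1,i}$ are mutually independent one-time pads, and (b) although the adversary controls $\Psift_\rr$ and could try to shrink the effective sizes of $P$ and $\Psift_\rr$ below $\Psift_{\tol}$ to weaken the guessing bounds, any such shrinkage is mirrored in $M_{\AV}'$ itself and therefore constrains the committed $M^*$, so that the product bound evaluated at $\Psift_{\tol}$ remains the correct upper bound. Beyond these two points, the remaining steps are routine applications of the universal-hash lemmas collected in the preliminaries.
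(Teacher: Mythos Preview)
Your plan for $\epsEAf$ and the core of $\epsEAaf$ mirrors the paper's proof exactly: for the former, both you and the paper use the second clause of Def.~\ref{defn:AS2U} on $h_2$ to conclude $\Pr[D_{\BV}=1\mid \FB\neq 1]\leq 1/|\vars T_{\BV}|$; for the latter, both reduce to the disjunction ``oracle message $M^*$ matches $M'_{\AV}$'' versus ``forge on a fresh message'', bounding the second branch by $\epsMACa$ via the almost-XOR-2-universal property and the first by the product of guessing probabilities for Bob's fresh randomness. The paper does this directly without passing through $\tilde D_{\AV}$: since $\FB=D_{\AV}$, it simply bounds $\Pr[\FB=1\mid \FA=\phi]$ by the same two-term sum, so your idealised-decision wrapping is harmless but an unnecessary detour. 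Your explicit split on $\beta'\geq\alpha$ versus $\beta'<\alpha$ is also fine; the paper just passes silently to the worst case (oracle available), since $1/|\vars T_{\AV}|\leq\epsMACa$.

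The one point where your proposal diverges substantively is your proposed resolution of obstacle~(b). Your argument---that shrinking $|P|$ or $|\Psift_\rr|$ ``is mirrored in $M'_{\AV}$ and therefore constrains the committed $M^*$''---does not by itself close the gap: nothing in it prevents the adversary from pre-committing to an $M^*$ whose $P$-component is already tiny (or empty), then steering Bob's detection set to match it in the current round, whereupon the $\theta_P'$- and $\Psift_1$-guessing factors collapse to~$1$. The paper's proof does not confront this scenario explicitly either; its one-line reduction ``the guessing of $M'$ for $\FA=\phi$ reduces to the ability to guess $\theta_P'$ and $\Psift_1$'' with exponent governed by $\Psift_{\tol}$ implicitly leans on the fact that a genuine tag (and hence any oracle sample) only ever arises from a prior round in which Alice was present and her check $|\Psift_\rr|\geq\Psift_{\tol}$ passed. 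If you want a self-contained single-round argument against the abstract input-state oracle of Appendix~\ref{app:AKEProtocolSec_OverallSec} with truly arbitrary $M^*$, you would need either to restrict the admissible oracle messages to those satisfying the size constraint, or to have Bob enforce a size check before setting $D_{\AV}=1$; as written, neither your proposal nor the paper supplies that step.
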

\begin{proof}
We begin with the case of full explicit entity authentication, which we recall bounds the probability of events $(\FA,\FB)\in\{10,1\phi\}$.
To obtain $\FA=1$, Alice needs to at least successfully validate Bob by checking if $\tilde{T}_{\BV}=t_{\BV,\rr}$, i.e. $\Pr[\FA=1|\FB=\phi]\leq\Pr[\tilde{T}_{\BV}=t_{\BV,\rr}|\FB=\phi]$.
Since $K_2$ remains private from the adversary without Bob sending $T_{\BV}$, we can again use the uniformity of the strong 2-universal hash function to have
\begin{equation}
    \Pr[\FA=1|\FB=\phi]\leq\Pr[h_2(K_2,m)=t_{\BV,\rr}]=\frac{1}{\abs{\vars{T}_{\BV}}}
\end{equation}
for any $t_{\BV,\rr}$ that the adversary can choose.
A similar argument holds for the case of $\FB=0$ since a random string is sent in place of a valid tag.
As such, full explicit entity authentication security parameter is  $\epsEAf=\frac{\Pr[\FB\neq 1]}{\abs{\vars{T}_{\BV}}}$.\\

For the case of almost-full explicit entity authentication, Alice does not participate in the protocol, $\FA=\phi$.
When $\alpha>\beta'$, the adversary may have partial information on $(\Kh_1,\KOTP_{1,\beta'})$ from a previous round via a tag generated with a chosen message input $\tilde{M}$.
As such, the two events that can lead to $\FB=1$, or successful validation $\tilde{T}_{\AV}=T_{\AV,\rr}$, are either: (1) message $\tilde{M}$ generating the tag matches the message $M'$ in the current round, or (2) guessing the correct $T_{\AV,\rr}$ to send to Bob for a mismatch message, $M\neq M'$.
As such, we can bound
\begin{equation}
\begin{split}
    \Pr[\FB=1|\FA=\phi]\leq&\Pr[h_1(\Kh_1,M')\oplus \KOTP_{1,\beta'}=T_{\AV,\rr}|h_1(\Kh_1,M)\oplus \KOTP_{1,\beta'}=T_{\AV},M\neq M']+\pguess(M')\\
    \leq&\epsMACa+2^{-\Psift_{\tol}[1+\hbin(f_{P_1})]+\log_2(\Psift_{\tol}+1)},
\end{split}
\end{equation}
where the strong 2-universal hash function guarantees that the adversary cannot successfully guess a second message-tag pair with probability higher than $\epsMACa$ and the guessing of $M'$ for $\FA=\phi$ reduces to the ability to guess $\theta_P'$ and $\Psift_1$.
As such, the almost-full explicit entity authentication security parameter is $\epsEAaf=\Pr[\FA=\phi](\epsMACa+2^{-\Psift_{\tol}[1+\hbin(f_{P_1})]+\log_2(\Psift_{\tol}+1)})$.
\end{proof}

\subsection{Match Security}

The match security condition is similar to the correctness condition of QKD, and the security guarantee can be provided during Bob validation, where the string $X_{\Psift_2,\rr}$ and $\hat{X}_{\Psift_2,\rr}$ are matched via a hash.

\begin{theorem}
\label{thm:QAKESecProofMS}
Consider the QAKE protocol $\vars{P}$, the match security parameter is
\begin{equation*}
    \epsMS=\epsMACb.
\end{equation*}
\end{theorem}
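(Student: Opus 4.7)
The plan is to reduce the match security bound to the collision resistance of the hash function $h_2$ used in Bob's validation step, mirroring the standard QKD correctness argument. First I would observe that when $\FA=1$ and $\FB=1$, both parties perform privacy amplification through the \emph{same} public hash function $\hPA$ and the \emph{same} seed $R$, so $\KA\neq\KB$ can only occur when $X_{\Psift_{2,\rr}}\neq\hat{X}_{\Psift_{2,\rr}}$. Hence it suffices to bound $\Pr[X_{\Psift_{2,\rr}}\neq\hat{X}_{\Psift_{2,\rr}},\FA=1,\FB=1]$.

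Next I would analyze what the joint event $\{X_{\Psift_{2,\rr}}\neq\hat{X}_{\Psift_{2,\rr}},\FA=1,\FB=1\}$ implies about the tags exchanged in the Bob validation step. Since $\FB=1$, Bob has generated and sent $t_{\BV}=h_2(K_2,\hat{X}_{\Psift_{2,\rr}})$; since $\FA=1$, Alice's verification tag $\tilde{t}_{\BV}=h_2(K_2,X_{\Psift_{2,\rr}})$ matches the received tag $t_{\BV,\rr}$. The adversary may alter $t_{\BV}$ in transit, so success on the adversary's part requires the existence of two distinct inputs $X_{\Psift_{2,\rr}}\neq\hat{X}_{\Psift_{2,\rr}}$ whose $h_2$-images equal two (possibly distinct) chosen tag values $\tilde{t}_{\BV}$ and $t_{\BV}$ respectively.

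I would then appeal directly to the $\epsMACb$-almost strong 2-universal property of $h_2$ (Definition 1 in the preliminaries). Because $K_2$ enters only through Bob's single tag emission and the key is uniform and independent of $E$ conditional on $\FA=\FB=1$ (it has not been revealed in any earlier protocol step of this round, as $T_{\AV}$ uses a different key $\Kh_1$ with its own mask), averaging over $K_2$ gives
\begin{equation*}
    \Pr[h_2(K_2,X_{\Psift_{2,\rr}})=\tilde t_{\BV}\land h_2(K_2,\hat X_{\Psift_{2,\rr}})=t_{\BV}\mid X_{\Psift_{2,\rr}}\neq\hat X_{\Psift_{2,\rr}}]\leq\frac{\epsMACb}{\abs{\vars{T}_{\BV}}},
\end{equation*}
summed over the tag values yields $\epsMACb$.

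The main subtlety, and what I expect to handle most carefully, is justifying that conditioning on the entire history of the round (including $T_{\AV}$, the quantum channel use, and any classical messages) the key $K_2$ remains uniform so that the strong 2-universal bound is applicable directly; this is immediate here because $K_2$ is used only in the single hash evaluation producing $T_{\BV}$ and the two relevant inputs $X_{\Psift_{2,\rr}},\hat X_{\Psift_{2,\rr}}$ are determined before that evaluation. Combining the above inequalities gives $\Pr[\KA\neq\KB,\FA=1,\FB=1]\leq\epsMACb$, which yields $\epsMS=\epsMACb$.
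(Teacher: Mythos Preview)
Your proposal is correct and follows essentially the same approach as the paper: reduce key mismatch to bit-string mismatch via the shared seed $R$ and common hash $\hPA$, then bound the probability that Alice's check $\tilde{t}_{\BV}=t_{\BV,\rr}$ passes on distinct inputs using the $\epsMACb$-almost strong 2-universal property of $h_2$ with $K_2$ still private. The only cosmetic difference is that the paper phrases the final step as the conditional forgery probability $\Pr[h_2(K_2,X_{\Psift_{2,\rr}})=T_{\BV,\rr}\mid X_{\Psift_{2,\rr}}\neq\hat{X}_{\Psift_{2,\rr}},\,h_2(K_2,\hat{X}_{\Psift_{2,\rr}})=T_{\BV}]\leq\epsMACb$, whereas you invoke the joint-probability form of Definition~\ref{defn:AS2U} and sum over the observed tag $t_{\BV}$; these are equivalent.
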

\begin{proof}
Since the keys generated $\KA=\hPA(R,X_{\Psift_{2,\rr}})$ and $\KB=\hPA(R,\hat{X}_{\Psift_{2,\rr}})$ with the same privacy amplification seed $R$ would match when $X_{\Psift_{2,\rr}}=\hat{X}_{\Psift_{2,\rr}}$, i.e. $\{X_{\Psift_{2,\rr}}=\hat{X}_{\Psift_{2,\rr}}\}\implies\{\KA=\KB\}$, the converse would allow us to bound
\begin{equation}
    \Pr[\KA\neq \KB,\FA=\FB=1]\leq\Pr[X_{\Psift_{2,\rr}}\neq\hat{X}_{\Psift_{2,\rr}},\FA=\FB=1].
\end{equation}
We can expand the event $\FA=1$, to extract the event $\tilde{T}_{\BV}=T_{\BV,\rr}$ (part of $D_{\BV}=1$).
We can upper bound this by a conditional probability,
\begin{equation}
    \Pr[h_2(K_2,X_{\Psift_{2,\rr}})=T_{\BV,\rr}|X_{\Psift_{2,\rr}}\neq\hat{X}_{\Psift_{2,\rr}},h_2(K_2,\hat{X}_{\Psift_{2,\rr}})=T_{\BV}],
\end{equation}
which describes the probability that an adversary can provide a valid tag to a message when given the tag to a different message.
Since $K_2$ is private from the adversary, by the $\epsMACb$-almost strong 2-universal property of the hash, we can bound
\begin{equation}
    \Pr[\KA\neq \KB,\FA=\FB=1]\leq\epsMACb
\end{equation}
\end{proof}

\subsection{Key Secrecy}

Key secrecy analyses the two specific cases where Bob generates a key, $(\FA,\FB)\in\{01,11\}$. 
Intuitively, since the protocol is similar to decoy-state BB84 with modifications only in its authentication steps, the key secrecy security condition relies heavily on the secrecy condition of BB84.
The implicit obeying of step order and message matching is guaranteed from the 2-universal hashing, which serves as a \emph{message authentication} protocol.\\

We can summarise the result as a theorem,
\begin{theorem}
\label{thm:QAKESecProofMITMMain}
Consider the QAKE protocol $\vars{P}$. Then, the key secrecy security parameter is
\begin{equation*}
    \epsKS=\epsKS'+2\left(\epsds+\epsMACa+\epsSO+\epsMACb\right)
\end{equation*}
where
\begin{gather*}
    \epsKS'=2\sqrt{2\epsserfa}+\frac{1}{2}\times 2^{-\frac{1}{2}[N_{P_2,1}^{\tol}-N_{P_2,1}^{\tol}\hbin(\ephtol')-2-\log_2\abs{\vars{T}_{\AV}}\abs{\vars{T}_{\BV}}-\leakEC-l_{\KB}]}
\end{gather*}
is the key secrecy of the idealised protocol $\vars{P}'$.
\end{theorem}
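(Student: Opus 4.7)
The plan is to reduce the security of the real protocol $\vars{P}$ to an idealized protocol $\vars{P}'$ in which parameter estimation and authentication checks are replaced by their ``ideal'' counterparts, and then mirror the standard decoy-state BB84 proof on $\vars{P}'$. Using the replacements developed in Appendix~\ref{app:AKEProtocolSec_OverallSec} -- namely $\DPE \to \tildeDPE$ at cost $2\epsds$, $D_{\AV} \to \tilde{D}_{\AV}$ at cost $2(\epsMACa + \epsSO)$ as bounded in Eq.~(\ref{eqn:QAKE_DAV_change}), and $D_{\BV} \to \tilde{D}_{\BV}$ at cost $2\epsMACb$ -- the triangle inequality applied to the key-secrecy trace distance yields $\epsKS \leq \epsKS' + 2(\epsds + \epsMACa + \epsSO + \epsMACb)$. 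It therefore suffices to prove the $\epsKS'$ bound on $\vars{P}'$ restricted to $(\FA,\FB) \in \{01,11\}$.

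For $\vars{P}'$, the idealized checks force $\beta=\beta'$ and $X_{\Psift_{2,\rr}} = \hat{X}_{\Psift_{2,\rr}}$ whenever $\FB=1$, so Bob's key is $\KB = \hPA(R, X_{\Psift_{2,\rr}})$. I would lower bound $\Hmin^{\epssma}(X_{\Psift_{2,\rr}} \mid \beta \Kh_1 K_2 \KOTP_{1,\beta} E')$, where $E'$ collects Eve's quantum side information together with the syndrome $S$ and the two authentication tags $T_{\AV}, T_{\BV}$. First I would extract the single-photon contribution from decoy-state analysis: under $\tildeDPE=1$ we have $N_{\Psift_2,1} \geq N_{\Psift_2,1}^{\tol}$, and the single-photon bit error rate in $\Psift_1$ is at most $\ebitt{1,\tol}$. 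The modified Serfling bound then converts the $\Psift_1$ sample statistic into an upper bound $\ephtol'$ on the single-photon phase error rate in $\Psift_2$ with failure probability $\epsserfa$; this contributes the $2\sqrt{2\epsserfa}$ term through the purified-distance inequality $P \leq \sqrt{2\Delta}$. The standard uncertainty-relation argument then certifies at least $N_{\Psift_2,1}^{\tol}[1-\hbin(\ephtol')]$ bits of smooth min-entropy on the single-photon portion of $X_{\Psift_{2,\rr}}$ given Eve's quantum side information.

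I would then absorb the remaining classical leakage via the chain rule for smooth min-entropy: the error-correction syndrome costs $\leakEC$ bits, and the tags $T_{\AV}, T_{\BV}$ together cost at most $\log_2\abs{\vars{T}_{\AV}}\abs{\vars{T}_{\BV}}$ bits, with the additive $-2$ absorbing smoothing slack. Since $\hPA$ is 2-universal and the seed $R$ is uniform and independent of $E'$, the second form of the quantum leftover hash lemma delivers precisely $\frac{1}{2}\times 2^{-\frac{1}{2}[H' - l_{\KB}]}$, completing $\epsKS'$.

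The hard part will be justifying that the authentication secrets $(\Kh_1, K_2, \KOTP_{1,\beta})$ contribute to the leakage budget only through the two tags, at cost $\log_2\abs{\vars{T}_{\AV}}\abs{\vars{T}_{\BV}}$, rather than being removed in full by the chain rule. The rationale is that these secrets begin the round in the private set $\Ssec$ and enter $E'$ only as arguments of $T_{\AV}$ and $T_{\BV}$; chain-ruling away just those two classical registers returns the authentication keys to their ideal input distribution, so they can be absorbed into Eve's marginal without further entropy loss. A secondary subtlety is that the uncertainty-relation step must respect the adversary's adaptive choices on the quantum channel prior to the syndrome and tag announcements, which is handled by the worst-case conventions laid out in Appendix~\ref{app:Prelim}.
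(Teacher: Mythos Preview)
Your proposal is correct and follows essentially the same approach as the paper: reduce to $\vars{P}'$ at cost $2(\epsds+\epsMACa+\epsSO+\epsMACb)$, collapse the cases $(\tildeFA,\tildeFB)\in\{01,11\}$ to the single event $\tildeFB=1$, then peel off protocol steps via data-processing and the min-entropy chain rule before invoking the decoy-state/Serfling/QLHL machinery. One minor correction: the $-2$ in the exponent is not smoothing slack but the chain-rule cost of removing the two one-bit classical decision registers $\tilde{D}_{\AV}$ and $\tildeDPE$ alongside the tags and syndrome.
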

\begin{proof}
We begin by performing a swap to an ``idealised" version of the protocol $\vars{P}'$ by replacing $D_{\AV}$, $D_{\BV}$ and $\DPE$, which gives
\begin{multline}
\label{eqn:QAKE_switch_to_ideal_D}
    \Delta(\rhooutt{jj'}_{\FA\FB\KB LSE\land(\FA,\FB)\in\{01,11\}},\tau_{\KB}\otimes\rhooutt{jj'}_{\FA\FB LSE\land(\FA,\FB)\in\{01,11\}})\\
    \leq\Delta(\rhopoutt{jj'}_{\tildeFA\tildeFB\KB LSE\land(\tildeFA,\tildeFB)\in\{01,11\}},\tau_{\KB}\otimes\rhopoutt{jj'}_{\tildeFA\tildeFB LSE\land(\tildeFA,\tildeFB)\in\{01,11\}})\\
    +2\left(\epsds+\epsMACa+\epsSO+\epsMACb\right).
\end{multline}
We first combine 01 and 11 cases into a single one by arguing that the trace distance is independent of the final communication round using Thm.~\ref{thm:QAKEKeySecIdealisedReplace}.
This is followed by analysing the trace distance shown in Thm.~\ref{thm:QAKEKeySecIdealisedProtocol}.
Combining the results, we obtain the security parameter stated in the theorem.
\end{proof}

To simplify the analysis, we combine the cases of $(\tildeFA,\tildeFB)$ being 01 and 11 into a single condition on $\tildeFB=1$ since only the key $\KB$ is of interest.
\begin{theorem}
\label{thm:QAKEKeySecIdealisedReplace}
    Consider the QAKE protocol $\vars{P}'$. Then, we have
    \begin{equation*}
        \Delta(\rhopoutt{jj'}_{\tildeFA\tildeFB\KB LSE\land(\tildeFA,\tildeFB)\in\{01,11\}},\tau_{\KB}\otimes\rhopoutt{jj'}_{\tildeFA\tildeFB LSE\land(\tildeFA,\tildeFB)\in\{01,11\}})\leq \Delta(\rhopoutt{jj'}_{\KB L'S'E\land \tildeFB=1},\tau_{\KB}\otimes\rhopoutt{jj'}_{L'S'E\land \tildeFB=1}),
    \end{equation*}
    where the states on RHS has subsystems $L'S'$, indicating the labels and secrets correspond to that when the final step of Alice has been removed.
\end{theorem}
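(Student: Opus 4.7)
The plan is to realise Alice's contribution to step 10 (her Bob-validation tag check together with her key/label update, with $\KA$ subsequently traced out) as a CPTP map $\mathcal{A}$ that is local to Alice and does not touch $\KB$ or $E$, and then invoke monotonicity of the trace distance. Explicitly, $\mathcal{A}$ reads the received tag $T_{\BV,\rr}$ from the channel, recomputes $\tilde{T}_{\BV}=h_2(K_2, X_{\Psift_{2,\rr}})$, sets $\tilde{D}_{\BV}$ by comparison, assigns $\tildeFA=\tildeDPE\land\tilde{D}_{\BV}$, updates Alice's label $\alpha\to\beta+1$ whenever $\tildeFA=0$, performs privacy amplification $\KA=\hPA(R,X_{\Psift_{2,\rr}})$ whenever $\tildeFA=1$, and finally traces out $\KA$. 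Its inputs sit entirely in Alice's local classical memory (the secret $K_2$, her pre-update label in $L'$, $\tildeDPE$ and $X_{\Psift_{2,\rr}}$) together with the classical tag $T_{\BV,\rr}$ supplied by the channel, and its outputs are the $\tildeFA$ register and the upgraded label — crucially, $\mathcal{A}$ neither reads from nor writes to $\KB$ or $E$.

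Next I would check that the restriction events on the two sides coincide. Under the idealised input state for $\vars{P}'$, Alice always participates in the round, so $\tildeFA\in\{0,1\}$ after $\mathcal{A}$ executes, and hence the event $(\tildeFA,\tildeFB)\in\{01,11\}$ on the left-hand side is identical to the event $\tildeFB=1$ on the right-hand side. The value of $\tildeFB$ is fixed at step 9 prior to Alice's final step and is not altered by $\mathcal{A}$, so the projection onto $\tildeFB=1$ commutes with $\mathcal{A}$. Starting from the right-hand side state $\rhopoutt{jj'}_{\KB L'S'E\land \tildeFB=1}$, one applies $\mathcal{A}\otimes\mathbb{I}_{\KB E}$ and then appends the now-known classical value $\tildeFB=1$ (a trivial CPTP operation) to recover exactly $\rhopoutt{jj'}_{\tildeFA\tildeFB\KB LSE\land(\tildeFA,\tildeFB)\in\{01,11\}}$. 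The same composite CPTP map sends $\tau_{\KB}\otimes\rhopoutt{jj'}_{L'S'E\land \tildeFB=1}$ to $\tau_{\KB}\otimes\rhopoutt{jj'}_{\tildeFA\tildeFB LSE\land(\tildeFA,\tildeFB)\in\{01,11\}}$, since $\mathcal{A}$ does not touch $\KB$ and therefore preserves its tensor product with $\tau_{\KB}$.

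The inequality in the theorem then follows from the data-processing property of the trace distance. The main obstacle is the bookkeeping check that $\mathcal{A}$ is genuinely independent of $\KB$: the tag check uses only $K_2$, $X_{\Psift_{2,\rr}}$ and $T_{\BV,\rr}$, and Alice's privacy amplification uses only the seed $R$ together with $X_{\Psift_{2,\rr}}$, none of which involve $\KB$. Once the factorisation $\mathcal{A}\otimes\mathbb{I}_{\KB E}$ is justified, the proof collapses to a single application of monotonicity of the trace distance under CPTP maps.
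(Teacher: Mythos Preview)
Your approach is essentially the same as the paper's: realise Alice's final step as a CPTP map that does not act on $\KB$ and invoke monotonicity of the trace distance. One technical slip worth correcting is your description of $\tilde{D}_{\BV}$: in the idealised protocol $\vars{P}'$ the decision is \emph{not} the hash comparison $h_2(K_2,X_{\Psift_{2,\rr}})=T_{\BV,\rr}$ but the third-party check $\{\tilde{D}_{\AV}=1,\,T_{\BV}=T_{\BV,\rr},\,\Omega_{6\to7}\}$, and these inputs ($T_{\BV}$, $T_{\BV,\rr}$, the step-ordering flag) live in $E$, so your claim that $\mathcal{A}$ ``neither reads from nor writes to $E$'' is not quite right. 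This does not damage the argument, since all that matters for preserving the tensor factor $\tau_{\KB}$ is that $\mathcal{A}$ acts only on $L'S'E$ and leaves $\KB$ untouched; the paper makes exactly this point (``a CPTP map that maps $T_{\BV}$ and $t_{\BV,\rr}$ (part of $E$) to $\tildeFA$'') and then applies data-processing just as you do.
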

\begin{proof}
We first note that since $\tildeFB=1$, we have $\tildeDPE=1$ and the decision of $\tildeFA$ depends solely on $\tilde{D}_{\BV}$.
Moreover, since $T_{\BV}\neq\perp$, the $\FA$ condition depends solely on $T_{\BV}=t_{\BV,\rr}$, i.e. whether the adversary altered $T_{\BV}$ transmission from Bob to Alice.
As such, we can write a CPTP map that maps $T_{\BV}$ and $t_{\BV,\rr}$ (part of $E$) to $\tildeFA$ followed by generating secret $S$ from $S'$.
As such, by the property that CPTP maps cannot increase trace distance, we can reverse the map (i.e. reversing Alice's decision step and subsequent index update, which WLOG is the final step of the protocol), yielding
\begin{multline}
    \Delta(\rhopoutt{jj'}_{\FA\FB\KB LSE\land(\tildeFA,\tildeFB)\in\{01,11\}},\tau_{\KB}\otimes\rhopoutt{jj'}_{\tildeFA\tildeFB LSE\land(\tildeFA,\tildeFB)\in\{01,11\}})\\
    \leq \Delta(\rhopoutt{jj'}_{\KB L'S'E\land \tildeFB=1},\tau_{\KB}\otimes\rhopoutt{jj'}_{L'S'E\land \tildeFB=1}),
\end{multline}
where the $L'S'E$ indicates that the state of the subsystem before Alice's final protocol step.
This aligns with our understanding that $\tildeFA$ outcome has no impact on the privacy of $\KB$ since it does not reveal additional information.
\end{proof}

With the simplification, we proceed to prove the key secrecy security parameter for the QAKE protocol with idealised parameter estimation and authentication checks,
\begin{theorem}
\label{thm:QAKEKeySecIdealisedProtocol}
    Consider the QAKE protocol $\vars{P}'$. Then, we have that 
    \begin{multline*}
        \Delta(\rhopoutt{jj'}_{\KB LSE\land \tildeFB=1},\tau_{\KB}\otimes\rhopoutt{jj'}_{LSE\land \tildeFB=1})\leq 2\sqrt{2\epsserfa}+\frac{1}{2}\times 2^{-\frac{1}{2}[N_{P_2,1}^{\tol}-N_{P_2,1}^{\tol}\hbin(\ephtol')-2-\log_2\abs{\vars{T}_{\AV}}\abs{\vars{T}_{\BV}}-\leakEC-l_{\KB}]},
    \end{multline*}
    where $\ephtol'$ is the error tolerance inclusive of the correction due to Serfling bound.
\end{theorem}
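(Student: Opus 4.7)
The plan is to recast the bound via the quantum leftover hash lemma applied to the hash $\hPA$, and to produce the required lower bound on $\Hmin^{\epssm}(\hat{X}_{\Psift_{2,\rr}}|E')$ by following the standard decoy-state BB84 proof strategy adapted to include the extra leakage from the authentication tags. Because we work inside the idealised protocol $\vars{P}'$, the event $\tildeFB=1$ already forces $\tildeDPE=1$, correct step order, $\beta=\beta'$, and matching $M_{\AV}, M_{\AV}'$, so I may treat the classical transcript exactly as in a standard QKD proof with ideal message authentication.

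The first step is to write $E'$ for the adversary's side information at the moment Bob would perform privacy amplification: the quantum register $E$, the sifting and test-round announcements, the syndrome $S$, and the two tags $T_{\AV},T_{\BV}$. Since $\hPA$ is a $2$-universal hash function, the QLHL gives
\begin{equation*}
\Delta\!\left(\rho_{\KB E'\land\tildeFB=1},\tau_{\KB}\otimes\rho_{E'\land\tildeFB=1}\right)\leq 2\epssm+\tfrac12\cdot 2^{-\tfrac12\left[\Hmin^{\epssm}(\hat{X}_{\Psift_{2,\rr}}|E')_{\rho_{\land \tildeFB=1}}-l_{\KB}\right]}.
\end{equation*}
I then use the chain rule for smooth min-entropy to strip off the classical leakage: the tags contribute at most $\log_2\abs{\vars{T}_{\AV}}\abs{\vars{T}_{\BV}}$ bits and the syndrome contributes $\leakEC$, leaving a bound in terms of the min-entropy conditioned only on the quantum side information and the sifting/test announcements.

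The core step is the phase-error bound on the single-photon contributions inside $\Psift_2$. Restricting to the single-photon rounds (of which there are at least $N_{\Psift_2,1}^{\tol}$ by the $\tildeDPE=1$ event), I invoke the complementary/phase-error version of the uncertainty relation, which yields smooth min-entropy at least $N_{\Psift_2,1}^{\tol}[1-\hbin(\ephtol')] - 2$ (the $-2$ absorbing the usual chain-rule constant), provided the single-photon phase error rate in $\Psift_2$ is at most $\ephtol'$. The link between the observed single-photon bit-error rate in $\Psift_{1,\rr}$ (bounded by $\ebitt{1,\tol}$ under $\tildeDPE=1$) and the phase-error rate in $\Psift_2$ is the modified Serfling bound for random sampling without replacement, whose failure probability is $\epsserfa$. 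Converting this failure probability into a smoothing parameter via $P(\rho,\rho')\leq\sqrt{2\Delta(\rho,\rho')}$ gives $\epssm=\sqrt{2\epsserfa}$, producing the $2\sqrt{2\epsserfa}$ additive term once substituted back into QLHL.

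The main obstacle is the careful bookkeeping of the phase-error argument under partial (single-photon) tagging: the phase-error statement is a statement about the single-photon subensemble only, while the Serfling bound relates the bit-error rate in the tested single-photon rounds to that in the untested single-photon rounds. I would handle this exactly as in Ref.~\cite{Lim2014_DecoyQKD} by conditioning on the decoy-state decomposition (already guaranteed with probability $1-\epsds$ by the earlier swap to the idealised $\tildeDPE$) and then applying Serfling to the single-photon subset of $\Psift$. Combining the QLHL bound, the chain-rule leakage, and the phase-error bound with $\epssm=\sqrt{2\epsserfa}$ then yields exactly the claimed inequality.
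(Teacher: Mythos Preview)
Your proposal is correct and follows essentially the same route as the paper: apply QLHL to $\hPA$, peel off the classical leakage from the two tags and the syndrome via the smooth min-entropy chain rule, then invoke the standard decoy-state BB84 phase-error bound with the Serfling sampling argument giving $\epssm=\sqrt{2\epsserfa}$. One small bookkeeping point: the $-2$ in the exponent does not come from the uncertainty relation itself but from chain-ruling away the two single-bit decision variables $\tilde{D}_{\AV}$ and $\tildeDPE$ that are announced along with the tags; the paper also makes explicit that once $T_{\BV}$ (resp.\ $T_{\AV}$) is removed, the conditioning on $K_2$ (resp.\ $\Kh_1,\KOTP_{1,\beta}$) can be dropped because those keys become independent of the rest of the state.
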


\begin{proof}
Let $\tildeOmegaPE$ be the event where parameter estimation checks $N_{\Psift_1,1}\geq N_{\Psift_1,1}^{\tol}$, $N_{\Psift_2,1}\geq N_{\Psift_2,1}^{\tol}$, $\frac{wt(X_{\Psift_1,1}\oplus X_{\Psift_{1,1}}')}{N_{\Psift_1,1}}\leq \ebitt{1,\tol}$ are successful, for which the actual $\Psift_1$ and $\Psift_2$ sets are utilised instead of the received $\Psift_{1,\rr}$ and $\Psift_{2,\rr}$ in the $\tildeDPE=1$ condition.
Due to the matching message condition in idealised authentication check, when $\tildeFB=1$, $\rho_{\land \tildeFB=1\land\OmegaPE}=\rho_{\land \tildeFB=1\land\tildeOmegaPE}$.
Noting that $\hPA$ is 2-universal, we bound the trace distance using the quantum leftover hash lemma~\cite{Tomamichel2017_QKDProof,Tomamichel2011_QLHL},
\begin{equation}
\begin{split}
    &\Delta(\rhopoutt{jj'}_{\tildeFB\KB L'S'E'\land \tildeFB=1},\tau_{\KB}\otimes\rhopoutt{jj'}_{\tildeFB L'S'E'\land \tildeFB=1})\\
    \leq&\Delta(\rhopoutt{jj'}_{\KB L'S'E\land \tildeFB=1\land\tildeOmegaPE},\tau_{\KB}\otimes\rhopoutt{jj'}_{L'S'E\land \tildeFB=1\land\tildeOmegaPE})\\
    \leq& \frac{1}{2}\left\{4\epssma+2^{-\frac{1}{2}\left[\Hmin^{\epssma}(\hat{X}_{\Psift_{2,\rr}}|L'S'E)_{\rhopoutt{jj'}_{\hat{X}_{P_{2,\rr}}L'S'E\land \tildeFB=1\land\tildeOmegaPE}}-l_{\KB}\right]}\right\}.
\end{split}
\end{equation}
Focusing only on the terms relevant to the current round, we can express the smooth min-entropy as
\begin{equation}
    \Hmin^{\epssma}(\hat{X}_{\Psift_{2,\rr}}|L'S'E)_{\rhopoutt{jj'}_{\hat{X}_{\Psift_{2,\rr}}L'S'E\land \tildeFB=1\land\tildeOmegaPE}}=\Hmin^{\epssma}(\hat{X}_{\Psift_{2,\rr}}| \Kh_1K_2\KOTP_{1,\beta}E)_{\rhopoutt{jj'}_{\hat{X}_{\Psift_{2,\rr}}\Kh_1K_2\KOTP_{1,\beta}E\land \tildeFB=1\land\tildeOmegaPE}},
\end{equation}
where we note the labels are implicit contained in $E$.
Before simplifying the min-entropy, we first list the important conditions required for $\tildeFB=1$ (inclusive of $\tildeDPE=1$).
These are
\begin{enumerate}
    \item $\beta=\beta'$: The index chosen for the protocol round matches.
    \item $M_{\AV}=M_{\AV}'$: Classical messages exchanged between Alice and Bob before Alice Validation step matches.
    \item $\Omega_{2\rightarrow3\rightarrow4\rightarrow5\rightarrow6}$: Protocol step order is respected from steps 2 to 6.
    \item $\tildeOmegaPE$: Standard parameter estimation checks.
\end{enumerate}
With the conditions, we can simplify the smooth min-entropy term similarly to QKD, using Thm.~10 of Ref.~\cite{Tomamichel2017_QKDProof} to remove $\land\Omega$ conditions when necessary,
\begin{equation}
\begin{split}
    &\Hmin^{\epssma}(\hat{X}_{\Psift_{2,\rr}}|\Kh_1K_2\KOTP_{1,\beta}E)_{\rhopoutt{jj'}_{\hat{X}_{\Psift_{2,\rr}}\Kh_1K_2\KOTP_{1,\beta}E\land \tildeFB=1\land\tildeOmegaPE}}\\
    =&\Hmin^{\epssma}(X_{\Psift_{2}}|\Kh_1K_2\KOTP_{1,\beta}T_{\BV}\tilde{D}_{\AV}E_6)_{\rhopoutt{jj'}_{ X_{\Psift_{2}}\Kh_1K_2\KOTP_{1,\beta}T_{\BV}\tilde{D}_{\AV}E_6\land\beta=\beta'\land M_{\AV}=M_{\AV}'\land\Omega_{2\rightarrow3\rightarrow4\rightarrow5\rightarrow6}\land\tildeOmegaPE}}\\
    \geq&\Hmin^{\epssma}(X_{\Psift_2}|\Kh_1\KOTP_{1,\beta}T_{\AV}\tildeDPE SE_5)_{\rhopoutt{jj'}_{ X_{\Psift_{2}}\Kh_1\KOTP_{1,\beta}T_{\AV}\tildeDPE SE_5\land\beta=\beta'\land M_{\AV}=M_{\AV}'\land\Omega_{2\rightarrow3\rightarrow4\rightarrow5}\land\tildeOmegaPE}}-1-\log_2\abs{\vars{T}_{\BV}}\\
    \geq&\Hmin^{\epssma}(X_{\Psift_2}| \Psift_1\Psift_2X_{\Psift_1}'E_4)_{\rhopoutt{jj'}_{ X_{\Psift_{2}}\Psift_1\Psift_2X_{\Psift_1}'E_4\land M_{\AV}=M_{\AV}'\land\Omega_{2\rightarrow3\rightarrow4}\land\tildeOmegaPE}}-2-\log_2\abs{\vars{T}_{\BV}}\abs{\vars{T}_{\AV}}-\leakEC\\
    \geq&\Hmin^{\epssma}(X_{\Psift_2}|\Psift_1\Psift_2X_{\Psift_1}'\theta_P\theta_P'PE_1)_{\rhopoutt{jj'}_{ X_{\Psift_{2}}\Psift_1\Psift_2X_{\Psift_1}'\theta_P\theta_P'PE_1\land\tildeOmegaPE}}-2-\log_2\abs{\vars{T}_{\BV}}\abs{\vars{T}_{\AV}}-\leakEC\\
\end{split}
\end{equation}
The first equality uses the fact that Bob's corrected bit string matches Alice's and that $\Psift_2=\Psift_{2,\rr}$, and expands $E=E_5T_{\BV}\tilde{D}_{\AV}$, corresponding to output of the sixth step.
The second inequality notes the removal of the classical $T_{\BV}$ and $\tilde{D}_{\AV}$ using the chain rule of smooth min-entropy for classical systems~\cite{Tomamichel2015_ITBook}, followed by removing $K_2$, noting that it is independent of the rest of the state without $T_{\BV}$.
Furthermore, the CPTP map describing Eve taking the outputs of step 5 and preparing inputs of step 6 is reversed.
The third inequality removes the classical $T_{\AV}$, $\tildeDPE$ and $S$ using the chain rule of smooth min-entropy, followed by removal of $K_1^h$ and $\KOTP_{1,\beta}$ since they are independent of the state after removal of $T_{\AV}$.
Furthermore, the CPTP map describing Eve taking the outputs of step 4 and preparing the inputs of step 5 is reversed.
The final inequality reverses the remaining CPTP map to arrive at the state after step 1.\\

The quantum state $\rhopoutt{jj'}_{ X_{\Psift_{2}}\Psift_1\Psift_2X_{\Psift_1}'\theta_P\theta_P'PE_1\land\tildeOmegaPE}$ can be viewed as being generated from a standard decoy-state BB84.
As such, we can bound the min-entropy by~\cite{Lim2014_DecoyQKD,Tomamichel2017_QKDProof},
\begin{equation}
    \Hmin^{\epssma}(X_{\Psift_2}|\Psift_1\Psift_2X_{\Psift_1}'\theta_P\theta_P'PE_1)_{\rhopoutt{jj'}_{ X_{\Psift_{2}}\Psift_1\Psift_2X_{\Psift_1}'\theta_P\theta_P'PE_1\land\tildeOmegaPE}}\geq N_{\Psift_2,1}^{\tol}[1-\hbin(\ephtol)],
\end{equation}
where $\epssma\geq \sqrt{2\epsserfa}$, $N_{\Psift_2,1}^{\tol}$ is a lower bound on the number of single-photon events in set $\Psift_2$, with phase error tolerance
\begin{equation}
    \ephtol'=\ebitt{1,\tol}+g(N_{\Psift_1,1}^{\tol},N_{\Psift_2,1}^{\tol},\epsserfa).
\end{equation}
\end{proof}

\subsection{Shared Secrets Privacy}

The final security parameter to examine is the shared secrets privacy condition, which can be summarised as

\begin{theorem}
    Consider the QAKE protocol $\vars{P}$. Then, the shared secrets privacy security condition satisfies
    \begin{equation*}
        \epsSP=\varepsilon_{\SP,01,11}'+2\left(\epsds+\epsMACa+\epsSO+\epsMACb\right)
    \end{equation*}
    where
    \begin{gather*}
        \varepsilon_{\SP,01,11}'=4\sqrt{2\epsserfa}+2(\varepsilon_2+\varepsilon_3)+\varepsilon_{\SP,\MAC,1}+\varepsilon_{\SP,\MAC,2},\\
        \varepsilon_{\SP,\MAC,1}=\sqrt{(\abs{\vars{T}_{\AV}}\epsMACa-1)+2^{\log_2\left(\frac{2}{\varepsilon_3}+1\right)+\leakEC+2+\log_2\abs{\vars{T}_{\AV}}-N_{\Psift_2,1}^{\tol}[1-\hbin(\ephtol')]}}\\
        \varepsilon_{\SP,\MAC,2}=\sqrt{(\abs{\vars{T}_{\BV}}\epsMACb-1)+2^{\log_2\left(\frac{2}{\varepsilon_2}+1\right)+\leakEC+2+\log_2\abs{\vars{T}_{\AV}}\abs{\vars{T}_{\BV}}-N_{\Psift_2,1}^{\tol}[1-\hbin(\ephtol')]}}
    \end{gather*}
    is the shared secrets privacy parameter for idealised protocol $\vars{P}'$ for cases where $\FB=1$. 
\end{theorem}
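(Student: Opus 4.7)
The plan is to reduce the bound to analysing the idealised protocol $\vars{P}'$ and then apply the quantum leftover hash lemma (QLHL) twice---once for each tag---to show that the tag generation acts as a strong extractor that preserves the secrecy of the underlying hash keys.

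First, I would invoke the replacement of $D_{\AV}$, $D_{\BV}$ and $\DPE$ by their idealised versions $\tilde{D}_{\AV}$, $\tilde{D}_{\BV}$ and $\tildeDPE$, as already established in Eq.~(\ref{eqn:QAKE_switch_to_ideal_D}), paying the penalty $2(\epsds+\epsMACa+\epsSO+\epsMACb)$. This reduces the problem to bounding the shared secrets privacy parameter $\varepsilon_{\SP,01,11}'$ for the idealised protocol $\vars{P}'$, in which the matching-message and step-order conditions hold whenever $\tildeFB=1$. The case $\FB=1$ breaks into $(\tildeFA,\tildeFB)\in\{01,11\}$, but as in Thm.~\ref{thm:QAKEKeySecIdealisedReplace} the final Alice step can be reversed so we may analyse a single condition $\tildeFB=1$.

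Next, I would show that $R$ stays private trivially (it is only used inside $\hPA$ whose output does not appear in $LSE$ under $\tildeFB=1$ since the adversary does not see the final key), and that the critical terms come from $K_2$ and from $(\Kh_1,\KOTP_{1,\beta})$. For $K_2$, observe that conditional on $\tildeFB=1$ and $\tildeOmegaPE$, the same smooth min-entropy estimate used in Thm.~\ref{thm:QAKEKeySecIdealisedProtocol} gives
\begin{equation*}
\Hmin^{\epssma}(\hat{X}_{\Psift_{2}}|\Kh_1\KOTP_{1,\beta}T_{\AV}E_5)\geq N_{\Psift_2,1}^{\tol}[1-\hbin(\ephtol')]-2-\log_2\abs{\vars{T}_{\AV}}\abs{\vars{T}_{\BV}}-\leakEC,
\end{equation*}
with $\epssma\geq\sqrt{2\epsserfa}$. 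Since $h_2(K_2,\cdot)$ is an $\epsMACb$-almost strong 2-universal hash function (equivalently, $\epsMACb$-almost 2-universal~\cite{Stinson1994_UniversalHashing}), QLHL applied with this min-entropy bound and output length $\log_2\abs{\vars{T}_{\BV}}$ yields exactly $\varepsilon_{\SP,\MAC,2}+2\epssma+2\varepsilon_2$ as the distance between the real joint state $\rho_{T_{\BV}K_2 \cdots E}$ and the ideal state $\tau_{T_{\BV}}\otimes\tau_{K_2}\otimes\rho_{\cdots E}$; this makes $K_2$ look uniform and independent, as required for the next round's input state. A symmetric application for $h_1$ (which is only $\epsMACa$-almost XOR 2-universal, but becomes an $\epsMACa$-almost 2-universal hash once the masking key $\KOTP_{1,\beta}$ is absorbed via Thm.~\ref{thm:OTPWegCarSec}) similarly shows that $(\Kh_1,\KOTP_{1,\beta})$ look either uniform or accessible only through the one-call oracle $\bigepsilon_{\Kh_1\KOTP_{1,\beta}E\rightarrow E'}$ that appears in the input-state description $\rhoin$; this contributes $\varepsilon_{\SP,\MAC,1}+2\epssma+2\varepsilon_3$.

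Finally, I would glue the two QLHL applications via the triangle inequality---bounding the trace distance to $\rhopidealint$ in two hops through an intermediate state where $T_{\BV}$ has already been idealised---and collect the terms to recover $4\sqrt{2\epsserfa}+2(\varepsilon_2+\varepsilon_3)+\varepsilon_{\SP,\MAC,1}+\varepsilon_{\SP,\MAC,2}$. The main technical obstacle will be the careful ordering: $T_{\AV}$ is released before $T_{\BV}$, so when extracting $T_{\BV}$ via QLHL the conditioning side information must already contain $T_{\AV}$ (reflected in the $\log_2\abs{\vars{T}_{\AV}}$ chain-rule penalty in the min-entropy bound). Equally, one must ensure that after the replacement by $\tildeFB$ the ``bad events'' where the adversary forces mismatched messages are already absorbed into the $\epsMACa,\epsMACb,\epsSO,\epsds$ penalties, so that the QLHL hypotheses apply on the good event where $\hat{X}_{\Psift_2}=X_{\Psift_2}$ and the standard decoy-state min-entropy bound holds. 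The rest is a routine application of QLHL and the smoothing parameters $\varepsilon_2,\varepsilon_3$ that are left as free optimisation variables.
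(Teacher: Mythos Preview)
Your approach for the $\tildeFB=1$ branch is essentially the paper's (Thm.~\ref{thm:QAKESecProofSSFB1}): split via the triangle inequality into $K_2$-privacy and $(\Kh_1,\KOTP_{1,\beta})$-privacy, and apply QLHL to each tag viewed as a strong extractor. Two points, however, need attention.

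The first is structural. The shared-secrets-privacy trace distance runs over all $(\tildeFA,\tildeFB)\notin\{\phi1,1\phi,10\}$, not just $\tildeFB=1$. The identity $\epsSP=\varepsilon_{\SP,01,11}'+2(\cdots)$ holds only because the $\phi0$, $0\phi$ and $00$ components contribute zero, which the paper proves separately (Thm.~\ref{thm:QAKESecProofSSphi0} and Thm.~\ref{thm:QAKESecProofSSFA0}). These are not vacuous: when $\tildeFA=0$ with $\DPE=1$, Alice still emits a genuine $T_{\AV}$, leaking information about $(\Kh_1,\KOTP_{1,\beta})$. The paper absorbs this either by tracing out $\KOTP_{1,\beta}$ (so Thm.~\ref{thm:OTPWegCarSec} protects $\Kh_1$) or by matching the leakage exactly to the extra oracle channel $\bigepsilon_{K_1^h\KOTP_{1,\beta}E\rightarrow E'}$ that is built into the ideal output state $\rho^{\tilde j+1,j'}_{LSE}$. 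Your only mention of the oracle sits inside the $\tildeFB=1$ argument, where it plays no role; it is precisely in the $\tildeFB\neq1$ branches that the oracle structure is needed.

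The second is a bookkeeping slip that prevents you from hitting the stated constants. The min-entropy you quote from Thm.~\ref{thm:QAKEKeySecIdealisedProtocol} carries a $-\log_2\abs{\vars{T}_{\AV}}\abs{\vars{T}_{\BV}}$ penalty because there the conditioning includes $K_2$ (and hence $T_{\BV}$ must be peeled off by the chain rule). In the $K_2$-QLHL step here, $K_2$ is the \emph{seed} and is not in the conditioning; since $h_2$ is strongly $2$-universal and $K_2$ is uniform, $T_{\BV}$ is then uniform and independent of $\hat X_{\Psift_2}$ and can be dropped from the side information without a chain-rule cost. That saves $\log_2\abs{\vars{T}_{\BV}}$ and is exactly what makes the QLHL output match $\varepsilon_{\SP,\MAC,2}$. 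The analogous observation---both tags drop for free when neither $K_2$ nor $\Kh_1\KOTP_{1,\beta}$ appears in the conditioning---is why $\varepsilon_{\SP,\MAC,1}$ carries only $\log_2\abs{\vars{T}_{\AV}}$ in its exponent. With your bound as written, both exponents would be off by an extra factor of $\abs{\vars{T}_{\BV}}$.
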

\begin{proof}
We begin by performing the swap to an ``idealised" version of the protocol $\vars{P}'$, with replacement of $D_{\AV}$, $D_{\BV}$ and $\DPE$, which incurs a penalty of $2(\epsds+\epsMACa+\epsMACb+\epsSO)$.
We then analyse the components of the shared secrets privacy trace distance separately, based on the $(\tildeFA,\tildeFB)$ values.
For the $(\tildeFA,\tildeFB)=\phi0$ case, the switch to ``idealised" checks means none of the secrets are utilised and thus are not leaked, as presented in Thm.~\ref{thm:QAKESecProofSSphi0}.
For the case of $(\tildeFA,\tildeFB)\in\{0\phi,00\}$, which is analysed together in Thm.~\ref{thm:QAKESecProofSSFA0}, there main leakage concern is that from sending $T_{\AV}$.
There, we provide a clear reason for the necessity of the channels $\bigepsilon_{K_1^h\KOTP_{1,i}E_i\rightarrow E_{i+1}}$ -- to quantify the leakage of information when $T_{\AV}$ is sent, but where $\KOTP_{1,i}$ is still required for future rounds.
Finally, the case of $(\tildeFA,\tildeFB)\in\{01,11\}$ can be analysed in a similar way to the key secrecy condition, with the variables to keep private being $\Kh_1\KOTP_{1,i}K_2$.
The privacy is maintained via the strong extractor property and QLHL~\cite{Fehr2017_SID,Tomamichel2011_QLHL}, where the adversary's uncertainty of $\hat{X}_{P_2,\rr}$ provides masking for the keys of the 2-universal hash function, as shown in Thm.~\ref{thm:AKESecProofSSFB1}.
Combining all the results, we get the statement presented in the theorem.
\end{proof}

We begin with the case of $(\tildeFA,\tildeFB)=(\phi,0)$,
\begin{theorem}
    \label{thm:QAKESecProofSSphi0}
    Consider the QAKE protocol $\vars{P}'$. Then, we have that
    \begin{equation*}
        \Delta\left(p_{\phi0}^{\reall}\rhopoutt{jj'}_{LSE|(\tildeFA,\tildeFB)=\phi0},\sum_{\tilde{j}'\geq j'}p_{\phi0,\tilde{j}}\rho_{LSE}^{j,\tilde{j}'+1}\right)=0.
    \end{equation*}
\end{theorem}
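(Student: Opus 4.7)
The plan is to show, by direct inspection, that conditioned on $(\tildeFA,\tildeFB)=(\phi,0)$ every branch of the output of $\vars{P}'$ already equals a term in the ideal mixture on the right-hand side, so the trace distance vanishes; no concentration bound or hashing argument is needed.

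First I would note that $\tildeFA=\phi$ means Alice is not involved in the round at all, so Step~5 of the protocol (where $T_{\AV}$ would be generated and transmitted) does not occur. Because the idealised decision $\tilde{D}_{\AV}$ in Appendix~\ref{app:AKEProtocolSec_OverallSec} requires the step-ordering event $\Omega_{5\rightarrow 6}$, it is automatically driven to $0$, so $\tildeFB=\tilde{D}_{\AV}=0$, consistent with the conditioning. In particular, none of $R$, $K_2$, $\Kh_1$, or the $\KOTP_{1,\beta}$ are ever touched on Alice's side, and her label stays at $\alpha=j$.

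Next I would track Bob. All of Bob's announcements in Steps~1, 4 and~5 are independent of the long-term secrets, and in Step~9, because $\tildeFB=0$, he replaces $T_{\BV}$ by a fresh uniform string, so $K_2$ is not exposed. His local verification tag uses $(\Kh_1,\KOTP_{1,\beta'})$ but is never transmitted and therefore does not enter $E$. Bob's only remaining effect on the ideal-state registers is to update his label to $\alpha'=\beta'+1$, where $\beta'=\max\{\alpha_{\rr},\alpha'\}\geq j'$ is a function of the adversary's input $\alpha_{\rr}$ alone.

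To conclude, conditioned on $\beta'=\tilde{j}'\geq j'$, I would pattern-match against the definition of $\rho^{j,\tilde{j}'+1}_{LSE}$ in Eq.~\eqref{eqn:QAKE_ideal_output_state}: the labels are $(j,\tilde{j}'+1)$; the secrets $\{\KOTP_{1,i}\}_{i\geq j}$, $K_2$, and $R$ remain uniform and independent because they were never used this round; and $\Kh_1$ together with $\{\KOTP_{1,i}\}_{i=j'}^{j-1}$ are still in the exact state produced by the input oracle chain, since no new oracle query was made. For $i\in[j',\tilde{j}']$ the masks $\KOTP_{1,i}$ are now below both labels and are traced out; by Thm.~\ref{thm:OTPWegCarSec} (equivalently the observation in the footnote of Appendix~\ref{app:AKEProtocolSec_OverallSec}) the corresponding channels $\bigepsilon_{\Kh_1\KOTP_{1,i}E_i\rightarrow E_{i+1}}$ collapse to trivial internal updates of the adversary's state, leaving precisely the oracle chain from index $\tilde{j}'+1$ to $j-1$ required by $\rho^{j,\tilde{j}'+1}_{LSE}$. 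Summing against $p_{\phi0,\tilde{j}'}$ gives the claimed equality, and the only nontrivial bookkeeping is this collapse of the oracle chain under the label update, which is handled entirely by the already-established Thm.~\ref{thm:OTPWegCarSec}.
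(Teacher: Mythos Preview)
Your proposal is correct and follows essentially the same approach as the paper: observe that in $\vars{P}'$ with $\tildeFA=\phi$ none of the long-term secrets are touched (the paper states this in one line; you spell out Alice's non-participation and Bob's random $T_{\BV}$ replacement explicitly), then condition on $\beta'=\tilde{j}'$, update Bob's label, and match against the input-state template by tracing out the now-obsolete masking keys via Thm.~\ref{thm:OTPWegCarSec}.

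One minor bookkeeping slip: you write ``for $i\in[j',\tilde{j}']$ the masks $\KOTP_{1,i}$ are now below both labels,'' but Alice's label remains $j$, so when $\tilde{j}'\geq j$ the indices $i\in[j,\tilde{j}']$ are \emph{not} below both labels and those masks must stay in the private $\tau$ block rather than be traced out. The correct range to trace is the intersection with the existing oracle chain, namely $i\in[j',\min\{j-1,\tilde{j}'\}]$, which is exactly what is needed to collapse the chain down to indices $[\tilde{j}'+1,j-1]$ (empty if $\tilde{j}'\geq j-1$). The paper phrases this as tracing $\KOTP_{1,i}$ for $i$ below $\min\{j,\tilde{j}'+1\}$; your phrase ``below both labels'' already encodes the right constraint, so the stated interval just needs to be capped at $j-1$.
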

\begin{proof}
The goal here and in subsequent theorems is to prove that secrets that should not be leaked remains private.
The trace distance can be expressed as
\begin{equation}
    \Delta\left(p_{\phi0}^{\reall}\rho^{'\text{out},jj'}_{LSE|(\tildeFA,\tildeFB)=\phi0},\sum_{\tilde{j}'\geq j'}p_{\phi0,\tilde{j}}\rho_{LSE}^{j,\tilde{j}'+1}\right)\leq\sum_{\tilde{j}'\geq j'}p_{\phi0,\tilde{j}} \Delta\left(\rho^{'\text{out},jj'}_{LSE|(\tildeFA,\tildeFB)=\phi0,\beta'=\tilde{j}'},\rho_{LSE}^{j,\tilde{j}'+1}\right),
\end{equation}
for $\tilde{j}'\geq j'$, noting that the choice of $\beta'<j'$ is not possible (by definition of $\beta'$). 
We can thus analyse the trace distance separately for each $\tilde{j}'$ selection.\\

Here, we focus on an arbitrary $\tilde{j}'$ for analysis, where
\begin{equation}
\begin{split}
    \rho^{j,\tilde{j}'+1}_{LSE}=&\dyad{j,\tilde{j}'+1}_{\alpha\alpha'}\otimes\tau_{K_2R\KOTP_{1,j}\cdots\KOTP_{1,m}}\otimes\bigepsilon_{K_1^h\KOTP_{1,j-1}E_{j-1}\rightarrow E}\circ\cdots\\
    &\circ\bigepsilon_{K_1^h\KOTP_{1,\tilde{j}'+1}E_{\tilde{j}'+1}\rightarrow E_{\tilde{j}'+2}}(\tau_{\Kh_1\KOTP_{1,\tilde{j}'+1}\cdots \KOTP_{1,j-1}}\otimes\rho_{E_{\tilde{j}'+1}}).
\end{split}
\end{equation}
We first note that with the change to idealised checks, the protocol does not utilise any of the shared secrets when $\tildeFA=\phi$ and thus would not leak them.
As such, the only operation on the secrets is the tracing away of $\KOTP_{1,i}$ for $i\leq\min\{j,\tilde{j}'+1\}$, which will no longer be used.
For input state with the form of $\rho_{LSE}^{j,j'}$, tracing away $\KOTP_{1,i}$ would lead to removal of its corresponding channel $\bigepsilon_{K_1^h\KOTP_{1,i}E_i\rightarrow E_{i+1}}$ for $i\in[\min\{j,j'\},\in\{j,\tilde{j}'+1\}]$, resulting in the $\rho^{j,\tilde{j}'+1}_{LSE}$ state.
Therefore, the output state is ideal.
\end{proof}

We can consider the cases of $(\tildeFA,\tildeFB)=(0,\phi)$ and $\tildeFA=\tildeFB=0$ cases together, since they share a common event $\tildeFA=0$.
\begin{theorem}
\label{thm:QAKESecProofSSFA0}
    Consider the QAKE protocol $\vars{P}'$. Then, we have
    \begin{equation*}
    \begin{split}
        &\Delta\left(p_{0\phi}^{\reall}\rhopoutt{jj'}_{LSE|(\tildeFA,\tildeFB)=0\phi}+p_{00}^{\reall}\rhopoutt{jj'}_{LSE|(\tildeFA,\tildeFB)=00}, \sum_{\tilde{j}\geq j}p_{0\phi,\tilde{j}}\rho^{\tilde{j}+1,j'}_{LSE}+\sum_{\tilde{j}\geq j,\tilde{j}'\geq j'}p_{00,\tilde{j}\tilde{j}'}\rho^{\tilde{j}+1,\tilde{j}'+1}_{LSE}\right)=0.
    \end{split}
    \end{equation*}
\end{theorem}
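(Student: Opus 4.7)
The plan is to split the trace distance by conditioning on the adversary-induced values of $(\beta,\beta')$ actually realised during the round, and show that for every such choice the conditional output state coincides exactly with the associated ideal form, giving zero contribution. For the $(\tildeFA,\tildeFB)=(0,\phi)$ term only $\beta=\tilde{j}$ varies (Bob is absent, so $\alpha'$ stays at $j'$); for the $(0,0)$ term both $\beta=\tilde{j}$ and $\beta'=\tilde{j}'$ vary and both parties' labels advance by one.

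The first ingredient is that throughout $\vars{P}'$ restricted to $\tildeFA=0$ and $\tildeFB\in\{0,\phi\}$, Bob never produces a genuine $T_{\BV}$: he either transmits a random string (when $\tildeFB=0$) or is absent altogether (when $\tildeFB=\phi$). Consequently $K_2$ is untouched and remains uniform and decoupled, as required. For $\Kh_1$ and the masking keys, if $\tildeDPE=0$ then $T_{\AV}$ is replaced by a random string so no authentication secret is consumed, whereas if $\tildeDPE=1$ then $T_{\AV}=h_1(\Kh_1,M_{\AV})\oplus \KOTP_{1,\beta}$ is broadcast, which is precisely one invocation of the oracle channel $\bigepsilon_{\Kh_1\KOTP_{1,\beta}E_\beta\rightarrow E_{\beta+1}}$ with input message determined by the adversary's actions up to that step. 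The label updates $\alpha\to\beta+1$ (and $\alpha'\to\beta'+1$ in the $(0,0)$ sub-case) then abandon every remaining masking key with index in $[\min\{j,j'\},\max\{\beta,\beta'\}]$ that was never invoked.

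Invoking the OTP argument in the footnote of Appendix~\ref{app:AKEProtocolSec_OverallSec} (which rests on Thm.~\ref{thm:OTPWegCarSec}), each abandoned $\KOTP_{1,i}$ can be equivalently described inside the ideal state as the trivial internal adversary update $\bigepsilon_{E_i\rightarrow E_{i+1}}$ in place of the oracle channel at index $i$. Because the ideal form admits arbitrary adversary-chosen oracle inputs, there exists an instantiation of $\rho^{\tilde{j}+1,j'}_{LSE}$ (respectively $\rho^{\tilde{j}+1,\tilde{j}'+1}_{LSE}$) that reproduces the conditional output exactly, giving zero conditional trace distance. Summing over $\tilde{j}$ (and $\tilde{j}'$) weighted by the realisation probabilities $p_{0\phi,\tilde{j}}$ and $p_{00,\tilde{j}\tilde{j}'}$ then yields the claim.

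The main bookkeeping obstacle is that adversarial interference in the label agreement step can produce $\beta\neq\beta'$ in the $(0,0)$ sub-case, so the ideal state's channel chain spans the whole interval $[\min\{\beta,\beta'\},\max\{\beta,\beta'\}]$ while the protocol genuinely invokes at most one oracle call (at index $\beta$). Applying the OTP equivalence consistently to every other index in this interval, as well as to the indices in $[j,\beta-1]$ and $[j',\beta'-1]$ where the initial masking keys were uniform but simply discarded, is the step that has to be carried out carefully for the match to be exact.
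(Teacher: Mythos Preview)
Your proposal is correct and follows essentially the same approach as the paper's proof: split by the realised labels $\tilde{j},\tilde{j}'$, observe that $K_2$ and $R$ are never touched when $\tildeFA=0$ and $\tildeFB\in\{0,\phi\}$, and then argue that the single genuine announcement of $T_{\AV}$ is absorbed either as one oracle call $\bigepsilon_{\Kh_1\KOTP_{1,\tilde{j}}E\to E'}$ (when $\KOTP_{1,\tilde{j}}$ survives into the output label range) or disappears via Thm.~\ref{thm:OTPWegCarSec} (when $\KOTP_{1,\tilde{j}}$ is traced out because $\tilde{j}+1\leq j'$ resp.\ $\tilde{j}+1\leq\tilde{j}'+1$). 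The paper organises this as an explicit two-sub-case split on $\tilde{j}+1$ versus $j'$ (or $\tilde{j}'+1$), whereas you phrase it uniformly through the OTP/footnote equivalence and flag the interval bookkeeping at the end; the content is the same. Your finer distinction between $\tildeDPE=0$ (random $T_{\AV}$) and $\tildeDPE=1$ (genuine tag) is harmless but unnecessary, since the ideal-state oracle channels are arbitrary and therefore already accommodate the trivial case.
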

\begin{proof}
We can follow a similar splitting of trace distance according to the $\tilde{j}$ and $\tilde{j}'$ values, 
\begin{equation}
    \sum_{\tilde{j}\geq j}p_{0\phi,\tilde{j}}\Delta\left(\rho^{'\text{out},jj'}_{LSE|(\tildeFA,\tildeFB)=0\phi,\beta=\tilde{j}}, \rho^{\tilde{j}+1,j'}_{LSE}\right)+\sum_{\tilde{j}\geq j,\tilde{j}'\geq j'}p_{00,\tilde{j}\tilde{j}'}\Delta\left(\rho^{'\text{out},jj'}_{LSE|(\tildeFA,\tildeFB)=00,(\beta,\beta')=\tilde{j}\tilde{j}'},\rho^{\tilde{j}+1,\tilde{j}'+1}_{LSE}\right),
\end{equation}
and evaluating the individual terms instead.
We note that in any of the instances, $K_2$ and $R$ remain private since they are never utilised.\\

We begin with studying the terms of the $(\tildeFA,\tildeFB)=0\phi$ case, which we have to split further into two sub-cases: (1) $\tilde{j}+1\leq j'$ and (2) $\tilde{j}+1>j'$.
Since Alice sends tag $T_{\AV}$ irrespective of $\tildeFA$, some information of $(K_1^h,\KOTP_{1,\tilde{j}})$ can be leaked, though how it manifests in the ideal state depends on the sub-case.
In sub-case (1), we note that $\KOTP_{1,i}$ for any $i\leq\tilde{j}$ will not be utilised in the future and can be traced out.
This includes tracing away $\KOTP_{1,\tilde{j}}$, which by Thm.~\ref{thm:OTPWegCarSec}, guarantees that $K_1^h$ remains private, which leads to an output state $\rho^{\tilde{j}+1,j'}_{LSE}$.
In sub-case (2), $\KOTP_{1,\tilde{j}}$ may be utilised in future rounds.
Since the only information of the secrets learnt in the protocol round is via $T_{\AV}$, we can consider a worse case where the adversary has a single-round access to an oracle where it can select any message $M$ and obtain a corresponding tag, i.e. $\bigepsilon_{K_1^hK_{1,\tilde{j}}E_{\tilde{j}}\rightarrow E}$.
This leads to an additional channel in the state, thereby matching the form of $\rho^{\tilde{j}+1,j'}_{LSE}$ when $\tilde{j}+1>j'$.
Combining both arguments, the trace distance contribution when $(\tildeFA,\tildeFB)=0\phi$ is 0.\\

In the $(\tildeFA,\tildeFB)=00$ case, $T_{\AV}$ announcement remains the main concern.
Similar to the previous case, when $\tilde{j}+1>\tilde{j}'+1$, the leakage of $(\Kh_1,\KOTP_{1,\tilde{j}})$ results in an additional channel $\bigepsilon_{K_1^hK_{1,\tilde{j}}E_{\tilde{j}}\rightarrow E}$ in the output state, while for $\tilde{j}+1\leq\tilde{j}'+1$, the tracing out of $\KOTP_{1,\tilde{j}}$ ensures that $\Kh_1$ remains private.
In addition, $\KOTP_{1,i}$ for $i\leq\min\{\tilde{j},\tilde{j}'\}$ are traced out.
The resulting output state would therefore match $\rho^{\tilde{j}+1,\tilde{j}'+1}_{LSE}$, yielding a trace distance of 0.
\end{proof}

Here, we analyse the final two cases of $(\tildeFA,\tildeFB)\in\{01,11\}$ jointly since they are similar.
\begin{theorem}
\label{thm:QAKESecProofSSFB1}
    Consider the QAKE protocol $\vars{P}'$. Then, we have
    \begin{equation*}
    \begin{split}
        \Delta\left(p_{01}^{\reall}\rhopoutt{jj'}_{LSE|(\tildeFA,\tildeFB)=01}+p_{11}^{\reall}\rhopoutt{jj'}_{LSE|(\tildeFA,\tildeFB)=11}, \sum_{\tilde{j}\geq \max\{j,j'\}}p_{01,\tilde{j}}\rho^{\tilde{j}+1,\tilde{j}}_{LSE}+\sum_{\tilde{j}\geq \max\{j,j'\}}p_{11,\tilde{j}}\rho^{\tilde{j},\tilde{j}}_{LSE}\right)\leq\varepsilon_{\SP,01,11}'
    \end{split}
    \end{equation*}
    where
    \begin{equation*}
    \begin{split}
        \varepsilon_{\SP,01,11}'=&4\sqrt{2\epsserfa}+2(\varepsilon_2+\varepsilon_3)+\sqrt{(\abs{\vars{T}_{\BV}}\epsMACb-1)+2^{\log_2\left(\frac{2}{\varepsilon_2}+1\right)+\leakEC+2+\log_2\abs{\vars{T}_{\AV}}\abs{\vars{T}_{\BV}}-N_{\Psift_2,1}^{\tol}[1-\hbin(\ephtol')]}}\\
        &+\sqrt{(\abs{\vars{T}_{\AV}}\epsMACa-1)+2^{\log_2\left(\frac{2}{\varepsilon_3}+1\right)+\leakEC+2+\log_2\abs{\vars{T}_{\AV}}-N_{\Psift_2,1}^{\tol}[1-\hbin(\ephtol')]}},
    \end{split}
    \end{equation*}
    where $\varepsilon_1$ and $\varepsilon_2$ are parameters to optimise over. If the hash functions are 2-universal, then
    \begin{gather*}
        \varepsilon_{\SP,01,11}'=4\sqrt{2\epsserfa}+\frac{1}{2}\times 2^{-\frac{1}{2}[N_{P_2,1}^{\tol}-N_{P_2,1}^{\tol}\hbin(\ephtol')-\leakEC-2-\log_2\abs{\vars{T}_{\AV}}\abs{\vars{T}_{\BV}}]}\left(1+2^{-\frac{1}{2}\log_2\abs{\vars{T}_{\AV}}}\right).
    \end{gather*}
\end{theorem}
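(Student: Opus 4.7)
The plan is to condition on $\tildeFB = 1$, apply the quantum leftover hash lemma (QLHL) twice, and match the resulting state to the prescribed ideal form. First I would collapse both sub-cases $(\tildeFA, \tildeFB) \in \{01, 11\}$ into a single $\tildeFB = 1$ analysis using the CPTP-reversal trick of Thm.~\ref{thm:QAKEKeySecIdealisedReplace}: Alice's final validation depends only on the received $t_{\BV,\rr}$ and can therefore be undone on both sides of the trace distance. Under $\tildeFB = 1$, the idealised authentication enforces $\beta = \beta'$ and $M_{\AV} = M_{\AV}'$, so that the actual label update --- $(\tilde{j}+1,\tilde{j})$ when $\tildeFA = 0$ and $(\tilde{j},\tilde{j})$ when $\tildeFA = 1$ --- already matches the $\alpha\alpha'$ labels of the target ideal states $\rho^{\tilde{j}+1,\tilde{j}}_{LSE}$ and $\rho^{\tilde{j},\tilde{j}}_{LSE}$.

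Next I would replay the chain-rule argument of Thm.~\ref{thm:QAKEKeySecIdealisedProtocol}, stopping one step earlier, to obtain $\Hmin^{\sqrt{2\epsserfa}}(\hat{X}_{\Psift_{2,\rr}} \mid E') \geq N_{\Psift_2,1}^{\tol}[1-\hbin(\ephtol')] - \leakEC - 2$, where $E'$ holds Eve's quantum register together with all classical announcements prior to the two tags. Since $K_2,\Kh_1,\KOTP_{1,\beta}$ are independent of $\hat{X}_{\Psift_{2,\rr}}E'$, this bound is preserved upon conditioning on them. By Thm.~\ref{thm:OTPWegCarSec} the map $h_1(\Kh_1,\cdot) \oplus \KOTP_{1,\beta}$ is $\epsMACa$-almost-strong-2-universal. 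Applying QLHL sequentially --- once per tag, with the earlier tag absorbed into the side information of the later application --- yields the two square-root penalties $\varepsilon_{\SP,\MAC,1}$ and $\varepsilon_{\SP,\MAC,2}$; the extra $\log_2|\vars{T}_{\AV}|$ shaved off the min-entropy during the second application is exactly what makes $\log_2|\vars{T}_{\AV}||\vars{T}_{\BV}|$ appear inside $\varepsilon_{\SP,\MAC,2}$. Each application also contributes $2\sqrt{2\epsserfa}$ of smoothing and its own optimisation parameter $2\varepsilon_2$ or $2\varepsilon_3$, and the triangle inequality delivers exactly $\varepsilon_{\SP,01,11}'$. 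The 2-universal corollary follows by substituting the tighter 2-universal QLHL bound $\tfrac{1}{2}\cdot 2^{-(H-l)/2}$ at each step and factoring out $2^{-(H_1-\log_2|\vars{T}_{\AV}||\vars{T}_{\BV}|)/2}$ to recover the stated $(1 + 2^{-\log_2|\vars{T}_{\AV}|/2})$ multiplier.

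The hard part is matching the post-QLHL output to the asymmetric ideal states $\rho^{\tilde{j}+1,\tilde{j}}_{LSE}$ versus $\rho^{\tilde{j},\tilde{j}}_{LSE}$. When $\tildeFA = 1$, QLHL's conclusion --- $\KOTP_{1,\tilde{j}}$ uniform and independent of $E$ given $T_{\AV}$ --- matches the ideal state directly. When $\tildeFA = 0$, however, the ideal state replaces $\KOTP_{1,\tilde{j}}$ by the oracle channel $\bigepsilon_{\Kh_1\KOTP_{1,\tilde{j}}E_{\tilde{j}}\to E_{\tilde{j}+1}}$; this oracle can be implemented internally by Eve from the fresh-mask state plus the already-announced $T_{\AV}$, so data processing preserves the QLHL bound. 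I would also need to check that the past-round oracle channels inherited through the input state (indices below $\min\{j,j'\}$) pass through both QLHL applications untouched, and that the smoothing parameters $\varepsilon_2, \varepsilon_3$ stay independent so the two sub-bounds optimise separately.
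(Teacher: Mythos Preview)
Your approach is essentially the same as the paper's --- reduce to $\tildeFB=1$, split via triangle inequality, and apply QLHL once per authentication key --- and your bookkeeping of which tag costs $\log_2\abs{\vars{T}_{\AV}}$ in which min-entropy is correct.

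You overcomplicate the ideal-state matching. The paper does not reconstruct the oracle channel for the $\tildeFA=0$ branch; instead it observes that the family $\rho^{\tilde{j}+1,\tilde{j}}_{LSE}$ is parameterised by an \emph{arbitrary} adversarial channel $\bigepsilon_{\Kh_1\KOTP_{1,\tilde{j}}E_{\tilde{j}}\to E}$, so in particular the trivial channel (which ignores the key inputs) is a valid member. With that choice, both the $(\tildeFA,\tildeFB)=01$ and $=11$ ideal states take the fully-private form $\tau_{\Kh_1 K_2 R\KOTP_{1,[\tilde{j},m]}}\otimes\rho_E$, and the entire trace distance collapses to the single quantity
\[
\Delta\bigl(\rhopoutt{jj'}_{\Kh_1 K_2\KOTP_{1,\beta} E\land\tildeFB=1},\ \tau_{\Kh_1 K_2\KOTP_{1,\beta}}\otimes\rhopoutt{jj'}_{E\land\tildeFB=1}\bigr),
\]
with $R$ and $\KOTP_{1,[\beta+1,m]}$ dropped because they are never touched in the round. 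From there the triangle split $\Delta(\rho_{\Kh_1 K_2\KOTP E},\tau_{K_2}\otimes\rho_{\Kh_1\KOTP E})+\Delta(\rho_{\Kh_1\KOTP E},\tau_{\Kh_1\KOTP}\otimes\rho_E)$ and two QLHL applications yield the two square roots directly --- no need to argue that Eve can simulate the oracle from the announced $T_{\AV}$, and no separate case analysis for $\tildeFA$. Your data-processing reconstruction is not wrong, but it is unnecessary once you realise you are free to pick the ideal state from the allowed set.
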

\begin{proof}

We can explicitly express the form of the output state,
\begin{equation}
\begin{split}
    p_{01}^{\reall}\rhopoutt{jj'}_{LSE|(\tildeFA,\tildeFB)=01}=&\sum_{\tilde{j}\geq j_{max}}\dyad{\tilde{j}+1,\tilde{j}}_{\alpha\alpha'}\otimes\rhopoutt{jj'}_{\Kh_1K_2R\KOTP_{1,[\tilde{j},m]}E\land \tildeFB=1\land \tildeFA=0\land\beta=\tilde{j}}\\
    p_{11}^{\reall}\rhopoutt{jj'}_{LSE|(\tildeFA,\tildeFB)=11}=&\sum_{\tilde{j}\geq j_{\max}} \dyad{\tilde{j},\tilde{j}}_{\alpha\alpha'}\otimes\rhopoutt{jj'}_{\Kh_1K_2R\KOTP_{1,[\tilde{j},m]}E\land \tildeFB=1\land \tildeFA=1\land\beta=\tilde{j}}\\
    \sum_{\tilde{j}\geq \max\{j,j'\}}p_{01,\tilde{j}}\rho^{\tilde{j}+1,\tilde{j}}_{LSE}=&\sum_{\tilde{j}\geq j_{max}}\dyad{\tilde{j}+1,\tilde{j}}_{\alpha\alpha'}\otimes\tau_{\Kh_1K_2R\KOTP_{1,[\tilde{j},m]}}\otimes\rhopoutt{jj'}_{E\land \tildeFB=1\land \tildeFA=0\land\beta=\tilde{j}}\\
    \sum_{\tilde{j}\geq \max\{j,j'\}}p_{11,\tilde{j}}\rho^{\tilde{j},\tilde{j}}_{LSE}=&\sum_{\tilde{j}\geq j_{max}} \dyad{\tilde{j},\tilde{j}}_{\alpha\alpha'}\otimes\tau_{\Kh_1K_2R\KOTP_{1,[\tilde{j},m]}}\otimes\rhopoutt{jj'}_{E\land \tildeFB=1\land \tildeFA=1\land\beta=\tilde{j}}.
\end{split}
\end{equation}
We note that $\tildeFA$ and $\tildeFB$ is implicitly stored in $E$, and can be utilised to compute $\alpha\alpha'$.
As such, we can express the overall trace distance as
\begin{equation}
    \Delta(\rhopoutt{jj'}_{\Kh_1K_2\KOTP_{1,\tilde{j}}E\land \tildeFB=1},\tau_{\Kh_1K_2\KOTP_{1,\tilde{j}}}\otimes \rhopoutt{jj'}_{E\land \tildeFB=1}),
\end{equation}
where $R$ and $\KOTP_{1,[\tilde{j}+1,m]}$ are removed since they are unused in this protocol round, and thus remain private.\\

We simplify the trace distance by first introducing an intermediate state with only $K_2$ being private, 
\begin{equation}
\begin{split}
    &\Delta(\rhopoutt{jj'}_{\Kh_1K_2\KOTP_{1,\tilde{j}}E\land \tildeFB=1},\tau_{\Kh_1K_2\KOTP_{1,\tilde{j}}}\otimes \rhopoutt{jj'}_{E\land \tildeFB=1})\\
    \leq&\Delta(\rhopoutt{jj'}_{\Kh_1K_2\KOTP_{1,\tilde{j}}E\land \tildeFB=1},\tau_{K_2}\otimes \rhopoutt{jj'}_{\Kh_1\KOTP_{1,\tilde{j}}E\land \tildeFB=1})+\Delta(\rhopoutt{jj'}_{\Kh_1\KOTP_{1,\tilde{j}}E\land \tildeFB=1},\tau_{\Kh_1\KOTP_{1,\tilde{j}}}\otimes \rhopoutt{jj'}_{E\land \tildeFB=1}).
\end{split}
\end{equation}
To simplify the trace distance, we show that $h_2(K_2,X)$ and $h_1(K_1,X)\oplus \KOTP_1$ are strong extractors.
Since $h_2$ is a $\epsMACb$-almost strong two-universal hash function, it is $\epsMACb$-two-universal and thus a strong extractor~\cite{Tomamichel2011_QLHL}.
We can consider $h_1'(K_1||\KOTP_1,X)=h_1(K_1,X)\oplus\KOTP_1$, which is a $\epsMACa$-almost strong two-universal hash function, from the fact that $h_1$ is $\epsMACa$-almost XOR two-universal~\cite{Portmann2014_Authentication}.
Therefore, $h_1'$ is a strong extractor~\cite{Tomamichel2011_QLHL}, with keys $K'=K_1||\KOTP_1$, i.e. both authentication and masking keys remain private.
Summarising the result, the strong extractor property implies that
\begin{equation}
\begin{split}
    &\Delta(\rhopoutt{jj'}_{\Kh_1K_2\KOTP_{1,\tilde{j}}E\land \tildeFB=1},\tau_{\Kh_1K_2\KOTP_{1,\tilde{j}}}\otimes \rhopoutt{jj'}_{E\land \tildeFB=1})\leq 2(\epssmb+\epssmc+\varepsilon_2+\varepsilon_3)\\
    &+\sqrt{(\abs{\vars{T}_{\BV}}\epsMACb-1)+2^{\log_2\abs{\vars{T}_{\BV}}+\log_2\left(\frac{2}{\varepsilon_2}+1\right)-\Hmin^{\epssmb}(\hat{X}_{\Psift_2,\rr}| \Kh_1\KOTP_{1,\beta}E)_{\land\tildeOmegaPE\land \tildeFB=1}}}\\
    &+\sqrt{(\abs{\vars{T}_{\AV}}\epsMACa-1)+2^{\log_2\abs{\vars{T}_{\AV}}+\log_2\left(\frac{2}{\varepsilon_3}+1\right)-\Hmin^{\epssmc}(\hat{X}_{\Psift_2,\rr}| E)_{\land\tildeOmegaPE\land \tildeFB=1}}},
\end{split}
\end{equation}
where we label $\land\tildeOmegaPE$ on smooth min-entropy as evaluation on the corresponding state when event $\tildeOmegaPE$ occurs.
We can now follow a similar analysis as in the proof of Thm.~\ref{thm:QAKESecProofMITMMain}, with the main difference being the lack of conditioning on some secrets.
For the first smooth min-entropy with smoothing parameter $\epssmb$, the conditioning on $K_2$ is not present.
Therefore, $T_{\BV}$, as the output of a $\epsMACb$-almost strongly 2-universal hash function, is uniform and independent of the input, importantly $X_{\Psift_2}$.
As such, it can be removed without using the min-entropy chain rule, i.e. without incurring a $\log_2\abs{\vars{T}_{\BV}}$ penalty, resulting in
\begin{equation}
    \Hmin^{\epssmb}(\hat{X}_{\Psift_2,\rr}| \Kh_1\KOTP_{1,\beta}E)_{\rhopoutt{jj'}_{\hat{X}_{\Psift_2,\rr}\Kh_1\KOTP_{1,\beta}E\land\tildeOmegaPE\land \tildeFB=1}}\geq N_{\Psift_2,1}^{\tol}-N_{\Psift_2,1}^{\tol}\hbin(\ephtol')-\leakEC-\log_2\abs{\vars{T}_{\AV}}-2.
\end{equation}
In the second term, both the conditioning on $K_2$ and $\Kh_1\KOTP_{1,\beta}$ are absent.
Similarly, we can argue that $T_{\AV}$ and $T_{\BV}$ are uniform and independent of $X_{P_2}$ since the respective seeds of the almost 2-universal hash functions are not part of the conditioning.
As such, we have
\begin{equation}
    \Hmin^{\epssmc}(\hat{X}_{\Psift_2,\rr}| E)_{\rhopoutt{jj'}_{\hat{X}_{\Psift_2,\rr}E\land\tildeOmegaPE\land \tildeFB=1}}\geq N_{\Psift_2,1}^{\tol}-N_{\Psift_2,1}^{\tol}\hbin(\ephtol')-\leakEC-2.
\end{equation}
Combining the results, the trace distance is bounded by
\begin{equation}
\begin{split}
    &\Delta(\rhopoutt{jj'}_{\Kh_1K_2\KOTP_{1,\tilde{j}}E\land \FB=1},\tau_{\Kh_1K_2\KOTP_{1,\tilde{j}}}\otimes \rhopoutt{jj'}_{E\land \FB=1})\leq 4\sqrt{2\epsserfa}+2(\varepsilon_2+\varepsilon_3)\\
    &+\sqrt{(\abs{\vars{T}_{\BV}}\epsMACb-1)+2^{\log_2\left(\frac{2}{\varepsilon_2}+1\right)+\leakEC+2+\log_2\abs{\vars{T}_{\AV}}\abs{\vars{T}_{\BV}}-N_{\Psift_2,1}^{\tol}[1-\hbin(\ephtol')]}}\\
    &+\sqrt{(\abs{\vars{T}_{\AV}}\epsMACa-1)+2^{\log_2\left(\frac{2}{\varepsilon_3}+1\right)+\leakEC+2+\log_2\abs{\vars{T}_{\AV}}-N_{\Psift_2,1}^{\tol}[1-\hbin(\ephtol')]}}.
\end{split}
\end{equation}
We can repeat the analysis with two-universal hash function using QLHL to obtain the second set of bounds in the theorem.
\end{proof}

\section{QAKE with Pseudorandom Basis Choice}
\label{app:PRNG_QAKE}
While QAKE can be built from QKD protocols, one can also look to build QAKE protocols from authentication protocols by including key exchange.
One such authentication protocol of interest was provided by Fehr et. al.~\cite{Fehr2017_SID}, where a client and server pre-shares basis information and authentication keys.
By sending qubits prepared in the shared basis, and the message to authenticate in a hash together with the bit values of the qubits, one can provide security for message authentication.
Interestingly, in the ideal setting, the protocol provides key recycling property -- where the shared basis and authentication keys would remain private and need not be updated.
However, it fails in practical setting, where the effects of photon loss, channel noise and multiphoton events are considered.
Adapting the protocol to be secure in the practical setting by introducing e.g. decoy state~\cite{Hwang2003_Decoy,Lo2005_Decoy,Ma2005_Decoy,Lim2014_DecoyQKD} and error correction, the resulting protocol includes mutual authentication and can provide key generation with little change, i.e. achieve the QAKE tasks.\\

The resultant QAKE protocol is similar to the standard QKD protocol, decoy-state BB84~\cite{Bennett2017_BB84,Lim2014_DecoyQKD}, with two advantages: (1) using a shared basis between client and server, thereby removing the need for sifting and generating more secret keys, and (2) having only a single round of authenticated communication each way (client to server and server to client)\footnote{It is known that one can reduce the amount of authentication in QKD as well to a single round each way~\cite{Kiktenko2020_Lightweight}.} provides both key and entity authentication.\\

One major flaw is that the length of the pre-shared basis $\theta$ typically has to be large to account for signal loss and finite-size effects (we require on the order of \SI{1e10}{} signals in our experiment).
This, along with the need to refresh $\theta$ every round via key updating due to leakage from multi-photon events, renders the protocol inefficient.
The most practical solution to this problem is to use a shorter pre-shared master key and generate a larger key from a quantum-safe \textit{pseudorandom number generator} (PRNG) like AES~\cite{NIST2001_AES}.
We note that there are similar proposals in QKD where the basis information is generated or encrypted using a PRNG~\cite{Trushechkin2018_PRNGQKD,Price2021_DDOSPRNGEncryptBasis}.

\subsection{Protocol}

\begin{figure}[!ht]
    \centering
    \includegraphics[width=0.55\textwidth]{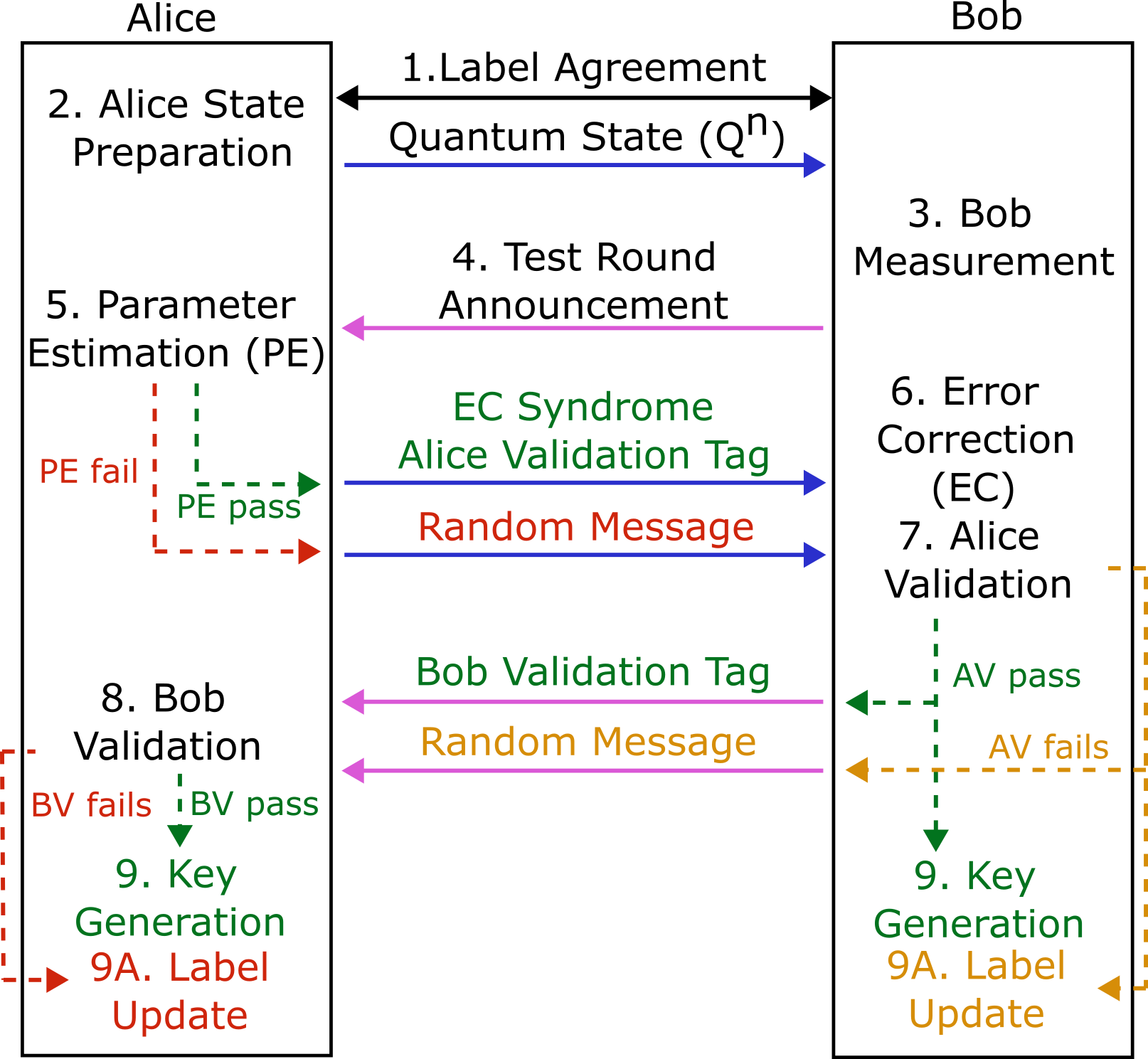}
    \caption{Summary of the QAKE with pseudorandom basis protocol proposed. The protocol begins with both parties establishing the label to use, followed by Alice sending quantum states to Bob. Bob announces the test round results, which Alice can then use for parameter estimation. If this passes, error correction and validation of Alice's identity would occur followed by validation of Bob's identity. Finally, the two parties independently decide whether to generate secret keys or update their label (signalling a protocol failure).}
    \label{fig:RoundEfficientAKE}
\end{figure}

We begin with two parties, Alice and Bob, sharing a set of secrets $\Ssec$: (1) shared basis seed $\tilde{\theta}=\{\tilde{\theta}_j\}_{j=1,\cdots,m}$, where $\tilde{\theta}\in\{0,1\}^{l_{\tilde{\theta}}}$, (2) authentication keys $\Kh_1$ and $K_2$, (3) privacy amplification seed $R$, and (4) authentication masking key $\KOTP_1=\{\KOTP_{1,j}\}_{j=1,\cdots,m}$.
We note that the authentication keys and privacy amplification seed always remain secret and unchanged -- ``key recylcing".
Alice and Bob each has a label, $\alpha$ and $\alpha'$ respectively, that notes the secrets to utilise for the round (i.e. which $\tilde{\theta}_i$ and $K_{1,i}^{OTP}$ to utilise), thereby allowing secrets to be replaced (by never using them again) when authentication fails.
They also publicly share a $\epsMACa$-almost XOR 2-universal hash function $h_1$ (authentication key $K_1^h)$, a $\epsMACb$-almost strongly 2-universal hash function $h_2$ (authentication key $K_2$) to generate the authentication tags, and a 2-universal hash function $\hPA$ for privacy amplification to generate secret keys.
Alice and Bob also agrees on an error correction protocol, with syndrome generation function $\fsyn$, decoding function $\fsyndec$ and an error correction efficiency of $\fEC$.
They also agree on the use of a $(t',\varepsilon_{\PRNG})$-quantum secure PRNG, $G^{\PRNG}$.
We note here that the subscript $\rr$ on any random variables represent the received variable, e.g. $\alpha_{\rr}$ is received by Bob when $\alpha$ is sent by Alice, since the adversary could alter the messages sent through the classical communication channel.
The protocol schematic is shown in Fig.~\ref{fig:RoundEfficientAKE}, and is detailed below. 

\begin{protocol}{Quantum Authenticated Key Exchange with Pseudorandom Basis Selection}
\textit{Goal.} Alice and Bob authenticates one another, and performs key exchange.
\begin{enumerate}
    \item \textbf{Label Agreement}: Alice and Bob exchange $\alpha$ and $\alpha'$. Alice (resp. Bob) sends $\alpha$ (resp. $\alpha'$) and receives $\alpha_\rr'$ (resp. $\alpha_\rr$), which results in an label choice $\beta=\max\{\alpha,\alpha_\rr'\}$ (resp. $\beta'=\max\{\alpha_\rr,\alpha'\}$).
    \item \textbf{Alice State Preparation}: Alice generates a n-bit basis string with a PRNG using the basis seed, $\theta_{\beta}=G^{\PRNG}(\tilde{\theta}_{\beta})$, randomly chooses a n-bit string $x\in\{0,1\}^n$ and a n-trit string $v\in\{0,1,2\}^n$ according to probability distribution $p_v$. She then prepares $n$ phase-randomised coherent BB84 states $\left\{\rho_{Q_i}^{\theta_i,x_i,\mu_{v_i}}\right\}_{i\in[1,n]}$, with basis $\theta_i$, bit value $x_i$, and intensity $\mu_{v_i}$.
    \item \textbf{Bob Measurement}: Alice sends $Q^n$ to Bob, who measures subsystems $Q_i$ using basis $\theta_{\beta'}'=G^{\PRNG}(\tilde{\theta}_{\beta'}')$, and records outcome $x_i'$. If Bob detects no clicks, he declares $x_i'=\perp$. If Bob detects multiple clicks, he randomly selects $x_i'\in\{0,1\}$.
    \item \textbf{Test Round Announcement}: Bob records the detection rounds, $P=\{i:x_i'\neq\perp\}$, and randomly splits it into $P_1$ and $P_2$, with $\abs{P_1}=\lceil f_{P_1}\abs{P}\rceil$, where $f_{P_1}$ is some pre-determined fraction of rounds for parameter estimation. Bob announces $P_1$, $P_2$, $\tilde{\theta}_{\beta'}'$ and $x_{P_1}'$. 
    \item \textbf{Parameter Estimation}: Alice estimates a lower bound on single-photon events in the sets $P_{2,\rr}$ and $P_{1,\rr}$, $\hat{N}^{\LB}_{P_{2,\rr},1}$ and $\hat{N}^{\LB}_{P_{1,\rr},1}$, and the single-photon bit error rate, $\hatebitt{P_{1,\rr},1}^{\UB}$, via decoy-state analysis, and the upper bound on the bit error rate in set $P_{2,\rr}$, $\hatebitt{P_{2,\rr}}^{\UB}$ via the Serfling bound. Alice checks if $\abs{P_{1,\rr}}=\lceil f_{P_1}\abs{P_1}\rceil$, $\hat{N}^{\LB}_{P_{2,\rr},0}\geq N_{P_{2,\rr},0}^{\tol}$, $\hat{N}^{\LB}_{P_{2,\rr},1}\geq N_{P_{2,\rr},1}^{\tol}$, $\hatebitt{P_{1,\rr},1}^{\UB}\leq \ebitt{1,\tol}$, $\ebitt{P_{1,\rr}}\leq \ebitt{\tol}$, and $\tilde{\theta}_{\beta',\rr}'=\tilde{\theta}_{\beta}$. If these are satisfied, Alice sets $\DPE=1$, otherwise she sets $\DPE=0$.
    \item \textbf{Error Correction}: Alice computes the length of the syndrome, $\abs{S}=\fEC\hbin(\ebitt{\tol}')$, where $\ebitt{\tol}'$ is the modified bit error tolerance (defined in security analysis) and $\hbin$ is the binary entropy. If $\DPE=1$, Alice generates a syndrome $s=\fsyn(x_{P_{2,\rr}})$ and forwards it to Bob, otherwise, Alice sends a random string of length $\abs{S}$ to Bob. Bob receives the syndrome $s_\rr$ and computes the corrected bit string $\hat{x}_{P_2,\rr}=\fsyndec(x_{P_2}',s_\rr)$.
    \item \textbf{Alice Validation}: If $\DPE=1$, Alice generates a tag $t_{\AV}=h_1(\Kh_1,x_{P_1,\rr}'||P_{1,\rr}||P_{2,\rr}||x_{P_{2,\rr}}||s)\oplus \KOTP_{1,\beta}$ and forwards it to Bob, otherwise, Alice sends a random string of the same length as the tag. Bob receives tag $t_{\AV,\rr}$ and generates verification tag $\tilde{t}_{\AV}=h_1(\Kh_1,x_{P_1}'||P_1||P_2||\hat{x}_{P_{2,\rr}}||s_\rr)\oplus \KOTP_{1,\beta'}$ and checks if $t_{\AV,\rr}=\tilde{t}_{\AV}$. If the tags matches, Bob validates Alice and output $D_{\AV}=1$, otherwise, he sets $D_{\AV}=0$.
    \item \textbf{Bob Validation}: Bob decides whether the authentication round succeeds, $\FB=D_{\AV}$. If $\FB=1$, Bob generates a tag, $t_{\BV}=h_2(K_2,\hat{x}_{P_{2,\rr}})$, and sends $t_{\BV}$ to Alice. If $\FB=0$, Bob sends a random string of length $\abs{T_{\BV}}$ instead. Alice receives the tag $t_{\BV,\rr}$, and computes the verification tag, $\tilde{t}_{\BV}=h_2(K_2,x_{P_{2,\rr}})$. If $t_{\BV,\rr}=\tilde{t}_{\BV}$, Alice validates Bob, $D_{\BV}=1$, otherwise, she sets $D_{\BV}=0$.
    \item \textbf{Secret Key Generation and Label Update}: Alice decides whether to accept the round based on her parameter estimation and validation of Bob, i.e. $\FA=\DPE\land D_{\BV}$. If Alice (resp. Bob) decides to perform key generation, $\FA=1$ (resp. $\FB=1$), she (resp. he) performs privacy amplification $\KA||\tilde{\theta}_{\beta}=\hPA(R,x_{P_{2,\rr}})$ (resp. $\KB||\tilde{\theta}_{\beta'}'=\hPA(R,\hat{x}_{P_{2,\rr}})$), where the basis seed $\tilde{\theta}$ is updated and $\KA\KB$ are the cryptographically secure keys that can be used for other purposes. If key generation is not performed, the labels are updated, i.e. if $\FA=0$, Alice updates her label $\alpha=\beta+1$ and if $\FB=0$, Bob updates his label $\alpha'=\beta'+1$.
\end{enumerate}
\end{protocol}

We note here that we choose the PRNG with $t'\geq 2(t+t_{MG})+t_{ex}+t_{SG,PR}$, for an adversary assumed to be limited by resource $t$ during the protocol (e.g. limited run time).
The resource $t_{MG}$ and $t_{ex}$ relate to the security analysis of the protocol with PRNG, used to prove that the gap between bit and phase error is small in Appendix~\ref{app:EUR_with_PRNG}.
The resource $t_{SG,PR}$ relates to the limited ability of the adversary to guess the basis seed during before it is announced, as shown in Appendix~\ref{app:Prelim}.
We note additionally that the PRNG master key is much smaller than the basis string length, importantly $2\abs{\tilde{\theta}_{\beta}}<n$.

\subsection{Overall Protocol Security}
\label{app:PRNGAKEProtocolSec_OverallSec}

As per the security analysis of the original QAKE protocol, the focus of the security analysis here would be for a single-round, for which we have to first define the set of ideal input and output states.
The labels remain $\alpha$ and $\alpha'$, with the set of shared secrets $\Ssec$ having additionally basis seed $\tilde{\theta}_i$.
The basis seed $\tilde{\theta}_i$ used to generate the shared basis can potentially be leaked after utilisation in a protocol round, so we let basis seeds for indices $i\leq\alpha_{\max}-1$ ($\alpha_{\max}=\max\{\alpha,\alpha'\}$) be leaked.
Hash masking keys behave similarly, with keys corresponding to indices $i<\alpha_{\min}-1$ ($\alpha_{\min}=\min\{\alpha,\alpha'\}$) no longer being used.
Unlike the original QAKE protocol, authentication key $\Kh_1$ will always remain secure, alongside authentication key $K_2$ and privacy amplification seed $R$.
This difference stems from Bob's reply of the basis seed $\tilde{\theta}_{\beta'}'$, which serves as a method for Alice to authenticate Bob, preventing $T_{\AV}$ from being announced when this check fails.
As such, we define the ideal input state $\rhoin=\sum_{jj'}p_{jj'}\rho^{jj'}_{LSE}$, where
\begin{equation}
\begin{split}
    \rho^{jj'}_{LSE}=&\dyad{jj'}_{\alpha\alpha'}\otimes\tau_{\Kh_1K_2R\KOTP_{1,j_{\min}}\cdots \KOTP_{1,m}}\otimes\tilde{\tau}_{\tilde{\theta}_{j_{\max}}\cdots\tilde{\theta}_{m}\tilde{\theta}_{j_{\max}}'\cdots\tilde{\theta}_{m}'}\otimes\rho_{E},
\end{split}
\end{equation}
where $L=\alpha\alpha'$ and $S$ refer to the secrets.\\

The ideal output state in the intermediate rounds contains the secret key variables $\KA\KB$, the decision labels $\FA\FB$ which also doubles as authentication success/failure indicators, and $LSE$.
Following the argument in Sec.~\ref{app:AKEProtocolSec_OverallSec}, we can analyse the trace distance corresponding to each input state separately.
The QAKE security conditions then dictate that the ideal intermediate output state is 
\begin{equation}
\begin{split}
    \rhoidealintt{jj'}=&\dyad{\perp\perp}_{\KA\KB}\otimes\left[\dyad{0\phi}_{\FA\FB}\otimes\sum_{\tilde{j}\geq j}p_{0\phi,\tilde{j}}\rho^{\tilde{j}+1,j'}_{LSE}\right.+\dyad{\phi0}_{\FA\FB}\otimes\sum_{\tilde{j}'\geq j'}p_{\phi 0,\tilde{j}'}\rho^{j,\tilde{j}'+1}_{LSE}\\
    &\left.+\dyad{00}_{\FA\FB}\otimes \sum_{\tilde{j}\geq j,\tilde{j}'\geq j'}p_{00,\tilde{j}\tilde{j}'}\rho^{\tilde{j}+1,\tilde{j}'+1}_{LSE}\right]+\dyad{01,\perp}_{\FA\FB\KA}\otimes\tau_{\KB}\otimes\sum_{\tilde{j}\geq\max\{j,j'\}}p_{01,\tilde{j}}\rho^{\tilde{j}+1,\tilde{j}}_{LSE}\\
    &+\dyad{11}_{\FA\FB}\otimes\tilde{\tau}_{\KA\KB}\otimes\sum_{\tilde{j}\geq\max\{j,j'\}}p_{11,\tilde{j}}\rho^{\tilde{j}\tilde{j}}_{LSE},
\end{split}
\end{equation}
matching Eqn.~\eqref{eqn:QAKE_ideal_output_state}.
Conditions 1 and 3 in Thm.~\ref{thm:Multi_to_Single_Red} are trivially satisfied by observation, allowing us to focus on the single-round security for a single component, with overall security $\epssec$.\\

To simplify the analysis for some security conditions, we similarly introduce ``idealised" versions of the parameter estimation and authentication checks.
We first note that the steps are given by:
\begin{enumerate}
    \item Alice's state preparation and sending of quantum state.
    \item Bob's measurement and test round announcement, with reply $P_1$, $P_2$, $\tilde{\theta}_{\beta'}'$ and $X_{P_1}'$.
    \item Alice's parameter estimation and validation tag generation, sending syndrome $S$ and tag $T_{\AV}$ to Bob.
    \item Bob performs Alice's validation, and and responds with his own validation tag $T_{\BV}$.
    \item Alice validates Bob's tag. 
\end{enumerate}
The replacement of the decoy state parameter estimation $\DPE$ to $\tildeDPE$ and Bob's validation $D_{\BV}$ to $\tilde{D}_{\BV}$ can be performed as described in Sec.~\ref{app:AKEProtocolSec_OverallSec} (note $\tilde{D}_{\BV}$ has $\Omega_{4\rightarrow 5}$ instead).
The replacement for Alice's validation $D_{\AV}$ to $\tilde{D}_{\AV}$ is modified since the steps of the protocol are different, and the announcement of $\tilde{\theta}_{\beta}$ can provide additional authentication ability.
We define instead
\begin{equation}
    \tilde{D}_{\AV}=\begin{cases}
        1 & \substack{\beta=\beta',\,\DPE=1,\,T_{\AV}=T_{\AV,\rr},\\
        M_{\AV}=M_{\AV}',\,\Omega_{2\rightarrow3\rightarrow4}}\\
        0 & otherwise
    \end{cases}.
\end{equation}
Since $(\Kh_1,\KOTP_{1,\beta'})$ is always private, we can follow the same argument in Sec.~\ref{app:AKEProtocolSec_OverallSec} to show 
\begin{equation}
\begin{split}
    \Pr[D_{\AV}=1,\tilde{D}_{\AV}=0]\leq&\epsMACa+\Pr[D_{\AV}=1,\DPE=1,\Omega_{3\rightarrow 4},\beta=\beta',M_{\AV}=M_{\AV}',T_{\AV}=T_{\AV,\rr},\Omega_{2\rightarrow 3}^c]\\
    \leq&\epsMACa+\Pr[\DPE=1,\beta=\beta',\Omega_{2\rightarrow 3}^c]\\
    \leq&\epsMACa+\pguess(\tilde{\theta}_{\beta}),
\end{split}
\end{equation}
where the first inequality uses the expansion in Eqn.~\eqref{eqn:QAKE_DAV_change} with the first three terms bounded by $\epsMACa$, while the remaining term relates to step ordering.
The third inequality simplifies the probability, where $\tilde{\theta}_{\beta}$ has to be matched for $\DPE=1$, and this has to take place for $\beta=\beta'$ ($\tilde{\theta}_{\beta'}$ not leaked to adversary prior to the protocol round) and before Bob announces $\tilde{\theta}_{\beta}$ in step 2.
The guessing probability of $\tilde{\theta}_{\beta}$ can be computed, since it is only utilised to generate the quantum state in the first round.
For a given choice of parameter with $p_{\geq 2}$ probability of multi-photon events which we assume in the worst case would provide $\theta_i$ to the adversary, then given a $(t',\varepsilon_{\PRNG})$-quantum secure PRNG with $t'\geq t+t_{SG,PR}$ and $2l_{\tilde{\theta}}<n$, by Thm.~\ref{thm:PRNGGuess}, the probability of guessing $\tilde{\theta}_{\beta}$ is bounded by $\varepsilon_{\hat{\theta},PRNG}=\varepsilon_{\PRNG}+2^{-n+l_{\tilde{\theta}}}$, where $n$ is the number of signals sent.\\

As such, these changes result in a similar penalty to the trace distance, with
\begin{equation}  
    \Delta(\vars{P}(\rhoinn{jj'}),\rhoidealint)  \leq\Delta(\vars{P}'(\rhoinn{jj'}),\rhopidealint)+2(\epsMACa+\varepsilon_{\hat{\theta},PRNG})+2\epsMACb+2\epsds,
\end{equation}
where $\vars{P}'$ and $\rhopidealint$ refers to the protocol and the ideal output state after the replacements.
We note here that the ideal output state remains of the same form as the original ideal output state, with the difference being that $\rho_E$ is a partial trace of a state generated by $\vars{P}'$ instead.\\

The results of the latter sections can be summarised below.
\begin{theorem}
\label{thm:PRNGQAKE_Main}
    Consider a QAKE protocol with PRNG basis choice $\vars{P}$, where the adversary is limited by resource $t$ and the PRNG selected is $(t',\varepsilon_{\PRNG})$-quantum-secure with $t'\geq 2(t+t_{MG})+t_{ex}$ and $t'\geq t+t_{SG,PR}$. The protocol is $\left(\frac{1}{\abs{\vars{T}_{\AV}}}+\frac{1}{\abs{\vars{T}_{\BV}}},\epsMACb,\epsKS'+\varepsilon_{\vars{P}'},\varepsilon_{SP,10,11}'+\varepsilon_{\hat{\theta},\PRNG}+\varepsilon_{\vars{P}'}\right)$-secure.
    The penalty associated with shifting to an idealised protocol $\vars{P}'$ is
    \begin{equation*}
        \varepsilon_{\vars{P}'}=2(\epsMACa+\varepsilon_{\hat{\theta},\PRNG}+\epsMACb+\epsds),
    \end{equation*}
    the key secrecy parameter for the idealised protocol is
    \begin{gather*}
        \epsKS'=4\epssm+2^{-\frac{1}{2}[N_{P_2,1}^{\tol}-N_{P_2,1}^{\tol}\hbin(\ephtol')-2-\log_2\abs{\vars{T}_{\AV}}\abs{\vars{T}_{\BV}}-\leakEC-l_{\KB}-l_{\tilde{\theta}}]},        
    \end{gather*}
    the shared secrets privacy parameter for the idealised protocol when $\FB=1$ is
    \begin{gather*}
        \varepsilon_{\SP,10,11}'=10\epssm+4(\varepsilon_2+\varepsilon_3)+\varepsilon_{\SP,\PA}+\varepsilon_{\SP,\MAC,1}+\varepsilon_{\SP,\MAC,2},\\
        \varepsilon_{\SP,\PA}=2^{-\frac{1}{2}[N_{P_2,1}^{\tol}-N_{P_2,1}^{\tol}\hbin(\ephtol')-\leakEC-\log_2\abs{\vars{T}_{\AV}}\abs{\vars{T}_{\BV}}-l_{\KB}-l_{\tilde{\theta}}]}\\
        \varepsilon_{\SP,\MAC,1}=\sqrt{(\abs{\vars{T}_{\AV}}\epsMACa-1)+2^{\log_2\left(\frac{2}{\varepsilon_3}+1\right)+\leakEC+2+\log_2\abs{\vars{T}_{\AV}}-N_{P_2,1}^{\tol}[1-\hbin(\ephtol')]}}\\
        \varepsilon_{\SP,\MAC,2}=\sqrt{(\abs{\vars{T}_{\BV}}\epsMACb-1)+2^{\log_2\left(\frac{2}{\varepsilon_2}+1\right)+\leakEC+2+\log_2\abs{\vars{T}_{\AV}}\abs{\vars{T}_{\BV}}-N_{P_2,1}^{\tol}[1-\hbin(\ephtol')]}}
    \end{gather*}
    and the associated smoothing parameter is
    \begin{equation*}
        \epssm=\sqrt{2(2\varepsilon_{\TRNG}+\varepsilon_{\PRNG}+\epsserfa)}.
    \end{equation*}
\end{theorem}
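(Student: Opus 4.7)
The plan is to mirror the structure of Theorem~\ref{thm:AKEProtocolMain}, adapted for three substantive changes from pseudorandom basis sharing: Alice and Bob skip sifting since they share bases generated from the seed $\tilde{\theta}_{\beta}$; Bob's announcement of $\tilde{\theta}_{\beta'}'$ in the test-round step doubles as Bob-to-Alice authentication so $(\Kh_1,\KOTP_{1,\beta'})$ never leaks before Alice validates; and privacy amplification outputs the combined string $\KB||\tilde{\theta}_{\beta'}$ of total length $l_{\KB}+l_{\tilde{\theta}}$, refreshing the basis seed every round. I would first switch to an idealised protocol $\vars{P}'$, replacing $\DPE$ with $\tildeDPE$ (cost $2\epsds$ via decoy-state concentration), $D_{\BV}$ with $\tilde{D}_{\BV}$ (cost $2\epsMACb$ from the strong $2$-universality of $h_2$), and $D_{\AV}$ with $\tilde{D}_{\AV}$. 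The last replacement is simpler than Eqn.~\eqref{eqn:QAKE_DAV_change}: because $(\Kh_1,\KOTP_{1,\beta'})$ is always private in this variant, the residual bad events reduce to guessing a matching tag on a fresh message (probability $\epsMACa$) and guessing the basis seed $\tilde{\theta}_{\beta}$ before Bob's reply in step~2, which by Theorem~\ref{thm:PRNGGuess} contributes $\varepsilon_{\hat{\theta},\PRNG}$. Summing gives $\varepsilon_{\vars{P}'}$.

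Explicit entity authentication then follows directly from the idealised checks: $\Pr[\FA=1\mid\FB\neq1]\leq 1/\abs{\vars{T}_{\BV}}$ since Alice must match $\tilde{T}_{\BV}$ with $K_2$ private, while $\Pr[\FB=1\mid\FA=\phi]\leq 1/\abs{\vars{T}_{\AV}}$ because $(\Kh_1,\KOTP_{1,\beta'})$ is unconditionally private here (the oracle channel present in the original QAKE input ensemble is absent), leaving the adversary with only a uniform tag guess. Match security reproduces Theorem~\ref{thm:QAKESecProofMS} verbatim, yielding $\epsMS=\epsMACb$.

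For key secrecy I would reduce to the case $\tildeFB=1$ via Theorem~\ref{thm:QAKEKeySecIdealisedReplace} and apply the quantum leftover hash lemma to $\hPA$ with the enlarged output length $l_{\KB}+l_{\tilde{\theta}}$. The chain-rule removal of classical side-information $T_{\AV}$, $T_{\BV}$, $S$ and the unconditioning on independent private keys proceeds exactly as in Theorem~\ref{thm:QAKEKeySecIdealisedProtocol}. The main obstacle is lower-bounding $\Hmin^{\epssm}(\hat{X}_{P_{2,\rr}}\mid\Kh_1 K_2 \KOTP_{1,\beta}E)$ when Alice's preparation and Bob's measurement bases are PRNG-generated rather than uniformly random; this falls outside the scope of standard entropic uncertainty relations. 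I would invoke Appendix~\ref{app:EUR_with_PRNG} to swap the PRNG for an ideal random source at distinguishing cost $\varepsilon_{\PRNG}$ against an adversary of resources $2(t+t_{MG})+t_{ex}$ (together with $\varepsilon_{\TRNG}$ terms arising from smoothing the idealised state), after which the standard single-photon decoy-state phase-error bound applies. This step is what produces the particular smoothing $\epssm=\sqrt{2(2\varepsilon_{\TRNG}+\varepsilon_{\PRNG}+\epsserfa)}$ appearing in the statement, and it is where all subtleties of composing a computationally-secure primitive with an information-theoretic entropy bound are concentrated.

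Finally, for shared secrets privacy when $\tildeFB=1$, I would split the trace distance into contributions from $K_2$, from $\Kh_1\KOTP_{1,\beta}$, and from the freshly-extracted $\tilde{\theta}_{\beta}$. The first two are handled by applying QLHL to $h_2$ and to the composite hash $h_1'$ (which absorbs the mask key into its seed, giving a strongly $2$-universal family per Thm.~\ref{thm:OTPWegCarSec}), reproducing $\varepsilon_{\SP,\MAC,1}$ and $\varepsilon_{\SP,\MAC,2}$ exactly as in Theorem~\ref{thm:QAKESecProofSSFB1}; the third, the privacy of the new $\tilde{\theta}_{\beta}$ extracted by $\hPA$ alongside $\KB$, is bounded by a single QLHL application, yielding $\varepsilon_{\SP,\PA}$. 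The cases $(\tildeFA,\tildeFB)\in\{\phi0,0\phi,00\}$ follow Theorems~\ref{thm:QAKESecProofSSphi0} and \ref{thm:QAKESecProofSSFA0} \emph{mutatis mutandis}, simplified by the absence of the hash-key oracle channels $\bigepsilon_{\Kh_1\KOTP_{1,i}E\to E'}$ from the input ensemble. Aggregating the contributions from all four conditions with the $\varepsilon_{\vars{P}'}$ switching penalty and the $\varepsilon_{\hat{\theta},\PRNG}$ basis-seed privacy term reproduces the stated security tuple.
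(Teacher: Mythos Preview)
Your outline tracks the paper's proof closely and correctly identifies the three structural changes (no sifting, $\tilde{\theta}_{\beta'}'$ announcement as implicit Bob-to-Alice authentication, enlarged PA output), the $\vars{P}'$ switching costs, and the use of the PRNG entropic uncertainty relation (Appendix~\ref{app:EUR_with_PRNG}, Thm.~\ref{thm:BB84PRNG_MinEnt}) for the min-entropy lower bound. The treatment of entity authentication, match security, key secrecy, and the $\tildeFB=1$ shared-secrets cases is essentially the paper's.

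There is one concrete misstep. You say the $(\tildeFA,\tildeFB)\in\{0\phi,00\}$ cases ``follow Theorems~\ref{thm:QAKESecProofSSphi0} and~\ref{thm:QAKESecProofSSFA0} \emph{mutatis mutandis}, simplified by the absence of the hash-key oracle channels.'' This is backwards. In the original QAKE, the oracle channels in the ideal input/output states are precisely what \emph{absorb} the $T_{\AV}$ leakage when $\tildeFA=0$, which is why Thm.~\ref{thm:QAKESecProofSSFA0} gives trace distance zero. Here those oracle channels are absent from $\rho_{LSE}^{jj'}$, so the leakage of $(\Kh_1,\KOTP_{1,\beta})$ via $T_{\AV}$ when $\DPE=1$ is \emph{not} automatically absorbed, and the argument is not a simplification but a genuinely different case analysis on $\tilde{j}$ versus $\tilde{j}'$ (see the paper's Thm.~\ref{thm:AKESecProofSSFA0}). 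When $\tilde{j}'>\tilde{j}$ the adversary may already know $\tilde{\theta}_{\tilde{j}}$ and can force $\DPE=1$, but $\KOTP_{1,\tilde{j}}$ is traced out so Thm.~\ref{thm:OTPWegCarSec} saves $\Kh_1$; when $\tilde{j}'<\tilde{j}$ (or $\tildeFB=\phi$ with $\tilde{j}'=\tilde{j}$) the basis seed $\tilde{\theta}_{\tilde{j}}$ is still private, and the only way to trigger $\DPE=1$ is to pass the seed-matching check, which by Thm.~\ref{thm:PRNGGuess} costs $\varepsilon_{\hat{\theta},\PRNG}$. This case analysis, not the $\tildeFB=1$ analysis, is the source of the standalone $\varepsilon_{\hat{\theta},\PRNG}$ term in the fourth security component. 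You include the term in your final tally but attribute it to the wrong mechanism; as written, your $(0\phi,00)$ argument would give zero and the term would be unaccounted for.
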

We note that while the overall security parameters $\epssec$ and $\epssecint$ can be determined by a sum of the respective security parameters, the bound can be tightened due to overlapping considerations for different security conditions.
For instance, the secret key generation and label update hashes $X_{P_2,\rr}$ and outputs both the key $\KA$ and the updated basis string $\tilde{\theta}_{\beta}$.
As such, in the analysis of both key secrecy and shared secrets privacy, the same security consideration with penalty term relating to smooth min-entropy of $X_{P_2,\rr}$ is included.

\subsection{Explicit Entity Authentication}

The explicit entity authentication security condition looks at the probability of scenarios where $\FB=1$ or $\FA=1$ would wrongly claim a partnering session (or one with common entity confirmation identifier).
\begin{theorem}
\label{thm:PRNGQAKESecProofEA}
Consider the QAKE protocol $\vars{P}$ with adversary resource limit $t$ and a $(t',\varepsilon_{\PRNG})$-quantum-secure PRNG.
The full explicit entity authentication security parameter is
\begin{equation*}
    \epsEAf=\frac{\Pr[\FB\neq1]}{\abs{\vars{T}_{\BV}}}.
\end{equation*}
and the almost-full explicit entity authentication security parameter is
\begin{equation*}
    \epsEAaf=\frac{\Pr[\FA=\phi]}{\abs{\vars{T}_{\AV}}}.
\end{equation*}
\end{theorem}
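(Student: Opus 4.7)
The plan is to follow the structure of the analogous Theorem~\ref{thm:QAKESecProofEA} for the original QAKE protocol, but leverage the key structural difference of the PRNG-based protocol: namely, that Bob's announcement of $\tilde{\theta}_{\beta'}'$ during the test round step gives Alice an \emph{implicit authentication check on Bob before} she releases $T_{\AV}$. This means the masking key $\KOTP_{1,\beta'}$ is never put at risk in prior protocol runs in the same way, so the ideal input state for PRNG-QAKE has $\KOTP_{1,j_{\min}}\cdots\KOTP_{1,m}$ fully independent of $E$, rather than only being accessible through an oracle channel $\bigepsilon_{K_1^h\KOTP_{1,i}E_i\rightarrow E_{i+1}}$ as in the original QAKE.

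For full explicit entity authentication ($\epsEAf$), the proof carries over essentially unchanged from Thm.~\ref{thm:QAKESecProofEA}. I would argue that $\FA=1$ requires Alice's check $\tilde{T}_{\BV}=t_{\BV,\rr}$ to pass; whenever $\FB\neq 1$, Bob either does not send $T_{\BV}$ or sends a random string, so $K_2$ remains private and, by the strong $2$-universality of $h_2$, the probability that the adversary's forwarded $t_{\BV,\rr}$ hits $\tilde{T}_{\BV}=h_2(K_2,X_{P_{2,\rr}})$ is at most $1/|\vars{T}_{\BV}|$. Summing over the event gives $\epsEAf=\Pr[\FB\neq 1]/|\vars{T}_{\BV}|$.

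The almost-full condition ($\epsEAaf$) is where the new protocol gives the sharper bound. When $\FA=\phi$, Alice never executes the validation step, and in particular never produces a tag using $(\Kh_1,\KOTP_{1,\beta})$. Under the PRNG-QAKE ideal input state, $\KOTP_{1,\beta'}$ is uniform and independent of $E$, so Bob's verification tag $\tilde{T}_{\AV}=h_1(\Kh_1,M_{\AV}')\oplus\KOTP_{1,\beta'}$ is (conditioned on any choice of adversary's message and any value of $\Kh_1$) a uniformly random one-time-pad output from the adversary's viewpoint. Therefore the probability that the adversary's forwarded $T_{\AV,\rr}$ coincides with $\tilde{T}_{\AV}$ is at most $1/|\vars{T}_{\AV}|$, regardless of any prior knowledge (no additional $\epsMACa$ term or $M_{\AV}'$-guessing term is needed, in contrast to Thm.~\ref{thm:QAKESecProofEA}).

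The main obstacle I anticipate is making rigorous the claim that $\KOTP_{1,\beta'}$ really is independent of $E$ conditioned on $\FA=\phi$, because $\beta'$ is itself an adversary-influenced variable (through the label agreement step). The clean way to handle this is to split on the value of $\beta'$ as in Sec.~\ref{app:PRNGAKEProtocolSec_OverallSec}: for each choice $\beta'=\tilde{j}'$ allowed by the ideal input state $\rho^{jj'}_{LSE}$, the corresponding mask is one of the uniform, independent $\KOTP_{1,\tilde{j}'}$ registers (since $\tilde{j}'\geq j_{\min}$), so the one-time-pad argument applies on each branch, and the bound is preserved after averaging over $\beta'$.
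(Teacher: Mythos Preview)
Your proposal is correct and matches the paper's proof essentially line for line: for $\epsEAf$ both use the privacy of $K_2$ and the uniformity of the strong 2-universal hash $h_2$ when $\FB\neq1$, and for $\epsEAaf$ both invoke the privacy of $(\Kh_1,\KOTP_{1,\beta'})$ in the PRNG-QAKE ideal input state (where, unlike the original QAKE, no oracle channel $\bigepsilon_{K_1^h\KOTP_{1,i}E_i\rightarrow E_{i+1}}$ appears) to conclude that any adversary-chosen $t_{\AV,\rr}$ hits $\tilde{T}_{\AV}$ with probability $1/\abs{\vars{T}_{\AV}}$. Your extra remark about splitting over $\beta'=\tilde{j}'\geq j_{\min}$ is a welcome piece of rigor that the paper's proof leaves implicit.
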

\begin{proof}
We begin with the case of full explicit authentication, which is defined with $\Pr[\FA=1,\FB=\phi]+\Pr[\FA=1,\FB=0]$.
Let us consider the first case where Bob does not participate in the protocol, $\FB=\phi$.
To obtain $\FA=1$, Alice needs to at least successfully validate Bob by checking if $\tilde{T}_{\BV}=t_{\BV,\rr}$, i.e. $\Pr[\FA=1|\FB=\phi]\leq\Pr[\tilde{T}_{\BV}=t_{\BV,\rr}|\FB=\phi]$.
Since $K_2$ remains private from the adversary without Bob sending $T_{\BV}$, the uniformity property of the strong 2-universal hash function implies
\begin{equation}
    \Pr[\FA=1|\FB=\phi]\leq\Pr[h_2(K_2,m)=t_{\BV,\rr}]=\frac{1}{\abs{\vars{T}_{\BV}}}
\end{equation}
for any $t_{\BV,\rr}$ that the adversary can choose.
A similar argument applies in the case where $\FB=0$, which results in $\epsEAf=\frac{\Pr[\FB\neq1]}{\abs{\vars{T}_{\BV}}}$.\\

For almost-full explicit authentication, Alice does not participate in the protocol, $\FA=\phi$.
To obtain a result of $\FB=1$, Bob needs to successfully validate Alice by checking if $\tilde{T}_{\AV}=t_{\AV,\rr}$, where $t_{\AV,\rr}$ has to be chosen by the adversary.
Suppose a worse case where a message $m$ can be known to the adversary.
Since $(\Kh_1,\KOTP_{1,\beta'})$ remains private from the adversary, by the uniformity property of a strong 2-universal hash function, we have that
\begin{equation}
    \Pr[\FB=1|\FA=\phi]\leq\Pr[h_1(\Kh_1,m)\oplus \KOTP_{1,\beta'}=t_{\AV,\rr}]=\frac{1}{\abs{\vars{T}_{\AV}}}
\end{equation}
for any $t_{\AV,\rr}$ that the adversary can choose.
As such, the first term in entity authentication is $\Pr[\FA=\phi,\FB=1]=\frac{\Pr[\FA=\phi]}{\abs{\vars{T}_{\AV}}}$.\\
\end{proof}

\subsection{Match Security}

\begin{theorem}
\label{thm:AKESecProofKC}
Consider the QAKE protocol $\vars{P}$ with adversary resource limit $t$ and a $(t',\varepsilon_{\PRNG})$-quantum-secure PRNG.
The match security parameter is
\begin{equation*}
    \epsMS=\epsMACb.
\end{equation*}
\end{theorem}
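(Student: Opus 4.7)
The plan is to mirror the argument of Thm.~\ref{thm:QAKESecProofMS}, exploiting the fact that the PRNG-based basis choice plays no role in the validation step performed by Alice on Bob's tag. First I would observe that privacy amplification is deterministic in its inputs: since $\KA$ and $\KB$ are obtained by applying $\hPA(R,\cdot)$ (with the identical seed $R$) to $X_{P_{2,\rr}}$ and $\hat{X}_{P_{2,\rr}}$ respectively, the implication $\{X_{P_{2,\rr}}=\hat{X}_{P_{2,\rr}}\}\implies\{\KA=\KB\}$ holds (this is true even though the output also contains the refreshed basis seed $\tilde{\theta}_{\beta}$, since that portion is discarded when isolating the key). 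Taking the contrapositive,
\begin{equation*}
    \Pr[\KA\neq\KB,\FA=\FB=1]\leq\Pr[X_{P_{2,\rr}}\neq\hat{X}_{P_{2,\rr}},\FA=\FB=1].
\end{equation*}

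Next I would extract from the event $\FA=1$ the sub-event that Alice's verification $\tilde{T}_{\BV}=T_{\BV,\rr}$ succeeds, where $\tilde{T}_{\BV}=h_2(K_2,X_{P_{2,\rr}})$. From $\FB=1$ we additionally know that Bob generated $T_{\BV}=h_2(K_2,\hat{X}_{P_{2,\rr}})$ and transmitted it. The adversary intercepts $T_{\BV}$ and forwards some $T_{\BV,\rr}$ to Alice; the event of interest is that this forwarded tag collides with $\tilde{T}_{\BV}$ on a distinct message. Crucially, in this PRNG variant $K_2$ is always part of the private shared secrets (it never appears in any leaked classical transcript and is independent of the basis seed), so the adversary has no information on $K_2$ beyond what a single message--tag pair $(\hat{X}_{P_{2,\rr}},T_{\BV})$ leaks. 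Thus we may upper bound by the conditional guessing probability
\begin{equation*}
    \Pr\bigl[h_2(K_2,X_{P_{2,\rr}})=T_{\BV,\rr}\,\bigm|\,X_{P_{2,\rr}}\neq\hat{X}_{P_{2,\rr}},\,h_2(K_2,\hat{X}_{P_{2,\rr}})=T_{\BV}\bigr],
\end{equation*}
which by the defining property of an $\epsMACb$-almost strongly 2-universal family (Def.~\ref{defn:AS2U}) is at most $\epsMACb$, irrespective of the adversary's choice of $T_{\BV,\rr}$.

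The anticipated obstacle is essentially absent here: one might worry that pseudorandomness of the basis string introduces a distinguishing advantage that couples to $K_2$, but the match security argument never invokes any property of the quantum measurement basis or of $\tilde{\theta}$ -- it relies purely on the independence of $K_2$ from the adversary's side information and the $\epsMACb$-almost strong 2-universality of $h_2$. Hence no PRNG penalty (no $\varepsilon_{\PRNG}$ or $\varepsilon_{\hat{\theta},\PRNG}$ term) appears, and we conclude $\epsMS=\epsMACb$.
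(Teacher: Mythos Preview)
Your proposal is correct and follows exactly the paper's approach: the paper's proof of this theorem is literally ``Same as Thm.~\ref{thm:QAKESecProofMS},'' and you have faithfully reproduced that argument (with the appropriate notational change from $\Psift_{2,\rr}$ to $P_{2,\rr}$), together with a correct observation that the PRNG basis plays no role in the Bob-validation step and hence incurs no additional penalty.
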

\begin{proof}
Same as Thm.~\ref{thm:QAKESecProofMS}.
\end{proof}

\subsection{Key Secrecy}

Key secrecy analyses the two specific cases where $\KB$ generates a key, $(\FA,\FB)\in\{01,11\}$. 
Instead of having the security relying on the secrecy condition of decoy-state BB84 like in the original QAKE protocol, it relies on the secrecy condition of decoy-state BB84 with shared pseudorandom basis in Appendix~\ref{app:BB84_PRNG_Secrecy}.
We can summarise the result as a theorem,
\begin{theorem}
\label{thm:AKESecProofMITMMain}
Consider the QAKE protocol $\vars{P}$ with adversary resource limit $t$ and a $(t',\varepsilon_{\PRNG})$-quantum-secure PRNG.
Then, the key secrecy security parameter is
\begin{equation}
    \epsKS=\epsKS'+2(\epsMACa+\varepsilon_{\hat{\theta},\PRNG})+2\epsMACb+2\epsds
\end{equation}
where
\begin{gather*}
    \epsKS'=4\epssm+2^{-\frac{1}{2}[N_{P_2,1}^{\tol}-N_{P_2,1}^{\tol}\hbin(\ephtol')-2-\log_2\abs{\tilde{T}_{\AV}}\abs{\tilde{T}_{\BV}}-\leakEC-l_{\KB}-l_{\tilde{\theta}}]}\\
    \epssm=\sqrt{2(2\varepsilon_{\TRNG}+\varepsilon_{\PRNG}+\epsserfa)}
\end{gather*}
is the key secrecy of the idealised protocol $\vars{P}'$.
\end{theorem}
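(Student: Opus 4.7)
The plan is to mirror the proof of Thm.~\ref{thm:QAKESecProofMITMMain}, modifying only the steps affected by the pseudorandom basis choice and by the fact that $\hPA$ now jointly outputs the session key $\KB$ and the refreshed basis seed $\tilde{\theta}_{\beta'}'$.

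First, I would swap the real protocol $\vars{P}$ for its idealized version $\vars{P}'$, replacing $\DPE$, $D_{\AV}$, and $D_{\BV}$ with $\tildeDPE$, $\tilde{D}_{\AV}$, and $\tilde{D}_{\BV}$. By the analysis preceding Thm.~\ref{thm:PRNGQAKE_Main}, this costs $2(\epsMACa+\varepsilon_{\hat{\theta},\PRNG}+\epsMACb+\epsds)$ in trace distance. The only substantive departure from the original QAKE analysis appears in the step-order check for $\tilde{D}_{\AV}$: since Bob's announcement of $\tilde{\theta}_{\beta'}'$ at step 2 gates Alice's acceptance, the adversary must guess $\tilde{\theta}_\beta$ before step 2, and Thm.~\ref{thm:PRNGGuess} bounds this probability by $\varepsilon_{\hat{\theta},\PRNG}=\varepsilon_{\PRNG}+2^{-n+l_{\tilde{\theta}}}$, which plays the role of $\epsSO$ from the original proof.

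Next, I would collapse the $(\tildeFA,\tildeFB)\in\{01,11\}$ cases into the single condition $\tildeFB=1$ by reversing the CPTP map describing Alice's final validation and bookkeeping step, exactly as in Thm.~\ref{thm:QAKEKeySecIdealisedReplace}. I would then apply the quantum leftover hash lemma to the 2-universal $\hPA$, noting that the extracted string now has total length $l_{\KB}+l_{\tilde{\theta}}$ rather than $l_{\KB}$; this accounts for the extra $l_{\tilde{\theta}}$ inside the exponent of $\epsKS'$.

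The main obstacle is the lower bound on the smooth min-entropy $\Hmin^{\epssm}(\hat{X}_{P_{2,\rr}}|\cdots)$ in the idealized state conditioned on $\tildeFB=1\land\tildeOmegaPE$. Following the chain of classical chain-rule and CPTP-reversal steps from Thm.~\ref{thm:QAKEKeySecIdealisedProtocol}, I would successively strip $T_{\BV}$, $T_{\AV}$, $S$, and the authentication keys, incurring the penalties $-2-\log_2\abs{\vars{T}_{\AV}}\abs{\vars{T}_{\BV}}-\leakEC$, reducing the task to bounding the smooth min-entropy of $X_{P_2}$ immediately after Alice's state preparation. The critical difference from the standard random-basis analysis is that $\theta_\beta=G^{\PRNG}(\tilde{\theta}_\beta)$ is pseudorandom rather than uniform, so the entropic uncertainty argument underpinning the phase-error bound must be replaced by the PRNG-adapted version established in Appendix~\ref{app:BB84_PRNG_Secrecy}. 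This is where the inflated smoothing parameter $\epssm=\sqrt{2(2\varepsilon_{\TRNG}+\varepsilon_{\PRNG}+\epsserfa)}$ arises, absorbing the $\varepsilon_{\PRNG}$ gap between the PRNG output and a truly random basis, the $\varepsilon_{\TRNG}$ term from the hybrid argument switching to an IRNG, and the usual Serfling term for the gap between the tested and untested single-photon phase error rates. Combining this bound, which evaluates to $N_{P_2,1}^{\tol}[1-\hbin(\ephtol')]$, with QLHL and the extractor output length $l_{\KB}+l_{\tilde{\theta}}$ yields the stated $\epsKS'$.
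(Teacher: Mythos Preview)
Your proposal is correct and follows essentially the same approach as the paper: swap to the idealised protocol $\vars{P}'$ with penalty $2(\epsds+\epsMACa+\varepsilon_{\hat{\theta},\PRNG}+\epsMACb)$, collapse $(\tildeFA,\tildeFB)\in\{01,11\}$ to $\tildeFB=1$ via the CPTP reversal argument (the paper states this as Thm.~\ref{thm:AKEKeySecIdealisedReplace}, identical in spirit to the Thm.~\ref{thm:QAKEKeySecIdealisedReplace} you cite), then apply QLHL with output length $l_{\KB}+l_{\tilde{\theta}}$ and bound the smooth min-entropy via the PRNG-adapted EUR of Thm.~\ref{thm:BB84PRNG_MinEnt}. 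The only cosmetic difference is that the paper packages the min-entropy argument as a separate Thm.~\ref{thm:AKEKeySecIdealisedProtocol} and absorbs the $p_{\tildeOmegaPE}^{-2}$ factor inside $\epsph$ into the outer $p_{\tildeOmegaPE}$ prefactor to obtain the clean $\epssm=\sqrt{2(2\varepsilon_{\TRNG}+\varepsilon_{\PRNG}+\epsserfa)}$.
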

\begin{proof}
For simplicity, we perform the swap to an ``idealised" version of the protocol $\vars{P}'$ by replacing $D_{\AV}$, $D_{\BV}$ and $\DPE$, which gives
\begin{multline}
\label{eqn:switch_to_ideal_D}
    \Delta(\rhooutt{jj'}_{\FA\FB\KB LSE\land(\FA,\FB)\in\{01,11\}},\tau_{\KB}\otimes\rhooutt{jj'}_{\FA\FB LSE\land(\FA,\FB)\in\{01,11\}})\\
    \leq\Delta(\rhopoutt{jj'}_{\tildeFA\tildeFB\KB LSE\land(\tildeFA,\tildeFB)\in\{01,11\}},\tau_{\KB}\otimes\rhopoutt{jj'}_{\tildeFA\tildeFB LSE\land(\tildeFA,\tildeFB)\in\{01,11\}})+2(\epsds+\epsMACa+\varepsilon_{\hat{\theta},\PRNG}+\epsMACb).
\end{multline}
We first combine 01 and 11 cases into a single one by arguing that the trace distance is independent of the final communication round using Thm.~\ref{thm:AKEKeySecIdealisedReplace}.
This is followed by analysing the trace distance using the secrecy of decoy-state BB84 with pre-shared pseudorandom basis, shown in Thm.~\ref{thm:AKEKeySecIdealisedProtocol}.
Combining the results, we obtain the security parameter stated in the theorem.
\end{proof}

The combination of the two cases 01 and 11 into a single trace distance can be expressed as
\begin{theorem}
\label{thm:AKEKeySecIdealisedReplace}
    Consider the QAKE protocol $\vars{P}'$. Then, we have
    \begin{equation*}
        \Delta(\rhopoutt{jj'}_{\tildeFA\tildeFB\KB LSE\land(\tildeFA,\tildeFB)\in\{01,11\}},\tau_{\KB}\otimes\rhopoutt{jj'}_{\tildeFA\tildeFB LSE\land(\tildeFA,\tildeFB)\in\{01,11\}})\leq \Delta(\rhopoutt{jj'}_{\KB L'S'E\land \tildeFB=1},\tau_{\KB}\otimes\rhopoutt{jj'}_{L'S'E\land \tildeFB=1}),
    \end{equation*}
    where the states on RHS has subsystems $L'S'$, indicating the labels and secrets correspond to that when the final step of Alice has been removed.
\end{theorem}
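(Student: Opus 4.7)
The plan is to show that Alice's final step---her validation of Bob's tag and subsequent label/secret update---cannot increase the trace distance of interest, so it can be removed and the two cases $(\tildeFA,\tildeFB)\in\{01,11\}$ merged into the single condition $\tildeFB=1$. This mirrors the argument of Thm.~\ref{thm:QAKEKeySecIdealisedReplace}, with the five-step protocol of the PRNG variant replacing the seven-step protocol.

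First, I would observe that whenever $\tildeFB=1$ we automatically have $\tildeDPE=1$ (this is built into the idealised $\tilde{D}_{\AV}$ used to define $\tildeFB$), so Alice's acceptance reduces to $\tildeFA=\tilde{D}_{\BV}$. Second, still conditioning on $\tildeFB=1$, Bob has actually transmitted a genuine tag $T_{\BV}$ in step~4; therefore the only way the binary variable $\tilde{D}_{\BV}$---and hence $\tildeFA$---can take different values is through the comparison $T_{\BV}\stackrel{?}{=}T_{\BV,\rr}$ together with the step-order indicator $\Omega_{4\to 5}$, both of which are already encoded in the pair $(T_{\BV},T_{\BV,\rr})$ that lives in Bob's output register and the adversary's register $E$ just before Alice's final step. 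Third, the label/secret update in step~5 that produces $L$ and $S$ from the ``pre-final-step'' versions $L'$ and $S'$ is itself a deterministic function of $\tildeFA$, $\beta$, and $S'$.

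With these observations in hand, I would explicitly construct a CPTP map $\vars{M}_{T_{\BV}T_{\BV,\rr}E L'S'\to LSE'}$ that (a) computes $\tildeFA$ from $(T_{\BV},T_{\BV,\rr})$ and the step-order data already in $E$, (b) performs the label update $\alpha\mapsto \beta+1$ when $\tildeFA=0$ and leaves it unchanged when $\tildeFA=1$, and (c) performs the corresponding update on $\tilde{\theta}$ (via the hash of $X_{P_{2,\rr}}$, but crucially this is accomplished without touching $\KB$). Applied to both arguments of the RHS trace distance, $\vars{M}$ reproduces the LHS. Because $\KB$ has already been generated by Bob in step~4 and is not touched by $\vars{M}$, the ideal state $\tau_{\KB}\otimes\rhopoutt{jj'}_{L'S'E\land\tildeFB=1}$ is mapped to $\tau_{\KB}\otimes\rhopoutt{jj'}_{\tildeFA\tildeFB LSE\land(\tildeFA,\tildeFB)\in\{01,11\}}$ (with the case $\tildeFB=0$ automatically suppressed since we condition on $\tildeFB=1$). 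Invoking monotonicity of the trace distance under $\vars{M}$ then yields the claimed inequality.

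The only subtlety I anticipate is bookkeeping around the label update. One must check that the map $\vars{M}$ is truly well-defined using only registers available after step~4, and in particular that the PRNG-induced coupling between $\tilde{\theta}_\beta$ and $X_{P_{2,\rr}}$---which is an extra feature of this protocol compared to the one in Thm.~\ref{thm:QAKEKeySecIdealisedReplace}---does not force any information from $\KB$'s complement of the hash output back into the construction. Since Alice's hash in step~5, $\hPA(R,x_{P_{2,\rr}})$, is computed from data already present in $L'S'$, this coupling is harmless: $\vars{M}$ treats $X_{P_{2,\rr}}$ as part of $S'$ and simply emits the appropriate substring, so the CPTP property and the non-interaction with $\KB$ are both preserved. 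Once this is verified, the rest is immediate from the data-processing inequality.
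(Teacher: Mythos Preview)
Your proposal is correct and follows essentially the same approach as the paper: observe that on $\tildeFB=1$ one has $\tildeDPE=1$ so $\tildeFA=\tilde{D}_{\BV}$, note that $\tilde{D}_{\BV}$ is determined by data already in $E$ (the comparison $T_{\BV}=T_{\BV,\rr}$ and step ordering), build a CPTP map that reconstructs $\tildeFA$ and the subsequent label/secret update from $L'S'E$, and invoke monotonicity of the trace distance. The paper's proof is terser and does not spell out the basis-seed bookkeeping you flag, but the argument is the same.
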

\begin{proof}
We first note that since $\tildeFB=1$, we have $\tildeDPE=1$ and the decision of $\tildeFA$ depends solely on $\tilde{D}_{\BV}$.
Moreover, since $\tilde{D}_{\AV}=\tildeFB=1$ and $T_{\BV}\neq\perp$, the $\tildeFA$ condition depends solely on $T_{\BV}=t_{\BV,\rr}$, i.e. whether the adversary altered $t_{\BV}$ transmission from Bob to Alice.
As such, we can write a CPTP map that maps $T_{\BV}$ and $t_{\BV,\rr}$ (part of $E$) to $\tildeFA$ followed by generating secret $S$ from $S'$, and reverse the map in the trace distance (i.e. reversing Alice's decision step and subsequent index update, which WLOG is the final step of the protocol), yielding
\begin{multline}
    \Delta(\rhopoutt{jj'}_{\tildeFA\tildeFB\KB LSE\land(\tildeFA,\tildeFB)\in\{01,11\}},\tau_{\KB}\otimes\rhopoutt{jj'}_{\tildeFA\tildeFB LSE\land(\tildeFA,\tildeFB)\in\{01,11\}})\\
    \leq \Delta(\rhopoutt{jj'}_{\KB L'S'E\land \tildeFB=1},\tau_{\KB}\otimes\rhopoutt{jj'}_{L'S'E\land \tildeFB=1}),
\end{multline}
where the $L'S'E$ indicates that the state of the subsystem before Alice's final protocol step.
This aligns with our understanding that $\tildeFA$ outcome has no impact on the privacy of $\KB$ since it does not reveal additional information.
\end{proof}

With the removal of $\tildeFA$, we can provide a key secrecy security parameter for $\vars{P}'$,
\begin{theorem}
\label{thm:AKEKeySecIdealisedProtocol}
    Consider the QAKE protocol $\vars{P}'$ with adversary resource limit $t$ and a $(t',\varepsilon_{\PRNG})$-quantum-secure PRNG.
    Then, we have that 
    \begin{equation*}
    \begin{split}
        \Delta(\rhopoutt{jj'}_{\KB LSE\land \tildeFB=1},\tau_{\KB}\otimes\rhopoutt{jj'}_{LSE\land \tildeFB=1})\leq &4\sqrt{2(2\varepsilon_{\TRNG}+\varepsilon_{\PRNG}+\epsserfa)}\\
        &+2^{-\frac{1}{2}[N_{P_2,1}^{\tol}-N_{P_2,1}^{\tol}\hbin(\ephtol')-2-\log_2\abs{\vars{T}_{\AV}}\abs{\vars{T}_{\BV}}-\leakEC-l_{\KB}-l_{\tilde{\theta}}]},
    \end{split}
    \end{equation*}
    where $\varepsilon_{\hat{\theta},\PRNG}$ is the error associated with guessing the basis generation seed $\tilde{\theta}$, $\ephtol'$ is the error tolerance inclusive of the correction due to Serfling bound.
\end{theorem}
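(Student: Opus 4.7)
The plan is to mirror the proof of Thm.~\ref{thm:QAKEKeySecIdealisedProtocol}, adjusted for three distinguishing features of the PRNG variant: privacy amplification produces the concatenation $\KB||\tilde{\theta}_{\beta'}'=\hPA(R,\hat{X}_{P_{2,\rr}})$ of total length $l_{\KB}+l_{\tilde{\theta}}$; the basis string is generated pseudorandomly from seed $\tilde{\theta}_{\beta'}'$ rather than sampled uniformly; and there is no sifting stage, so the relevant index sets are $P_i$ instead of $\Psift_i$. First, using that $\hPA$ is 2-universal and applying the quantum leftover hash lemma to the joint output, and noting that under $\tildeFB=1$ the matched-message condition enforces $\hat{X}_{P_{2,\rr}}=X_{P_{2,\rr}}$, I would obtain
\begin{equation*}
\Delta(\rhopoutt{jj'}_{\KB LSE\land\tildeFB=1},\tau_{\KB}\otimes\rhopoutt{jj'}_{LSE\land\tildeFB=1})\leq 2\epssm+\frac{1}{2}\cdot 2^{-\frac{1}{2}[H^\ast-l_{\KB}-l_{\tilde{\theta}}]},
\end{equation*}
where $H^\ast=\Hmin^{\epssm}(\hat{X}_{P_{2,\rr}}|L'S'E)_{\land\tildeFB=1\land\tildeOmegaPE}$ and $\tildeOmegaPE$ denotes the idealised parameter-estimation event as in the earlier proof.

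Next I would collapse the conditioning on $L'S'E$ down to the state immediately after Alice's state preparation. Using the classical chain rule for smooth min-entropy, I would peel off $T_{\BV}$ at a cost of $\log_2\abs{\vars{T}_{\BV}}+1$ bits, together with $K_2$, which becomes independent of the remaining state once $T_{\BV}$ is removed; then peel off $T_{\AV}$, $\tildeDPE$ and the syndrome $S$ at a cost of $\log_2\abs{\vars{T}_{\AV}}+1+\leakEC$ bits, together with $\Kh_1\KOTP_{1,\beta}$. After each removal the CPTP map describing Eve's channel action in that round is reversed. The net reduction yields $\Hmin^{\epssm}(X_{P_2}|\theta_{\beta}\tilde{\theta}_{\beta'}' P E_1)_{\land\tildeOmegaPE}$ up to an additive $2+\log_2\abs{\vars{T}_{\AV}}\abs{\vars{T}_{\BV}}+\leakEC$ penalty, matching exactly the budget appearing in the theorem statement.

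The essential new ingredient is a lower bound on this residual min-entropy when the basis is PRNG-generated, which is precisely what is supplied by Appendix~\ref{app:BB84_PRNG_Secrecy}. The $(t',\varepsilon_{\PRNG})$-quantum-security of the PRNG allows me to swap the pseudorandom $\theta_{\beta}$ for a uniformly random basis at cost $\varepsilon_{\PRNG}$ in trace distance; the assumption $t'\geq 2(t+t_{MG})+t_{ex}$ ensures that the distinguishing game implicit in this swap fits within the PRNG resource budget. In the swapped IRNG state, the standard phase-error entropic-uncertainty argument yields $\Hmin^{\epssm'}\geq N_{P_2,1}^{\tol}[1-\hbin(\ephtol')]$, where $\ephtol'=\ebitt{1,\tol}+g(N_{P_1,1}^{\tol},N_{P_2,1}^{\tol},\epsserfa)$ absorbs the Serfling correction transporting the $P_1$ single-photon error estimate to $P_2$. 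Collecting the smoothing contributions from the IRNG swap, the PRNG replacement, and the Serfling event gives $\epssm=\sqrt{2(2\varepsilon_{\TRNG}+\varepsilon_{\PRNG}+\epsserfa)}$, and substituting back into the leftover-hash bound reproduces the stated expression.

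The main obstacle will be the careful bookkeeping of the PRNG-to-IRNG reduction inside a smooth min-entropy statement: one needs to argue that an $\varepsilon$-smoothing neighbourhood of the PRNG-generated state corresponds to a neighbourhood of the IRNG-generated state with only an additive increment in the smoothing parameter, and that the witness for the min-entropy lower bound transfers across the swap without inflating the resources of any hypothetical distinguisher beyond $t'$. Once this reduction is cast in a form compatible with Def.~\ref{defn:QSRPNG} and the factor of $2$ in $2\varepsilon_{\TRNG}$ is tracked to its origin in the EUR ancilla purification, the remaining steps are routine accounting of entropy penalties, hash-output lengths, and the epsilon budget.
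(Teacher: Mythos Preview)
Your overall structure is the same as the paper's: apply QLHL to the joint output $\KB\|\tilde{\theta}_{\beta}'$, then peel off $T_{\BV},\tilde{D}_{\AV},T_{\AV},\tildeDPE,S$ via the chain rule and data processing to reach a residual min-entropy on $X_{P_2}$ after step~2, and finally invoke the PRNG min-entropy bound of Appendix~\ref{app:BB84_PRNG_Secrecy}. Two points deserve correction.

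First, a bookkeeping slip: applying QLHL to the \emph{joint} output bounds $\Delta(\rho_{\KB\tilde{\theta}_{\beta}'\tilde{L}'S'E},\tau_{\KB\tilde{\theta}_{\beta}'}\otimes\rho_{\tilde{L}'S'E})$, not the quantity in the theorem, where $\tilde{\theta}_{\beta}'$ sits inside $S$ and need not be uniform in the target state. The paper inserts the elementary inequality $\Delta(\rho_{ABC},\tau_A\otimes\rho_{BC})\leq 2\Delta(\rho_{ABC},\tau_{AB}\otimes\rho_C)$ before the QLHL step, which is exactly the source of the $4\epssm$ and of the missing $\tfrac{1}{2}$ in the exponential term; your $2\epssm+\tfrac{1}{2}\cdot 2^{-\ldots}$ does not match the statement without it.

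Second, and more substantively, your description of the PRNG step as ``swap the pseudorandom $\theta_{\beta}$ for a uniformly random basis at cost $\varepsilon_{\PRNG}$ in trace distance'' is not how the argument works, and cannot work: PRNG security is computational, so the PRNG and IRNG states are \emph{not} $\varepsilon_{\PRNG}$-close in trace distance (an unbounded distinguisher can simply test whether $\theta$ lies in the PRNG range). The paper instead applies the modified EUR (Thm.~\ref{thm:StdEUR}) \emph{directly to the PRNG state}---EUR is agnostic to how $\theta$ is generated---and then uses the PRNG security only to bound the bit--phase error gap via the bounded-resource distinguisher of Fig.~\ref{fig:PRNGFullAdversaryChannel} (Thms.~\ref{thm:PRNGGap} and~\ref{thm:PRNGPhaseErrBnd}). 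This is why $\ephtol'$ picks up the additional term $2\sqrt{(2/N_{P_2,1}^{\tol})\ln(1/\varepsilon_{\TRNG})}$ that your formula omits, and why $\epsph$ (hence $\epssm$) contains $\varepsilon_{\PRNG}/p_{\tildeOmegaPE}^2$ rather than a bare $\varepsilon_{\PRNG}$. Once you invoke Thm.~\ref{thm:BB84PRNG_MinEnt} as a black box the numbers line up, but your sketch of its internal mechanism is off.
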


\begin{proof}
Let us first note that $\tilde{\theta}_{\beta}'$ is refreshed during key generation, and we label the original value before refresh as $\tildethetapbetain$ for clarity.
We follow a similar analysis as Thm.~\ref{thm:QAKEKeySecIdealisedProtocol} and we first define $\tildeOmegaPE$ to be the event where parameter estimation checks $N_{P_1,1}\geq N_{P_1,1}^{\tol}$, $N_{P_2,1}\geq N_{P_2,1}^{\tol}$, $\frac{wt(X_{P_1,1}\oplus X_{P_{1,1}}')}{N_{P_1,1}}\leq \ebitt{1,\tol}$ are successful, for which the actual $P_1$ and $P_2$ sets are utilised.
Due to the condition of $(P_1,P_2,X_{P_1}')=(P_{1,\rr},P_{2,\rr},X_{P_1,\rr}')$, when $\FB=1$, $\rho_{\land \FB=1\land\OmegaPE}=\rho_{\land \FB=1\land\tildeOmegaPE}$.
Let us define $\tilde{L}'=L'\setminus\tilde{\theta}_{\beta}$, where the refreshed $\tilde{\theta}_{\beta}$ has to be uniformly independent as well (part of shared secret privacy).
Noting that $\hPA$ is 2-universal, we bound the trace distance using the quantum leftover hash lemma~\cite{Tomamichel2017_QKDProof,Tomamichel2011_QLHL},
\begin{equation}
\begin{split}
    &\Delta(\rhopoutt{jj'}_{\tildeFB\KB L'S'E'\land \tildeFB=1},\tau_{\KB}\otimes\rhopoutt{jj'}_{\tildeFB L'S'E'\land \tildeFB=1})\\
    \leq&2p_{\tildeOmegaPE}\Delta(\rhopoutt{jj'}_{\KB\tilde{\theta}_{\beta}\tilde{L}'S'E\land \tildeFB=1|\tildeOmegaPE},\tau_{\KB\tilde{\theta}_{\beta}}\otimes\rhopoutt{jj'}_{\tilde{L}'S'E\land \tildeFB=1|\tildeOmegaPE})\\
    \leq& p_{\tildeOmegaPE}\left\{4\epssma+2^{-\frac{1}{2}[\Hmin^{\epssma}(\hat{X}_{P_{2,\rr}}|L'SE)_{\rhopoutt{jj'}_{\hat{X}_{P_{2,\rr}}L'SE\land \tildeFB=1|\tildeOmegaPE}}-l_{\KB}-l_{\tilde{\theta}}]}\right\},
\end{split}
\end{equation}
where the first inequality uses the property $\Delta(\rho_{ABC},\tau_A\otimes\rho_{BC})\leq2\Delta(\rho_{ABC},\tau_{AB}\otimes\rho_C)$.
Let us first expand the $LS$ subsystems and remove irrelevant terms.
$L'$ is simply equal values of $\beta$ and $\beta'$ since $\tildeFB=1$ leaves it unchanged and the step updating $\beta$ (Alice's final step) has been removed.
$S'$ includes $\Kh_1$, $K_2$, $R$, $\{\tilde{\theta}_i\}_i$, and $\{\KOTP_{1,i}\}_i$.
We note that any $\tilde{\theta}_i$ and $\KOTP_{1,i}$ secrets for $i<\beta$ along with $\tildethetabetain$ would be part of $E'$ (either announced or not required to remain secret).
For any $\tilde{\theta}_i$ and $\KOTP_{1,i}$ with $i>\beta$, they would be independent of $\hat{X}_{P_2,\rr}$ since they are not utilised in the protocol yet.
As such, we can rewrite
\begin{equation}
    \Hmin^{\epssma}(\hat{X}_{P_{2,\rr}}|L'S'E)_{\rhopoutt{jj'}_{\hat{X}_{P_{2,\rr}}L'S'E\land \tildeFB=1|\tildeOmegaPE}}=\Hmin^{\epssma}(\hat{X}_{P_{2,\rr}}|\beta \Kh_1K_2\KOTP_{1,\beta}E)_{\rhopoutt{jj'}_{\hat{X}_{P_{2,\rr}}\Kh_1K_2\KOTP_{1,\beta}E\land \tildeFB=1|\tildeOmegaPE}}
\end{equation}
The important conditions for $\tildeFB=1$ (inclusive of $\tildeDPE=1$) are
\begin{enumerate}
    \item $\beta=\beta'$: The index chosen for the protocol round matches.
    \item $M_{\AV}=M_{\AV}'$: Messages (except $\tilde{\theta}_{\beta'}$) exchanged between Alice and Bob before Alice Validation step matches.
    \item $\Omega_{2\rightarrow3\rightarrow 4}$: Protocol step order is respected.
    \item $\tildeOmegaPE$: Standard parameter estimation checks.
\end{enumerate}

Let us begin with the min-entropy simplification, noting that the order of most steps in the protocol are obeyed, and reversing these steps by using the data-processing inequality and removing events by Thm.~10 of Ref.~\cite{Tomamichel2017_QKDProof},
\begin{equation}
\begin{split}
    &\Hmin^{\epssma}(\hat{X}_{P_{2,\rr}}|\beta \Kh_1K_2\KOTP_{1,\beta}E')_{\rhopoutt{jj'}_{\beta\hat{X}_{P_{2,\rr}}\Kh_1K_2\KOTP_{1,\beta}E\land \tildeFB=1|\tildeOmegaPE}}\\
    \geq&\Hmin^{\epssma}(X_{P_{2,\rr}}|\beta \Kh_1K_2\KOTP_{1,\beta}E')_{\rhopoutt{jj'}_{ X_{P_{2,\rr}}\Kh_1K_2\KOTP_{1,\beta}E\land\beta=\beta'\land M_{\AV}=M_{\AV}'\land\Omega_{2\rightarrow3\rightarrow4}|\tildeOmegaPE}}\\
    \geq&\Hmin^{\epssma}(X_{P_2}|\beta \Kh_1K_2\KOTP_{1,\beta}E_4)_{\rhopoutt{jj'}_{\beta X_{P_2}\Kh_1K_2\KOTP_{1,\beta}E_4\land\beta=\beta'\land M_{\AV}=M_{\AV}'\land\Omega_{2\rightarrow3\rightarrow4}|\tildeOmegaPE}}-1-\log_2\abs{\vars{T}_{\BV}}\\
    \geq&\Hmin^{\epssma'}(X_{P_2}|\beta E_3)_{\rhopoutt{jj'}_{\beta X_{P_2}E_3\land\beta=\beta'\land M_{\AV}=M_{\AV}'\land\Omega_{2\rightarrow3}|\tildeOmegaPE}}-2-\log_2\abs{\vars{T}_{\BV}}\abs{\vars{T}_{\AV}}-\leakEC\\
    \geq&\Hmin^{\epssma'}(X_{P_2}|\beta P_1P_2X_{P_1}'\tildethetabetain E_2)_{\rhopoutt{jj'}_{\beta X_{P_2}P_1P_2X_{P_1}'\tildethetabetain E_2|\tildeOmegaPE}}-2-\log_2\abs{\vars{T}_{\BV}}\abs{\vars{T}_{\AV}}-\leakEC
\end{split}
\end{equation}
The first inequality uses the fact that Bob's corrected bit string matches Alice's and that $P_2=P_{2,\rr}$ due to matching messages, allowing us to equate the two bit strings.
The second inequality stems from the fact that step 4 is the final step with outputs, with $E'=\tilde{D}_{\AV}T_{\BV}E_4$, where $T_{\BV}\tilde{D}_{\AV}$ are the outputs of step 4, both of which can be removed from the min-entropy term by the chain rule.
The third inequality applies data-processing inequality to reverse $E_4$ to $\tildeDPE ST_{\AV}E_3$ since the step order is obeyed.
The smooth min-entropy chain rule~\cite{Tomamichel2015_ITBook} further removes classical $\tildeDPE ST_{\AV}$.
Furthermore, since $K_2$, $\Kh_1$ and $\KOTP_{1,\beta}$ are no longer part of any variable accessible to the adversary after removal of the tags, they are independent of the entire state and their conditioning can be removed.
The final inequality reverses $E_3$ to $E_2\tildethetabetain$, and explicitly includes the classical information accessible to the adversary after the first two steps, noting that $E_2$ is the adversary's side-information after the first two steps of state preparation by Bob and measurement by Alice.
We note that with the condition $\alpha_\rr=\alpha_\rr'$, steps 1 and 2 has the same input from the adversary, which has to be chosen by the adversary before step 1 and 2.
As such, WLOG, we can always let the state preparation step be the first step since the adversary has no influence over the preparation after the $\alpha_\rr$ selection.\\

The quantum state $\rhopoutt{jj'}_{\beta X_{P_2}E_2P_1P_2X_{P_1}'\tildethetabetain|\tildeOmegaPE}$ can be viewed as being generated from a protocol matching that in Thm.~\ref{thm:BB84PRNG_MinEnt}.
Since the chosen PRNG has $t'\geq 2(t+t_{MG})+t_{ex}$, we can apply Thm.~\ref{thm:BB84PRNG_MinEnt} and bound the min-entropy of interest,
\begin{equation}
    \Hmin^{\epssma'}(X_{P_2}|\beta P_1P_2\tilde{\theta}_{\beta}X_{P_1}'E_2)_{\rhopoutt{jj'}_{\beta P_1P_2\tilde{\theta}_{\beta}X_{P_1}'E_2|\tildeOmegaPE}}\geq N_{P_2,1}^{\tol}-N_{P_2,1}^{\tol}\hbin(\ephtol'),
\end{equation}
where $\epssma\geq \sqrt{2\epsph}$, with
\begin{equation}
\begin{split}
    \epsph=&2\varepsilon_{\TRNG}+\frac{\varepsilon_{\PRNG}}{p_{\tildeOmegaPE}^2}+\epsserfa\\
    \ephtol'=&\ebitt{1,\tol}+g(N_{P_1,1}^{\tol},N_{P_2,1}^{\tol},\epsserfa)+2\sqrt{\frac{2}{N_{P_2,1}^{\tol}}\ln\frac{1}{\varepsilon_{\TRNG}}},
\end{split}
\end{equation}
from the EUR analysis.\\

Combining the results, we have that $\epssma=\sqrt{2\left(2\varepsilon_{\TRNG}+\frac{\varepsilon_{\PRNG}}{p_{\tildeOmegaPE}^2}+\epsserfa\right)}$ and
\begin{equation}
    \Hmin^{\epssma}(\hat{X}_{P_{2,\rr}}|\beta \Kh_1K_2\KOTP_{1,\beta}E)_{\rhopoutt{jj'}_{\land \FB=1|\tildeOmegaPE}}\geq N_{P_2,1}^{\tol}-N_{P_2,1}^{\tol}\hbin(\ephtol')-\leakEC-\log_2\abs{\vars{T}_{\AV}}\abs{\vars{T}_{\BV}}-2.
\end{equation}
\end{proof}

\subsection{Shared Secrets Privacy}

The final security parameter to examine to compute $\epssecint$ is the shared secrets privacy condition, which can be summarised as

\begin{theorem}
    Consider the QAKE protocol $\vars{P}$ with adversary resource limit $t$ and a $(t',\varepsilon_{\PRNG})$-quantum-secure PRNG.
    Then, the shared secrets privacy security condition satisfies
    \begin{equation*}
        \epsSP=\varepsilon_{\SP,10,11}'+\varepsilon_{\hat{\theta},\PRNG}+2(\epsMACa+\varepsilon_{\hat{\theta},\PRNG}+\epsMACb+\epsds),
    \end{equation*}
    where
    \begin{gather*}
        \varepsilon_{\SP,10,11}'=10\epssm+4(\varepsilon_2+\varepsilon_3)+\varepsilon_{\SP,\PA}+\varepsilon_{\SP,\MAC,1}+\varepsilon_{\SP,\MAC,2},\\
        \epssm=\sqrt{2(2\varepsilon_{\TRNG}+\varepsilon_{\PRNG}+\epsserfa)}\\
        \varepsilon_{\SP,\PA}=2^{-\frac{1}{2}[N_{P_2,1}^{\tol}-N_{P_2,1}^{\tol}\hbin(\ephtol')-\leakEC-\log_2\abs{\vars{T}_{\AV}}\abs{\vars{T}_{\BV}}-l_{\KB}-l_{\tilde{\theta}}]}\\
        \varepsilon_{\SP,\MAC,1}=\sqrt{(\abs{\vars{T}_{\AV}}\epsMACa-1)+2^{\log_2\left(\frac{2}{\varepsilon_3}+1\right)+\leakEC+2+\log_2\abs{\vars{T}_{\AV}}-N_{P_2,1}^{\tol}[1-\hbin(\ephtol')]}}\\
        \varepsilon_{\SP,\MAC,2}=\sqrt{(\abs{\vars{T}_{\BV}}\epsMACb-1)+2^{\log_2\left(\frac{2}{\varepsilon_2}+1\right)+\leakEC+2+\log_2\abs{\vars{T}_{\AV}}\abs{\vars{T}_{\BV}}-N_{P_2,1}^{\tol}[1-\hbin(\ephtol')]}}
    \end{gather*}
    is the shared secrets privacy parameter for idealised protocol $\vars{P}'$ for cases where $\FB=1$.
\end{theorem}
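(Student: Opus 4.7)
The plan is to follow the same skeleton as the shared secrets privacy proof for the original QAKE protocol (Thms.~\ref{thm:QAKESecProofSSphi0}, \ref{thm:QAKESecProofSSFA0}, and~\ref{thm:QAKESecProofSSFB1}), adapting each step to the changes introduced by PRNG-based basis generation and to the refresh of $\tilde{\theta}_{\beta}$ during privacy amplification. First I would substitute the actual parameter-estimation and authentication checks with their idealised counterparts $\tildeDPE$, $\tilde{D}_{\AV}$, $\tilde{D}_{\BV}$, incurring the penalty $2(\epsMACa+\varepsilon_{\hat{\theta},\PRNG}+\epsMACb+\epsds)$ already established in Appendix~\ref{app:PRNGAKEProtocolSec_OverallSec}. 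Note that $\epsSO$ from the original protocol is replaced by $\varepsilon_{\hat{\theta},\PRNG}$ here, because the step-ordering constraint $\Omega_{2\to 3}$ is now enforced by the $\tilde{\theta}_{\beta',\rr}'=\tilde{\theta}_{\beta}$ check in Alice's parameter estimation rather than by the basis being freshly random at the adversary.

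Second, I would decompose the remaining trace distance according to $(\tildeFA,\tildeFB)$ and dispatch the cases with $\tildeFB\neq 1$ in direct analogy with the original QAKE. The $(\phi,0)$ case is trivial since no shared secret is touched, and the $(0,\phi),(0,0)$ cases are handled by introducing oracle channels $\bigepsilon_{\Kh_1\KOTP_{1,\tilde{j}}E_{\tilde{j}}\to E}$ to absorb leakage from $T_{\AV}$ whenever $\tildeDPE=1$. A key simplification relative to the original QAKE is that in the idealised protocol $\tilde{D}_{\AV}=1$ already requires the seed check to pass, so Alice's authentication key $\Kh_1$ remains secret across rounds rather than being partially leaked through a persistent oracle; the residual contribution needed to cover the event in which $T_{\AV}$ is sent despite $\tilde{\theta}_{\beta',\rr}'\neq\tilde{\theta}_{\beta}$ (the adversary guesses $\tilde{\theta}_{\beta}$) is precisely the extra $\varepsilon_{\hat{\theta},\PRNG}$ term that appears outside the idealisation penalty in the theorem statement.

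The bulk of the work lies in the merged $\tildeFB=1$ case, producing $\varepsilon_{\SP,10,11}'$. After combining $(0,1)$ and $(1,1)$ via the analogue of Thm.~\ref{thm:AKEKeySecIdealisedReplace}, I would triangulate through intermediate states to separately establish uniformity of (i) the privacy-amplification output $(\KB,\tilde{\theta}_{\beta})$, (ii) $K_2$, and (iii) $(\Kh_1,\KOTP_{1,\beta})$. Each step invokes the strong-extractor property of the corresponding hash---2-universality of $\hPA$ for the PA output and almost (XOR) 2-universality of $h_1,h_2$ for the authentication keys, using that $h_1'(K_1||\KOTP_1,X)=h_1(K_1,X)\oplus\KOTP_1$ is almost strong 2-universal with extended key $K_1||\KOTP_1$---together with QLHL applied to the smooth min-entropy of $\hat{X}_{P_{2,\rr}}$. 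Because PA outputs both $\KB$ and the refreshed $\tilde{\theta}_{\beta}$, the QLHL length parameter for (i) becomes $l_{\KB}+l_{\tilde{\theta}}$, which produces the $\varepsilon_{\SP,\PA}$ term with $l_{\tilde{\theta}}$ in its exponent; (ii) and (iii) yield $\varepsilon_{\SP,\MAC,2}$ and $\varepsilon_{\SP,\MAC,1}$ exactly as in Thm.~\ref{thm:QAKESecProofSSFB1}.

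The main obstacle will be obtaining the smooth min-entropy bound on $\hat{X}_{P_{2,\rr}}$ in the PRNG setting, where the direct reduction to standard decoy-state BB84 entropic uncertainty no longer applies. Instead I must invoke the BB84-with-PRNG entropy result of Appendix~\ref{app:PRNGBasisChoiceProof}, which introduces $\varepsilon_{\TRNG}$ and $\varepsilon_{\PRNG}$ into the smoothing parameter $\epssm=\sqrt{2(2\varepsilon_{\TRNG}+\varepsilon_{\PRNG}+\epsserfa)}$ and inflates the phase-error tolerance $\ephtol'$ by an extra sampling correction. Careful bookkeeping will be needed when reusing this bound across the three QLHL invocations, both to track which conditioning registers ($K_2$ alone, $\Kh_1\KOTP_{1,\beta}$ alone, or neither) allow which tags to be absorbed into the environment via strong-extractor uniformity rather than stripped off via the smooth min-entropy chain rule, and to aggregate the smoothing parameters into the final $10\epssm$ coefficient that appears in $\varepsilon_{\SP,10,11}'$.
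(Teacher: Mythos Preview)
Your overall skeleton and your treatment of the $\tildeFB=1$ block match the paper's proof essentially line for line: idealise first (picking up $2(\epsMACa+\varepsilon_{\hat{\theta},\PRNG}+\epsMACb+\epsds)$), split on $(\tildeFA,\tildeFB)$, and for the $\tildeFB=1$ piece triangulate into three strong-extractor/QLHL invocations (PA output, $K_2$, and $\Kh_1\|\KOTP_{1,\beta}$), feeding each the PRNG min-entropy bound of Thm.~\ref{thm:BB84PRNG_MinEnt}. Your accounting for $l_{\KB}+l_{\tilde{\theta}}$ in $\varepsilon_{\SP,\PA}$ and for the $10\epssm$ aggregate is exactly what the paper does (the $10$ arises because the $(0,1)$ branch needs only the two MAC extractions while the $(1,1)$ branch needs all three, summed).

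Where you go astray is in the $(0,\phi),(0,0)$ cases. You propose to ``absorb leakage from $T_{\AV}$'' into oracle channels $\bigepsilon_{\Kh_1\KOTP_{1,\tilde{j}}E\to E}$, but in the PRNG version the ideal input/output states contain \emph{no} such oracle channels---$\Kh_1$ is always fully uniform in $\Sin$ and $\Sout'$ (see the definition of $\rho_{LSE}^{jj'}$ in Appendix~\ref{app:PRNGAKEProtocolSec_OverallSec}). So there is nothing to absorb into, and introducing oracles would push you outside the stated $\Sout'$. The paper instead does a case split on $\tilde{j}$ versus $\tilde{j}'$ (Thm.~\ref{thm:AKESecProofSSFA0}): if $\tilde{j}'>\tilde{j}$ the adversary may already know $\tilde{\theta}_{\tilde{j}}$ from the input state and can force $\DPE=1$, so $T_{\AV}$ is genuinely announced---but then the updated $j_{\min}$ causes $\KOTP_{1,\tilde{j}}$ to be traced out and Thm.~\ref{thm:OTPWegCarSec} makes $\Kh_1$ perfectly private again, giving trace distance~$0$; if $\tilde{j}'<\tilde{j}$ the adversary has no prior access to $\tilde{\theta}_{\tilde{j}}$ and must guess it to make the seed check in $\DPE$ pass, which is where the standalone $\varepsilon_{\hat{\theta},\PRNG}$ term comes from (your phrasing ``$T_{\AV}$ is sent despite $\tilde{\theta}_{\beta',\rr}'\neq\tilde{\theta}_{\beta}$'' has the logic inverted---the tag is sent only when the check passes). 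The $\tilde{j}=\tilde{j}'$ sub-case splits further by whether Bob participates. None of this requires oracles; the mechanism is seed-check gating plus masking-key trace-out.
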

\begin{proof}
For simplicity, we perform a swap to an ``idealised" version of the protocol $\vars{P}'$ by replacing $D_{\AV}$, $D_{\BV}$ and $\DPE$, resulting in a trace distance penalty of $2(\epsds+\epsMACa+\varepsilon_{\hat{\theta},\PRNG}+\epsMACb)$, similar to Eqn.~\ref{eqn:switch_to_ideal_D}.
To simplify, we shall examine the components of the sum over $(\tildeFA,\tildeFB)$ separately, noting the exclusion of $(\phi,1)$, $(1,\phi)$ and $(1,0)$ events.
Thm.~\ref{thm:AKESecProofSSphi0} shows that the trace distance is 0 in the $(\phi,0)$ case.
In cases where $\FA=0$, Thm.~\ref{thm:AKESecProofSSFA0} shows that the trace distance is bounded by $\varepsilon_{\hat{\theta},\PRNG}$.
The final theorem, Thm.~\ref{thm:AKESecProofSSFB1}, shows the trace distance bound for the remaining cases where $\FB=1$.
When combined, the trace distance bound is as presented in the theorem.
\end{proof}

For the case of $(\tildeFA,\FB)=(\phi,0)$, the trace distance is simply 0,
\begin{theorem}
    \label{thm:AKESecProofSSphi0}
    Consider the QAKE protocol $\vars{P}'$. Then, we have that
    \begin{equation*}
        \Delta\left(p_{\phi0}^{\reall}\rhopoutt{jj'}_{LSE|(\tildeFA,\tildeFB)=\phi0},\sum_{\tilde{j}'\geq j'}p_{\phi0,\tilde{j}}\rho_{LSE}^{j,\tilde{j}'+1}\right)=0.
    \end{equation*}
\end{theorem}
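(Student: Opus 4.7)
The plan is to adapt the proof of the analogous Thm.~\ref{thm:QAKESecProofSSphi0} for the original QAKE protocol. First I would apply the triangle inequality to split the trace distance according to Bob's chosen label $\beta' = \tilde{j}'$ (with $\tilde{j}' \geq j'$ since $\beta' = \max\{\alpha_{\rr},\alpha'\}$), obtaining
\begin{equation*}
\Delta\!\left(p_{\phi0}^{\reall}\rhopoutt{jj'}_{LSE|(\tildeFA,\tildeFB)=\phi0},\sum_{\tilde{j}'\geq j'}p_{\phi0,\tilde{j}'}\rho_{LSE}^{j,\tilde{j}'+1}\right) \leq \sum_{\tilde{j}' \geq j'} p_{\phi0, \tilde{j}'}\,\Delta\!\left(\rhopoutt{jj'}_{LSE|(\tildeFA,\tildeFB)=\phi0,\,\beta'=\tilde{j}'},\rho_{LSE}^{j,\tilde{j}'+1}\right).
\end{equation*}
It then suffices to show each term on the right-hand side vanishes.

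For fixed $\tilde{j}'$, the argument is by inspection of which secrets are consumed or leaked. Since $\tildeFA = \phi$, Alice performs no action in the round, so she never uses $\Kh_1$, $\KOTP_{1,\beta}$, any of her basis seeds $\tilde{\theta}_i$, or the privacy amplification seed $R$; moreover, with $\tildeFB = 0$, Bob never transmits $T_{\BV}$, so $K_2$ is untouched as well. The only shared secret consumed is Bob's basis seed $\tilde{\theta}_{\tilde{j}'}'$, which he both uses for measurement and publicly announces in the test round step. After the round, Bob updates $\alpha' = \tilde{j}'+1$ while Alice's $\alpha = j$ is unchanged, matching the labels of the target ideal state $\rho_{LSE}^{j,\tilde{j}'+1}$. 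The secrets removed from the private tensor factor in moving from $\rho_{LSE}^{jj'}$ to $\rho_{LSE}^{j,\tilde{j}'+1}$ are $\KOTP_{1,i}$ for $\min\{j,j'\} \leq i < \min\{j,\tilde{j}'+1\}$ and the matched seeds $\tilde{\theta}_i = \tilde{\theta}_i'$ for $\max\{j,j'\} \leq i < \max\{j,\tilde{j}'+1\}$. These can simply be absorbed into $\rho_E$; in fact they remain uniformly distributed and independent of all other systems in the real state, which is stronger than any ideal specification permits. The remaining private systems $\Kh_1$, $K_2$, $R$, the uncommitted $\KOTP_{1,i}$, and the uncommitted matched seeds are untouched by the protocol, so the post-round state on $LSE$ exactly reproduces the form of $\rho_{LSE}^{j,\tilde{j}'+1}$.

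The main subtlety I anticipate is the handling of the seeds $\tilde{\theta}_i'$ for $j' \leq i < \tilde{j}'$ in the case where the adversary injects a large $\alpha_{\rr}$ so that $\beta' = \tilde{j}' > j'$: these seeds are never physically touched in the current round yet are excluded from the private subsystem in the output description. The resolution is that such seeds remain uniform and independent of everything else held by the adversary, and so bundling them into the $\rho_E$ side-information is a trace-distance-preserving operation that produces exact equality with the ideal form. Beyond this bookkeeping, no hashing, entropy, or decoy-state argument is required, and the trace distance is identically zero.
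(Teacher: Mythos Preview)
Your proposal is correct and follows essentially the same approach as the paper's own proof: split the trace distance over Bob's label choice $\beta'=\tilde{j}'$, then argue by inspection that in the idealised protocol $\vars{P}'$ with $\tildeFA=\phi$ and $\tildeFB=0$ no secret beyond the basis seed $\tilde{\theta}_{\tilde{j}'}'$ is touched, and that this seed (together with any untouched but now-excluded $\tilde{\theta}_i$, $\KOTP_{1,i}$) can be absorbed into $E$ because the updated index allows it. Your bookkeeping of which $\KOTP_{1,i}$ and $\tilde{\theta}_i$ move from the private tensor factor to $E$ is actually more explicit than the paper's; the only point the paper states slightly more directly is that Bob's idealised check $\tilde{D}_{\AV}$ does not itself invoke $\Kh_1$ or $\KOTP_{1,\beta'}$, which you rely on implicitly by working in $\vars{P}'$.
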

\begin{proof}
We start by splitting the trace distance based on $\tilde{j}'$,
\begin{equation}
    \Delta\left(p_{\phi0}^{\reall}\rhopoutt{jj'}_{LSE|(\tildeFA,\tildeFB)=\phi0},\sum_{\tilde{j}'\geq j'}p_{\phi0,\tilde{j}}\rho_{LSE}^{j,\tilde{j}'+1}\right)\leq\sum_{\tilde{j}'\geq j'}p_{\phi0,\tilde{j}} \Delta\left(\rhopoutt{jj'}_{LSE|(\tildeFA,\tildeFB)=\phi0,\beta'=\tilde{j}'},\rho_{LSE}^{j,\tilde{j}'+1}\right),
\end{equation}
noting that $\beta'<j'$ is not possible (by definition of $\beta'$), allowing us to analyse the trace distance separately for each $\tilde{j}'$ selection.\\

We focus on an arbitrary $\tilde{j}'$ for analysis, where
\begin{equation}
    \rho^{j,\tilde{j}'+1}_{LSE}=\dyad{j,\tilde{j}'+1}_{\alpha\alpha'}\otimes\tau_{\Kh_1K_2R\KOTP_{1,\tilde{j}_{\min}}\cdots \KOTP_{1,m}}\otimes\tilde{\tau}_{\tilde{\theta}_{\tilde{j}_{\max}}\cdots\tilde{\theta}_{m}\tilde{\theta}_{\tilde{j}_{\max}}'\cdots\tilde{\theta}_{m}'}\otimes\rho_{E},
\end{equation}
with $\tilde{j}_{\min}=\min\{j,\tilde{j}'+1\}$ and $\tilde{j}_{\max}=\max\{j,\tilde{j}'+1\}$.
When $\tildeFA=\phi$, $\tildeFB=0$, the only steps that are relevant are Bob's measurement, test round announcement and Alice validation.
Since Alice validation $\tilde{D}_{\AV}$ failure stems from simply $T_{\AV}=\perp$, these steps do not utilise shared secrets $K_2$, $\Kh_1$, $\KOTP_{1,\tilde{j}'}$, and $R$, which remain private from the adversary.
We note that the protocol would implicitly trace out $\KOTP_{[\min\{j-1,j'-1\},\min\{j-1,\tilde{j}'\}]}$ since these would no longer be available for use at later steps, and what remains to show is that $\tilde{\theta}_{[\tilde{j}_{\max},m]}\tilde{\theta}_{[\tilde{j}_{\max},m]}'$ remains private.
It is clear that the basis generation seed $\tilde{\theta}_{\tilde{j}'}'$ is utilised during Bob's measurement step.
However, since Bob's index is updated to $\tilde{j}'+1$, leakage (even fully) of $\tilde{\theta}_{\tilde{j}'}'$ would still allow it to match the ideal state $\rho_{LSE}^{j,\tilde{j}'+1}$ since these are allowed to be part of $E$ when $\tilde{j}_{\max}>\tilde{j}'$.
Therefore, the overall state can match the ideal output state, and the resulting trace distance is zero.
\end{proof}

The cases of $(\tildeFA,\tildeFB)=(0,\phi)$ and $\tildeFA=\tildeFB=0$ cases are considered together.
In both cases, the risk of leakage is due to the possibility of announcement of the tag $T_{\AV}$ by Alice, though this occurs with low probability.
\begin{theorem}
\label{thm:AKESecProofSSFA0}
    Consider the QAKE protocol $\vars{P}'$ with adversary resource limit $t$ and a $(t',\varepsilon_{\PRNG})$-quantum-secure PRNG.
    Then, we have
    \begin{equation*}
        \Delta\left(p_{0\phi}^{\reall}\rhopoutt{jj'}_{LSE|(\tildeFA,\tildeFB)=0\phi}+p_{00}^{\reall}\rhopoutt{jj'}_{LSE|(\tildeFA,\tildeFB)=00}, \sum_{\tilde{j}\geq j}p_{0\phi,\tilde{j}}\rho^{\tilde{j}+1,j'}_{LSE}+\sum_{\tilde{j}\geq j,\tilde{j}'\geq j'}p_{00,\tilde{j}\tilde{j}'}\rho^{\tilde{j}+1,\tilde{j}'+1}_{LSE}\right)\leq\varepsilon_{\hat{\theta},\PRNG}.
    \end{equation*}
\end{theorem}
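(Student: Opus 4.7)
The plan is to split the trace distance over the possible label-update outcomes $(\tilde{j},\tilde{j}')$ via the triangle inequality, exactly as in the analogous Thm.~\ref{thm:QAKESecProofSSFA0} for the original QAKE, and then handle each resulting term by a case analysis on whether the masking key $\KOTP_{1,\tilde{j}}$ is required to remain private in the corresponding target state. Concretely, the overall distance is bounded by
\begin{equation*}
\sum_{\tilde{j}\geq j}p_{0\phi,\tilde{j}}\,\Delta\!\left(\rhopoutt{jj'}_{LSE|(\tildeFA,\tildeFB)=0\phi,\,\beta=\tilde{j}},\rho^{\tilde{j}+1,j'}_{LSE}\right)+\sum_{\substack{\tilde{j}\geq j\\ \tilde{j}'\geq j'}}p_{00,\tilde{j}\tilde{j}'}\,\Delta\!\left(\rhopoutt{jj'}_{LSE|(\tildeFA,\tildeFB)=00,\,(\beta,\beta')=(\tilde{j},\tilde{j}')},\rho^{\tilde{j}+1,\tilde{j}'+1}_{LSE}\right).
\end{equation*}
Since $\tildeFA=0$, Bob (when active) never releases his own tag, so $K_2$ and $R$ are untouched, and $\KOTP_{1,i}$ for $i$ below the relevant cutoff can be freely traced away. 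Thus the only nontrivial leakage channel into $E$ is Alice's tag $T_{\AV}$, which the idealised protocol transmits exactly when $\tildeDPE=1$.

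Next, I would split each of the per-$(\tilde{j},\tilde{j}')$ terms on the value of $\tildeDPE$ and further on the relative size of $\tilde{j}$. In the ``easy'' sub-case, namely $\tilde{j}+1\leq j'$ in the $(0,\phi)$ term and $\tilde{j}+1\leq\tilde{j}'+1$ in the $(0,0)$ term, the target's private masking-key set starts strictly above $\tilde{j}$, so $\KOTP_{1,\tilde{j}}$ may be discarded at the end of the round. By Thm.~\ref{thm:OTPWegCarSec} the discard makes $T_{\AV}$ appear uniform and independent to the adversary and restores the full privacy of $\Kh_1$, so the conditioned output state equals the ideal state and this sub-case contributes zero, exactly as in sub-case~(1) of the proof of Thm.~\ref{thm:QAKESecProofSSFA0}.

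The main obstacle, and the source of the $\varepsilon_{\hat{\theta},\PRNG}$ penalty, is the complementary sub-case, where the target demands $\KOTP_{1,\tilde{j}}$ remain private and the one-time-pad argument is unavailable (this is precisely the situation that was absorbed by the oracle channel in the original QAKE but is no longer permitted in the PRNG input-state class). In both branches this sub-case forces $\tilde{j}\geq\max\{j,j'\}$ (for $(0,\phi)$, $\tilde{j}\geq j'$ together with $\tilde{j}\geq j$; for $(0,0)$, $\tilde{j}>\tilde{j}'\geq j'$), so $\tilde{\theta}_{\tilde{j}}$ is drawn fresh and uniform in the input state and its only pre-round exposure is via multiphoton events during Alice's state preparation. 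Crucially, $\tildeDPE=1$ demands the received $\tilde{\theta}_{\beta',\rr}'$ to match $\tilde{\theta}_{\tilde{j}}$, so triggering any leakage requires the adversary to correctly forge this seed, an event whose probability Thm.~\ref{thm:PRNGGuess} bounds by $\varepsilon_{\hat{\theta},\PRNG}$; when $\tildeDPE=0$, Alice emits only a random string and the output matches the target. The delicate bookkeeping point is to verify that aggregating the per-term joint probabilities across both summations does not inflate the error: because only one value of $\beta$ is realized per protocol run and the events $\tildeFB=\phi$ and $\tildeFB=0$ are disjoint, the combined sum reduces to the probability of a \emph{single} seed-forgery event, bounded uniformly by $\varepsilon_{\hat{\theta},\PRNG}$, which yields the claimed inequality.
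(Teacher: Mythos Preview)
Your proposal is correct and follows essentially the same approach as the paper. The paper organises the case analysis as a three-way split on the relation $\tilde{j}'>\tilde{j}$, $\tilde{j}'<\tilde{j}$, $\tilde{j}'=\tilde{j}$ (the last further split on $\tildeFB$), whereas you split directly on whether the target state retains $\KOTP_{1,\tilde{j}}$; these partitions coincide, and both routes invoke Thm.~\ref{thm:OTPWegCarSec} in the traceable sub-case and Thm.~\ref{thm:PRNGGuess} for the seed-forgery bound in the other, with the same aggregation of the disjoint forgery events into a single $\varepsilon_{\hat{\theta},\PRNG}$.
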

\begin{proof}
We can split the trace distance according to the $\tilde{j}$ and $\tilde{j}'$ values, 
\begin{equation*}
    \sum_{\tilde{j}\geq j}p_{0\phi,\tilde{j}}\Delta\left(\rhopoutt{jj'}_{LSE|(\tildeFA,\tildeFB)=0\phi,\beta=\tilde{j}}, \rho^{\tilde{j}+1,j'}_{LSE}\right)+\sum_{\tilde{j}\geq j,\tilde{j}'\geq j'}p_{00,\tilde{j}\tilde{j}'}\Delta\left(\rhopoutt{jj'}_{LSE|(\tildeFA,\tildeFB)=00,(\beta,\beta')=\tilde{j}\tilde{j}'},\rho^{\tilde{j}+1,\tilde{j}'+1}_{LSE}\right),
\end{equation*}
and evaluate the individual terms instead.
For any $\tilde{j}$ and $\tilde{j}'$, the secrets $R$ and $K_2$ remain secure since they are not utilised when $\tildeFA=0$ and $\tildeFB\neq 1$, which leaves the secrecy of $\Kh_1$, $\KOTP_1$ and $\tilde{\theta}$ to examine.\\

Due to possible differences in the adversary's information (based on input state) when $\tilde{j}$ and $\tilde{j}'$ are mismatched, we further split the analysis into three scenarios: (1) $\tilde{j}'>\tilde{j}$, (2) $\tilde{j}'<\tilde{j}$ and (3) $\tilde{j}'=\tilde{j}$.
We note that $\tilde{j}'=j'$ in the case where $\tildeFB=\phi$.\\

In case (1), it is clear that any attempt to learn $\tilde{\theta}_{\tilde{j}'}$ would not impact shared secrets privacy since it is either not utilised (when $\tildeFB=\phi$) or no longer have to be private (when $\tildeFB=0$, since updated index is $\tilde{j}'+1$).
As such, what remains is $\Kh_1$ and $\KOTP_{1,\tilde{j}}$, which if $\tildeDPE=0$, would not be leaked since $T_{\AV}$ is not announced.
However, a worse case one can consider when $\tilde{j}'>\tilde{j}$ is that $\tilde{\theta}_{\tilde{j}}$ can be known to the adversary (input state does not guarantee privacy of $\tilde{\theta}_{\tilde{j}}$ when $\tilde{j}<j'$).
Having the basis information allows the adversary to easily pass parameter estimation, $\tildeDPE=1$, by simulating an honest Bob that measures with basis generated from basis seed $\tilde{\theta}_{\tilde{j}}$.
In such a scenario, the syndrome $S$ and tag $T_{\AV}$ would be announced by Alice, which may compromise shared secrets $\Kh_1$ and $\KOTP_{1,\tilde{j}}$.
Since $\tildeFA=0$, and $\tilde{j}<\tilde{j}'$, the updated index at the end of the protocol gives $j_{\min}=\tilde{j}+1$, which implies that $\KOTP_{1,\tilde{j}}$ is traced away.
From Thm.~\ref{thm:OTPWegCarSec}, and noting that $\tilde{D}_{\BV}$ has no relation to $\Kh_1$, the shared secret $\Kh_1$ remains secure from the adversary.
As such, the trace distance in this case is simply 0 since the output state is ideal.\\

In case (2), similarly, any attempt to learn $\tilde{\theta}_{\tilde{j}}$ is fruitless since it no longer has to be private at the end of the protocol when $\tildeFA=0$.
As such, the focus is on $\Kh_1$ and $\KOTP_{1,\tilde{j}}$.
Here, we notice that unlike the first case, $\tilde{\theta}_{\tilde{j}}$ is inaccessible to the adversary at the start of the protocol, as guaranteed by the input state since $\tilde{j}\geq j\geq j_{\max}$.
Without prior knowledge of the basis generation seed, the adversary would have little ability to pass the parameter estimation step.
More concretely, it is upper bounded by the probability of providing a matching $\tilde{\theta}_{\tilde{j}}$ from the generated quantum state alone
\begin{equation}
    \Pr[\tildeDPE=1]\leq \Pr[\hat{\theta}=\tilde{\theta}],
\end{equation}
which in turn is small.
Given that the adversary has access to at most $t$ resources, by the $(t',\varepsilon_{\PRNG})$-quantum secure PRNG property, with $t'\geq t+t_{SG,PR}$, and that $2l_{\tilde{\theta}}<n$, Thm.~\ref{thm:PRNGGuess} gives
\begin{equation}
    \Pr[\hat{\theta}=\tilde{\theta}]\leq\varepsilon_{\hat{\theta},\PRNG},
\end{equation}
where $\varepsilon_{\hat{\theta},\PRNG}$ matches $\epsguess$ in the theorem, and noting that only multi-photon events in the quantum states generated by Alice, $Q$, can provide information about the pseudorandom basis choice.
Additionally, since $\tildeDPE=0$ leads to an ideal state, the trace distance term when case (2) occurs is upper bounded by $\varepsilon_{\hat{\theta},\PRNG}$.\\

In the final case, the $\tildeFB=0$ and $\tildeFB=\phi$ events have to be considered separately.
When $\tildeFB=\phi$, Bob is not involved in the protocol and thus $\tilde{\theta}_{\tilde{j}}$ is not accessible to the adversary during the protocol, except from the quantum state $Q$ sent by Alice.
As such, we can follow the argument of case (2), to obtain a bound of $\varepsilon_{\hat{\theta},\PRNG}$.
When $\tildeFB=0$, Bob's involvement in the protocol and subsequent announcement of $\tilde{\theta}_{\tilde{j}'}$ can allow the adversary to pass the parameter estimation checks.
Here, since both Alice and Bob's indices are updated and $j_{\min}=\tilde{j}+1$, we can follow the argument of case (1) to get zero trace distance.\\

Combining the contribution for all $\tilde{j}$ and $\tilde{j}'$, the overall trace distance can be bounded by $\varepsilon_{\hat{\theta},\PRNG}$.
\end{proof}

Here, we analyse the final two cases of $(\tildeFA,\tildeFB)\in\{01,11\}$ jointly since they are similar.
\begin{theorem}
\label{thm:AKESecProofSSFB1}
    Consider the QAKE protocol $\vars{P}'$ with adversary resource limit $t$ and a $(t',\varepsilon_{\PRNG})$-quantum-secure PRNG.
    Then, we have
    \begin{equation*}
    \begin{split}
        &\Delta\left(p_{01}^{\reall}\rhopoutt{jj'}_{LSE|(\tildeFA,\tildeFB)=01}, \sum_{\tilde{j}\geq \max\{j,j'\}}p_{01,\tilde{j}}\rho^{\tilde{j}+1,\tilde{j}}_{LSE}\right)+\Delta\left(p_{11}^{\reall}\rhopoutt{jj'}_{LSE|(\tildeFA,\tildeFB)=11}, \sum_{\tilde{j}\geq \max\{j,j'\}}p_{11,\tilde{j}}\rho^{\tilde{j},\tilde{j}}_{LSE}\right)\leq\varepsilon_{\SP,01,11}',
    \end{split}
    \end{equation*}
    where
    \begin{equation*}
    \begin{split}
        \varepsilon_{\SP,01,11}'=&10\sqrt{2(2\varepsilon_{\TRNG}+\varepsilon_{\PRNG}+\epsserfa)}+4(\varepsilon_2+\varepsilon_3)+2^{-\frac{1}{2}[N_{P_2,1}^{\tol}-N_{P_2,1}^{\tol}\hbin(\ephtol')-\leakEC-\log_2\abs{\vars{T}_{\AV}}\abs{\vars{T}_{\BV}}-l_{\KB}-l_{\tilde{\theta}}]}\\
        &+\sqrt{(\abs{\vars{T}_{\AV}}\epsMACa-1)+2^{\log_2\left(\frac{2}{\varepsilon_3}+1\right)+\leakEC+2+\log_2\abs{\vars{T}_{\AV}}-N_{P_2,1}^{\tol}[1-\hbin(\ephtol')]}}\\
        &+\sqrt{(\abs{\vars{T}_{\BV}}\epsMACb-1)+2^{\log_2\left(\frac{2}{\varepsilon_2}+1\right)+\leakEC+2+\log_2\abs{\vars{T}_{\AV}}\abs{\vars{T}_{\BV}}-N_{P_2,1}^{\tol}[1-\hbin(\ephtol')]}},
    \end{split}
    \end{equation*}
    $\varepsilon_{\hat{\theta},\PRNG}$ is the error associated with guessing the basis generation seed $\tilde{\theta}$, $\ephtol'$ is the error tolerance inclusive of the correction due to Serfling bound.
    If 2-universal hash functions are used, then
    \begin{equation*}
    \begin{split}
        \varepsilon_{\SP,01,11}'=&10\sqrt{2(2\varepsilon_{\TRNG}+\varepsilon_{\PRNG}+\epsserfa)}+2^{-\frac{1}{2}[N_{P_2,1}^{\tol}-N_{P_2,1}^{\tol}\hbin(\ephtol')-\leakEC-\log_2\abs{\vars{T}_{\AV}}\abs{\vars{T}_{\BV}}-l_{\KB}-l_{\tilde{\theta}}]}\\
        &+2^{-\frac{1}{2}[N_{P_2,1}^{\tol}-N_{P_2,1}^{\tol}\hbin(\ephtol')-\leakEC-2-\log_2\abs{\vars{T}_{\AV}}\abs{\vars{T}_{\BV}}]}\left(1+2^{-\frac{1}{2}\log_2\abs{\vars{T}_{\BV}}}\right).
    \end{split}
    \end{equation*}
\end{theorem}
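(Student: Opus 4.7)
The plan is to closely follow the structure of Thm.~\ref{thm:QAKESecProofSSFB1} from the original QAKE analysis, with two key additions to handle the pseudorandom basis variant: (i) the refreshed basis seed $\tilde{\theta}_{\beta}$ produced by privacy amplification must remain uniform and independent, contributing the new term $\varepsilon_{\SP,\PA}$, and (ii) the smooth min-entropy of $\hat{X}_{\Psift_{2,\rr}}$ must be bounded using the EUR-with-PRNG result (Thm.~\ref{thm:BB84PRNG_MinEnt}) rather than the standard BB84 EUR, which inflates the smoothing parameter to $\epssm=\sqrt{2(2\varepsilon_{\TRNG}+\varepsilon_{\PRNG}+\epsserfa)}$.

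First I would write the output and ideal states explicitly for $(\tildeFA,\tildeFB)\in\{01,11\}$, using that $\tildeFB=1$ forces $\beta=\beta'=\tilde{j}$. Unused secrets ($R$, $\KOTP_{1,[\tilde{j}+1,m]}$, and the future basis seeds $\tilde{\theta}_{[\tilde{j}+1,m]}$, $\tilde{\theta}'_{[\tilde{j}+1,m]}$) are never touched in this round and remain ideal, so only the secrets actually consumed or refreshed need to be analysed, reducing the problem to bounding
\begin{equation*}
\Delta\bigl(\rhopoutt{jj'}_{\Kh_1 K_2 \KOTP_{1,\tilde{j}} \tilde{\theta}_{\tilde{j}} E\land\tildeFB=1},\, \tau_{\Kh_1 K_2 \KOTP_{1,\tilde{j}} \tilde{\theta}_{\tilde{j}}}\otimes \rhopoutt{jj'}_{E\land\tildeFB=1}\bigr).
\end{equation*}

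Second, I would split this trace distance by triangle inequality into three components that correspond to the three terms to be extracted: the refreshed basis seed $\tilde{\theta}_{\tilde{j}}$ (output of $\hPA$ alongside $\KB$), the authentication key $K_2$ (input to $h_2$), and the combined $(\Kh_1,\KOTP_{1,\tilde{j}})$ (key to the composite extractor $h_1'(\Kh_1\|\KOTP_1,\cdot)=h_1(\Kh_1,\cdot)\oplus\KOTP_1$ which is $\epsMACa$-almost strong 2-universal by the XOR-2-universality of $h_1$ and Thm.~\ref{thm:OTPWegCarSec}). For the first piece, I would invoke the quantum leftover hash lemma on the 2-universal $\hPA$, treating $\KB\|\tilde{\theta}_{\beta}$ jointly as the output of length $l_{\KB}+l_{\tilde{\theta}}$, giving the $\varepsilon_{\SP,\PA}$ term with exponent $-\tfrac{1}{2}[H'-l_{\KB}-l_{\tilde{\theta}}]$. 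For the $K_2$ and $(\Kh_1,\KOTP_{1,\tilde{j}})$ pieces I would apply QLHL in its almost-2-universal strong extractor form, obtaining $\varepsilon_{\SP,\MAC,2}$ and $\varepsilon_{\SP,\MAC,1}$ with free parameters $\varepsilon_2,\varepsilon_3$ respectively. Each application contributes $2\epssm+2\varepsilon_i$ to the prefactor, and combined with the $\varepsilon_{\SP,\PA}$ step which contributes an additional $4\epssm$ (since it uses the two-sided bound $\Delta(\rho_{ABC},\tau_A\otimes\rho_{BC})\leq 2\Delta(\rho_{ABC},\tau_{AB}\otimes\rho_C)$ as in Thm.~\ref{thm:AKEKeySecIdealisedProtocol}), this accounts for the $10\epssm+4(\varepsilon_2+\varepsilon_3)$ prefactor in the stated bound.

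Third, each of the three extractor invocations reduces to bounding the smooth min-entropy $\Hmin^{\epssm}(\hat{X}_{\Psift_{2,\rr}}|\,\cdot\,)$ of the same underlying source, but conditioned on different subsets of the side information. Mirroring the chain-rule bookkeeping in the proof of Thm.~\ref{thm:AKEKeySecIdealisedProtocol}, I would peel off the classical tags $T_{\AV}$ and $T_{\BV}$, the syndrome $S$, and the decision flags via the smooth min-entropy chain rule for classical systems, losing at most $\log_2|\vars{T}_{\AV}||\vars{T}_{\BV}|+\leakEC+2$ bits, then reverse the CPTP maps encoding Eve's intermediate actions using the data-processing inequality, arriving at the state right after step 2. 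At that point Thm.~\ref{thm:BB84PRNG_MinEnt} yields $\Hmin^{\epssm}(\hat{X}_{\Psift_{2,\rr}}|\cdots)\geq N_{\Psift_2,1}^{\tol}[1-\hbin(\ephtol')]$, which plugged into each exponent yields exactly the stated $\varepsilon_{\SP,\PA}$, $\varepsilon_{\SP,\MAC,1}$, $\varepsilon_{\SP,\MAC,2}$ expressions.

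The main obstacle, and the one to be most careful with, is the bookkeeping when the three extractors share the same underlying source $\hat{X}_{\Psift_{2,\rr}}$: the $K_2$-extractor must be analysed in a state where $\Kh_1\KOTP_{1,\tilde{j}}$ has already been removed (using independence after $T_{\AV}$ is stripped), and the $(\Kh_1,\KOTP_{1,\tilde{j}})$-extractor in a state where $T_{\AV}$ and $T_{\BV}$ are both stripped, so that the min-entropy chain-rule penalty is counted consistently in each case without double-charging. Handling the $\hPA$-extractor for $\tilde{\theta}_{\beta}$ is slightly different because $\hPA$ is fully 2-universal and can be bounded by the tighter QLHL, but its output length is $l_{\KB}+l_{\tilde{\theta}}$, and one must verify that after passing $\KB$ to the adversary (which happens once $\tildeFB=1$ and the key is used), $\tilde{\theta}_{\beta}$ still looks uniform---this is the reason the refreshed basis seed can be placed in the ideal state alongside $R$. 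The specialization to 2-universal $h_1,h_2$ in the second form of the bound then follows by replacing the two almost-2-universal QLHL invocations by their sharper 2-universal counterparts, giving the cleaner expression with $(1+2^{-\tfrac{1}{2}\log_2|\vars{T}_{\BV}|})$ factor.
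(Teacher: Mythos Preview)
Your approach is essentially the same as the paper's: split the trace distance into pieces corresponding to the secrets $R\tilde{\theta}_\beta'$, $K_2$, and $(\Kh_1,\KOTP_{1,\beta})$, apply the quantum leftover hash lemma (in its 2-universal form for $\hPA$ and almost-2-universal strong-extractor form for $h_1',h_2$), and bound the resulting smooth min-entropies of $\hat{X}_{P_{2,\rr}}$ by unwinding the protocol via chain rules and data processing down to the state covered by Thm.~\ref{thm:BB84PRNG_MinEnt}.

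The one structural point you miss is that the paper does \emph{not} collapse the $(\tildeFA,\tildeFB)=01$ and $11$ cases into a single trace distance. In the $01$ case Alice aborts, so neither $R$ nor the refreshed $\tilde{\theta}_\beta$ needs to remain private; only $\Kh_1,K_2,\KOTP_{1,\beta}$ do. In the $11$ case all of $R,\tilde{\theta}_\beta',\Kh_1,K_2,\KOTP_{1,\beta}$ must remain private, and the paper reintroduces $\KB$ so that QLHL is applied to the joint output $\KB\|\tilde{\theta}_\beta'$ of $\hPA$ with seed $R$. The two bounds are then \emph{summed}: the $01$ case contributes $2(\epssmb+\epssmc+\varepsilon_2+\varepsilon_3)$ and the $11$ case contributes $2(\epssma+\epssmb+\epssmc+\varepsilon_2+\varepsilon_3)$, giving the $10\epssm+4(\varepsilon_2+\varepsilon_3)$ prefactor. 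Your justification for the factor $10$ via a ``two-sided bound'' adding $4\epssm$ from the PA step is not how the constant arises; the PA step in the $11$ case contributes only $2\epssma$, and the remaining $8\epssm$ comes from the MAC extractors being counted in \emph{both} the $01$ and $11$ cases. This is a bookkeeping issue rather than a conceptual gap, but it means your single-trace-distance reduction would not reproduce the stated constants without redoing the case split.
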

\begin{proof}
We begin by explicitly expressing the form of the output state,
\begin{equation}
\begin{split}
    p_{01}^{\reall}\rhopoutt{jj'}_{LSE|(\tildeFA,\tildeFB)=01}=&\sum_{\tilde{j}\geq j_{max}}\dyad{\tilde{j}+1,\tilde{j}}_{\alpha\alpha'}\otimes\rhopoutt{jj'}_{\Kh_1K_2R\KOTP_{1,[\tilde{j},m]}\tilde{\theta}_{[\tilde{j}+1,m]}\tilde{\theta}_{[\tilde{j}+1,m]}'E\land \tildeFB=1\land \tildeFA=0\land\beta=\tilde{j}}\\
    p_{11}^{\reall}\rhopoutt{jj'}_{LSE|(\tildeFA,\tildeFB)=11}=&\sum_{\tilde{j}\geq j_{\max}} \dyad{\tilde{j},\tilde{j}}_{\alpha\alpha'}\otimes\rhopoutt{jj'}_{\Kh_1K_2R\KOTP_{1,[\tilde{j},m]}\tilde{\theta}_{[\tilde{j},m]}\tilde{\theta}_{[\tilde{j},m]}'\land \tildeFB=1\land \tildeFA=1\land\beta=\tilde{j}}\\
    \sum_{\tilde{j}\geq \max\{j,j'\}}p_{01,\tilde{j}}\rho^{\tilde{j}+1,\tilde{j}}_{LSE}=&\sum_{\tilde{j}\geq j_{max}}\dyad{\tilde{j}+1,\tilde{j}}_{\alpha\alpha'}\otimes\tau_{\Kh_1K_2R\KOTP_{1,[\tilde{j},m]}}\otimes\tilde{\tau}_{\tilde{\theta}_{[\tilde{j}+1,m]}\tilde{\theta}_{[\tilde{j}+1,m]}'}\otimes\rhopoutt{jj'}_{E\land \tildeFB=1\land \tildeFA=0\land\beta=\tilde{j}}\\
    \sum_{\tilde{j}\geq \max\{j,j'\}}p_{11,\tilde{j}}\rho^{\tilde{j},\tilde{j}}_{LSE}=&\sum_{\tilde{j}\geq j_{max}} \dyad{\tilde{j},\tilde{j}}_{\alpha\alpha'}\otimes\tau_{\Kh_1K_2R\KOTP_{1,[\tilde{j},m]}}\otimes\tilde{\tau}_{\tilde{\theta}_{[\tilde{j},m]}\tilde{\theta}_{[\tilde{j},m]}'}\otimes\rhopoutt{jj'}_{E\land \tildeFB=1\land \tildeFA=1\land\beta=\tilde{j}},
\end{split}
\end{equation}
where the variables are explicitly listed, with suitable tracing out of $\KOTP_{1,[1,\tilde{j}-1]}$, noting that $\beta=\beta'$ by $\tildeFB=1$ condition, and different $E$ containing the leaked basis generation seed, matching the variables in $\rho^{jj'}_{LSE}$ for the respective $jj'$.\\

When $\tildeFB=1$, we are guaranteed that the corrected bit string $\hat{X}_{P_{2,\rr}}$ matches exactly Alice's bit string $X_{P_{2,\rr}}$.
Since the privacy amplification seed $R$ matches as well, when $\tildeFA=1$, the updated basis seed matches as well, $\tilde{\theta}_{\tilde{j}}=\tilde{\theta}_{\tilde{j}}'$.
As such, we can write a CPTP map that copies $\tilde{\theta}_{\tilde{j}}'$ to $\tilde{\theta}_{\tilde{j}}$.
Furthermore, we note that $\alpha\alpha'$ can be computed from $\beta$ alone (implicitly part of $E$) since $\tildeFA$ and $\tildeFB$ are individually fixed in both trace distance.
As such, we can simplify the trace distance using the strong convexity property, the fact that CPTP maps cannot increase trace distance, and removing common terms in the protocol (namely $\KOTP_{1,[\tilde{j}+1,m]}$ and $\tilde{\theta}_{[\tilde{j}+1,m]}'$ and $R$ when $\tildeFA=0$),
\begin{equation}
\begin{split}
    &\Delta\left(p_{01}^{\reall}\rhopoutt{jj'}_{LSE|(\tildeFA,\tildeFB)=01}, \sum_{\tilde{j}\geq \max\{j,j'\}}p_{01,\tilde{j}}\rho^{\tilde{j}+1,\tilde{j}}_{LSE}\right)\\
    \leq&\Delta\left(\rhopoutt{jj'}_{\Kh_1K_2\KOTP_{1,\beta}E\land \tildeFB=1\land \tildeFA=0},\tau_{\Kh_1K_2\KOTP_{1,\beta}}\otimes\rhopoutt{jj'}_{ E\land \tildeFB=1\land \tildeFA=0}\right)\\
    \leq&\Delta\left(\rhopoutt{jj'}_{ \Kh_1K_2\KOTP_{1,\beta}E\land \tildeFB=1},\tau_{\Kh_1K_2\KOTP_{1,\beta}}\otimes\rhopoutt{jj'}_{ E\land \tildeFB=1}\right)\\
    &\Delta\left(p_{11}^{\reall}\rhopoutt{jj'}_{LSE|(\tildeFA,\tildeFB)=11}, \sum_{\tilde{j}\geq \max\{j,j'\}}p_{11,\tilde{j}}\rho^{\tilde{j},\tilde{j}}_{LSE}\right)\\
    \leq&\Delta\left(\rhopoutt{jj'}_{\Kh_1K_2R\KOTP_{1,\beta}\tilde{\theta}_{\beta}'E\land \tildeFB=1\land \tildeFA=1}, \tau_{ \Kh_1K_2R\KOTP_{1,\beta}\tilde{\theta}_{\beta}'}\otimes\rhopoutt{jj'}_{E\land \tildeFB=1\land \tildeFA=1}\right)\\
    \leq&\Delta\left(\rhopoutt{jj'}_{ \Kh_1K_2R\KOTP_{1,\beta}\tilde{\theta}_{\beta}'E\land \tildeFB=1}, \tau_{ \Kh_1K_2R\KOTP_{1,\beta}\tilde{\theta}_{\beta}'}\otimes\rhopoutt{jj'}_{E\land \tildeFB=1}\right),
\end{split}
\end{equation}
where we further remove the $\tildeFA$ conditions respectively.\\

From these trace distances, we can follow a similar analysis as in the proof of Thm.~\ref{thm:AKEKeySecIdealisedProtocol}.
We follow the same label of the original basis seed before refresh as $\tildethetapbetain$, with the same condition $\tildeOmegaPE$.
We can split the trace distance for the $\tildeFA=0$ case, and simplify each trace distance based on the quantum leftover hash lemma~\cite{Tomamichel2011_QLHL,Tomamichel2017_QKDProof} and almost two-universal property of the hash function~\cite{Portmann2014_Authentication} (similar to proof of Thm.~\ref{thm:QAKESecProofSSFB1}),
\begin{equation}
\begin{split}
    &\Delta\left(\rhopoutt{jj'}_{\beta \Kh_1K_2\KOTP_{1,\beta}E\land \tildeFB=1},\tau_{\Kh_1K_2\KOTP_{1,\beta}}\otimes\rhopoutt{jj'}_{\beta E\land \tildeFB=1}\right)\\
    \leq& p_{\tildeOmegaPE}\left[\Delta\left(\rhopoutt{jj'}_{\beta \Kh_1K_2\KOTP_{1,\beta}E\land \tildeFB=1|\tildeOmegaPE},\tau_{K_2}\otimes\rhopoutt{jj'}_{\beta \Kh_1\KOTP_{1,\beta}E\land \tildeFB=1|\tildeOmegaPE}\right)\right.\\
    &\left.+\Delta\left(\rhopoutt{jj'}_{\beta \Kh_1\KOTP_{1,\beta}E\land \tildeFB=1},\tau_{\Kh_1\KOTP_{1,\beta}}\otimes\rhopoutt{jj'}_{\beta E\land \tildeFB=1|\tildeOmegaPE}\right)\right]\\
    \leq&p_{\tildeOmegaPE}\left[2(\epssmb+\epssmc+\varepsilon_2+\varepsilon_3)+\sqrt{(\abs{\vars{T}_{\AV}}\epsMACa-1)+2^{\log_2\abs{\vars{T}_{\AV}}+\log_2\left(\frac{2}{\varepsilon_3}+1\right)-\Hmin^{\epssmc}(\hat{X}_{P_2,\rr}| \beta E)_{\land \tildeFB=1|\tildeOmegaPE}}}\right.\\
    &\left.+\sqrt{(\abs{\vars{T}_{\BV}}\epsMACb-1)+2^{\log_2\abs{\vars{T}_{\BV}}+\log_2\left(\frac{2}{\varepsilon_2}+1\right)-\Hmin^{\epssmb}(\hat{X}_{P_2,\rr}| \beta\Kh_1\KOTP_{1,\beta}E)_{\land \tildeFB=1|\tildeOmegaPE}}}\right]\\
\end{split}
\end{equation}
and for the $\tildeFA=1$ case,
\begin{equation}
\begin{split}
    &\Delta\left(\rhopoutt{jj'}_{\beta \Kh_1K_2R\KOTP_{1,\beta}\tilde{\theta}_{\beta}'E\land \tildeFB=1},\tau_{\beta \Kh_1K_2R\KOTP_{1,\beta}\tilde{\theta}_{\beta}'}\otimes\rhopoutt{jj'}_{E\land \tildeFB=1}\right)\\
    \leq& p_{\tildeOmegaPE}\left[\Delta\left(\rhopoutt{jj'}_{\beta \Kh_1K_2R\KOTP_{1,\beta}\tilde{\theta}_{\beta}'\KB E\land \tildeFB=1|\tildeOmegaPE},\tau_{R\tilde{\theta}_{\beta}'\KB}\otimes\rhopoutt{jj'}_{\beta \Kh_1K_2\KOTP_{1,\beta}E\land \tildeFB=1|\tildeOmegaPE}\right)\right.\\
    &+\Delta\left(\rhopoutt{jj'}_{\beta \Kh_1K_2\KOTP_{1,\beta}E\land \tildeFB=1|\tildeOmegaPE},\tau_{K_2}\otimes\rhopoutt{jj'}_{\beta \Kh_1\KOTP_{1,\beta}E\land \tildeFB=1|\tildeOmegaPE}\right)\\
    &\left.+\Delta\left(\rhopoutt{jj'}_{\beta \Kh_1\KOTP_{1,\beta}E\land \tildeFB=1},\tau_{\Kh_1\KOTP_{1,\beta}}\otimes\rhopoutt{jj'}_{\beta E\land \tildeFB=1|\tildeOmegaPE}\right)\right]\\
    \leq&p_{\tildeOmegaPE}\left\{2(\epssma+\epssmb+\epssmc+\varepsilon_2+\varepsilon_3)+\frac{1}{2}\times 2^{-\frac{1}{2}\left[\Hmin^{\epssma}(\hat{X}_{P_{2,\rr}}|\beta \Kh_1K_2\KOTP_{1,\beta}E)_{\land \tildeFB=1|\tildeOmegaPE}-l_{\KB}-l_{\tilde{\theta}}\right]}\right.\\
    &\left.+\sqrt{(\abs{\vars{T}_{\AV}}\epsMACa-1)+2^{\log_2\abs{\vars{T}_{\AV}}+\log_2\left(\frac{2}{\varepsilon_3}+1\right)-\Hmin^{\epssmc}(\hat{X}_{P_2,\rr}| \beta E)_{\land \tildeFB=1|\tildeOmegaPE}}}\right.\\
    &\left.+\sqrt{(\abs{\vars{T}_{\BV}}\epsMACb-1)+2^{\log_2\abs{\vars{T}_{\BV}}+\log_2\left(\frac{2}{\varepsilon_2}+1\right)-\Hmin^{\epssmb}(\hat{X}_{P_2,\rr}| \beta\Kh_1\KOTP_{1,\beta}E)_{\land \tildeFB=1|\tildeOmegaPE}}}\right\},
\end{split}
\end{equation}
where we label $\land \tildeFB=1|\tildeOmegaPE$ as the evaluation of the min-entropy on the corresponding state conditioned on event $\tildeOmegaPE$ and when $\tildeFB=1$, and we reintroduced $\KB$ since it is part of the output of the final hash function.
From the proof of Thm.~\ref{thm:AKEKeySecIdealisedProtocol}, the smooth min-entropy term can be bounded by
\begin{equation}
    \Hmin^{\epssma}(\hat{X}_{P_{2,\rr}}|\beta \Kh_1K_2\KOTP_{1,\beta}E)_{\rhopoutt{jj'}_{\beta \Kh_1K_2\KOTP_{1,\beta}E\land \tildeFB=1|\tildeOmegaPE}}\geq N_{P_2,1}^{\tol}-N_{P_2,1}^{\tol}\hbin(\ephtol')-\leakEC-\log_2\abs{\vars{T}_{\AV}}\abs{\vars{T}_{\BV}}-2.
\end{equation}
The second term and third terms can follow a similar analysis, with the main difference being the lack of conditioning of some secrets.
In the second term, the conditioning on $K_2$ is not present. 
Therefore, $T_{\BV}$, as the output of a $\epsMACb$-almost strongly 2-universal hash function, is uniform and independent of the input, importantly $X_{P_2}$.
As such, it can be removed without using the min-entropy chain rule, i.e. without incurring a $\log_2\abs{\vars{T}_{\BV}}$ penalty, resulting in
\begin{equation}
    \Hmin^{\epssmb}(\hat{X}_{P_2,\rr}|\beta \Kh_1\KOTP_{1,\beta}E)_{\rhopoutt{jj'}_{\beta\hat{X}_{P_2,\rr}\Kh_1\KOTP_{1,\beta}E\land \tildeFB=1|\tildeOmegaPE}}\geq N_{P_2,1}^{\tol}-N_{P_2,1}^{\tol}\hbin(\ephtol')-\leakEC-\log_2\abs{\vars{T}_{\AV}}-2.
\end{equation}
In the third term, both the conditioning on $K_2$ and $\Kh_1\KOTP_{1,\beta}$ are absent.
Similarly, we can argue that $T_{\AV}$ and $T_{\BV}$ are uniform and independent of $X_{P_2}$ since the respective seeds of the almost 2-universal hash functions are not part of the conditioning.
As such, we have
\begin{equation}
    \Hmin^{\epssmc}(\hat{X}_{P_2,\rr}|\beta E)_{\rhopoutt{jj'}_{\beta\hat{X}_{P_2,\rr}E\land \tildeFB=1|\tildeOmegaPE}}\geq N_{P_2,1}^{\tol}-N_{P_2,1}^{\tol}\hbin(\ephtol')-\leakEC-2.
\end{equation}
Combining the results, the trace distance is bounded by
\begin{equation}
\begin{split}
    &\Delta\left(p_{01}^{\reall}\rhopoutt{jj'}_{LSE|(\tildeFA,\tildeFB)=01}, \sum_{\tilde{j}\geq \max\{j,j'\}}p_{01,\tilde{j}}\rho^{\tilde{j}+1,\tilde{j}}_{LSE}\right)+\Delta\left(p_{11}^{\reall}\rhopoutt{jj'}_{LSE|(\tildeFA,\tildeFB)=11}, \sum_{\tilde{j}\geq \max\{j,j'\}}p_{11,\tilde{j}}\rho^{\tilde{j},\tilde{j}}_{LSE}\right)\\
    \leq &10\sqrt{2(2\varepsilon_{\TRNG}+\varepsilon_{\PRNG}+\epsserfa)}+4(\varepsilon_2+\varepsilon_3)+2^{-\frac{1}{2}[N_{P_2,1}^{\tol}-N_{P_2,1}^{\tol}\hbin(\ephtol')-\leakEC-\log_2\abs{\vars{T}_{\AV}}\abs{\vars{T}_{\BV}}-l_{\KB}-l_{\tilde{\theta}}]}\\
    &+\sqrt{(\abs{\vars{T}_{\AV}}\epsMACa-1)+2^{\log_2\left(\frac{2}{\varepsilon_3}+1\right)+\leakEC+2+\log_2\abs{\vars{T}_{\AV}}-N_{P_2,1}^{\tol}[1-\hbin(\ephtol')]}}\\
    &+\sqrt{(\abs{\vars{T}_{\BV}}\epsMACb-1)+2^{\log_2\left(\frac{2}{\varepsilon_2}+1\right)+\leakEC+2+\log_2\abs{\vars{T}_{\AV}}\abs{\vars{T}_{\BV}}-N_{P_2,1}^{\tol}[1-\hbin(\ephtol')]}}.
\end{split}
\end{equation}
We can repeat the analysis with two-universal hash function to obtain the second set of bounds in the theorem.
\end{proof}

\subsection{Numerical Analysis}

Before we proceed with the numerical simulation, we briefly introduce the robustness.
The main sources of failure are the parameter estimation checks during the protocol or error correction failure, with overall robustness similarly quantified as original QAKE protocol,
\begin{equation}
    \epsrob= \varepsilon_{\rob,P}+\epsserfb+\varepsilon_{\ds,\rob}+\varepsilon_{\bigepsilon,P_1,\rob}+\varepsilon_{\EC},
\end{equation}
where $\varepsilon_{\rob,P}$ is associated with $P^{\tol}\leq\abs{P}\leq P^{\UB}$ bounds, $\epsserfb$ is the error associated with modified Serfling bound~\cite{Tomamichel2012_BB84FiniteKey,Curty2014_DecoyMDIQKD} for estimating bit error rate in $P_2$, $\varepsilon_{\ds,\rob}$ is associated with the decoy state, $\varepsilon_{\bigepsilon,P_1,\rob}$ is associated with the bit error tolerance in set $P_1$, and $\varepsilon_{\EC}$ is simply the failure rate of error correction.
The concentration bound used for robustness parameters is the tight bound on binomial distribution~\cite{Zubkov2013_SVBound} due to i.i.d. state preparation noting independence of the detection probability on basis selection by fair sampling assumption.
The concentration bound used for estimation of the expectation values, $\expE{N_{\mu_v,P_i}}$, from the observed values, $N_{\mu_v,P_i}$, associated with decoy state estimation (errors form part of decoy state $\epsds$ in $\epssecint$) is Kato's bound~\cite{Kato2020_KatoIneq,Curras2021_KatoIneqPara}.\\

We analyse the performance of the QAKE protocol by simulating the length of keys $l_{K_B}$ generated based on Thm.~\ref{thm:PRNGQAKE_Main}. 
We assume a simple experimental model with Alice preparing decoy BB84 states, sending it through a channel with loss $\eta$, and Bob performing measurement in the same basis as Alice.
Bob's detector is assumed to have zero dark counts, and the overall experiment is assumed to have \SI{2}{\percent} QBER, with the following detection probability, $\Pr[\dett,\mu_j]=p_{\mu_j}(1-e^{-\eta\mu_j})$, where we also fix $\mu_0=0.45$, $\mu_1=0.225$, $\mu_2=0$ as the decoy intensity, $p_{\mu_0}=0.2$, $p_{\mu_1}=0.6$ and $p_{\mu_2}=0.2$ as their respective probabilities.
The PRNG is assumed to use a 256-bit basis generation seed $\tilde{\theta}_{\beta}$.
Given the adversary's resource $t$ is limited by the duration of the protocol run (less than a minute), and additional resources $t_{ex}$, $t_{MG}$ and $t_{SG,PR}$ are simple steps that can be performed within the protocol duration, it is safe to argue that the PRNG is secure.
In this case, we assume a 128-bit security, $\varepsilon_{\PRNG}=2^{-128}$, which should be a conservative estimate of the security of the PRNG for the run time of the full protocol.
We note that such quantum-secure PRNG can be constructed from well-known protocols, such as AES~\cite{NIST2001_AES,Hoang2020_PRNG}.
In the simulation, we also fix the length of the authentication tags as 80 bits each, and that $\epsMACa=\epsMACb=2^{-80}$, allowing us to use the tighter bound presented in Thm.~\ref{thm:PRNGQAKE_Main} with 2-universal hash function.
We set the error correction efficiency to be $\fEC=1.2$.\\

The security parameter of the each round of the protocol is fixed at $\epssecint=$ \SI{1e-15}{}, which yields an overall protocol security of \SI{1e-6}{} when we allow it to run up to \SI{1e9}{} rounds.
The robustness parameter is fixed at $\epsrob=$ \SI{1e-10}{}, and each round involves $N=$ \SI{1e10}{} signals being sent by Alice to Bob.
The simulation is then performed by rearranging the result in Thm.~\ref{thm:AKEProtocolMain}, and optimising $l_{\KB}$ over the splitting ratio (also size of test set) $f_{P_1}$, robustness and secrecy parameters (components of $\varepsilon_{rob}$ and $\epssec$) on Matlab, with other parameters fixed as described in the results in Fig.~\ref{fig:SimulationResult}.\\

\begin{figure}[!ht]
    \centering
    \includegraphics[width=0.7\linewidth]{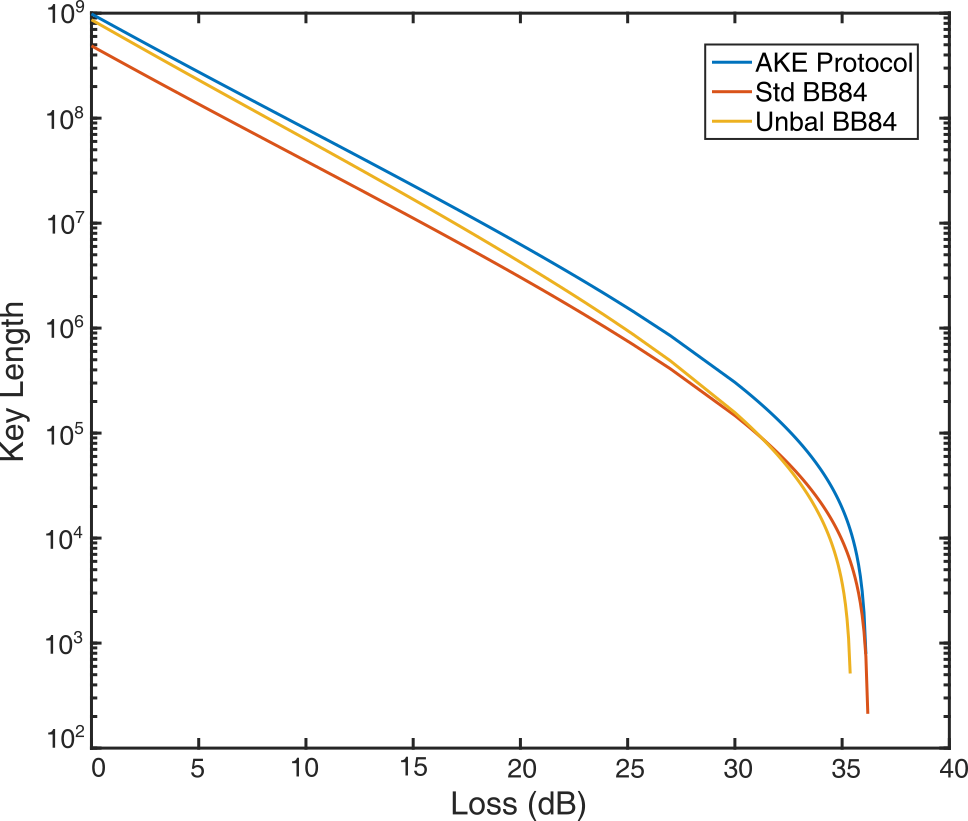}
    \caption{\label{fig:SimulationResult}Simulated key length for QAKE with pseudorandom basis protocol at various signal loss values (including detector loss etc), standard BB84 protocol and unbalanced BB84 protocol, with $N=$ \SI{1e10}{} signals sent, $\epsrob=$ \SI{1e-10}{}, and $\epssec=$ \SI{1e-15}{}. The PRNG parameter is assumed to be $\varepsilon_{\PRNG}=2^{-128}$, with $l_{\tilde{\theta}}=256$. The authentication tag used is 80-bits, i.e. $\abs{\vars{T}_{\AV}}=\abs{\vars{T}_{\BV}}=80$, and we assume the error correction efficiency to be $\fEC=1.2$. The system parameters are chosen to be $\mu_0=0.45$, $\mu_1=0.225$, $\mu_2=0$ as the decoy intensity, $p_{\mu_0}=0.2$, $p_{\mu_1}=0.6$ and $p_{\mu_2}=0.2$ as the respective probabilities, and we assume the system to have zero dark counts and a QBER of \SI{2}{\percent}.}
\end{figure}

To study the effect of the use of PRNG, we compare the results with the keys generated in a standard BB84 scheme~\cite{Tomamichel2017_QKDProof} and a BB84 scheme with unbalanced basis choices~\cite{Lim2014_DecoyQKD}.
To make the comparison fair, we utilise the same analysis results, but remove the penalties associated with PRNG, and make corresponding changes to the sifting and testing probability.
The standard BB84 scheme~\cite{Tomamichel2017_QKDProof} we compare to uses the same experimental model, but due to sifting, has an additional sifting factor resulting in
\begin{equation*}
    \Pr[\det,\text{sift},\mu_j]=\frac{p_{\mu_j}}{2}(1-e^{-\eta\mu_j}).
\end{equation*}
For the BB84 scheme with unbalanced basis choice~\cite{Lim2014_DecoyQKD}, the probability of selecting $X$ basis, $p_X$, is varied, and all rounds in the $X$ basis is used as the test round.
As such, the resulting model statistics has
\begin{gather*}
    \Pr[\det,\text{test},\mu_j]=p_X^2p_{\mu_j}(1-e^{-\eta\mu_j})\\
    \Pr[\det,\text{key},\mu_j]=(1-p_X)^2p_{\mu_j}(1-e^{-\eta\mu_j}),
\end{gather*}
while the test rounds in the earlier schemes have only an $f_{P_1}$ factor to the detected rate.
For both BB84 cases, due to lack of PRNG use, we simply set $\varepsilon_{\PRNG}=\varepsilon_{\hat{\theta},\PRNG}=\varepsilon_{\TRNG}=0$, with no basis seed to refresh, $l_{\tilde{\theta}}=0$, phase error
\begin{equation*}
    \ephtol'=\ebitt{1,\tol}+g(N_{P_1,1}^{\tol},N_{P_2,1}^{\tol},\epsserfa).
\end{equation*}
They are simulated similarly by optimising $l_{\KB}$ with the respective changes on Matlab.
The results are shown in Fig.~\ref{fig:SimulationResult}, where the protocol with PRNG generated basis has an advantage, mostly stemming from the lack of sifting.
This shows up as a doubling of key rate of BB84, and an increasingly better key rate relative to BB84 with unbalanced basis as channel loss increases. 
There is a crossover at low key lengths, likely due to the impact of the 256-bit basis seed that has to be refreshed.

\subsection{Discussion}

We note that compared to the original QAKE protocol, when PRNG is utilised, it loses the key recycling property.
It only has a weaker key refreshing property, where most shared secrets in $\Ssec$ can be recycled when the protocol passes, except for the basis seed $\tilde{\theta}$, which needs to be refreshed every round.
If one extends the assumption on the PRNG to remain secure for time $t_{\secc}\gg t'$ and $m_{\secc}$ rounds of protocol execution, one can recover the key recycling property.
The same basis generation seed could be used with a different public input (e.g. counter) to the PRNG for basis generation, thereby removing the need to refresh basis seed $\tilde{\theta}$ when the protocol passes, allowing for full key recycling.
We note that $\tilde{\theta}$ would still have to be updated once $t_{\secc}$ has passed or when $m$ execution rounds has reached, and one could refer to NIST recommendations for the refresh rate for such symmetric keys~\cite{NIST_PRNG}.\\

One technical challenge to implement PRNG basis choice is the requirement of an active basis choice at the receiver to guarantee measurement in the same basis and the key rate advantage.
This requires an extremely fast optical switch at the receiver to switch between the X and Z basis. 
A pulsed laser with a \SI{500}{\mega\hertz} repetition rate would require a nanosecond optical switch.
Nanosecond optical switches are commercially available with a low insertion loss of \SI{0.7}{\decibel}, across the whole optical communication band.
They have been demonstrated in data center networks~\cite{xue2022nanosecond} and dynamic optical switches with nanoseconds switching speed and large inputs/outputs (448 $\times$ 448) have also been previously demonstrated~\cite{yeo2010448}.
With the availability of nanosecond optical switches with low insertion loss, it may be feasible for high-speed implementation of the QAKE protocol with active basis selection.\\

\section{Decoy-State BB84 with Pseudorandom Basis}
\label{app:PRNGBasisChoiceProof}
\subsection{Motivation}

The QAKE protocol with pseudorandom basis has key secrecy security parameter dependent on the secrecy condition of decoy-state BB84 with shared pseudorandom basis.
Here, we study such a protocol, where the main changes made relative to decoy-state BB84 are:
\begin{enumerate}
    \item Alice and Bob pre-share a basis generation seed $\tilde{\theta}_{\beta}$.
    \item During the protocol, Alice and Bob would use a PRNG $G^{\PRNG}$ to generate the basis to prepare and measure the quantum state respectively.
    \item The sifting step is removed since Alice and Bob will always prepare and measure in the same basis.
\end{enumerate}
Due to the removal of the sifting step, decoy state BB84 with shared pseudorandom basis is expected to gain a factor of 2 advantage in key generation rate.
Similar protocols have been of interest in the literature, where basis information is generated or encrypted using a PRNG~\cite{Trushechkin2018_PRNGQKD,Price2021_DDOSPRNGEncryptBasis}.
However, a complete proof of security of the use of PRNG for basis selection remains elusive in the literature, with a security proof provided in Ref.~\cite{Trushechkin2018_PRNGQKD} against only intercept-resend attacks.
Here, we provide a complete security analysis through the \emph{entropic uncertainty relation} (EUR).\\

Since the main goal in this section is to prove the security from the use of PRNG for basis generation, we focus on the main quantity of interest in the secrecy condition of QKD~\cite{Tomamichel2011_QLHL,Tomamichel2017_QKDProof}: the smooth min-entropy $\Hmin^{\varepsilon}(X_{P_2}|\beta P_1P_2\tilde{\theta}_{\beta}X_{P_1}'E)_{\rho_{\beta X_{P_2}P_1P_2\tilde{\theta}_{\beta}X_{P_1}'E|\tildeOmegaPE}}$.
The smooth min-entropy quantifies the maximum length of key that can be extracted from $X_{P_2}$ that is private from the adversary.
The smooth min-entropy is evaluated on the output state of the following process:
\begin{enumerate}
    \item Alice prepares decoy state BB84 states, using the basis generated from the basis generation seed $\tilde{\theta}_{\beta}$, bit value $X$ and intensity choice $V$.
    \item The adversary performs its attack on the state, mapping systems $QE$ to $PBE'$.
    \item Bob randomly selects $P_1$ rounds and measure them in the basis generated from $\tilde{\theta}_{\beta}$, and outputs $X_{P_1}'$.
    \item Based on $X$, $P_1$, $P_2$ and $X_{P_1}'$, the decision of whether the parameter estimation step passes is made ($\tildeOmegaPE$ being the event where the parameter estimation passes).
\end{enumerate}
In decoy-state BB84, entropic uncertainty relation (EUR) is one method to tightly bound the smooth min-entropy.
Importantly, EUR lower bounds the smooth min-entropy by a function of the phase error rate, i.e. the mismatch of measurement outcomes in the complementary basis.
Without PRNG, the phase error rate matches the bit error rate (mismatch of measurement outcomes in the original basis) which can be estimated from the experiment.
We argue that this gap between bit and phase error is small when PRNG is used, resulting in a slightly lower smooth min-entropy bound (and thus key rate).
We present these arguments in later sub-sections, starting with analysing EUR with the use of PRNG in Appendix~\ref{app:EUR_with_PRNG}, followed by the proof of the min-entropy lower bound in Appendix~\ref{app:BB84_PRNG_Secrecy}.
Detailed proof of the theorems on EUR with PRNG in presented in the final two sub-sections.

\subsection{Entropic Uncertainty Relation with PRNG}
\label{app:EUR_with_PRNG}

Entropic uncertainty relation describes the uncertainty a party has of performing incompatible measurements on a quantum state.
Let us define a measurement on a subsystem $C$ using basis $\theta$, with measurement outcome $Y$ by 
\begin{equation}
     \vars{M}_{v(\theta) C\rightarrow Y}(\rho):=\Tr_C\,\left[\sum_{\alpha,y}\ket{y}_Y(\dyad{\alpha}_{\theta}\otimes F^{v(\alpha),y}_C)\rho(\dyad{\alpha}_{\theta}\otimes F^{v(\alpha),y}_C)\bra{y}_Y \right].
\end{equation}
The form of EUR of particular interest in our proof examines the following scenario:
\begin{enumerate}
    \item The basis $\theta$ is selected, and a quantum state with subsystems held by three parties, Alice ($A$), Bob ($B$), and ($E$), is generated, i.e. state $\rho_{\theta ABE}$.
    \item If Alice measures $A$ in basis $\theta$ and obtains outcome $Y$, i.e. perform $\vars{M}_{\theta A\rightarrow Y}$, Bob attempts to guess the value of $Y$ from his subsystem $B$ and $\theta$.
    \item If Alice measures $A$ in basis $v(\theta)$ for some bijective map $v$, and obtains outcome $\tilde{Y}$, i.e. perform $\vars{M}_{v(\theta) A\rightarrow \tilde{Y}}$, Eve attempts to guess the value of $\tilde{Y}$ from her subsystem $E$ and $\theta$.
\end{enumerate}
The EUR then quantifies the uncertainty on the outcomes $Y$ and $\tilde{Y}$ that subsystem $E$ and $B$ has respectively, which is related to the measurement operator incompatibility~\cite{Tomamichel2011_EUR}.
The more certain Eve can be on his guess of $Y$ (smaller conditional entropy), the more uncertain Bob must be on her guess of $\tilde{Y}$ (larger conditional entropy).
Here, we present a modified variant of EUR, adapted from Ref.~\cite{Tomamichel2017_QKDProof}.
\begin{theorem}
\label{thm:StdEUR}
    Let $\rho_{\theta ABE}$ be a tripartite quantum state and $v$ be a bijective function on $\theta$.
    Then,
    \begin{equation*}
        \Hmin^{\varepsilon}(Y|\theta E)_{\vars{M}_{\theta A\rightarrow Y}(\rho_{\theta ABE})}\geq q-\Hmax^{\varepsilon}(\tilde{Y}|\theta B)_{\vars{M}_{v(\theta) A\rightarrow \tilde{Y}}(\rho_{\theta ABE})},
    \end{equation*}
    where $c_q=\max_{\theta}\max_{y,\tilde{y}}\norm{F^{\theta,y}_A (F^{v(\theta),\tilde{y}}_A)^{\dagger}}_{\infty}^2$ and $q=\log_2\frac{1}{c_q}$.
\end{theorem}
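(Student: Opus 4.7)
The plan is to reduce the $\theta$-parameterized EUR to the standard complementary EUR by broadcasting the classical basis register $\theta$ to all three parties and absorbing the basis dependence into the measurement operators themselves. The whole point is that since $\theta$ already appears on both sides of the desired inequality, and is classical to begin with, it can be treated as a freely distributable side register.

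First I would apply a classical fan-out map replacing the single register $\theta$ with three copies $\theta_A, \theta_B, \theta_E$ assigned to Alice's measurement side and to the two conditioning systems, producing
\begin{equation*}
\rho'_{\theta_A\theta_B\theta_E ABE} = \sum_\theta p_\theta \,|\theta\rangle\langle\theta|_{\theta_A}\otimes|\theta\rangle\langle\theta|_{\theta_B}\otimes|\theta\rangle\langle\theta|_{\theta_E}\otimes \rho_{ABE}^\theta .
\end{equation*}
Since $\theta$ is classical, this broadcast is a reversible (partial-trace-inverse) relabeling and leaves all the relevant conditional smooth entropies invariant.

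Next I would define combined POVMs on the enlarged Alice system $A' = \theta_A A$, namely $G^{y}:= \sum_\theta |\theta\rangle\langle\theta|_{\theta_A}\otimes F_A^{\theta,y}$ and $\tilde G^{\tilde y}:= \sum_\theta |\theta\rangle\langle\theta|_{\theta_A}\otimes F_A^{v(\theta),\tilde y}$. Acting on $\rho'$ these reproduce $\vars{M}_{\theta A\to Y}$ and $\vars{M}_{v(\theta) A\to\tilde Y}$ exactly, because the $|\theta\rangle\langle\theta|_{\theta_A}$ projector selects the correct basis. The crucial observation is that these shared projectors force the cross-term to collapse block-diagonally,
\begin{equation*}
G^{y}(\tilde G^{\tilde y})^\dagger=\sum_\theta |\theta\rangle\langle\theta|_{\theta_A}\otimes F_A^{\theta,y}(F_A^{v(\theta),\tilde y})^\dagger,
\end{equation*}
so that $\|G^{y}(\tilde G^{\tilde y})^\dagger\|_\infty = \max_\theta \|F_A^{\theta,y}(F_A^{v(\theta),\tilde y})^\dagger\|_\infty$, and the resulting overall overlap is precisely the $c_q$ defined in the theorem statement.

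Finally I would apply the standard complementary EUR (e.g.\ Theorem~7 of Ref.~\cite{Tomamichel2017_QKDProof}) to $\rho'$, with measurements $G^{y}, \tilde G^{\tilde y}$ on $A'$, and conditioning systems $E' = \theta_E E$ and $B' = \theta_B B$. That gives $\Hmin^\varepsilon(Y|\theta E) + \Hmax^\varepsilon(\tilde Y|\theta B)\geq -\log_2 c_q = q$, which on rearrangement is exactly the claim. The main ``obstacle'' here is really only cosmetic bookkeeping: one must confirm that the classical broadcast of $\theta$ is genuinely lossless for smooth conditional entropies (a consequence of the data-processing inequality applied in both directions of the classical copy/uncopy map), and that the extended $A$-system measurement reproduces the original $\theta$-dependent channel on the nose so that no slack is introduced when passing to smoothed states.
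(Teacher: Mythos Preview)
Your proposal is correct and takes a genuinely different route from the paper. The paper (Appendix~\ref{app:ModifiedEUR}) essentially reproves the EUR from scratch in the presence of the classical register $\theta$: it introduces Stinespring isometries $V,W$ for the two $\theta$-controlled measurement maps, purifies $\rho_{\theta ABE}$ with an auxiliary $D$, invokes the min/max duality on the pure state, and then constructs an explicit CPTP map $\mathcal{E}=\sum_\alpha \Tr_{AY'}[VW^\dagger |\alpha\rangle\langle\alpha|_\theta\,\cdot\,|\alpha\rangle\langle\alpha|_\theta WV^\dagger]$ that ``undoes'' the $v(\theta)$-measurement and reapplies the $\theta$-measurement; the overlap $c_q$ then drops out of the operator bound $\mathcal{E}(\mathbb{I}_{\tilde Y}\otimes\omega)\leq c_q\,\mathbb{I}_Y\otimes\hat\omega$. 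Your approach instead \emph{reduces} to the already-proved standard EUR by fan-out of $\theta$ and block-diagonal packaging of the measurements on $A'=\theta_A A$, so that the $\max_\theta$ in $c_q$ arises simply as the spectral norm of a block-diagonal operator. Your argument is shorter and more modular (it treats the existing EUR as a black box), while the paper's approach is self-contained and makes transparent exactly where $\theta$ enters at the level of the underlying isometries---which is arguably useful elsewhere in the paper when the same machinery is reused. Both approaches are valid; the only care point in yours, which you correctly flagged, is that the classical copy map on $\theta$ must be checked to be entropy-preserving for the \emph{smooth} conditional quantities, and this holds because the copy/uncopy maps are CPTP in both directions on classical registers.
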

The proof of the theorem is presented in Appendix~\ref{app:ModifiedEUR}.
We note also that the EUR does not make any assumptions on how $\theta$ is generated.\\

To bound the smooth min-entropy, we thus have to compute the smooth max-entropy $\Hmax^{\varepsilon}(\tilde{Y}|\theta B)_{\vars{M}_{v(\theta) A\rightarrow \tilde{Y}}(\rho_{\theta ABE})}$.
A typical reduction~\cite{Renes2012_MaxEnt,Tomamichel2012_BB84FiniteKey} of the smooth max-entropy is to simplify it via data-processing inequality and bound it with $\hbin(\eph)$, where $\eph:=\frac{wt(\tilde{Y}_A\oplus \tilde{Y}_B)}{n}$ is termed the phase error rate, obtained from from the state $\vars{M}_{v(\theta) A\rightarrow \tilde{Y}_A}\circ \vars{M}_{v(\theta) B\rightarrow \tilde{Y}_B}(\rho_{\theta ABE|\Omega}^{\PRNG})$.
Since we never measure in the $v(\theta)$ basis in the protocol, the phase error is typically computed from the bit error rate, $\ebit:=\frac{wt(Y_A\oplus Y_B)}{n}$, from the state $\vars{M}_{\theta A\rightarrow Y_A}\circ \vars{M}_{\theta B\rightarrow Y_B}(\rho_{\theta ABE|\Omega}^{\PRNG})$.
Typically, it is argued that the expectation values of the phase and bit error matches, $\expE{\ebit}=\expE{\eph}$ since $\theta\in\{0,1\}^n$ is uniformly random and independent from subsystems $ABE$~\cite{Tomamichel2011_EUR,Tomamichel2017_QKDProof}.\\

\begin{figure}
    \centering
    \includegraphics[width=0.45\linewidth]{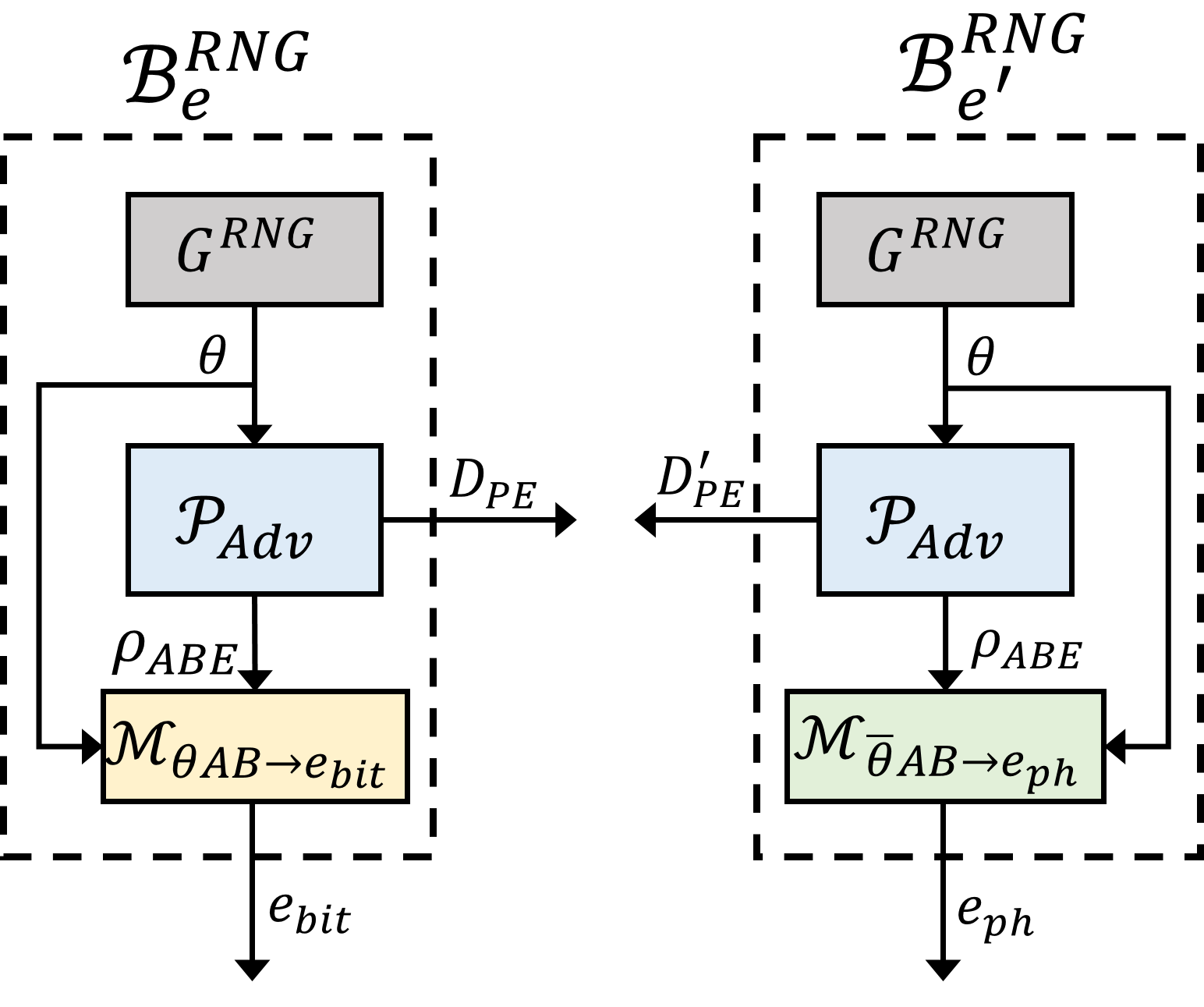}
    \caption{\label{fig:PRNGSingleAdversaryChannel}Description of quantum algorithms $\vars{B}_e^{\RNG}$ and $\vars{B}_{e'}^{\RNG}$. Each algorithm begins with a run of the random number generator $G^{\RNG}$ which outputs $\theta$, followed by the generation of quantum state $\rho_{\theta ABE}$ via a protocol $\vars{P}_{Adv}$ (e.g. QAKE protocol with adversarial attack), and ends with a measurement on subsystems $AB$ to compute the bit error $\ebit$ and phase error $\eph$ respectively. The output $\DPE$ forms a decision which defines the event set $\Omega=\{\DPE=1\}$. We label the resources required for state preparation and measurement as $t$, with the other steps (e.g. flipping of $\theta$ when computing phase error, and computing of bit/phase error) contributing resource $t_{MG}$, and overall resource requirement $t_B=t+t_{MG}$.}
\end{figure}

\begin{figure}
    \centering
    \includegraphics[width=0.35\linewidth]{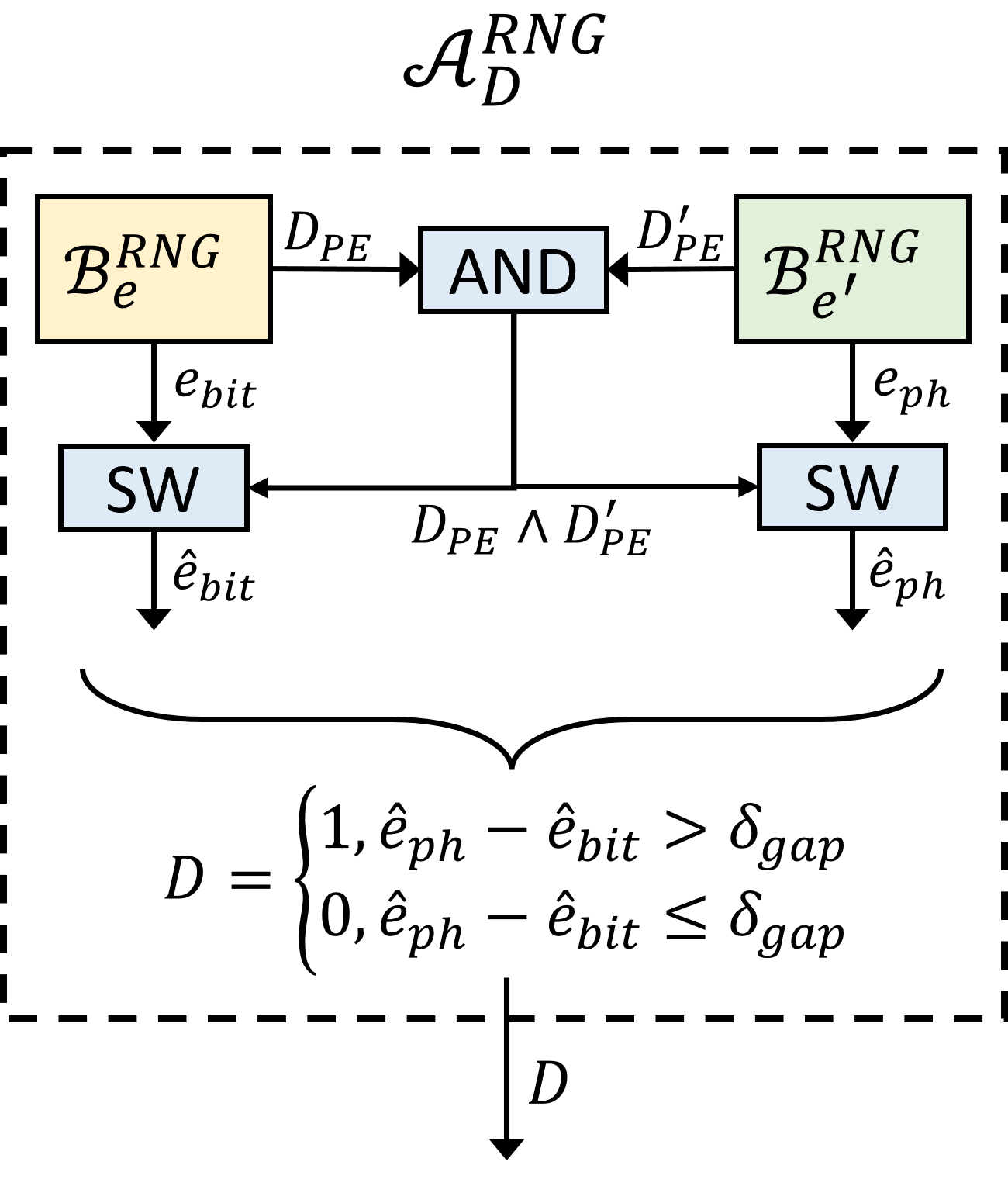}
    \caption{\label{fig:PRNGFullAdversaryChannel}Description of distinguishing adversary with output decision $D$. The adversary computes $\DPE\land \DPE'$, where $\DPE$ and $\DPE'$ are obtained from the quantum process $\vars{P}_{Adv}$. The SW function is a switch which sets $\hatebit=0$ and $\hateph=0$ respectively when $\DPE\land \DPE'=0$, and leave $\hatebit=\ebit$ and $\hateph=\eph$ when $\DPE\land \DPE'=1$. The decision $D$ is made by comparing the bit-phase error gap with a target $\deltagap$. The resource required for the adversary is defined as $t_D$, which comprises of $2t_B$ and extra steps $t_{\text{ex}}$.}
\end{figure}

This argument breaks down when a PRNG is utilised since the basis is no longer uniformly distributed.
Instead, we argue that the difference between the phase and bit error rates in the case where a quantum-secure PRNG is used (bit-phase error gap) is small.
Consider two separate instances of an adversary that is bounded by resource $t$, and prepares the state $\rho_{\theta ABE}$, which would be measured in $\theta$ and $\bar{\theta}$ respectively ($v$ refers to bit flip in this case) to obtain the bit and phase errors (see Fig.~\ref{fig:PRNGSingleAdversaryChannel}).
The pseudorandomness property of the PRNG ensures that a similarly small bit-phase error gap must be present when a PRNG is utilised.
Otherwise, the bit-phase error gap can be utilised by a distinguisher with $t'\geq t_D$ resources (see Fig.~\ref{fig:PRNGFullAdversaryChannel}) to distinguish a PRNG from an ideal RNG with probability greater than $\varepsilon_{\PRNG}$, where $t_D=2(t+t_{MG})+t_{ex}$.
We recall that the PRNG is chosen with $t'=2(t+t_{MG})+t_{ex}+t_{SG,PR}$.\\

Therefore, we present the EUR with pseudorandom basis selection, with a full proof in Appendix~\ref{app:PRNGEURProof}.
\begin{theorem}
\label{thm:PRNGEUR}
    Suppose that the basis $\theta$ is generated from a $(t',\varepsilon_{\PRNG})$-quantum secure PRNG, with $t'\geq t_D$, and let $\rho_{\theta ABE}^{\PRNG}$ be the state prepared by the adversary (via protocol $\vars{P}_{Adv}$), where $A$ and $B$ are n-qubit states, and $\theta$ remain independent when generated by an IRNG, $\rho_{\theta ABE}^{\TRNG}=\tau_{\theta}\otimes\rho_{ABE}$. 
    Furthermore, assume the measurement operators of Alice act independently and identically on the subsystems, $F_A^{\theta,y}=\bigotimes_iF_{A_i}^{\theta_i,y_i}$.
    Let $\Omega:=\{\DPE=1\}$ be an event that can be decided by $E$, then the conditional smooth min-entropy for a protocol with resource utilisation $t$ (including adversary's resources) be
    \begin{equation*}
        \Hmin^{\sqrt{2\epsph}}(Y|\theta E)_{\vars{M}_{\theta A\rightarrow Y}(\rho_{\theta ABE|\Omega}^{\PRNG})}\geq n[q_1-\hbin(\ephtol)],
    \end{equation*}
    where
    \begin{equation*}
    \begin{gathered}
        \ephtol=\ebittol+2\sqrt{\frac{2}{n}\ln\frac{1}{\varepsilon_{\TRNG}}}\\
        \epsph=2\varepsilon_{\TRNG}+\frac{\varepsilon_{\PRNG}}{p_{\Omega}^2}+\epsbit,
    \end{gathered}
    \end{equation*}
    $\ebittol$ is the tolerance value for an upper bound on the bit error value, $\Pr[\ebit\geq \ebittol|\Omega]_{\vars{M}_{\theta A\rightarrow Y_A}\circ \vars{M}_{\theta B\rightarrow Y_B}(\rho_{\theta ABE|\Omega}^{\PRNG})}\leq\epsbit$, and $q_1:=-\log_2\left( \max_{\theta_i}\max_{y_i,\tilde{y}_i}\norm{F^{\theta_i,y_i}_{A_i}(F^{\bar{\theta}_i,\tilde{y}_i}_{A_i})^{\dagger}}_{\infty}^2\right)$.
\end{theorem}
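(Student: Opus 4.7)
The plan is to chain together the standard EUR from Thm.~\ref{thm:StdEUR}, a reduction of smooth max-entropy to the phase error rate, and a pseudorandomness-to-ideal transfer for the error statistics. First I would apply Thm.~\ref{thm:StdEUR} with the bit-flip choice $v(\theta)=\bar{\theta}$, which under the i.i.d. measurement-operator assumption tensorises $q$ to $nq_1$ and yields
\begin{equation*}
    \Hmin^{\varepsilon}(Y|\theta E)_{\vars{M}_{\theta A\rightarrow Y}(\rho_{\theta ABE|\Omega}^{\PRNG})}\geq nq_1 - \Hmax^{\varepsilon}(\tilde{Y}|\theta B)_{\vars{M}_{\bar{\theta} A\rightarrow \tilde{Y}}(\rho_{\theta ABE|\Omega}^{\PRNG})}.
\end{equation*}
I would then invoke data processing to replace $B$ by the outcome $\tilde{Y}_B$ of measuring $B$ in basis $\bar{\theta}$, and appeal to the standard bound that the smooth max-entropy of a binary string with Hamming weight at most $n\ephtol$ is at most $n\hbin(\ephtol)$, which reduces the problem to establishing $\Pr[\eph\geq\ephtol\mid\Omega]_{\rho^{\PRNG}}\leq\epsph$.

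The core estimate is therefore the upper tail of the phase error under the PRNG state, which I would build up in two layers. The inner, ideal layer sits on the IRNG state $\rho^{\TRNG}_{\theta ABE}=\tau_\theta\otimes\rho_{ABE}$: here $\theta$ is uniform and independent of $ABE$, so the marginal distributions of $\ebit$ (from $\vars{M}_{\theta A}\circ\vars{M}_{\theta B}$) and $\eph$ (from $\vars{M}_{\bar{\theta} A}\circ\vars{M}_{\bar{\theta} B}$) coincide, and two Azuma/Hoeffding concentration estimates of deviation $\sqrt{(2/n)\ln(1/\varepsilon_{\TRNG})}$ combine by union bound to yield $\Pr[\eph-\ebit\geq 2\sqrt{(2/n)\ln(1/\varepsilon_{\TRNG})}\mid\Omega]_{\rho^{\TRNG}}\leq 2\varepsilon_{\TRNG}$. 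Coupled with the hypothesis $\Pr[\ebit\geq\ebittol\mid\Omega]\leq\epsbit$, this gives the target upper bound on $\Pr[\eph\geq\ephtol\mid\Omega]$ on the IRNG state.

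The transfer of this bound to the PRNG state is where the $(t',\varepsilon_{\PRNG})$-security of the generator is invoked, and will be the main obstacle. Concretely, I would construct the distinguisher $\vars{A}_D$ of Fig.~\ref{fig:PRNGFullAdversaryChannel} that calls $\vars{B}_e^{\RNG}$ and $\vars{B}_{e'}^{\RNG}$ on two fresh runs, accepting iff both runs satisfy $\DPE=1$ and the bit--phase gap exceeds $\deltagap$. Since each call uses resources $t+t_{MG}$ (one invocation of the adversarial preparation plus the basis-flip and error-counting bookkeeping), the total distinguisher cost is $t_D=2(t+t_{MG})+t_{ex}\leq t'$, so any conditional gap between the PRNG and IRNG probabilities of the bad event must be bounded by $\varepsilon_{\PRNG}$. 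The two subtleties I anticipate are (i) that conditioning on $\Omega$ on each of the two independent runs renormalises the distinguishing advantage by a factor $1/p_\Omega^2$, which is precisely what gives the $\varepsilon_{\PRNG}/p_\Omega^2$ contribution in $\epsph$, and (ii) that converting an upper tail bound of value $\epsph$ into a smoothing parameter for $\Hmax^{\varepsilon}$ costs a square root via the trace-distance-to-purified-distance inequality, producing the final smoothing value $\sqrt{2\epsph}$. Assembling the IRNG concentration bound, the PRNG transfer penalty, and the EUR inequality then delivers the claimed lower bound on the smooth min-entropy.
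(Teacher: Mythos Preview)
Your proposal is essentially the paper's proof: apply Thm.~\ref{thm:StdEUR} with $v(\theta)=\bar\theta$, tensorise $q=nq_1$, data-process $B\mapsto\tilde Y_B$, bound $\Hmax$ by $n\hbin(\ephtol)$, and reduce to a phase-error tail via the distinguisher of Fig.~\ref{fig:PRNGFullAdversaryChannel} plus Azuma--Hoeffding on the IRNG state, with the $1/p_\Omega^2$ renormalisation and the $\sqrt{2\epsph}$ purified-distance conversion exactly as the paper does in Thms.~\ref{thm:PRNGGap}--\ref{thm:PRNGPhaseErrBnd}.

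One small ordering slip to fix: you combine the bit-error hypothesis $\Pr[\ebit\geq\ebittol\mid\Omega]\leq\epsbit$ at the \emph{IRNG} layer, but that hypothesis is stated for the \emph{PRNG} state. The paper (and your own distinguisher description) transfers the \emph{gap} event $\{\eph-\ebit>\deltagap,\Omega,\Omega'\}$ from IRNG to PRNG first, obtaining $\Pr[\eph-\ebit\geq 2\delta\mid\Omega,\Omega']_{\PRNG}\leq 2\varepsilon_{\TRNG}+\varepsilon_{\PRNG}/p_\Omega^2$, and only then unions with the PRNG bit-error bound to conclude $\Pr[\eph\geq\ephtol\mid\Omega]_{\PRNG}\leq\epsph$. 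Reorder those two steps and your argument matches the paper line for line.
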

\begin{proof}[Proof Sketch] The key idea of the proof is to demonstrate that the bit-phase error gap when utilising a $(\varepsilon_{\PRNG},t')$-quantum secure is small, utilising the observed bit error tolerance to bound the phase error tolerance, followed by applying the EUR in Thm.~\ref{thm:StdEUR} to lower bound the smooth min-entropy.\\

The quantum algorithm in Fig.~\ref{fig:PRNGFullAdversaryChannel} requires $t_D$ resources and outputs a decision $D$ based on the bit-phase error gap.
By property of the $(\varepsilon_{\PRNG},t')$-quantum secure PRNG with $t'\geq t_D$, the probability that the bit-phase error exceeds a gap is $\varepsilon_{\PRNG}$-close to when a IRNG is used instead.
The form of the IRNG state, where basis $\theta$ is uniformly random and independent of subsystems $ABE$, implies that the expected bit and phase error matches since a measurement on $\theta$ and a measurement on $\bar{\theta}$ with uniformly random $\theta$ would generate the same state $\rho_{YY'}=\rho_{\tilde{Y}\tilde{Y}'}$.
The observed bit-phase error gap for a single instance of the algorithm with IRNG can be computed using Azuma-Hoeffding bound, which when combined with the quantum secure PRNG property, yields
\begin{equation}
    \Pr[\eph-\ebit>2\delta]\leq 2p_{\Omega}^2e^{-\frac{n\delta^2}{2}}+\varepsilon_{\PRNG}.
\end{equation}
Combining this result with the bit value tolerance value, we can show that the phase error can be upper bounded by a tolerance value $\ephtol$, except with small probability.\\

The final step is to apply the EUR with map $v(\theta)=\bar{\theta}$, which is agnostic to how $\theta$ is generated.
The max-entropy can be reduced by data processing inequality~\cite{Tomamichel2015_ITBook},
\begin{equation}
    \Hmax^{\varepsilon}(\tilde{Y}|\theta B)\leq \Hmax^{\varepsilon}(\tilde{Y}|\tilde{Y}'),
\end{equation}
which can be in turn be upper bounded by $n\hbin(\ephtol)$.
As for the overlap factor $q$, since the measurement operators operate on each subsystem $A_i$ separately, $q=nq_1$, and we arrive at the result. 
\end{proof}

\subsection{Security of decoy-state BB84 with PRNG basis}
\label{app:BB84_PRNG_Secrecy}

The bound on min-entropy is presented in the theorem below.
\begin{theorem}
\label{thm:BB84PRNG_MinEnt}
    Considering an adversary with resource limitation $t$ and the use of $(t',\varepsilon_{\PRNG})$-secure PRNG with $t'\geq 2(t+t_{MG})+t_{ex}$, the min-entropy of interest, which can be described by the steps above, is lower bounded by
    \begin{equation*}
        \Hmin^{\varepsilon}(X_{P_2}|\beta P_1P_2\tilde{\theta}_{\beta}X_{P_1}'E)_{\rho_{\beta X_{P_2}P_1P_2\tilde{\theta}_{\beta}X_{P_1}'E|\tildeOmegaPE}}\geq N_{P_2,1}^{\tol}-N_{P_2,1}^{\tol}\hbin(\ephtol'),
    \end{equation*}
    where $\varepsilon\geq \sqrt{2\epsph}$, with
    \begin{gather*}
        \epsph=2\varepsilon_{\TRNG}+\frac{\varepsilon_{\PRNG}}{p_{\tildeOmegaPE}^2}+\epsserfa\\
        \ephtol'=\ebitt{1,\tol}+g(N_{P_1,1}^{\tol},N_{P_2,1}^{\tol},\epsserfa)+2\sqrt{\frac{2}{N_{P_2,1}^{\tol}}\ln\frac{1}{\varepsilon_{\TRNG}}}
    \end{gather*}
    and $g(a,b,c)=\sqrt{\frac{(b+a)(a+1)}{2a^2b}\ln\frac{1}{c}}$.
\end{theorem}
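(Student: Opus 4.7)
The plan is to decompose the min-entropy into contributions tagged by the photon number, isolate the single-photon contribution in $P_2$ (whose size is at least $N_{P_2,1}^{\tol}$ on $\tildeOmegaPE$ by the decoy-state estimation), and then apply the PRNG EUR (Thm.~\ref{thm:PRNGEUR}) to exactly that single-photon subsystem. The two ingredients feeding into the EUR bound will be (i) the overlap factor $q_1=1$ coming from BB84's complementary bases, and (ii) a tolerance value on the \emph{phase} error rate of the single-photon events in $P_2$, which we must upper bound using the single-photon \emph{bit} error rate observed in the test set $P_1$.

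First I would condition on $\tildeOmegaPE$ and use the standard tagging argument: introduce the photon-number register $N_{\PNR}$ (defined by a virtual QND measurement on the source), let $\{P_{2,1}\}$ denote the single-photon indices in $P_2$, and apply the chain rule to write
\begin{equation*}
\Hmin^{\varepsilon}(X_{P_2}|\beta P_1P_2\tilde{\theta}_\beta X_{P_1}'E)\;\geq\;\Hmin^{\varepsilon}(X_{\{P_2,1\}}|\beta P_1P_2\tilde{\theta}_\beta X_{P_1}'N_{\PNR}E),
\end{equation*}
since giving the adversary the multi-photon bits and the photon-number tags can only decrease the min-entropy; conditioned on $\tildeOmegaPE$ the cardinality of $\{P_2,1\}$ is at least $N_{P_2,1}^{\tol}$. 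This is the step that converts the problem to an EUR statement on single-photon states only.

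Next I would apply Thm.~\ref{thm:PRNGEUR} to the single-photon subsystem $A=X_{\{P_2,1\}}$, with complementary map $v(\theta)=\bar\theta$, adversary side-information playing the role of $E$, and $B$ being a virtual copy generated by the complementary measurement (used only to define the phase error). The overlap factor per qubit is $q_1=1$ for BB84, so the EUR yields
\begin{equation*}
\Hmin^{\sqrt{2\epsph}}(X_{\{P_2,1\}}|\cdots)\;\geq\;N_{P_2,1}^{\tol}[1-\hbin(\ephtol')],
\end{equation*}
provided we supply a valid tolerance $\ephtol'$ on the single-photon phase error in $P_2$ together with an $\epsbit$ controlling the probability of exceeding it. The $\varepsilon_{\TRNG}$ and $\varepsilon_{\PRNG}/p_{\tildeOmegaPE}^2$ contributions to $\epsph$ then arise directly from the PRNG EUR's pseudorandom-to-uniform switch. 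The resource budget for the distinguisher used in that switch is exactly $2(t+t_{MG})+t_{ex}$, which matches the hypothesis on $t'$.

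The main obstacle, and the step that I expect to carry most of the technical weight, is obtaining the tolerance $\ephtol'$. We do not directly observe the phase error of single-photon rounds in $P_2$; we only observe the single-photon bit error \emph{rate} in $P_1$, upper bounded by $\ebitt{1,\tol}$ on $\tildeOmegaPE$ via decoy-state estimation. The plan is a two-step inflation: first use the modified Serfling bound for sampling without replacement (as in~\cite{Tomamichel2012_BB84FiniteKey,Curty2014_DecoyMDIQKD}) to argue that the single-photon bit error rate in $P_2$ is at most $\ebitt{1,\tol}+g(N_{P_1,1}^{\tol},N_{P_2,1}^{\tol},\epsserfa)$ except with probability $\epsserfa$; then use the PRNG EUR's bit-to-phase inflation $2\sqrt{(2/N_{P_2,1}^{\tol})\ln(1/\varepsilon_{\TRNG})}$ to pass from bit to phase error on the same $P_2$ single-photon subset. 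Adding the two inflations yields exactly the $\ephtol'$ in the statement and adds $\epsserfa$ into $\epsph$. Combining with the smoothing parameter $\sqrt{2\epsph}\leq\varepsilon$ finishes the proof.
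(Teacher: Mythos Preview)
Your proposal is correct and follows essentially the same route as the paper: isolate the single-photon contribution via the tagging/source-replacement argument, use the modified Serfling bound to transport the single-photon bit-error tolerance from $P_1$ to $P_2$, and then invoke the PRNG EUR (Thm.~\ref{thm:PRNGEUR}) with $q_1=1$. The paper additionally makes explicit a truncation of the single-photon rounds in $P_2$ down to exactly $N_{P_2,1}^{\tol}$ (so the $n$ entering the EUR is fixed) and verifies the hypothesis that under an IRNG the restricted basis string is independent of the prepared state---both points your sketch implicitly relies on but does not spell out.
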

\begin{proof}
We first reduce the protocol to a worse case protocol to simplify analysis.
The choice of decoy state intensity can be delayed, by allowing Alice to first sample a random photon number count for $n$ rounds, $N_{\PNR}$, and select decoy choice by sampling $v$ based on $N_{\PNR}$.
In general, the photon numbers $N_{\PNR}$ can be accessible to the adversary, and any rounds $Q_i$ where there are multi-photon events could leak the basis and bit value.
As such, we let Alice announce the basis and bit value for all rounds with multi-photon events.
For single-photon events, preparing BB84 states on system $Q_i$ is identical to preparing $\ket{\Phi^+}_{A_iQ_i}$, and performing measurement on $A_i$ with the X and Z-basis operators.
This modification allows us to delay the measurement on $A_i$ to a later step.
We note that the corresponding measurement operators $F_{A_i}^{\theta_i,X_i}$ acts independently and identically on the subsystems and has overlap $q_1=1$, assuming ideal preparation of the BB84 states.
As such, we have a worse case of Alice preparing and sending out $N_{\PNR}$, $\theta_{\geq 2}$, $X_{\geq 2}$, and $Q_i$ of single-photon events instead of that in step 1 of the protocol.
In fact, WLOG, we can let the adversary prepare the single-photon states, and select $P$ at the same time, preparing $\rho_{A_{P,1}B_{P,1}E'}$.
For steps 3 and 4, Alice and Bob can operate with simply these single-photon events since $\tildeOmegaPE$ is decided only on such events.
This worse case protocol can thus be summarised as
\begin{enumerate}
    \item The adversary announces $\beta$ to Alice and Bob.
    \item Alice draws a random photon number count for $n$ rounds, $N_{\PNR}$, based on a probability distribution computed from the intensity and probability settings.
    \item From the basis generation seed $\tilde{\theta}_{\beta}$ and $N_{\PNR}$, Alice announces the basis $\theta_{\geq 2}$ and random bit string $X_{\geq 2}$.
    \item The adversary announces $P$ and prepares accordingly the state $\rho_{A_{P_*,1}B_{P_*,1}E'}$, where $A_{P_*,1}$ is handed to Alice and $B_{P_*,1}$ is handed to Bob.
    \item Bob selects randomly a subset $P_1$ from set P. Define $P_2=P\setminus P_1$.
    \item Bob measures $B_{P_1,1}$ to obtain $X_{P_1,1}'$ and Alice measures $A_{P_1,1}$ to obtain $X_{P_1,1}$.
    \item Alice and Bob computes and checks the condition $\tildeOmegaPE$, and they announce $\tilde{\theta}_{\beta}$.
    \item Alice measures $A_{P_2,1}$ to obtain $X_{P_2,1}$.
\end{enumerate}
Note that $X_{P_1,0}$ would be announced in the original protocol, but this is simply a random string uncorrelated with the rest of the variables and can be removed from the min-entropy.\\

We begin by first considering another worse case where Alice truncates the number of single-photon rounds in set $P_2$ to $N_{P_2,1}^{\tol}$, which can occur when $\tildeOmegaPE$ occurs.
Otherwise, Alice and Bob can replace the subsystem with $N_{P_2,1}^{\tol}$ Bell states, which does not affect min-entropy since it is evaluated conditioned on $\tildeOmegaPE$.
As such,
\begin{equation*}
    \Hmin^{\varepsilon}(X_{P_2}|\beta P_1P_2\tilde{\theta}_{\beta}X_{P_1}'E')_{\rho_{\beta X_{P_2}P_1P_2\tilde{\theta}_{\beta}X_{P_1}'E'|\tildeOmegaPE}}\geq \Hmin^{\varepsilon}(X_{P_2^{\trunc},1}|\beta P_1P_2\tilde{\theta}_{\beta}X_{P_1,1}'X_{P_1,1}N_{\PNR}X_{\geq 2}E')_{\rho_{|\tildeOmegaPE}},
\end{equation*}
where $\trunc$ superscript indicate truncation.\\

The state generated, $\rho_{\beta X_{P_2^{\trunc},1} P_1P_2\tilde{\theta}_{\beta}X_{P_1,1}'N_{\PNR}X_{\geq 2}E'|\tildeOmegaPE}$, matches that described in EUR, Thm.~\ref{thm:PRNGEUR}.
We note that the state can be expressed as
\begin{equation}
    \rho_{X_{P_2}^{\trunc} \beta\tilde{\theta}_{\beta}E|\tildeOmegaPE}=\vars{M}_{\theta_{P_2,1}A_{P_2,1}^{\trunc}\rightarrow X^{\trunc}_{P_2,1}}(\rho_{\beta\tilde{\theta}_{\beta}A_{P_2,1}^{\trunc}B_{P_2,1}^{\trunc} E|\tildeOmegaPE}),
\end{equation}
where $E=P_1P_2X_{P_1,1}'X_{P_1,1}N_{\PNR}X_{\geq 2}E'$.
We note that $\tildeOmegaPE$ can be determined from $X_{P_1,1}'X_{P_1,1}N_{\PNR}P_1P_2$, which are part of $E$.
By the fair sampling assumption, $P_1$ and $P_2$ are independent of any basis selection.
Moreover, if the basis choice $\theta_{\beta,P_2^{\trunc},1}$ is selected from an IRNG, it would be uniformly random and independent from the adversary (also from $\beta$), since $\theta_{P_1}$ and $\theta_{P_2}$ are uncorrelated, i.e. we have $\tau_{\theta_{P_2,1}}$ and uncorrelated from the rest of the state.\\

The final quantity to define is $\ebittol$, which in this case is the phase tolerance bound for the set $P_{2,1}^{\trunc}$.
This can be obtained by the modified Serfling bound~\cite{Curty2014_DecoyMDIQKD,Tomamichel2012_BB84FiniteKey}, since the choice of sets $P_1$ and $P_2$ is random.
Since $\tildeOmegaPE$ gives $N_{P_1,1}^{\tol}$ as a lower bound on the number of single photon events, and that the bit error rate is upper bounded by $e_{b,1,tol}$, we have that
\begin{equation}
    \Pr[\ebit\geq \ebitt{1,\tol}+g(N_{P_1,1}^{\tol}, N_{P_2,1}^{\tol},\epsserfa)]\leq\epsserfa. 
\end{equation}
Therefore, we can apply Thm.~\ref{thm:PRNGEUR} to arrive at the result.
\end{proof}

\subsection{Proof of Modified EUR}
\label{app:ModifiedEUR}

In this section, we provide a formal proof of Thm.~\ref{thm:StdEUR}.
We follow closely the proof of EUR in Ref.~\cite{Tomamichel2017_QKDProof}, where a similar form of EUR is proven, and taking inspiration from Ref.~\cite{Tomamichel2011_EUR}.
We begin by introducing the Stinespring dilation isometry of both measurement maps $\vars{M}_{\theta A\rightarrow Y}$ and $\vars{M}_{v(\theta) A\rightarrow \tilde{Y}}$ respectively,
\begin{equation}
\begin{gathered}
    V=\sum_{\alpha,y}\ket{yy}_{YY'}\otimes\dyad{\alpha}_{\theta}\otimes F_A^{\alpha,y}\\
    W=\sum_{\alpha,\tilde{y}}\ket{\tilde{y}\tilde{y}}_{\tilde{Y}\tilde{Y}'}\otimes\dyad{\alpha}_{\theta}\otimes F_A^{v(\alpha),\tilde{y}}.
\end{gathered}
\end{equation}
We can purify the state $\rho_{\theta ABE}$ to
\begin{equation}
    \ket{\psi}_{\theta\theta'ABED}=\sum_{\alpha}\sqrt{p_{\alpha}}\ket{\alpha\alpha}_{\theta\theta'}\otimes\ket{\phi_{\alpha}}_{ABED},
\end{equation}
where $\ket{\phi_{\alpha}}_{ABED}$ is a purification of $\rho_{ABE|\theta=\alpha}$, which can be performed by selecting $D$ to be a sufficiently large auxiliary system.
The post-measurement state after performing the latter measurement map $\vars{M}_{v(\theta) A\rightarrow \tilde{Y}}$ can thus be written as
\begin{equation}
    \rho_{\theta\theta' ABE\tilde{Y}\tilde{Y}'D}=W\dyad{\psi}_{\theta\theta'ABED}W^{\dagger}
\end{equation}
Since the state is pure, we can apply the duality relation of smooth min- and max-entropy~\cite{Konig2009_OprMinMax,Tomamichel2010_MaxMinDuality},
\begin{equation}
    \Hmax^{\varepsilon}(\tilde{Y}|\theta B)+\Hmin^{\varepsilon}(\tilde{Y}|\theta'AE\tilde{Y}'D)=0,
\end{equation}
and what remains is to bound the smooth min-entropy term, which is upper bounded by $\Hmin^{\varepsilon}(\tilde{Y}|\theta AE\tilde{Y}')$ since $\theta=\theta'$, and $D$ can be removed by the data-processing inequality for CPTP map $\Tr_D$~\cite{Tomamichel2015_ITBook}.\\

The smooth min-entropy term is evaluated on the state $[W\rho_{\theta AE}W^{\dagger}]$, where we note that it is possible to reverse the measurement of system $A$ using basis $v(\theta)$ and measure it using basis $\theta$.
This is expressed as a CPTP map, using the Stinespring dilation isometry of the measurement maps,
\begin{equation}
    \bigepsilon=\sum_{\alpha}\Tr_{AY'}[VW^{\dagger}\dyad{\alpha}_{\theta}\,\cdot\,\dyad{\alpha}_{\theta}WV^{\dagger}],
\end{equation}
where we have 
\begin{equation}
\begin{split}
    &\bigepsilon(\rho_{\theta AE\tilde{Y}\tilde{Y}'})\\
    =&\sum_{\alpha}\Tr_{AY'}\,\{VW^{\dagger}[\dyad{\alpha}_{\theta}\otimes\sum_{yy'}\ket{yy}_{YY'}F_A^{v(\alpha),y}\rho_{AE|\theta=\alpha}(F_A^{v(\alpha),y'})^{\dagger}\bra{y'y'}_{YY'}]WV^{\dagger}\}\\
    =&\sum_{\alpha}\Tr_{AY'}\,\{V[\dyad{\alpha}_{\theta}\otimes\sum_{y}(F_A^{v(\alpha),y})^{\dagger}F_A^{v(\alpha),y}\rho_{AE|\theta=\alpha}\sum_{y'}(F_A^{v(\alpha),y'})^{\dagger}F_A^{v(\alpha),y'}]V^{\dagger}\}\\
    =&\Tr_{AY'}[V\rho_{\theta AE}V^{\dagger}]\\
    =&\rho_{\theta YE},
\end{split}
\end{equation}
where is the post-measurement state on $\rho_{\theta ABE}$ with measurement map $\vars{M}_{\theta A\rightarrow Y}$ and the appropriate trace, and we note that $\sum_{y}(F_A^{v(\alpha),y})^{\dagger}F_A^{v(\alpha),y}=\mathbb{I}_A$.\\

By definition of the smooth min-entropy, there exists a state $\sigma_{\theta AE\tilde{Y}\tilde{Y}'}$ and $\omega_{\theta AE\tilde{Y}'}$ such that purified distance $P(\rho_{\theta AE\tilde{Y}\tilde{Y}'},\sigma_{\theta AE\tilde{Y}\tilde{Y}'})\leq\varepsilon$, and
\begin{equation}
\label{eqn:ModifiedEURProofHminDefn}
    \sigma_{\theta AE\tilde{Y}\tilde{Y}'}\leq 2^{-\lambda}\mathbb{I}_{\tilde{Y}}\otimes\omega_{\theta AE\tilde{Y}'},
\end{equation}
where $\lambda=\Hmin^{\varepsilon}(\tilde{Y}|\theta AE\tilde{Y}')$.
We note here that any CPTP map should not increase the purified distance, which allows us to define a state $\sigma_{\theta YE}:=\bigepsilon(\sigma_{\theta AE\tilde{Y}\tilde{Y}'})$, such that it is $\varepsilon$-close in purified distance to $\rho_{\theta YE}$,
\begin{equation}
    P(\sigma_{\theta YE},\rho_{\theta YE})\leq P(\sigma_{\theta AE\tilde{Y}\tilde{Y}'},\rho_{\theta AE\tilde{Y}\tilde{Y}'})\leq\varepsilon.
\end{equation}
We can apply the channel $\bigepsilon$ on both sides of Eq.~\eqref{eqn:ModifiedEURProofHminDefn}, giving
\begin{equation}
    \sigma_{\theta YE}\leq 2^{-\lambda}\bigepsilon(\mathbb{I}_{\tilde{Y}}\otimes\omega_{\theta AE\tilde{Y}'}).
\end{equation}
The RHS term can be expanded as
\begin{equation}
\begin{split}
    \bigepsilon(\mathbb{I}_{\tilde{Y}}\otimes\omega_{\theta AE\tilde{Y}'})=&\sum_{\alpha y\tilde{y}}\dyad{\alpha y}_{\theta Y}\otimes\mel{\tilde{y}}{\Tr_A[F^{\alpha,y}_A(F^{v(\alpha),\tilde{y}}_A)^{\dagger} \hat{\omega}_{AE\tilde{Y}'}^{\alpha}F^{v(\alpha),\tilde{y}}_A(F^{\alpha,y}_A)^{\dagger}]}{\tilde{y}}_{\tilde{Y}'}\\
    \leq&\sum_{\alpha y\tilde{y}}\dyad{\alpha y}_{\theta Y}\otimes\norm{F^{\alpha,y}_A(F^{v(\alpha)}_{\tilde{y}})^{\dagger}}_{\infty}^2\mel{\tilde{y}}{\Tr_A(\hat{\omega}_{AE\tilde{Y}'}^{\alpha})}{\tilde{y}}_{\tilde{Y}'}\\
    \leq&\left[\max_{\alpha}\max_{y,\tilde{y}}\norm{F^{\alpha,y}_A(F^{v(\alpha)}_{\tilde{y}})^{\dagger}}_{\infty}^2\right]\sum_{\alpha y}\dyad{\alpha y}_{\theta Y}\otimes\Tr_{A\tilde{Y}'}[\hat{\omega}_{AE\tilde{Y}'}^{\alpha}]\\
    =&c_q\mathbb{I}_Y\otimes\hat{\omega}_{\theta E},
\end{split}
\end{equation}
where $\hat{\omega}_{\theta E}:=\sum_{\alpha}\dyad{\alpha}_{\theta}\otimes\Tr_{A\tilde{Y}'}[\hat{\omega}_{AE\tilde{Y}'}^{\alpha}]$ and $\hat{\omega}_{AE\tilde{Y}'}^{\alpha}:=\mel{\alpha}{\omega_{\theta AE\tilde{Y}'}}{\alpha}$.
Therefore, by definition, the min-entropy is lower bounded,
\begin{equation}
    \Hmin(Y|\theta E)_{\sigma_{\theta YE}}\geq \Hmin^{\varepsilon}(\tilde{Y}|\theta AE\tilde{Y}')+q,
\end{equation}
where $q=\log_2\frac{1}{c_q}$.
Since $\sigma_{\theta YE}$ is $\varepsilon$-close in purified distance to $\rho_{\theta YE}$, the smooth min-entropy can be bounded,
\begin{equation}
    \Hmin^{\varepsilon}(Y|\theta E)_{\rho_{\theta YE}}\geq \Hmin^{\varepsilon}(\tilde{Y}|\theta AE\tilde{Y}')+q.
\end{equation}
Combining the results, we end up with
\begin{equation}
    \Hmin^{\varepsilon}(Y|\theta E)_{\rho_{\theta YE}}+\Hmax^{\varepsilon}(\tilde{Y}|\theta B)_{\rho_{\theta\tilde{Y}B}}\geq q.
\end{equation}
Noting that $\rho_{\theta YE}=\Tr_{B}[\vars{M}_{\theta A\rightarrow Y}(\rho_{\theta ABE})]$ and $\rho_{\theta\tilde{Y}B}=\Tr_{E}[\vars{M}_{v(\theta)A\rightarrow\tilde{Y}}(\rho_{\theta ABE})]$, we arrive at the form of Thm.~\ref{thm:StdEUR}.

\subsection{Proof of Thm.~\ref{thm:PRNGEUR}}
\label{app:PRNGEURProof}

To prove Thm.~\ref{thm:PRNGEUR}, we would require Thm.~\ref{thm:PRNGPhaseErrBnd} which in turns relies on Thm.~\ref{thm:PRNGGap}.
We first prove Thm.~\ref{thm:PRNGGap}, which requires that the bit-phase error gap at the end of the protocol with a pseudorandom basis choice cannot be large.
We recall the definition of phase error rate, $\eph:=\frac{wt(\tilde{Y}_A\oplus \tilde{Y}_B)}{n}$, which is obtained from from the state $\vars{M}_{v(\theta) A\rightarrow \tilde{Y}_A}\circ \vars{M}_{v(\theta) B\rightarrow \tilde{Y}_B}(\rho_{\theta ABE|\Omega}^{\PRNG})$.
Similarly, the bit error rate, $\ebit:=\frac{wt(Y_A\oplus Y_B)}{n}$, is obtained from the state $\vars{M}_{\theta A\rightarrow Y_A}\circ \vars{M}_{\theta B\rightarrow Y_B}(\rho_{\theta ABE|\Omega}^{\PRNG})$ where the measurement is performed in the $\theta$ basis.

\begin{theorem}
\label{thm:PRNGGap}
    Suppose that the basis $\theta$ is generated from a $(t',\varepsilon_{\PRNG})$-quantum secure PRNG with $t'\geq t_D$ and let $\rho_{\theta ABE}^{\PRNG}$ be the state prepared by the adversary (via protocol $\vars{P}_{Adv}$), where $A$ and $B$ are n-qubit states, and $\theta$ remain independent when generated by an IRNG, $\rho_{\theta ABE}^{\TRNG}=\tau_{\theta}\otimes\rho_{ABE}$. 
    Let $\Omega=\{\DPE=1\}$ be an event that can be decided from $E$, then the phase error cannot be much larger than the bit error for the process $\vars{A}_{\ebit,\eph}^{G^{\PRNG}_{\vars{K}},t_D}$ described in Fig.~\ref{fig:PRNGFullAdversaryChannel} before the generation of $D$, i.e.
    \begin{equation*}
        \Pr[\eph-\ebit>2\delta]_{\vars{A}_{\ebit,\eph}^{G^{\PRNG}_{\vars{K}},t_D}}\leq 2p_{\Omega}^2e^{-\frac{n\delta^2}{2}}+\varepsilon_{\PRNG},
    \end{equation*}
    where $p_{\Omega}$ is the probability of $\DPE=1$.
\end{theorem}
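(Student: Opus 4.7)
The plan is to exploit the distinguishing adversary $\vars{A}_{\ebit,\eph}^{\RNG,t_D}$ depicted in Fig.~\ref{fig:PRNGFullAdversaryChannel} to reduce the PRNG case to the IRNG case via the $(t',\varepsilon_{\PRNG})$-security of the PRNG, and then to establish concentration by a symmetry argument under IRNG. First, I take as the distinguisher the procedure that outputs $D=1$ iff the observed gap $\eph-\ebit$ exceeds $2\delta$, built from two independent runs of $\vars{P}_{\mathrm{Adv}}$: one in which the state is measured in the basis $\theta$ returned by the RNG to obtain $\ebit$, and one in which the state is measured in $\bar\theta=v(\theta)$ to obtain $\eph$. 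Its resource consumption is at most $t_D=2(t+t_{MG})+t_{ex}\leq t'$, so Def.~\ref{defn:QSRPNG} immediately gives
\[
\bigl|\Pr[D=1]_{\PRNG}-\Pr[D=1]_{\TRNG}\bigr|\leq \varepsilon_{\PRNG}.
\]

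Next I bound $\Pr[D=1]_{\TRNG}$. Under IRNG, $\rho^{\TRNG}_{\theta ABE}=\tau_\theta\otimes\rho_{ABE}$, so $\theta$ is uniformly distributed and independent of $ABE$ in each run. Because $v$ is a bijection on $\theta$, the relabeling $\theta\leftrightarrow\bar\theta$ leaves the measure $\tau_\theta$ invariant, so measuring in $\bar\theta$ with uniformly random $\theta$ produces the same joint law on classical outputs (including the $\DPE$ decision, which is a deterministic function of those outputs) as measuring in $\theta$ with uniformly random $\theta$. Combined with independence of the two runs, this shows that the pairs $(\ebit,\DPE)$ and $(\eph,\DPE')$ are i.i.d.; in particular $\Pr[\DPE\land\DPE'=1]=p_\Omega^2$, and conditional on both acceptances, $\ebit$ and $\eph$ are i.i.d.\ random variables in $[0,1]$ with a common mean $\mu$.

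I then invoke Azuma--Hoeffding on each of $\ebit$ and $\eph$. Writing $\ebit=\frac{1}{n}\sum_i X_i$ as a martingale average with bounded increments $|X_i|\leq 1$ (and similarly for $\eph$), Azuma gives $\Pr[|\ebit-\mu|>\delta\mid\DPE=1]\leq e^{-n\delta^2/2}$ and analogously for $\eph$. A union bound then yields
\[
\Pr[\eph-\ebit>2\delta\mid \DPE\land\DPE'=1]\leq 2e^{-n\delta^2/2},
\]
and multiplying by $p_\Omega^2$ gives the IRNG bound. For $\delta>0$, the event $\eph-\ebit>2\delta$ is incompatible with the clamped case $\DPE\land\DPE'=0$ (on which $\hateph-\hatebit=0$ by definition of the switch), so no additional contribution arises outside of $\Omega\cap\Omega'$. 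Adding the $\varepsilon_{\PRNG}$ penalty from the PRNG reduction gives the claim.

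The main obstacle is the IRNG symmetry step: one must verify that under a uniform basis, the entire downstream computation (state preparation, adversarial attack inside $\vars{P}_{\mathrm{Adv}}$, acceptance decision $\DPE$, and bit/phase error calculation) produces identically distributed outputs in the $\theta$-measured and $\bar\theta$-measured runs. This reduces to the fact that averaging a fixed CPTP map over a uniform basis is invariant under the bijection $v$, so that marginalizing over $\theta$ yields the same joint distribution on $(Y_A,Y_B,\DPE)$ as on $(\tilde Y_A,\tilde Y_B,\DPE')$; independence of the two runs then upgrades identical distribution to i.i.d., which is what drives both the $p_\Omega^2$ factor and the concentration.
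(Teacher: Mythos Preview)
Your overall architecture matches the paper's proof: reduce PRNG to IRNG via the $(t',\varepsilon_{\PRNG})$-security of the distinguisher in Fig.~\ref{fig:PRNGFullAdversaryChannel}, use the bijection $\theta\mapsto\bar\theta$ together with $\rho_{\theta ABE}^{\TRNG}=\tau_\theta\otimes\rho_{ABE}$ to see that the bit-error run and the phase-error run are identically distributed under IRNG, extract the $p_\Omega^2$ factor from the switch, and finish with Azuma--Hoeffding. So structurally you are on the paper's route.

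The gap is in your concentration step. You write ``Azuma gives $\Pr[|\ebit-\mu|>\delta\mid\DPE=1]\leq e^{-n\delta^2/2}$'' with $\mu$ a fixed mean. Azuma--Hoeffding does \emph{not} give concentration of $\frac{1}{n}\sum_i e_i$ around its unconditional mean when the $e_i$ are correlated (and here they are, since $\rho_{AB}$ need not be i.i.d.\ across rounds). What Azuma controls is the martingale $X_j=\sum_{i\le j}\bigl(e_i-\expE{e_i\mid e_1^{i-1}}\bigr)$, i.e.\ it yields concentration of $\ebit$ around the \emph{random} compensator $\frac{1}{n}\sum_i\expE{e_i\mid e_1^{i-1}}$, not around a number $\mu$. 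The paper does exactly this: it defines $X_j$ explicitly, checks the martingale property, and obtains
\[
\Pr\!\Bigl[\tfrac{1}{n}\textstyle\sum_i\expE{e_i\mid e_1^{i-1}}\geq \ebit+\delta\Bigr]\leq e^{-n\delta^2/2},\qquad
\Pr\!\Bigl[\eph\geq \tfrac{1}{n}\textstyle\sum_i\expE{e'_i\mid e_1^{'\,i-1}}+\delta\Bigr]\leq e^{-n\delta^2/2}.
\]
The paper then uses that under IRNG the post-measurement laws $\rho_{Y_AY_B|\Omega}$ and $\rho_{\tilde Y_A\tilde Y_B|\Omega'}$ coincide, so the conditional-expectation \emph{functions} $\expE{e_i\mid e_1^{i-1}}$ and $\expE{e'_i\mid e_1^{'\,i-1}}$ agree, and combines the two one-sided bounds. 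Your i.i.d.\ observation is the right symmetry, but it has to be fed into the compensator, not into a nonexistent ``fixed mean'' version of Azuma. Rewriting your Step~3 with the martingale $X_j$ and the matching of the conditional expectations (rather than matching of a single scalar $\mu$) brings your argument in line with the paper.
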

\begin{proof}
Since the algorithm in Fig.~\ref{fig:PRNGFullAdversaryChannel}, $\vars{A}_D^{G^{\PRNG}_{\vars{K}},t'}$, has resource restriction $t_D$, the quantum-secure PRNG cannot be distinguished via this algorithm,
\begin{equation}
    \abs{\Pr[\vars{A}_D^{G^{\PRNG}_{\vars{K}},t_D}=1]-\Pr[\vars{A}_D^{G^{\TRNG},t_D}=1]}\leq\varepsilon_{\PRNG}.
\end{equation}
By definition of the outcome $D$, the bit-phase error gap obtained from PRNG is no much different from that of an IRNG,
\begin{equation}
    \Pr[\hateph-\hatebit>\deltagap]_{\vars{A}_D^{G^{\PRNG}_{\vars{K}},t_D}}-\Pr[\hateph-\hatebit>\deltagap]_{\vars{A}_D^{G^{\TRNG},t_D}}\leq\varepsilon_{\PRNG},
\end{equation}
What remains to evaluate the IRNG bit-phase error rate gap.\\

We first show that the expected value of the bit and phase error matches for the IRNG case.
When we have $\DPE=0$ or $\DPE'=0$, both bit and phase error would be set to 0, $\hateph=\hatebit$, and we can expand the probability
\begin{equation}
    \Pr[\hateph-\hatebit>\deltagap]_{\vars{A}_D^{G^{\TRNG},t_D}}=p_{\Omega}^2\Pr[\eph-\ebit>\deltagap|\Omega,\Omega']_{\vars{A}_D^{G^{\TRNG},t_D}}.
\end{equation}
This conditional probability matches the probability of the phase-bit error gap exceeding $\delta_{gap}$, when the phase error and bit error are drawn from the conditional state $\rho_{\theta ABE|\Omega}^{\TRNG}$ in both instances instead.
By assumption, when IRNG is utilised, $\theta$ is independent of the prepared quantum state.
Furthermore, since $\Omega$ is decided from $E$, $\rho_{\theta ABE|\Omega}^{\TRNG}=\tau_{\theta}\otimes\rho_{ABE|\Omega}^{\TRNG}$.
Let $e_i=Y_{A,i}\oplus Y_{B,i}$ and $e_i'=\tilde{Y}_{A,i}\oplus \tilde{Y}_{B,i}$ be the bit error and phase error of round $i$ of the two instances.
Since $\theta$ is uniformly random, the post-measurement state $\rho_{Y_AY_B|\Omega}$ and $\rho_{\tilde{Y}_A\tilde{Y}_B|\Omega'}$ are identical, since the uniformly random $\theta$ remains uniformly random (after tracing out $\theta E$).
As such, noting that errors between rounds can be correlated, we have that the expected bit and phase error matches, conditioned on earlier rounds, $\expE{e_i|e_1^{i-1}}=\expE{e_i'|e_1^{'i-1}}$, where the indexing $e_i^j=e_i\cdots e_j$.\\

Here, we demonstrate that both the bit and phase error are close to some combination of the conditional expectation value $\expE{e_i|e_1^{i-1}}$ to find the bit-phase error gap.
We define
\begin{equation}
    X_j=\sum_{i=1}^j e_i-\expE{e_i|e_1^{i-1}},
\end{equation}
where $e_1^{i-1}$ labels the first $i-1$ outcomes for error measurement.
The observed bit error can be expressed as
\begin{equation}
    \ebit=\frac{1}{n}\sum_{i=1}^{n}e_i=\frac{X_n}{n}-\frac{1}{n}\sum_{i=1}^{n}\expE{e_i|e_1^{i-1}}.
\end{equation}
The variable $X_j$ is bounded, the set $n$ is finite and
\begin{equation}
\begin{split}
    &\expE{X_j|X_1,\cdots,X_{j-1}}\\
    =&X_{j-1}+\expE{e_j|X_1,\cdots,X_{j-1}}-\expE{e_j|e_1^{j-1}}\\
    =&X_{j-1},
\end{split}
\end{equation}
since $X_1,\cdots,X_{j-1}$ values can be computed from $e_1^{j-1}$ and vice versa, using the values of $\expE{e_i|e_1^{i-1}}$ recursively.
These properties implies that $X_j$ is a martingale, and since $\abs{X_j-X_{j-1}}\leq 1$, we can apply the Azuma-Hoeffding inequality~\cite{Mitzenmacher2017_Prob},
\begin{equation}
\begin{gathered}
    \Pr[X_n-X_0\leq-n\delta ]\leq e^{-\frac{n\delta^2}{2}}\\
    \Pr[X_n-X_0\geq n\delta ]\leq e^{-\frac{n\delta^2}{2}}.
\end{gathered}
\end{equation}
Performing the same analysis for $e_{ph}$, we arrive at
\begin{equation}
\begin{gathered}
    \Pr[\sum_{i=1}^{n}\frac{\expE{e_i|e_1^{i-1}}}{n}\geq \ebit+\delta]\leq e^{-\frac{n\delta^2}{2}}\\
    \Pr[\eph\geq\sum_{i=1}^n\frac{\expE{e'_i|e_1^{'i-1}}}{n}+\delta]\leq e^{-\frac{n\delta^2}{2}},
\end{gathered}
\end{equation}
where a different (upper or lower) bound is utilised for the two instances.\\

Since the states are identical, the sum of conditional expectations of $e_i$ matches.
As such, the bit-phase error gap is
\begin{equation}
    \Pr[\eph-\ebit\geq 2\delta|\Omega,\Omega']_{\vars{A}_D^{G^{\TRNG},t_D}}\leq 2e^{-\frac{N_{P_2,1}^{\tol}\delta^2}{2}}.
\end{equation}
Defining $\deltagap=2\delta$, the bit-phase error of the algorithm is
\begin{equation}
    \Pr[\eph-\ebit>2\delta]_{\vars{A}_{\ebit,\eph}^{G^{\PRNG}_{\vars{K}},t_D}}\leq 2p_{\Omega}^2e^{-\frac{n\delta^2}{2}}+\varepsilon_{\PRNG},
\end{equation}
by combining the results.
\end{proof}

With this result, we move to prove Thm.~\ref{thm:PRNGPhaseErrBnd}, which guarantees a lower bound on the phase error, given that the bit error rate has a lower bound.

\begin{theorem}
\label{thm:PRNGPhaseErrBnd}
    Suppose that the basis $\theta$ is generated from a $(t',\varepsilon_{\PRNG})$-quantum secure PRNG with $t'\geq t_D$ and let $\rho_{\theta ABE}^{\PRNG}$ be the state prepared by the adversary (via protocol $\vars{P}_{Adv}$), where $A$ and $B$ are n-qubit states, and $\theta$ remain independent when generated by an IRNG, $\rho_{\theta ABE}^{\TRNG}=\tau_{\theta}\otimes\rho_{ABE}$. 
    Let $\Omega:=\{\DPE=1\}$ be an event that can be decided from $E$, and $\ebittol$ be the tolerance value for an upper bound on the bit error value, i.e. $\Pr[\ebit\geq \ebittol|\Omega]_{\vars{M}_{\theta A\rightarrow Y_A}\circ \vars{M}_{\theta B\rightarrow Y_B}(\rho_{\theta ABE|\Omega}^{\PRNG})}\leq\epsbit$.
    Furthermore, let the $X$ and $Z$ basis measurement operators to have overlap $c_{q_1}:= \max_{\theta_i}\max_{y_i,\tilde{y}_i}\norm{F^{\theta_i,y_i}_{A_i}(F^{\bar{\theta}_i,\tilde{y}_i}_{A_i})^{\dagger}}_{\infty}^2$, with $q_1=\log_2\frac{1}{c_{q_1}}$.
    Then, an upper bound on the phase error for state $\rho_{\theta ABE|\Omega}^{\PRNG}$ after measurement is
    \begin{equation*}
    \begin{gathered}
        \Pr[\eph\geq \ephtol]_{\vars{M}_{\bar{\theta}A\rightarrow\tilde{Y}}\circ\vars{M}_{\bar{\theta}B\rightarrow\tilde{Y}'}(\vars\rho_{\theta ABE|\Omega}^{\PRNG})}\leq \epsph\\
        \ephtol=\ebittol+2\sqrt{\frac{2}{n}\ln\frac{1}{\varepsilon_{\TRNG}}}\\
        \epsph=2\varepsilon_{\TRNG}+\frac{\varepsilon_{\PRNG}}{p_{\Omega}^2}+\epsbit.
    \end{gathered}
    \end{equation*}
\end{theorem}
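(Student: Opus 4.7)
The plan is to reduce the phase-error tail bound to a union of two events: a small bit--phase error gap, controlled by Thm.~\ref{thm:PRNGGap}, and a bit-error tail, controlled by hypothesis. By construction $\ephtol = \ebittol + 2\delta$ with $\delta = \sqrt{(2/n)\ln(1/\varepsilon_{\TRNG})}$, so $\{\eph \geq \ephtol\} \subseteq \{\eph - \ebit > 2\delta\} \cup \{\ebit \geq \ebittol\}$. Conditioning on $\Omega$ and applying a union bound gives
\begin{equation*}
\Pr[\eph \geq \ephtol \mid \Omega] \leq \Pr[\eph - \ebit > 2\delta \mid \Omega] + \Pr[\ebit \geq \ebittol \mid \Omega],
\end{equation*}
with the second term already bounded by $\epsbit$ by the assumption of the theorem.

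To bound the first term, I would invoke Thm.~\ref{thm:PRNGGap} applied to the two-run setup of Fig.~\ref{fig:PRNGFullAdversaryChannel}, where $\ebit$ and $\eph$ arise from two independent runs of $\vars{P}_{Adv}$ sharing the same PRNG-generated $\theta$. That theorem yields $\Pr[\eph - \ebit > 2\delta] \leq 2 p_\Omega^2 e^{-n\delta^2/2} + \varepsilon_{\PRNG}$ on the joint experiment with $\hateph,\hatebit$ zeroed outside $\Omega \cap \Omega'$. Since $\Omega$ is decided from $E$ alone, the two runs are conditionally independent given $\theta$ and their joint acceptance probability equals $p_\Omega^2$ in the IRNG baseline; combined with PRNG-indistinguishability this lets me divide through to obtain $\Pr[\eph - \ebit > 2\delta \mid \Omega,\Omega'] \leq 2 e^{-n\delta^2/2} + \varepsilon_{\PRNG}/p_\Omega^2$. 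Identifying this conditional with $\Pr[\eph - \ebit > 2\delta \mid \Omega]$ via the marginal equivalence between the phase run and the single-run process specified in the statement, and using the calibration $2 e^{-n\delta^2/2} = 2\varepsilon_{\TRNG}$ forced by the choice of $\delta$, the total becomes $2\varepsilon_{\TRNG} + \varepsilon_{\PRNG}/p_\Omega^2 + \epsbit = \epsph$, with the correction $2\delta$ in $\ephtol$ matching exactly the Azuma deviation coming out of Thm.~\ref{thm:PRNGGap}.

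The main obstacle I expect is the bookkeeping around the conditioning: the statement's $\Omega$ refers to the single-run state $\rho_{\theta ABE\mid\Omega}^{\PRNG}$, whereas Thm.~\ref{thm:PRNGGap} naturally produces a joint bit/phase gap event on two runs. The two key observations needed to make the identification rigorous are that $\Omega$ depends only on $E$ (so conditioning commutes with the choice of $AB$-measurement basis, and the single-run marginal agrees with either run of the joint setup) and that the joint acceptance probability is exactly $p_\Omega^2$ under IRNG, giving a clean $\varepsilon_{\PRNG}/p_\Omega^2$ slack after the PRNG-swap. Once this is in place, the rest is arithmetic: calibrating $\delta$, collecting the three error terms, and reading off $\ephtol$ and $\epsph$.
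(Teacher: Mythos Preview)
Your proposal is correct and follows essentially the same route as the paper: both arguments combine the bit--phase gap bound from Thm.~\ref{thm:PRNGGap} (divided by $p_\Omega^2$ to pass to the conditional) with the assumed bit-error tail via a union bound, then calibrate $\delta$ so that $e^{-n\delta^2/2}=\varepsilon_{\TRNG}$. The only cosmetic difference is ordering---the paper performs the union bound inside the two-run setup (conditioned on $\Omega\cap\Omega'$) and then argues that the resulting event depends only on the phase run, whereas you state the union bound first and then invoke the marginal identification; the conditioning bookkeeping you flag as the ``main obstacle'' is exactly the step the paper handles by noting that the final event no longer depends on $\ebit$.
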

\begin{proof}
From Thm.~\ref{thm:PRNGGap}, we have that
\begin{equation}
    \Pr[\eph-\ebit\geq 2\delta|\Omega,\Omega']_{\vars{A}_{\ebit,\eph}^{G^{\PRNG}_{\vars{K}},t_D}}\leq 2e^{-\frac{N_{P_2,1}^{\tol}\delta^2}{2}}+\frac{\varepsilon_{\PRNG}}{p_{\Omega}^2}.
\end{equation}
where the bit and phase error are obtained from two separate instances of bit value measurement and phase value measurement on identical $\rho_{\theta ABE|\Omega}^{\PRNG}$ state.
Given that the bit error value measured is, with probability larger than $1-\epsbit$, larger than $\ebittol$, we are guaranteed that the phase error value measured on the same state satisfies
\begin{equation}
    \Pr[\eph\geq \ebittol+2\delta|\Omega,\Omega']_{\vars{A}_{\ebit,\eph}^{G^{\PRNG}_{\vars{K}},t_D}}\leq \epsph,
\end{equation}
where $\epsph=2e^{-\frac{N_{P_2,1}^{\tol}\delta^2}{2}}+\frac{\varepsilon_{\PRNG}}{p_{\Omega}^2}+\epsbit$.
Since the probability no longer depends on the bit error rate $\ebit$, we can replace $\vars{A}_{\ebit,\eph}^{G^{\PRNG}}$ with simply a phase error measurement on the common input state,
\begin{equation}
    \Pr[\eph\geq \ebittol+2\delta|\Omega]_{\vars{M}_{\bar{\theta}A\rightarrow\tilde{Y}}\circ\vars{M}_{\bar{\theta}B\rightarrow\tilde{Y}'}(\vars\rho_{\theta ABE|\Omega}^{\PRNG})}\leq \epsph.
\end{equation}
Further letting $\varepsilon_{\TRNG}:=e^{-\frac{n\delta^2}{2}}$, a simple rearrangement would yield the theorem.
\end{proof}

With the two theorems in place, we proceed to prove Thm.~\ref{thm:PRNGEUR}.
\begin{proof}[Proof of Thm.~\ref{thm:PRNGEUR}]
For the state $\rho_{\theta ABE}^{\PRNG}$, we are interested in finding the conditional smooth-min entropy, of $Y$ conditioned on $E$, for the PRNG state conditioned on event $\Omega$, $\Hmin^{\epssm}(Y|\theta E)_{\vars{M}_{\theta A\rightarrow Y}(\rho_{\theta ABE|\Omega}^{\PRNG})}$.
Applying the entropic uncertainty relation presented in Thm.~\ref{thm:StdEUR} with the bijective function $v(\alpha)=\bar{\alpha}$, we get
\begin{equation}
\begin{split}
    &\Hmin^{\epssm}(Y|\theta E)_{\vars{M}_{\theta A\rightarrow Y}(\rho_{\theta ABE|\Omega}^{\PRNG})}\\
    \geq &q-\Hmax^{\epssm}(\tilde{Y}|\theta B)_{\vars{M}_{\bar{\theta} A\rightarrow Y}(\rho_{\theta ABE|\Omega}^{\PRNG})}\\
    \geq &q-\Hmax^{\epssm}(\tilde{Y}_A|\tilde{Y}_B)_{\vars{M}_{\bar{\theta} A\rightarrow \tilde{Y}_A}\circ\vars{M}_{\bar{\theta} B\rightarrow \tilde{Y}_B}(\rho_{\theta ABE|\Omega}^{\PRNG})},
\end{split}
\end{equation}
where the second inequality is obtained by the data processing inequality~\cite{Tomamichel2015_ITBook}.\\

The measurement operators are assumed to be independent and identical operators with overlap 
\begin{equation}
    c_{q_1}:= \max_{\theta_i}\max_{y_i,\tilde{y}_i}\norm{F^{\theta_i,y_i}_{A_i}(F^{\bar{\theta}_i,\tilde{y}_i}_{A_i})^{\dagger}}_{\infty}^2,
\end{equation}
and applied independently on each subsystem $A_i$.
We can thus expand $q=\log_2\frac{1}{c_q}$ with
\begin{equation}
    c_q=\max_{\theta}\max_{y,\tilde{y}}\norm{F^{\theta,y}_A(F^{\bar{\theta},\tilde{y}}_A)^{\dagger}}_{\infty}^2,
\end{equation}
where the bijective function is a flip from $\theta$ to $\bar{\theta}$.
For any $\theta$, the norm can be simplified as
\begin{equation}
\begin{split}
    \norm{F^{\theta,y}_A(F^{\bar{\theta},\tilde{y}}_A)^{\dagger}}_{\infty}^2=&\norm{\otimes_{i=1}^n(F^{\theta_i,y_i}_{A_i})(\otimes_{i=1}^n(F^{\bar{\theta}_i,\tilde{y}_i}_{A_i}))^{\dagger}}_{\infty}^2\\
    =&\Pi_{i=1}^n\norm{F^{\theta_i,y_i}_{A_i}(F^{\bar{\theta}_i,\tilde{y}_i}_{A_i})^{\dagger}}_{\infty}^2
\end{split}
\end{equation}
Taking the maximum $\theta$ would mean taking the maximum $\theta_i$ for each index $i$.
As such, we have $c_q=c_{q_1}^n$, which yields $q=nq_1$.\\

From Thm.~\ref{thm:PRNGPhaseErrBnd}, we know that the state $\vars{M}_{\bar{\theta} A\rightarrow \tilde{Y}_A}\circ\vars{M}_{\bar{\theta} B\rightarrow \tilde{Y}_B}(\rho_{\theta ABE|\Omega}^{\PRNG})$ has a phase error upper bounded by $\ephtol$ with probability larger than $1-\epsph$.
As such, there exist a state $\sigma_{\tilde{Y}_A\tilde{Y}_B}$ that is $\epsph$-close in trace distance with a maximum phase error of $\ephtol$.
The purified distance of these states can be bounded by $\sqrt{2\epsph}$~\cite{Tomamichel2015_ITBook}.
Therefore, if we set $\epssm\geq\sqrt{2\epsph}$, we can upper bound the smooth max-entropy by the max-entropy evaluated on $\sigma$,
\begin{equation}
    \Hmax^{\epssm}(\tilde{Y}_A|\tilde{Y}_B)_{\vars{M}_{\bar{\theta} A\rightarrow \tilde{Y}_A}\circ\vars{M}_{\bar{\theta} B\rightarrow \tilde{Y}_B}(\rho_{\theta ABE|\Omega}^{\PRNG})} \leq \Hmax(\tilde{Y}_A|\tilde{Y}_B)_{\sigma_{\tilde{Y}_A\tilde{Y}_B}}.
\end{equation}
From Ref.~\cite{Renes2012_MaxEnt}, the max-entropy can be computed from the minimum number of bits required to correct for the error.
Since the two values can only differ in at most $\lfloor n\ephtol\rfloor$ locations, we have~\cite{Tomamichel2012_BB84FiniteKey}
\begin{equation}
\begin{split}
    \Hmax(\tilde{Y}_A|\tilde{Y}_B)_{\sigma_{\tilde{Y}_A\tilde{Y}_B}}\leq&\log_2\sum_{i=0}^{\lfloor n\ephtol\rfloor} {n \choose i}\\
    \leq &n\hbin(\ephtol).
\end{split}
\end{equation}
which completes the proof.
\end{proof}

\section{Round-Efficient Client Authentication}
\label{app:CAProtocol}
\subsection{Protocol Description}

We developed a round-efficient client authentication (CA) protocol requiring only two communication steps.
The key modifications made to the proposed QAKE protocol with pseudorandom basis selection are:
\begin{enumerate}
    \item Removing server validation since they are not required for client authentication. The validation is instead performed at the start of the next round, where it is required to ensure updated secrets are secure.
    \item Replacing label agreement with the server utilising his own index $\alpha'$ for the protocol run. Any index mismatch would be picked up during client or server validation, and would be corrected with an additional round.
\end{enumerate}
Incorporating these changes, alongside compression of communication rounds, leads to a protocol with two rounds of communication summarised in Fig.~\ref{fig:RoundEfficientCA}.\\

The client and server pre-share secrets and functions similar to the AKE protocol, except authentication key $K_2$ and hash function $h_2$, which are not necessary in this protocol.
The protocol is described in detail below.
\begin{protocol}{Round-Efficient Client Authentication}
\textit{Goal.} Server authenticates client.
\begin{enumerate}
    \item \textbf{Server's Tag and Label}: Server sends its authentication tag $T_{\SV,\alpha'}$ and index $\alpha'$ to the client.
    \item \textbf{Server State Preparation}: The server generates a n-bit basis string using the basis seed, $\theta_{\alpha'}'=g(\tilde{\theta}_{\alpha'}')$, randomly chooses a n-bit string $x\in\{0,1\}^n$ and a n-trit string $v\in\{0,1,2\}^n$ according to probability distribution $p_v$. The server then sends $n$ phase-randomised coherent BB84 states $\left\{\rho_{Q_i}^{\theta_i',x_i,\mu_{v_i}}\right\}_{i\in[1,n]}$, with basis $\theta_i'$, bit value $x_i$, and intensity $\mu_{v_i}$, to the client, acting as the ``challenge". 
    \item \textbf{Client Measurement}: The client measures subsystems $Q_i$ using basis $\theta_{\alpha}'=g(\tilde{\theta}_{\alpha})$, and records outcome $x_i'$. If the client detects no clicks, it declares $x_i'=\perp$. If multiple clicks are detected, the bit value $x_i'\in\{0,1\}$ is randomly selected.    
    \item \textbf{Server Validation and Client Label Alignment}: Based on the received label $\alpha_\rr'$, the client aligns its label. If $\alpha_\rr'>\alpha$, the client updates $\alpha=\alpha_\rr'$ and sets $\DC=1$. If $\alpha_\rr'=\alpha$, the server checks if $\tilde{T}_{\SV,\alpha}=T_{\SV,\alpha,\rr}$. If it matches, the client sets $\DC=1$. Otherwise, the client sets $\DC=0$ and replaces any further responses with a random string. After the alignment is complete, the client sends $\DC$ and $\alpha$ to the server.
    \item \textbf{Test Round Announcement}: The client records the detection rounds, $P=\{i:x_i'\neq\perp\}$, and randomly splits it into two sets, $\abs{P_1}=\lceil f_{P_1}\abs{P}\rceil$. The client announces $P_1$, $P_2$ and $x_{P_1}'$, and the server receives the announcement $P_{1,r}$, $P_{2,r}$ and $x_{P_1,r}'$.
    \item \textbf{Error Correction}: The client computes the length of the syndrome, $\abs{S}=\fEC\hbin(\ebittol')$. The client generates a syndrome $s=\fsyn(x_{P_2}')$ and forwards it to the server. The server receives the syndrome $s_\rr$ and computes the corrected bit string $\hat{x}_{P_2}'=\fsyndec(x_{P_{2,\rr}},s_\rr)$.
    \item \textbf{Client Validation}: The client generates a tag $t_{\CV}=h_1(\Kh_1,x_{P_1}'||P_1||P_2||x_{P_2}'||s)\oplus \KOTP_{1,\alpha}$ and forwards it to the server. The server receives tag $t_{\CV,\rr}$ and generates verification tag $\tilde{t}_{\CV}=h_1(\Kh_1,x_{P_{1,\rr}}||P_{1,\rr}||P_{2,\rr}||\hat{x}_{P_2}'||s_\rr)\oplus \KOTP_{1,\alpha'}$ and checks if $t_{\CV,\rr}=\tilde{t}_{\CV}$. If the tags matches, the server validates the client and output $D_{\CV}=1$, otherwise, he sets $D_{\CV}=0$.
    \item \textbf{Parameter Estimation}: The server estimates a lower bound on single-photon events in the set $P_{2,\rr}$ and $P_{1,\rr}$, $\hat{N}^{\LB}_{P_{2,\rr},1}$ and $\hat{N}^{\LB}_{P_{1,\rr},1}$, and an upper bound on the single-photon bit error rate $\hatebitt{P_{1,\rr},1}^{\UB}$, via decoy-state analysis, and an upper bound on the bit error rate in set $P_{2,\rr}$, $\hatebitt{P_{2,\rr}}^{\UB}$, via the Serfling bound. The server checks if $\abs{P_{1,\rr}}=\lceil f_{P_1}\abs{P_1}\rceil$, $\abs{P_1}\geq P_{1,\tol}$, $\hat{N}^{\LB}_{P_{2,\rr},0}\geq N_{P_{2,\rr},0}^{\tol}$, $\hat{N}^{\LB}_{P_{2,\rr},1}\geq N_{P_{2,\rr},1}^{\tol}$, $\hatebitt{P_{1,r},1}^{\UB}\leq \ebitt{1,tol}$ and $\ebitt{P_{1,r}}\leq \ebittol$. If these are satisfied, the server sets $\DPE=1$, otherwise it sets $\DPE=0$.
    \item \textbf{Secret Update and Label Update}: The server compares the index received $\alpha_\rr$, $\DC$ and its own index $\alpha'$. The server first updates $\alpha_\rr=\alpha_\rr+1$ if $\DC=0$. If updated $\alpha_\rr>\alpha'$, the server sets $\alpha'=\alpha_\rr-1$ and sets $D_{\SA}=0$, otherwise he sets $D_{\SA}=1$ (aligned indices). The client decides whether to generate the server's tag and update the basis seed based on $\DC$, while the server decides on the success of the authentication round via $\FS=\DPE\land D_{\CV}\land D_{\SA}$. When the client (resp. the server) decides to generate the tag and update the basis seed, i.e. $\DC=1$ (resp. $\FS=1$), it performs privacy amplification $\tilde{T}_{\SV,\alpha}||\tilde{\theta}_{\alpha}=\hPA(R,x_{P_2}')$ (resp. $T_{\SV,\alpha'}||\tilde{\theta}_{\alpha'}'=\hPA(R,\hat{x}_{P_2}')$). If basis seed update and server tag generation is not performed, then the client (resp. the server) would update its label, $\alpha=\alpha+1$ (resp. $\alpha'=\alpha'+1$).
\end{enumerate}
\end{protocol}

\begin{figure}[!h]
    \centering
    \includegraphics[width=0.55\textwidth]{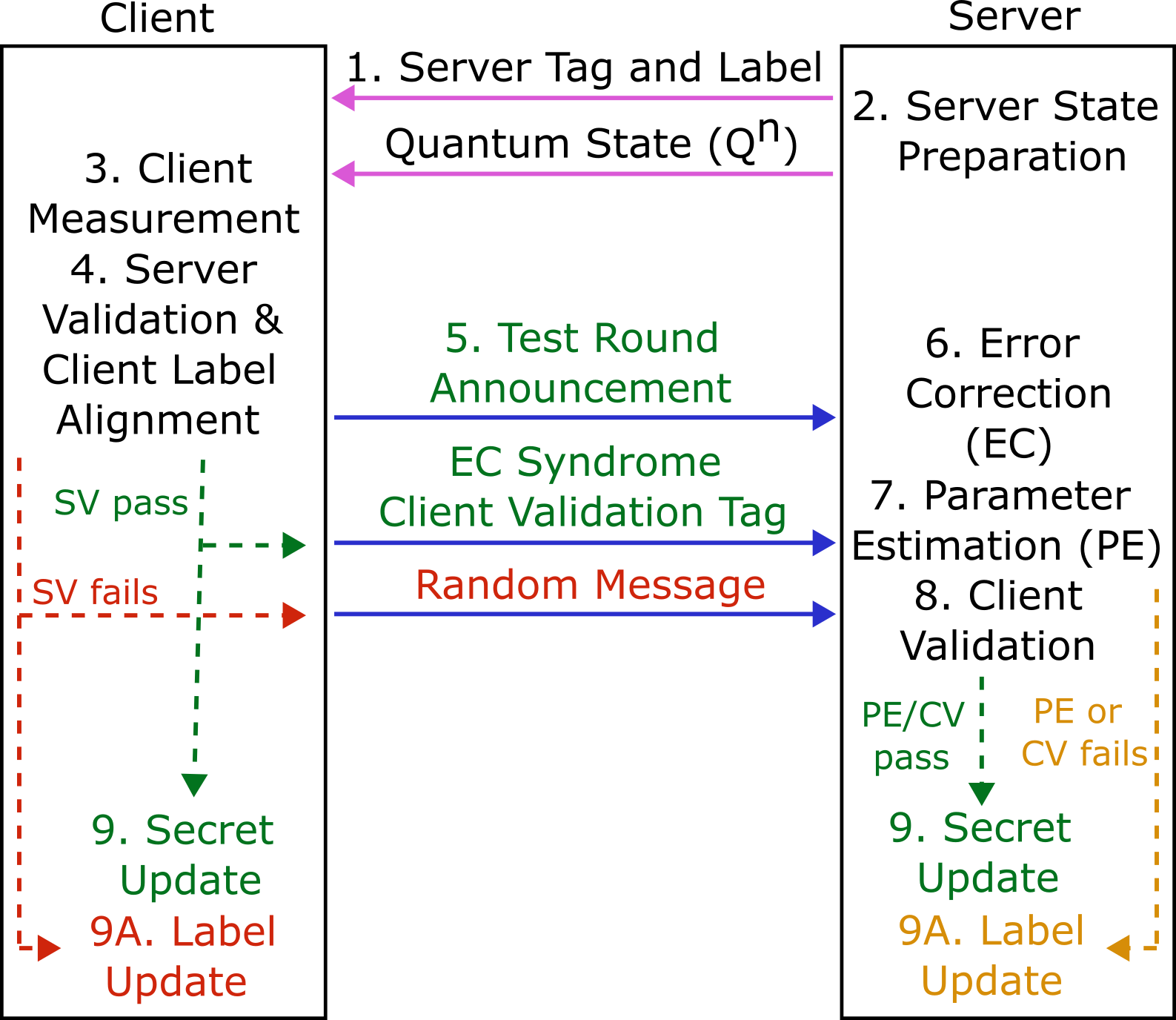}
    \caption{Summary of round-efficient CA protocol with two rounds of communication. In the first communication round, the server sends quantum states to the client, alongside its tag for validation by the client. After validation and client's measurement, the response would consist of the test round results, client validation tag for server to validate the client's identity, and information for error correction. At the end of the protocol, both parties would separately decide whether to update secrets or the label (indicating a protocol failure).}
    \label{fig:RoundEfficientCA}
\end{figure}

\subsection{Security Definition}

The security definition for the protocol follows from standard client authentication security requirements.
We begin with the first condition of robustness, which guarantees that the protocol should always succeed when a valid client approaches the server.
We note that since the protocol has no label agreement step, any mismatch in indices requires 2 rounds to resolve.
However, since index mismatch is unlikely in the honest case, we relax the robustness requirement to ensure that authentication passes in 2 rounds (similar to how clients has to try a second time to login when the first login fails).
As such, we formally define this two round robustness as
\begin{definition}[$\varepsilon_{2,\rob}$-2-round-robustness]
    A CA protocol is $\varepsilon_{2,\rob}$-2-round-robustness if it passes with high probability in the absence of any adversary in either of the next two rounds, i.e. it fails to pass both rounds with low probability
    \begin{equation*}
        \Pr[F_{\textup{S},j}=0,F_{\textup{S},j+1}=0]\leq \epsrob,
    \end{equation*}
    where $F_{\textup{S},i}$ refers to the server's label in the $i$-th round, and $j\in[1,m-1]$, where $m$ is the maximum number of rounds of the protocol.
\end{definition}
We note here that there can be other ways of defining the robustness condition.
One such method is to assume that all prior rounds of the protocol are performed in the absence of an adversary.
This is unlikely to result in any index mismatch at the current protocol round, and thus robustness can be guaranteed with high probability with a single round.\\

The second security condition in client authentication requires that a server interacting with an invalid client (or adversary) return an authentication failure with high probability.
As such, when the authentication passes, the server can be highly confident that the client is valid.
Therefore, we formally define the client authentication condition as
\begin{definition}[$\epsCV$-client validation]
    An CA protocol has $\epsCV$-client validation if the authentication fails with high probability when the adversary attempts to impersonate the client, i.e. \begin{equation*}
        \Pr[\FS=1|\DC=\phi]\leq\epsCV.
    \end{equation*}
\end{definition}
While there is no imposition on the secrets of the protocol, they would obey additional conditions for future rounds of the protocol to remain secure.
This separate security condition has the same definition as shared secrets privacy in QAKE, albeit with different ideal input and output states.

\subsection{Overall Protocol Security}
\label{app:CAProtocolSec_OverallSec}

We can similarly define a $m$-round protocol of CA, where all rounds has to satisfy the security condition, with each round satisfying some $\varepsilon_{\secc,i}$-security condition defined by the trace distance from the ideal output state with $\Pr[\FS=1,\DC=\phi]=0$.
We follow the same idea from Sec.~\ref{sec:AKEProtocolSec} to reduce the multi-round analysis to single round, where additional shared secret preservation is necessary in the intermediate steps.
However, since there is only a single security condition, client authentication, we shall examine both conditions together, defining the overall security as
\begin{equation}
    \Delta(\rho_{\DC\FS LSE},\rhoidealint_{\DC\FS LSE})\leq\epssecint,
\end{equation}
and it remains to identify the set of ideal input and output states.\\

The main difference in the identification of ideal input/output states for the CA protocol is the ability of the adversary to gain partial information of secrets and partial control of the client's action.
Firstly, the client's tag can almost always be generated since there is no server check during the round itself.
This leads to the authentication key $\Kh_1$ and hash masking key $\KOTP_{1,\beta}$ being correlated with the adversary, although this correlation is via the tag $T_{\CV}$.
This is without a corresponding trace of the hash masking key in the round since the client's index is not updated, though one expects the corresponding trace to occur when the client performs the server check in later rounds.
Secondly, the generation of the server's tag verification value by the client, $\tilde{T}_{\SV,\alpha}$, can be sometimes influenced by the adversary when it has prior knowledge of the basis generation seed.
However, the adversary would not have direct access to this value since it is not announced.
To account for these two leakages/controls available to the adversary, we define two oracles, the first $\bigepsilon_{\Kh_1\KOTP_{1,\beta}M\rightarrow T_{\CV}}$, allows the adversary to choose any message $M$ and have the corresponding client tag generated for him.
The second $\bigepsilon_{RM\rightarrow \tilde{T}_{\SV,\alpha}}$ allows the adversary to select any message, and generate the server authentication check tag for the client, who would utilise it to perform checks in latter rounds when necessary.\\

To monitor the secrets and oracles accessible, we introduce a new label $\alphaU$, which indicates the client's index if the server authentication is performed in the same round.
$\alphaU$ is updated as follows: (1) if $\DC=\phi$ within a round, leave $\alphaU$ unchanged, (2) if $\alpha_\rr'\geq\alphaU$, set $\alphaU=\alpha_\rr'$, and (3) if $\DC=1$ and $\FS\neq 1$, set $\alphaU=\alphaU+1$. 
The second scenario occurs when the client's index is updated to an index that may not fail in the next round, while the third scenario requires that the server not pass server validation since server did not participate/did not select a server authentication tag that matches the client's.
A general input state can be defined as a mixture of two variants of states $\rho^{in}=\sum_{jj'}p_{jjj'}\rho_{LSE}^{jjj'}+p_{j,j+1,j'}\rho_{LSE}^{j,j+1,j'}$.
These variants are
\begin{equation}
\begin{gathered}
    \rho^{jjj'}_{LSE}=\dyad{jjj'}_{\alpha\alphaU\alpha'}\otimes\rho_{SE}^{jjj'}\\
    \rho_{SE}^{jjj'}=\tau_{R\KOTP_{1,[j,m]}}\otimes\tilde{\tau}_{\tilde{\theta}_{[j_{\max},m]}\tilde{\theta}_{[j_{\max},m]}'}\otimes\bigepsilon_{\Kh_1\KOTP_{1,[j',j-1]}E'\rightarrow E}(\tau_{\Kh_1\KOTP_{1,[j',j-1]}}\otimes\rho_{E'})
\end{gathered}
\end{equation}
where $j_{\max}=\max\{j,j'\}$, and $E'$ contains the lost secrets $\tilde{\theta}_{[1,j_{\max}-1]}\tilde{\theta}_{[1,j_{\max}-1]}'$, and the state
\begin{equation}
\begin{gathered}
    \rho^{j,j+1,j'}_{LSE}=\dyad{j,j+1,j'}_{\alpha\alphaU\alpha'}\otimes\rho_{SE}^{j,j+1,j'}\\
    \rho_{SE}^{j,j+1,j'}=\tau_{\KOTP_{1,[j+1,m]}}\otimes\tilde{\tau}_{\tilde{\theta}_{[\jU_{\max},m]}\tilde{\theta}_{[\jU_{\max},m]}'}\otimes\bigepsilon_{R\Kh_1\KOTP_{1,\jOTP}E'\rightarrow \tilde{T}_{\SV,j}E}(\tau_{R\Kh_1\KOTP_{1,\jOTP}}\otimes\rho_{E'}),
\end{gathered}
\end{equation}
where $\jOTP=\{j',\cdots,j-1\}\cup\{j\}$, $\jU_{\max}=\max\{j+1,j'\}$, and the sequence of oracle access are ordered by index, with the $R$-oracle as the final channel.\\

The general output state (``ideal" output) for input state $\rho^{in,j\jU j'}$, where $\jU\in[j,j+1]$, is defined as
\begin{equation}
\begin{split}
    \rhooutt{j\jU j'}=&\dyad{\phi0}_{\DC\FS}\otimes\sum_{\tilde{j}'\geq j'+1}p_{\phi0,\tilde{j}'}\otimes\rho_{LSE}^{j\jU,\tilde{j}'}+p_{0\phi}\dyad{0\phi}_{\DC\FS}\otimes\rho_{LSE}^{j+1,j+1,j'}\\
    &+\dyad{1\phi}_{\DC\FS}\otimes\sum_{\tilde{j}\geq j} p_{1\phi,\tilde{j}}\rho_{LSE}^{\tilde{j},\tilde{j}+1,j'}+\dyad{00}_{\DC\FS}\otimes\sum_{\tilde{j}'\geq j'+1}p_{00,\tilde{j}'}\rho_{LSE}^{j+1,j+1,\tilde{j}'}\\
    &+\dyad{10}_{\DC\FS}\otimes\sum_{\tilde{j}\geq j,\tilde{j}'\geq j'+1} p_{10,\tilde{j}\tilde{j}'}\rho_{LSE}^{\tilde{j},\tilde{j}+1,\tilde{j}'}+p_{11}\dyad{11}_{\DC\FS}\otimes\rho_{LSE}^{j'j'j'},
\end{split}
\end{equation}
where $\tilde{j}\geq j$ is the updated index of the client since $\alpha'_\rr>\alpha$ can occur, and $\tilde{j}'\geq j'+1$ is the update index of the server since $\alpha_\rr>\alpha'$ can occur.
Note that unlike the QAKE protocol where $\DC=1$ is disallowed when $\FS=\phi$, the lack of a check for server validation in the same round means $\DC=1$ can be present, since $\tilde{T}_{\SV,\alpha'}$ can be made public or requested from the server via a separate server-side attack round.
However, the mismatch of $\alpha$ and $\alphaU$ guarantees that the next round with client involvement would result in $\DC=0$, and an update of $\alpha$ to indicate no confidence in the updated secrets.\\

With the following output state description where $\DC\FS$ can be incorporated into $E$, we can observe that the output state is a linear combination of the input state components $\rho_{LSE}^{\alpha\alphaU\alpha'}$, thereby satisfying condition 3 in Thm.~\ref{thm:Multi_to_Single_Red}.
Condition 1 is trivially satisfied since all secrets remain private and indices begin with 1, matching $\rho_{LSE}^{111}$.
As such, the security analysis reduces to the single-round security analysis for a single component.\\

Combining the results presented in later sections, the overall security parameter 
$\epssecint$ can be given by
\begin{multline}
    \Delta(\vars{P}(\rho^{j\jU j'}_{LSE}), \rhooutt{j\jU j'})\leq 4\sqrt{2\epsph}+2\varepsilon_2+5\epsmatch+\frac{1}{2}\times 2^{-\frac{1}{2}[N_{P_2,1}^{\tol}[1-\hbin(\ephtol')]-l_{\tilde{\theta}}-\log_2\abs{\vars{T}_{\SV}}\abs{\vars{T}_{\CV}}-\leakEC]}\\
    +\sqrt{\abs{\vars{T}_{\CV}}\epsMACa-1+2^{\log_2\abs{\vars{T}_{\CV}}+\log_2\left(\frac{2}{\varepsilon_2}+1\right)-N_{P_2,1}^{\tol}[1-\hbin(\ephtol')]+\leakEC}}+5\epsMACa+2\epsds+\frac{2}{\abs{\vars{T}_{\CV}}}+\frac{1}{\abs{\vars{T}_{\SV}}}.
\end{multline}
A bound on $\epssec$ for the final round, where secrets no longer have to be maintained, can be computed from Thm.~\ref{thm:CAUnequalalpha} and Thm.~\ref{thm:2RoundCADSphi} with
\begin{equation}
    \epssec=3\epsMACa+3\epsmatch+\frac{2}{\abs{\vars{T}_{\CV}}}+\frac{1}{\abs{\vars{T}_{\SV}}}.
\end{equation}
We note that it may be possible to derive a tighter bound on $\epssec$, but we leave it to future work.

\subsection{Security for $\alphaU\neq\alpha$ input state}

Before beginning our analysis, let us define a useful theorem regarding the value of $\FS$ when $\DC\neq 1$.
Consider a simplified protocol (with step 2 being left out when the oracle should be inaccessible):
\begin{enumerate}
    \item The state $\tau_{\Kh\KOTP_1\KOTP_2\cdots \KOTP_mX}\otimes\rho_E$ is prepared, where $X$ is the the server's randomly chosen bit value.
    \item The adversary generates messages $\{M_i'\}_{i=1,\cdots,m}$ and inserts into the oracle to obtain output $T_{\CV,i}=h_1(\Kh,M_i')\oplus \KOTP_i$.
    \item The adversary generates $T_{\CV,1,\rr}$ and a message $M_\rr$ and sends it to the server.
    \item The server computes $\DPE$ and $D_{\CV}$ as per the CA protocol.
\end{enumerate}
\begin{theorem}
\label{thm:2RoundCADS}
    For the simplified protocol, when the oracle is inaccessible,
    \begin{equation*}
        \Pr[\DPE\land D_{\CV}=1]\leq\frac{1}{\abs{\vars{T}_{\CV}}}.
    \end{equation*}
    When the oracle is accessible,
    \begin{equation*}
        \Pr[\DPE\land D_{\CV}=1]\leq\epsMACa+\epsmatch,
    \end{equation*}
    where the matching probability is
    \begin{equation*}
        \epsmatch=2^{-\abs{P_{1,\tol}}[1-\hbin(\ebittol)]},
    \end{equation*}
    noting that $\abs{\tilde{P}_1}\geq P_{1,\tol}$, where $\tilde{P}_1$ is the set $P_1$ as decided in the message $M'$.
\end{theorem}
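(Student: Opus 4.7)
The plan is to handle the two oracle cases separately. For the oracle-inaccessible case, both the authentication key $\Kh$ and the mask $\KOTP_1$ remain uniformly random and independent of everything the adversary sees before forwarding its guess. By the uniformity clause of Definition~\ref{defn:AS2U} (combined with the observation that an independent uniform mask keeps the masked hash uniform), for any adversary-chosen pair $(M_\rr,T_{\CV,1,\rr})$ the correct verification tag $h_1(\Kh,M_\rr)\oplus \KOTP_1$ is uniform over $\vars{T}_{\CV}$, so $\Pr[D_{\CV}=1]\leq 1/\abs{\vars{T}_{\CV}}$, and the claimed bound follows because $\DPE\land D_{\CV}=1$ implies $D_{\CV}=1$.

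For the oracle-accessible case, I would first reduce the effective side-information to a single oracle pair $(M_1',T_{\CV,1})$. Queries with index $i\geq 2$ each use an independent one-time mask $\KOTP_i$ that is never reused elsewhere in the protocol, so by Theorem~\ref{thm:OTPWegCarSec} every such $T_{\CV,i}$ is uniformly distributed and independent of $(\Kh,\KOTP_1)$ from the adversary's viewpoint, and may be simulated without consulting the oracle. With this reduction in place, I would split according to whether the forwarded message matches the (single) relevant query:
\begin{equation*}
\Pr[\DPE\land D_{\CV}=1]\leq \Pr[\DPE=1,\,M_\rr=M_1']+\Pr[D_{\CV}=1,\,M_\rr\neq M_1'].
\end{equation*}

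The second term is controlled by the $\epsMACa$-almost XOR 2-universal property of $h_1$: when $M_\rr\neq M_1'$, the mask cancels in the difference $T_{\CV,1,\rr}\oplus T_{\CV,1}=h_1(\Kh,M_\rr)\oplus h_1(\Kh,M_1')$, and the probability this XOR hits any adversary-specified value is at most $\epsMACa$. The first term is where the matching probability enters. When $M_\rr=M_1'$, the test-round string $x_{P_1,\rr}'$ inside $M_\rr$ was committed by the adversary before seeing the server's uniform bits $X$, and $\DPE=1$ requires $X_{P_1}$ to lie within Hamming distance $\ebittol\abs{P_1}$ of $x_{P_1,\rr}'$. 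Since $X_{P_1}$ is uniform on $\{0,1\}^{\abs{P_1}}$ and independent of the adversary's view, a standard Hamming-ball volume bound gives probability at most $2^{-\abs{P_1}[1-\hbin(\ebittol)]}$, which is at most $\epsmatch$ because $\abs{P_1}\geq P_{1,\tol}$.

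The main obstacle, I expect, is formally justifying that the queries with $i\geq 2$ can be ignored. Naively this looks like a direct one-time-pad argument, but because all queries share the same $\Kh$, one must be careful that replacing the $T_{\CV,i}$ for $i\geq 2$ by independent uniform strings does not distort the joint distribution of $(\Kh,\KOTP_1,M_1',T_{\CV,1})$ on which the remainder of the analysis relies. The cleanest route should be to condition on $(\Kh,\KOTP_1,M_1',T_{\CV,1})$ and observe that each remaining mask $\KOTP_i$ is still independent and uniform, so $T_{\CV,i}=h_1(\Kh,M_i')\oplus \KOTP_i$ remains uniform and independent of the conditioned variables, which validates the reduction and lets the two-term union bound above run cleanly.
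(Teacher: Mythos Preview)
Your proposal is correct and follows essentially the same approach as the paper: the oracle-inaccessible case via uniformity of the masked hash, and the oracle-accessible case via a one-time-pad reduction (Theorem~\ref{thm:OTPWegCarSec}) to a single relevant oracle query followed by the same two-term split into a message-mismatch branch (bounded by $\epsMACa$) and a message-match branch (bounded by a Hamming-ball volume estimate using that $M_1'$ was fixed before the adversary sees $X$). The only notational wrinkle is that the paper's split is on whether the server's full verification message $(M_\rr\|f(M_\rr,X))$---which includes the server-computed $\hat{x}_{P_2}'$---equals $M_1'$, rather than on $M_\rr=M_1'$; this does not affect your argument since the bit-error bound uses only the $x_{P_1}'$ and $P_1$ components, which are contained in $M_\rr$.
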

\begin{proof}
Let us first consider the case where the oracles are inaccessible. 
Here, we can write the overall output state as
\begin{equation}
    \rho=\vars{B}_{\Kh\KOTP_1XN_rT_{\CV,1,\rr}\rightarrow \DPE D_{\CV}}[\tau_{\Kh_1\KOTP_1}\otimes\bigepsilon_{XE\rightarrow E'T_{\CV,1,r}M_\rr}(\tau_{X}\otimes\rho_E)],
\end{equation}
where it is clear that the adversary has no knowledge of $\Kh_1\KOTP_1$.
While the adversary can control the message $M_\rr$ sent to the client, the uniformity property of the strongly 2-universal hash function ensures that
\begin{equation}
\begin{split}
    &\Pr[\DPE\land D_{\CV}=1]\\
    \leq&\Pr[T_{\CV,1,\rr}=h_1(\Kh,M_\rr||f(M_\rr,X))\oplus \KOTP_1]\\
    \leq&\frac{1}{\abs{\vars{T}_{\CV}}},
\end{split}
\end{equation}
for any $T_{\CV,1,\rr}$ selected by the adversary.\\

When access to the oracle is available, the overall output state can be expressed as
\begin{multline}
    \rho=\vars{B}_{\Kh\KOTP_1XN_\rr T_{\CV,1,\rr}\rightarrow \DPE D_{\CV}}\circ\bigepsilon_{XE_mT_{\CV,m}\rightarrow E'T_{\CV,1,\rr}M_\rr}\circ\vars{O}_{\Kh \KOTP_mM_m'\rightarrow T_{\CV,m}}\\
    \circ\bigepsilon_{T_{\CV,m-1}E_{m-1}\rightarrow M_m'E_m}\circ\cdots\circ\vars{O}_{\Kh\KOTP_1M_1'\rightarrow T_{\CV,1}}\circ\bigepsilon_{E\rightarrow M_1'E_1}(\tau_{X\Kh \KOTP_{[1,m]}}\otimes\rho_E).
\end{multline}
The probability is computed by $\Tr[\Pi_{\DPE=D_{\CV}=1}\rho]$, which includes tracing over $\KOTP_{[2,m]}$ that are not involved in the client's check step.
From Thm.~\ref{thm:OTPWegCarSec}, tracing away $\KOTP_i$ leaves $\Kh_1$ and $T_{\CV,i}$ uniform and independent from the adversary.
As such, the oracles essentially generate random strings and provide no information of $\Kh_1$ to the adversary, and we can reduce the state to 
\begin{equation}
    \rho=\vars{B}_{\Kh\KOTP_1XN_\rr T_{\CV,1,\rr}\rightarrow \DPE D_{\CV}}\circ\vars{O}_{\Kh\KOTP_1M_1'\rightarrow T_{\CV,1}}\circ\bigepsilon_{E\rightarrow M_1'E_1}(\tau_{\Kh\KOTP_1}\otimes\rho_E).
\end{equation}
Here, the adversary is able to perform a call to the oracle for any message $M'$, receive the outcome $T_{\CV,1}$, before given access to $X$, and deciding on the message $M_\rr$ and tag $T_{\CV,1,\rr}$ to send to the server.
We can simplify the probability
\begin{equation}
\begin{split}
    &\Pr[D_{\CV}\land \DPE=1]\\
    \leq&\Pr[\DPE=1,(M_\rr||f(M_\rr,X))\neq M',T_{\CV,1,\rr}=\tilde{T}_{\CV,1}]+\Pr[\DPE=1,(M_\rr||f(M_\rr,X))= M',T_{\CV,1,\rr}=\tilde{T}_{\CV,1}]\\
    \leq&\Pr[T_{\CV,1,\rr}=\tilde{T}_{\CV,1}|(M_\rr||f(M_\rr,X))\neq M']+\Pr[\DPE=1,(M_\rr||f(M_\rr,X))= M']\\
    \leq&\epsMACa+\Pr[wt(X_{P_{1,\rr}}'\oplus X_{P_{1,\rr}})\leq\abs{P_{1,\rr}}\ebittol,\tilde{P}_1=P_{1,\rr},X_{P_{1,\rr}}'=\tilde{X}_{\tilde{P}_1},\abs{P_{1,\rr}}\geq P_{1,\tol}]\\
    \leq&\epsMACa+\Pr[wt(\tilde{X}_{\tilde{P}_1}\oplus X_{\tilde{P}_1})\leq\abs{\tilde{P}_1}\ebittol,\abs{\tilde{P}_1}\geq P_{1,\tol}]\\
    \leq&\epsMACa+\max_{\abs{\tilde{P}_1}\geq P_{1,\tol}}\sum_{i=0}^{\abs{\tilde{P}_1}\ebittol}\begin{pmatrix}     \abs{\tilde{P}_1}\\ i \end{pmatrix} 2^{-\abs{\tilde{P}_1}}\\
    \leq&\epsMACa+2^{-P_{1,\tol}[1-\hbin(\ebittol)]}.
\end{split}
\end{equation}
The first inequality splits the probability based on the equality of the messages, and the second line simplifies them by removing some conditions.
The third line notes that by the strongly 2-universal property of the hash function, with the keys not known to the adversary except via $T_{\CV,1}$, the probability that the adversary can guess the correct tag for input $(M_\rr||f(M_\rr,X))$ is bounded by $\epsMACa$.
The third and fourth lines further removes conditions on $\DPE=1$ and the messages being equal, and focuses mainly on terms we can use to bound the probability, where $\tilde{P}_1$ and $\tilde{X}_{\tilde{P}_1}$ are the corresponding terms from the message $M'$.
The fifth line expands the probability of the bound on the Hamming weight, noting that the $X$ is randomly generated and uncorrelated to $\tilde{X}_{\tilde{P}_1}$, which has to be decided before $X$ is accessible to the adversary.
The final line uses the property of the binomial coefficient to simplify, and noting that the maximum value corresponds to $\abs{\tilde{P}_1}=P_{1,\tol}$.
\end{proof}

In this section, we consider the input variant $\rho_{LSE}^{j,j+1,j'}$, where $\alphaU=\alpha+1$.
When $\DC=\phi$, the client is not involved in the protocol and the behaviour should be no different from the $\rho^{j+1,j+1,j'}_{LSE}$ input variant.
When $\DC\neq\phi$, there are two general scenarios.
If $\alpha_\rr'>\alpha$, the protocol would proceed as per normal, similarly with the state where $\alpha=\alphaU$, with some subtle differences stemming from the check of $\tilde{T}_{\SV,\alpha}$.
If $\alpha_\rr'=\alpha$, we expect that $\FS=0$.\\

Here, we provide a reduction of the security for input state variants with $\alphaU=\alpha+1$ to the security for input state variants with $\alphaU=\alpha$, the latter of which we analyse in the next section.
\begin{theorem}
\label{thm:CAUnequalalpha}
    Consider the CA protocol $\vars{P}$.
    Suppose that for any valid input state $\rho^{j+1,j+1,j'}_{LSE}$ and adversarial strategy $\bigepsilon$, there exists an ideal output state $\rhoout$ such that
    \begin{equation*}
    \begin{gathered}
        \Delta(\vars{P}_{\bigepsilon}(\rho^{j+1,j+1,j'}_{LSE}),\rhoout)\leq\epssec.
    \end{gathered}
    \end{equation*}
    Then, for any valid input state $\rho^{j,j+1,j'}_{LSE}$, there exists an output state $\rhoout$ such that
    \begin{equation*}
        \Delta(\vars{P}(\rho^{j,j+1,j'}_{LSE}),\rhoout)\leq\epssec+\frac{2}{\abs{\vars{T}_{\CV}}}+\frac{1}{\abs{\vars{T}_{\SV}}}+2\epsMACa+2\epsmatch.
    \end{equation*}
\end{theorem}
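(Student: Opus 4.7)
The plan is to perform a case analysis on the adversary's choice of the server index $\alpha_\rr'$ transmitted to the client in step 1, since this value controls whether the oracle-generated value $\tilde{T}_{\SV,j}$ stored at the client interacts with the current round. I would partition the output $\vars{P}(\rho^{j,j+1,j'}_{LSE})$ into three disjoint cases: (A) $\alpha_\rr' > j$; (B) $\alpha_\rr' = j$; and (C) $\alpha_\rr' < j$ or $\DC = \phi$. The theorem's distance bound is then assembled as a sum of contributions from each case.

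In Case (A), the client updates $\alpha \gets \alpha_\rr' \ge j+1$ and sets $\DC = 1$ without consulting $\tilde{T}_{\SV,j}$, so this stored value can be absorbed into the adversary's side information $E$ and plays no further role. All subsequent authentication draws on the still-private keys $\KOTP_{1,\alpha_\rr'}$ and basis seed $\tilde{\theta}_{\alpha_\rr'}$ lying in $[j+1,m]$, and the effective input state coincides with $\rho^{\alpha_\rr',\alpha_\rr',j'}_{LSE}$ up to relabeling. By the theorem's hypothesis the output is $\epssec$-close to a valid ideal output whose form matches the corresponding branches of $\rhooutt{j,j+1,j'}$.

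Case (B) is the main obstacle, since here the adversary can attempt to force $\DC = 1$ while the current-round keys $\KOTP_{1,j}$ and $\Kh_1$ are still contaminated by the oracle $\bigepsilon_{R\Kh_1\KOTP_{1,\jOTP}E'\rightarrow \tilde{T}_{\SV,j}E}$. I would exploit the fact that $\tilde{T}_{\SV,j}$ is transmitted only to the honest client---not to the adversary---and is the output of the privacy amplification hash $\hPA(R,\cdot)$ whose seed $R$ the adversary sees only via the single oracle invocation; by the uniformity property of 2-universal hashing, the probability that any adversary-chosen $T_{\SV,j,\rr}$ matches $\tilde{T}_{\SV,j}$ is at most $1/\abs{\vars{T}_{\SV}}$. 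On the complementary event $\DC$ is forced to $0$ and all further client responses are random strings, collapsing the situation into Case (C).

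Case (C) is then handled via Thm.~\ref{thm:2RoundCADS}, which bounds the probability of the server wrongly accepting when the client produces only random responses. Two invocations of that theorem---one for the sub-branch where the client-authentication oracle is effectively irrelevant (contributing $1/\abs{\vars{T}_{\CV}}$) and one for the sub-branch where it has been used (contributing $\epsMACa + \epsmatch$)---cover both the genuine $\alpha_\rr' < j$ branch and the failed-check arm inherited from Case (B), yielding the $2/\abs{\vars{T}_{\CV}} + 2\epsMACa + 2\epsmatch$ penalty. Adding the $\epssec$ contribution from Case (A) and the $1/\abs{\vars{T}_{\SV}}$ contribution from Case (B) reproduces the bound stated in the theorem.
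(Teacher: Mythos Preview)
Your case split on $\alpha_\rr'$ and the use of the hypothesis for $\alpha_\rr'>j$, hash uniformity for $\alpha_\rr'=j$, and Thm.~\ref{thm:2RoundCADS} for the rest is exactly the paper's strategy. Two points need correction, one cosmetic and one structural.

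First, in Case~(A) you say $\tilde{T}_{\SV,j}$ can be ``absorbed into the adversary's side information $E$''. That is wrong: $\tilde{T}_{\SV,j}$ sits only at the client and is never seen by the adversary; handing it to $E$ would leak information about $R$. The correct move (and what the paper does) is to \emph{trace it out}, which removes the $R$-oracle channel and collapses the residual state to $\rho^{j+1,j+1,j'}_{SE}$ under a modified adversarial strategy. This is why the hypothesis applies.

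Second, your Case~(C) treatment of $\DC=\phi$ is where the accounting actually lives, and you have not supplied the mechanism. Bounding $\Pr[\FS=1,\DC=\phi]$ via Thm.~\ref{thm:2RoundCADS} is not enough: you must also show that the surviving $\FS=0$ branch has the ideal \emph{structure} $\rho^{j,j+1,\tilde{j}'}_{LSE}$, which requires the extra oracles (for $R$ and, when $j'\geq j+1$, for $(\Kh_1,\KOTP_{1,j})$) to be commuted past the protocol so they appear as the final channels. The paper does this by temporarily replacing $\vars{P}$ with a variant in which $\FS$ is forced to $0$ (cost $\abs{\vars{T}_{\CV}}^{-1}$), commuting the now server-input-free protocol past the oracles, then undoing the replacement (another $\abs{\vars{T}_{\CV}}^{-1}$), and finally invoking Thm.~\ref{thm:2RoundCADSphi} on $\rho^{j+1,j+1,j'}$ for the $\epsMACa+\epsmatch$ term. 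Your sentence ``two invocations of Thm.~\ref{thm:2RoundCADS}, one for the oracle-irrelevant sub-branch and one for the oracle-used sub-branch'' does not describe this commuting step, and your arithmetic (claiming $1/\abs{\vars{T}_{\CV}}+\epsMACa+\epsmatch$ from two invocations yet totalling $2/\abs{\vars{T}_{\CV}}+2\epsMACa+2\epsmatch$) does not track where the doubling comes from. The actual split is: $\alpha_\rr'\leq j$ with client present contributes at most $\epsMACa+\epsmatch+\abs{\vars{T}_{\SV}}^{-1}$, while $\DC=\phi$ contributes $2\abs{\vars{T}_{\CV}}^{-1}+\epsMACa+\epsmatch$.
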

\begin{proof}
The protocol consists of three general steps:
\begin{enumerate}
    \item Server's state generation and message.
    \item Client's message receipt and response.
    \item Server and Client post-processing.
\end{enumerate}
where WLOG, the third step will always occur at the end of the protocol since it does not contain additional outputs to the adversary.
Moreover, since the server's state generation and message requires no external inputs, the adversary will find no advantage in delaying it.
As such, we can always take the protocol as one which follows the three step process.
Let us consider part of the protocol which stops at the client's decision $\DC$, i.e. $\DC\neq\phi$, and refresh of client's indices (in client's post-processing step) and the generation of hypothetical index $\alphaU$,
\begin{equation}
    \vars{P}_{C}=\vars{A}_{\alpha\alphaU\alpha_\rr'T_{\SV,\alpha',r}\tilde{T}_{\SV,\alpha}\rightarrow \alpha\alphaU\DC}\circ\bigepsilon_{Q\alpha'T_{\SV,\alpha'}E\rightarrow \alpha_\rr'T_{\SV,\alpha',r}E'}^{\alpha}\circ\vars{B}_{\tilde{\theta}_{\alpha'}'X\rightarrow Q},
\end{equation}
where the inputs $T_{\SV,\alpha',r}\tilde{T}_{\SV,\alpha}T_{\SV,\alpha'}$ are traced out after their respective steps, the channel $\vars{B}$ is absent when $\FS=\phi$, and the $\alpha$ superscript explicitly indicates that the adversary is acting with knowledge that the client's index is $\alpha$.\\

Let us now split the state into multiple components based on $\alpha_\rr'$,
\begin{equation}
    \rho_{\alpha_\rr'T_{\SV,\alpha',r}\alpha\alphaU\alpha'\tilde{T}_{\SV,j}SE'}=\sum_{i}\dyad{i}_{\alpha_\rr'}\otimes\mel{i}{\bigepsilon^j_{Q\alpha'T_{\SV,\alpha'}E\rightarrow \alpha_\rr'T_{\SV,\alpha',r}E'}\circ\vars{B}_{\tilde{\theta}_{\alpha'}'X\rightarrow Q}(\rho^{j,j+1,j'}_{LSE})}{i}
\end{equation}
before the client's decision.\\

In the case where $\alpha_\rr'\geq j+1$, the client updates $\alpha$ to $\alpha_\rr'$ and set $\DC=1$,

\begin{multline}
    \vars{P}_{C}(\rho^{j,j+1,j'}_{LSE})_{\land\alpha_\rr'\geq j+1}=\sum_{i\geq j+1}\dyad{1iij'}_{\DC\alpha\alphaU\alpha'}\otimes\Tr_{T_{\SV,\alpha',r}\tilde{T}_{\SV,j}}\circ\Tr_{\KOTP_{1,j}}^{j'\geq j+1}\\
    [\mel{i}{\bigepsilon_{QT_{\SV,\alpha'}E\rightarrow \alpha_\rr'T_{\SV,\alpha',r}E'}^{jj'}\circ\vars{B}_{\tilde{\theta}_{\alpha'}'X\rightarrow Q}(\rho^{j,j+1,j'}_{SE})}{i}],
\end{multline}
where the indices are updated, $\bigepsilon_{QT_{\SV,\alpha'}E\rightarrow \alpha_\rr'T_{\SV,\alpha',r}E'}^{jj'}$ is the channel when $\alpha'=j'$ is received, and the partial trace of $\KOTP_{1,j}$ is only present when $j'\geq j+1$ since updating $\alpha$ to $j+1$ would only lead to tracing out of the term if $j'\geq j+1$ ($j$ no longer used by either party).
The main difference between the state $\rho^{j,j+1,j'}_{SE}$ and that of $\rho^{j+1,j+1,j'}_{SE}$ (ignoring the indices) is the presence of the additional oracle channel for $R$ for $j\geq j'$, and an additional oracle for $(\Kh_1,\KOTP_{1,j})$ when $j'\geq j+1$, i.e.
\begin{equation}
    \rho^{j,j+1,j'}_{SE}=\bigepsilon_{RE'\rightarrow \tilde{T}_{\SV,j}E}\circ\bigepsilon_{\Kh_1\KOTP_{1,j}E''\rightarrow E'}^{j'\geq j+1}(\tau_{\KOTP_{1,j}}\otimes\Tr_{\tilde{T}_{\SV,j}}[\rho_{SE''}^{j+1,j+1,j'}]).
\end{equation}
The partial trace of $\tilde{T}_{\SV,j}\KOTP_{1,j}$ applies directly to the state $\rho_{SE}^{j,j+1,j'}$.
Tracing out $\tilde{T}_{\SV,j}$ removes $\bigepsilon_{RE''\rightarrow \tilde{T}_{\SV,j}E}$ since it reduces to an internal action of the adversary, and tracing out $\KOTP_{1,j}$ when $j'\geq j+1$ implies that $T_{\CV,j}$ would appear random to the adversary, and the channel can be removed.
As such, we can simplify
\begin{equation}
    \vars{P}_{C}(\rho^{j,j+1,j'}_{LSE})_{\land\alpha_\rr'\geq j+1}=\sum_{i\geq j+1}\dyad{1iij'}_{\DC\alpha\alphaU\alpha'}\otimes[\mel{i}{\bigepsilon_{QT_{\SV,\alpha'}E\rightarrow \alpha_\rr'E'}^{jj'}\circ\vars{B}_{\tilde{\theta}_{\alpha'}'X\rightarrow Q}(\Tr_{\tilde{T}_{\SV,j}}[\rho^{j+1,j+1,j'}_{SE}])}{i}],
\end{equation}
which is similar to the output state when $\rho^{j+1,j+1,j'}_{LSE}$ is utilised, but with the adversary acting as if the client's index is $j$ instead of $j'$ (which remains a valid adversarial strategy).
In fact, we can define an input state where $T_{\SV,j+1}=\tilde{T}_{\SV,j+1}$, and an adversarial strategy where the server's tag is unaltered $T_{\SV,\alpha',\rr}=T_{\SV,\alpha}$, allowing the server validation to always pass, even in the case of $\alpha_\rr'=j+1$ since the tags match.
With this defined input state $\rho_{SE}^{j+1,j+1,j'}$ and adversarial strategy which matches the partial output of the original protocol, $\vars{P}_C(\rho^{j,j+1,j'}_{LSE})_{\land\alpha_\rr'\geq j+1}$, by reversing the CPTP map of rounds after $\vars{P}_C$, we can conclude that there exists $\rhoout_{\land\alpha_\rr'\geq j+1}$ that is $\epssec$-close in trace distance to $\vars{P}_C(\rho^{j,j+1,j'}_{LSE})_{\land\alpha_\rr'\geq j+1}$.\\

For the second case where $\alpha_\rr'<j$, the client sets $\DC=0$, and updates its $\alpha$ as $j+1$.
For the final case of $\alpha_\rr'=j$, the decision would depend on $\tilde{T}_{\SV,j}$ and $T_{\SV,\alpha',r}$.
Since $\tilde{T}_{\SV,j}$ is formed from $\hPA(R,M)$ with a chosen message with $R$ secret to the eavesdropper and not being used in any other steps prior to decision $\DC$, the uniformity property of the hash function means that
\begin{equation}
\begin{split}
    \Pr[\DC=1|\alpha_\rr'=j]=&\Pr[\tilde{T}_{\SV,j}=T_{\SV,\alpha',r}]\\
    \leq&\frac{1}{\abs{\vars{T}_{\SV}}}.
\end{split}
\end{equation}
We can thus define a state that is $\frac{1}{\abs{\vars{T}_{\SV}}}$-close to $\vars{P}_{C}(\rho^{j,j+1,j'}_{LSE})_{\land\alpha_\rr=j}$, where $\DC=0$ is always selected irrespective of $\tilde{T}_{\SV,j}$ and $T_{\SV,\alpha',r}$.
As such, we can combine the second and third cases to give
\begin{equation}
    \vars{P}_{C}(\rho^{j,j+1,j'}_{LSE})_{\land\alpha_\rr'\leq j} =\tilde{\Pi}_{\DC\alpha\alphaU\alpha'}^{0,j+1,j+1,j'}\otimes\sum_{i\leq j}\Tr_{T_{\SV,\alpha',r}\tilde{T}_{\SV,j}}\circ\Tr_{\KOTP_{1,j}}^{j'\geq j+1}[\mel{i}{\bigepsilon_{QT_{\SV,\alpha'}E\rightarrow \alpha_\rr'T_{\SV,\alpha',r}E'}^{jj'}\circ\vars{B}_{\tilde{\theta}_{\alpha'}'X\rightarrow Q}(\rho^{j,j+1,j'}_{SE})}{i}].
\end{equation}
Here, we directly tie the output state with the ideal output state instead.
Having $\DC=0$ would result in the client sending random messages as response.
We note here that if $\FS=\phi$, the server is not involved and the state here is the final output state.
As such, the output state matches the ideal output state component $p_{0\phi}\dyad{0\phi}_{\DC\FS}\otimes\rho_{LSE}^{j+1,j+1,j'}$ since the trace over $\tilde{T}_{\SV,j}$ remove the oracle with $R$, while any excess oracle with $\KOTP_{1,j}$ is also removed.
As such, we focus mainly on the case where $\FS\neq\phi$.\\

In such a scenario, the random announcement by the client would lead to $\FS=0$ with high probability.
We can consider also a worse scenario, where $\tilde{\theta}_{j'}'$ (and automatically $\tilde{\theta}_{j'}$) is handed to the adversary at the start of the protocol, allowing it to simulate the server's generation of $Q$.
As such, we can collapse the channels into a single one,
\begin{equation}
    \bigepsilon_{\tilde{\theta}_{\alpha'}XT_{\SV,\alpha'}E\rightarrow T_{\CV,\rr}M_\rr\alpha_\rr'E''}\supseteq\bigepsilon_{E'\rightarrow T_{\CV,\rr}M_\rr E''}\circ\bigepsilon_{QT_{\SV,\alpha'}E\rightarrow\alpha_\rr'E'}\circ\vars{B}_{\tilde{\theta}_{\alpha'}'X\rightarrow Q},
\end{equation}
where the adversary can dictate the message that the server receives as part of its check for $\FS$.
The form of the state now matches (with $\land\alpha_\rr'\leq j$ conditioning) that described in Thm.~\ref{thm:2RoundCADS}.
When $j'\geq j+1$, the adversary has no access to any oracles, which by Thm.~\ref{thm:2RoundCADS}, implies that 
\begin{equation}
    \Pr[\FS=1,\alpha_\rr'\leq j|j'\geq j+1]\leq\frac{1}{\abs{\vars{T}_{\CV}}}.
\end{equation}
We can thus define a state $\frac{1}{\abs{\vars{T}_{\CV}}}$-close where $\FS$ is fixed to 0,
\begin{equation}
\begin{split}
    &\hat{\vars{P}}(\rho^{j,j+1,j'}_{LSE})_{\land\alpha_\rr\leq j\land \FS\neq\phi| j'\geq j+1}\\
    =&\sum_{\tilde{j}'\geq j'+1}\dyad{00,j+1,j+1,\tilde{j}'}_{\DC\FS\alpha\alphaU\alpha'}\otimes\sum_{i\leq j}\mel{i}{\bigepsilon^{jj'}_{\tilde{\theta}_{j'}'XT_{\SV,\alpha'}E\rightarrow \alpha_\rr'E''}(\rho^{j+1,j+1,\tilde{j}'}_{SE})}{i},
\end{split}
\end{equation}
where we note that an announcement of $\alpha_\rr=\tilde{j}'>j'$ can occur.
The resulting state matches the definition of the ideal output state, for $\rho_{SE\land \alpha_\rr'\leq j}^{j+1,j+1,\tilde{j}'}$, where additionally $\tilde{\theta}_{[j',\tilde{j}'-1]}\tilde{\theta}_{[j',\tilde{j}'-1]}'$ is lost to the adversary.
As such, there exist an ideal output state where
\begin{equation}
    \Delta(\vars{P}(\rho^{j,j+1,j'}_{LSE})_{\land\alpha_\rr\leq j\land \FS\neq\phi| j'\geq j+1},\rho_{\land\alpha_\rr\leq j\land \DC=\FS=0}^{out,jjj'})\leq\frac{1}{\abs{\vars{T}_{\CV}}}+\frac{1}{\abs{\vars{T}_{\SV}}},
\end{equation}
resulting in a small trace distance.\\

For the second case with $j'\leq j$, $\tilde{\theta}_{j'}'$ would be known by the adversary, and it would have access to oracles for $(\Kh_1,\KOTP_{1,i})$ from $j'$ to $j$.
By Thm.~\ref{thm:2RoundCADS}, the probability of $\FS=1$ is bounded by
\begin{equation}
    \Pr[\FS=1,\alpha_\rr\leq j|j'\leq j]\leq\epsMACa+\epsmatch.
\end{equation}
Therefore, we can define a state where $\FS$ is fixed to 0, 
\begin{equation}
\begin{split}
    \hat{\vars{P}}(\rho^{j,j+1,j'}_{LSE})_{\land\alpha_\rr\leq j\land \FS\neq\phi| j'\leq j}=&\sum_{\tilde{j}'\geq j'+1}\dyad{00,j+1,j+1,\tilde{j}'}_{\DC\FS\alpha\alphaU\alpha'}\\
    &\otimes\sum_{i\leq j}\mel{i}{\bigepsilon_{\tilde{\theta}_{j'}'XT_{\SV,\alpha'}E\rightarrow T_{\CV,\rr}M_\rr\alpha_\rr'E''}(\rho^{j+1,j+1,\tilde{j}'}_{SE})}{i},
\end{split}
\end{equation}
where $\tilde{j}'>j'$ announcement can similarly occur.
This matches the definition of the output state $\rho_{SE\land \alpha_\rr'\leq j}^{j+1,j+1,\tilde{j}'}$, where we may have additionally $\tilde{\theta}_{[j+1,\tilde{j}'-1]}\tilde{\theta}_{[j+1,\tilde{j}'-1]}'$ being lost to the adversary and the implicit discard of $\KOTP_{1,[j',\min\{j,\tilde{j}'-1\}]}$ keys, which removes any oracle access to the corresponding keys (since what they generate would appear as a random string to the adversary).
As such,
\begin{equation}
    \Delta(\vars{P}(\rho^{j,j+1,j'}_{LSE})_{\land\alpha_\rr\leq j\land \FS\neq\phi| j'\leq j},\rho_{\land\alpha_\rr\leq j\land \DC=\FS=0}^{out,jjj'})\leq\epsMACa+\epsmatch+\frac{1}{\abs{\vars{T}_{\SV}}},
\end{equation}
which has a larger distance than in the previous case.\\

Let us now consider the case where $\DC=\phi$, where the client is not involved in the protocol.
In this case, the main difference with starting from $\rho^{j+1,j+1,j'}_{LSE}$ and $\rho^{j,j+1,j'}_{LSE}$ is the presence of an additional oracle for $R$ in the latter case and the use of $\alpha=j$.
We first note that for $\DC=\phi$, a worse case can be considered where the protocol steps are shifted to occur before the $R$ oracle,
\begin{equation}
\begin{split}
    &\vars{P}(\rho_{LSE}^{j,j+1,j'})_{\land \DC=\phi}\\
    =&\vars{P}\circ\bigepsilon_{RM\rightarrow\tilde{T}_{\SV,j}}\circ\bigepsilon_{E'\rightarrow ME}\circ\bigepsilon_{\Kh_1\KOTP_{1,j}E''\rightarrow E'}^{j'\geq j+1}(\tau_{\KOTP_{1,j}}\otimes\Tr_{\tilde{T}_{\SV,j}}[\tilde{\Pi}_{\alpha\alphaU\alpha'}^{j,j+1,j'}\otimes\rho_{SE''}^{j+1,j+1,j'}])_{\land \DC=\phi}\\
    =&\tilde{\Pi}_{\alpha\alphaU}^{j,j+1}\otimes\bigepsilon_{RM\rightarrow\tilde{T}_{\SV,j}}\circ\vars{P}\circ\bigepsilon_{E'\rightarrow ME}\circ\bigepsilon_{\Kh_1\KOTP_{1,j}E''\rightarrow E'}^{j'\geq j+1}(\tau_{\KOTP_{1,j}}\otimes\Tr_{\tilde{T}_{\SV,j}}[\tilde{\Pi}_{\alpha'}^{j'}\otimes\rho_{SE''}^{j+1,j+1,j'}])_{\land \DC=\phi}\\
    \subseteq&\tilde{\Pi}_{\alpha\alphaU}^{j,j+1}\otimes\bigepsilon_{RM'\rightarrow\tilde{T}_{\SV,j}}\circ\bigepsilon_{E'''\rightarrow M'E}\circ\vars{P}\circ\bigepsilon_{E'\rightarrow ME'''}\circ\bigepsilon_{\Kh_1\KOTP_{1,j}E''\rightarrow E'}^{j'\geq j+1}(\tau_{\KOTP_{1,j}}\otimes\Tr_{\tilde{T}_{\SV,j}\alpha\alphaU}[\rho_{LSE''}^{j+1,j+1,j'}])_{\land \DC=\phi}\\
    =&\tilde{\Pi}_{\alpha\alphaU}^{j,j+1}\otimes\bigepsilon_{RM'\rightarrow\tilde{T}_{\SV,j}}\circ\bigepsilon_{E'\rightarrow M'E}\circ\vars{P}\circ\bigepsilon_{\Kh_1\KOTP_{1,j}E''\rightarrow E'}^{j'\geq j+1}(\tau_{\KOTP_{1,j}}\otimes\Tr_{\tilde{T}_{\SV,j}\alpha\alphaU}[\rho_{LSE''}^{j+1,j+1,j'}])_{\land \DC=\phi}\\
    \overset{2\abs{\vars{T}_{\CV}}^{-1}}{\approx}&\bigepsilon_{RE'\rightarrow\tilde{T}_{\SV,j}E}\circ\bigepsilon_{\Kh_1\KOTP_{1,j}E''\rightarrow E'}^{j'\geq j+1}(\tau_{\KOTP_{1,j}}\otimes\Tr_{\tilde{T}_{\SV,j}\alpha\alphaU}[\vars{P}(\rho_{LSE''}^{j+1,j+1,j'})]_{\land \DC=\phi})
\end{split}
\end{equation}
The first line explicitly expands the oracle and state, and the second line switches the protocol and the generation of $\tilde{T}_{\SV,j}$, since this value is not utilised in $\vars{P}$ for $\DC=\phi$.
The third line reflects that a more general state can allow for $M'$ to be generated after the end of $\vars{P}$ instead, where the original state can be simulated by having $E''$ contain $M$, and later copying this to $M'$ without making use of information obtained during $\vars{P}$.
The fourth line collapses the message $M$ selection, since this can be viewed as an internal step of the adversary.
The fifth line notices that when $j'\geq j+1$, Thm.~\ref{thm:2RoundCADS} guarantees that $\Pr[\FS=1]\leq\abs{\vars{T}_{\CV}}^{-1}$, whether the additional hash masking oracle for index $j$ is provided.
As such, we can define $\tilde{\vars{P}}$, where $\FS$ is fixed at $0$.
In this case, the server needs not accept any inputs from the communication channel, which leaves the adversary unable to influence the server.
Therefore, the protocol $\tilde{\vars{P}}$ and the hash masking oracle can commute, and a reversal from $\vars{P}$ to $\tilde{\vars{P}}$ incurs the same penalty.
As we will see in Thm.~\ref{thm:2RoundCADSphi}, $\vars{P}(\rho_{LSE''}^{j+1,j+1,j'})$ is $(\epsMACa+\epsmatch)$-close to an ideal state $\rhooutt{j+1,j+1,j'}$.
Applying the maps on this ideal output state, we obtain a worse case
\begin{equation}
\begin{split}
    &\sum_{\tilde{j}'\geq j'+1}p_{\phi0,\tilde{j}'}\tilde{\Pi}_{\alpha\alphaU}^{j,j+1}\otimes\bigepsilon_{RE'\rightarrow\tilde{T}_{\SV,j}E}\circ\bigepsilon_{\Kh_1\KOTP_{1,j}E''\rightarrow E'}^{\tilde{j}'\geq j+1}(\tau_{\KOTP_{1,j}}\otimes\Tr_{\tilde{T}_{\SV,j}}[\tilde{\Pi}_{\alpha'}^{\tilde{j}'}\otimes\rho_{SE}^{j+1,j+1,\tilde{j}'}])\\
    =&\sum_{\tilde{j}'\geq j'+1}p_{\phi0,\tilde{j}'}\tilde{\Pi}_{\alpha\alphaU\alpha'}^{j,j+1,\tilde{j}'}\otimes\rho_{SE}^{j,j+1,\tilde{j}'},
\end{split}
\end{equation}
where the switch from $j'\geq j+1$ to $\tilde{j}'\geq j+1$ gives an additional oracle when $\tilde{j}'\geq j+1$ while $j'<j+1$.
This output matches the ideal output state $\rhooutt{j,j+1,\tilde{j}'}_{\land \DC=\phi}$.
As such, the overall trace distance to this ideal state is $\frac{2}{\abs{\vars{T}_{\CV}}}+\epsMACa+\epsmatch$.\\

Combining the results, the trace distance can be reduced to $\epssec+\frac{2}{\abs{\vars{T}_{\CV}}}+\frac{1}{\abs{\vars{T}_{\SV}}}+2\epsMACa+2\epsmatch$
\end{proof}

\subsection{One-Sided Attack}

With accounting for all input state variant $\rho_{LSE}^{j,j+1,j'}$ complete, we present the security for input state variant $\rho_{LSE}^{jjj'}$ where only one party is involved in the protocol.
We begin with the case of $\DC=\phi$,
\begin{theorem}
\label{thm:2RoundCADSphi}
    Consider the CA protocol $\vars{P}$. Then,
    \begin{equation*}
        \Delta(\vars{P}(\rho^{jjj'}_{LSE})_{\land \DC=\phi},\rhooutt{jjj'}_{\land \DC=\phi})\leq \epsMACa+\epsmatch.
    \end{equation*}
\end{theorem}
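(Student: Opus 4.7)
The plan is to split the $\DC=\phi$ branch of the real output by the value of $\FS$ and argue (i) that the probability of $\FS=1$ is small, so the output is close to a state where $\FS$ is deterministically $0$, and (ii) that this $\FS=0$ state naturally reproduces the $\DC=\phi$ component of the ideal output $\rhooutt{jjj'}$. When $\DC=\phi$ the client never participates, so every message reaching the server, namely $\alpha_\rr$, the test-round announcement $(P_1,P_2,X_{P_1}')$, the syndrome $S$, and the client-validation tag $T_{\CV,\rr}$, must be fabricated by the adversary from its side-information together with the quantum states $Q$ and the tag $T_{\SV,\alpha'}$ that the server emitted. The server still computes $\DPE$, $D_{\CV}$, and $D_{\SA}$, and sets $\FS=\DPE\land D_{\CV}\land D_{\SA}$.

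The key step is to bound $\Pr[\FS=1\mid \DC=\phi]$ via Theorem~\ref{thm:2RoundCADS}. The adversary's task of producing a single message-tag pair accepted by the server is exactly the simplified protocol analysed there: in the subcase $j'\geq j$ the hash oracle $\bigepsilon_{\Kh_1\KOTP_{1,[j',j-1]}E'\to E}$ of the input state is empty, and Theorem~\ref{thm:2RoundCADS} yields $\Pr[\DPE\land D_{\CV}=1]\leq\abs{\vars{T}_{\CV}}^{-1}$; in the subcase $j'<j$ the oracle provides the adversary with message-tag pairs for $\KOTP_{1,[j',j-1]}$, and Theorem~\ref{thm:2RoundCADS} gives $\Pr[\DPE\land D_{\CV}=1]\leq\epsMACa+\epsmatch$. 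The latter bound dominates, so I would conclude $\Pr[\FS=1\mid\DC=\phi]\leq\epsMACa+\epsmatch$. Defining $\tilde{\vars{P}}$ to agree with $\vars{P}$ except that $\FS$ is forced to $0$ whenever $\DC=\phi$, the usual argument that two states differing only on a subnormalised event are close in trace distance gives
\begin{equation*}
\Delta\!\left(\vars{P}(\rho^{jjj'}_{LSE})_{\land\DC=\phi},\,\tilde{\vars{P}}(\rho^{jjj'}_{LSE})_{\land\DC=\phi}\right)\leq \epsMACa+\epsmatch.
\end{equation*}

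It then remains to check that $\tilde{\vars{P}}(\rho^{jjj'}_{LSE})_{\land\DC=\phi}$ coincides with $\dyad{\phi 0}_{\DC\FS}\otimes\sum_{\tilde{j}'\geq j'+1}p_{\phi 0,\tilde{j}'}\,\rho^{j,j,\tilde{j}'}_{LSE}$. With $\DC=\phi$, the client's indices $\alpha$ and $\alphaU$ stay at $j$ and no client-side secret is updated. On the server side, the $\FS=0$ outcome together with the $+1$ rule applied to the adversary's $\alpha_\rr$ causes $\alpha'$ to advance to some $\tilde{j}'\geq j'+1$, with the weights $p_{\phi 0,\tilde{j}'}$ fixed by the adversarial strategy. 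The server consumes only $\tilde{\theta}_{\alpha'}'$ (to prepare $Q$) and the pre-stored tag $T_{\SV,\alpha'}$ (which is already part of the input-side description); all remaining secrets ($R$, $\KOTP_{1,[j,m]}$, and the higher-index $\tilde{\theta}$ and $\tilde{\theta}'$ blocks) are never touched and so remain uniform and independent, while the hash oracle $\bigepsilon_{\Kh_1\KOTP_{1,[j',j-1]}E'\to E}$ survives unchanged because neither $\Kh_1$ nor $\KOTP_{1,[j',j-1]}$ is used in the $\DC=\phi$ round. Matching these indices across the mixture over $\tilde{j}'$ is precisely the structure of $\rho^{j,j,\tilde{j}'}_{LSE}$.

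The main obstacle will be the bookkeeping of the oracle/secret blocks between the input state $\rho^{jjj'}_{LSE}$ and each output component $\rho^{j,j,\tilde{j}'}_{LSE}$: one must verify that as $\tilde{j}'$ ranges over $\{j'+1,j'+2,\dots\}$, the additional $\tilde{\theta}_i$ and $\tilde{\theta}_i'$ indices for $i\in[j',\tilde{j}'-1]$ that migrate from the uniform block into the leaked side-information $E$ correspond exactly to what the adversary could have learnt during the round (the server's actual basis choice $\tilde{\theta}_{j'}'$, plus whatever is trivially attributable to an adversary-declared higher index). Once this matching is verified for a generic $\tilde{j}'$, the claim follows.
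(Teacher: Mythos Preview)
Your overall strategy matches the paper's proof: split by whether the adversary has oracle access (the $j\le j'$ versus $j>j'$ cases), apply Theorem~\ref{thm:2RoundCADS} to bound $\Pr[\FS=1]$ in each case, force $\FS=0$ at a cost of $\epsMACa+\epsmatch$, and then match the resulting state to the $\phi0$-component of $\rhooutt{jjj'}$. This is exactly what the paper does.

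There is one concrete error in your bookkeeping, however. You assert that ``the hash oracle $\bigepsilon_{\Kh_1\KOTP_{1,[j',j-1]}E'\to E}$ survives unchanged.'' It does not. The ideal output component $\rho^{j,j,\tilde{j}'}_{LSE}$ carries oracles only for indices $[\tilde{j}',j-1]$, whereas the input carried them for $[j',j-1]$; since $\tilde{j}'\geq j'+1$, the output must have \emph{fewer} oracle calls. The mechanism that removes them is the implicit discard of $\KOTP_{1,i}$ for $i$ below the updated minimum index $\min\{j,\tilde{j}'\}$: once those masking keys are traced out, Theorem~\ref{thm:OTPWegCarSec} says the corresponding oracle outputs become uniform and independent of $\Kh_1$, so the calls can be absorbed into the adversary's internal state. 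The paper invokes exactly this step in the $j\geq j'+1$ branch. Without it, your $\tilde{\vars{P}}$ output would have strictly more oracle structure than $\rho^{j,j,\tilde{j}'}_{LSE}$ and the matching would fail. Your final paragraph correctly flags the oracle/secret accounting as the main obstacle, but the specific claim that the oracle is untouched is the wrong direction; the $\tilde{\theta}$ migration you describe and this $\KOTP$-tracing are the two pieces needed to complete the match.
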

\begin{proof}
Let us consider two scenarios separately: $j\leq j'$ and $j>j'$.
For the case of $j\leq j'$, the adversary has no access to any oracles and $\tilde{\theta}_{[j,j'-1]}\tilde{\theta}_{[j,j'-1]}'$ would be accessible to the adversary.
Let us consider a worse case where $\tilde{\theta}_{j'}'$ is provided to the adversary at the start of the protocol, allowing the adversary to simulate the server's state preparation.
As such, the protocol now matches that of Thm.~\ref{thm:2RoundCADS}, which implies that $\FS=0$ with probability at least $1-\frac{1}{\abs{\vars{T}_{\CV}}}$.
Therefore, we can define a state $\frac{1}{\abs{\vars{T}_{\CV}}}$-close where the server always outputs $\FS=0$,
\begin{equation}
    \vars{P}'(\rho^{jjj'}_{LSE})_{\land \DC=\phi}\subseteq\sum_{\tilde{j}'>j'}\dyad{jj,\tilde{j}'}_{\alpha\alphaU\alpha'}\otimes\dyad{\phi0}_{\DC\FS}\otimes\mel{\tilde{j}'}{\bigepsilon_{\alpha\tilde{\theta}_{j'}'T_{\SV,\alpha'}E\rightarrow E'\alpha'}(\rho^{jjj'}_{SE})}{\tilde{j}'},
\end{equation}
where the adversary is allowed to select $\alpha'$.
We note that a worse case of the state matches $\rhoout_{\land \DC=\phi \land \FS=0}$, where the information on $\tilde{\theta}_{[j'+1,\tilde{j}'-1]}\tilde{\theta}_{[j'+1,\tilde{j}'-1]}'$ is additionally lost as well.
As such, 
\begin{equation}
    \Delta(\vars{P}(\rho^{jjj'}_{LSE})_{\land \DC=\phi|j\leq j'},\rhooutt{jjj'}_{\land \DC=\phi|j\leq j'})\leq \frac{1}{\abs{\vars{T}_{\CV}}}
\end{equation}
in the first case.\\

For the second case of $j\geq j'+1$, the adversary has knowledge of $\tilde{\theta}_{[j',j-1]}\tilde{\theta}_{[j',j-1]}'$, and with oracle access from $\KOTP_{1,j'}$ to $\KOTP_{1,j-1}$, where the relevant values $\tilde{\theta}_{j'}'$ and oracle access for $\KOTP_{1,j'}$ are present.
The overall output state can thus be written as
\begin{multline}
    \vars{P}(\rho^{jjj'}_{LSE})_{\land \DC=\phi|j\geq j'+1}\subseteq\dyad{\phi,jj}_{\DC\alpha\alphaU}\otimes \vars{B}_{\alpha'\alpha_\rr\Kh_1\KOTP_{1,j'}XM_\rr T_{\CV,1,\rr}\rightarrow \FS\alpha'T_{\SV,\alpha'}}\\
    \circ\bigepsilon_{XE'\rightarrow E''T_{\CV,1,\rr}M_\rr\alpha_\rr}\circ\bigepsilon_{\Kh_1\KOTP_{1,[j',j-1]}E\rightarrow E'}(\dyad{j'}_{\alpha'}\otimes\tau_{X\Kh_1\KOTP_{1,[j',j-1]}}\otimes\rho_E),
\end{multline}
where we leave the adversary to simulate the server's state preparation with a random $X$.
This state matches the protocol described in Thm.~\ref{thm:2RoundCADS}, where the adversary has access to the oracle.
As such, $\Pr[\FS=1]\leq\epsMACa+\epsmatch$, allowing us to define a $(\epsMACa+\epsmatch)$-close state where $\FS$ is always set to 0,
\begin{equation}
    \vars{P}'(\rho^{jjj'}_{LSE})_{\land \DC=\phi|j\geq j'+1}=\sum_{\tilde{j}'>j'}\tilde{\Pi}^{\phi0,jj\tilde{j}'}_{\DC\FS\alpha\alphaU\alpha'}\otimes \mel{\tilde{j}'}{\bigepsilon_{\Kh_1\KOTP_{1,[j',j-1]}XE\rightarrow E'\alpha'}(\tau_{X\Kh_1\KOTP_{1,[j',j-1]}}\otimes\rho_E)}{\tilde{j}'}.
\end{equation}
Since $\alpha'$ has increased, $\KOTP_{1,[j',\max\{\tilde{j'},j\}-1]}$ would be traced out since they are never used again, allowing us to remove the corresponding oracles, based on Thm.~\ref{thm:OTPWegCarSec}.
We can further consider a worse case where $\tilde{\theta}_{[j,\tilde{j}'-1]}\tilde{\theta}_{[j,\tilde{j}'-1]}'$ is provided to the adversary.
With these changes, the state now matches the ideal output state $\rhooutt{jjj'}_{\land \DC=\phi,\FS=0}$.
As such, the trace distance of the original output state to a valid output state is bounded by $\epsMACa+\epsmatch$.
Combining the results in both cases, we have that
\begin{equation}
    \Delta(\vars{P}(\rhoinn{jjj'})_{\land \DC=\phi},\rhooutt{jjj'}_{\land \DC=\phi})\leq\epsMACa+\epsmatch
\end{equation}
since $\epsMACa\geq\frac{1}{\abs{\vars{T}_{\CV}}}$.
\end{proof}

Another one-sided attack involves $\FS=\phi$, where the adversary can potentially gain access to the two oracle variants for $(\Kh_1,\KOTP_{1,i})$ and $R$, since server authentication checks do not occur within the same round.
\begin{theorem}
    Consider the CA protocol $\vars{P}$. Then,
    \begin{equation*}
        \Delta(\vars{P}(\rho^{jjj'}_{LSE})_{\land \FS=\phi},\rhooutt{jjj'}_{\land \FS=\phi})=0.
    \end{equation*}
\end{theorem}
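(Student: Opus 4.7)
My plan is to exploit the fact that when $\FS=\phi$ the server's side of the protocol is simply not executed, so no server-held secret is touched in this round, and every modification to the joint state arises from the adversary interacting only with the client. I would begin by splitting the analysis according to $\DC\in\{0,1\}$, mirroring the two non-zero components $\dyad{0\phi}$ and $\dyad{1\phi}$ that appear in the $\FS=\phi$ sector of $\rhooutt{jjj'}$, and then show that each real-protocol branch already sits inside the ideal output family $\rho_{LSE}^{\tilde{j},\tilde{\jU},j'}$ with no approximation.

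For the $(\DC=0,\FS=\phi)$ branch, I would read directly off the protocol: Step 4 replaces all subsequent client messages by random strings, and Step 9 skips privacy amplification, so neither $T_{\CV}$ nor $\tilde{T}_{\SV,j}\|\tilde{\theta}_{j}$ is ever generated. The client advances $\alpha$ and $\alphaU$ to $j+1$, which pushes $\KOTP_{1,j}$ outside the retained index range $[j+1,m]$. It follows that $\Kh_1$, $R$, and $\KOTP_{1,[j+1,m]}$ remain uniform and uncorrelated with $E$, and $\tilde{\theta}_{[\max\{j+1,j'\},m]}$, $\tilde{\theta}'_{[\max\{j+1,j'\},m]}$ are untouched; this matches $\rho_{LSE}^{j+1,j+1,j'}$ verbatim, so this component contributes zero to the trace distance.

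For the $(\DC=1,\FS=\phi)$ branch, the client runs the full protocol on adversary-supplied inputs. The two visible effects are that the adversary sees $T_{\CV}=h_1(\Kh_1,M)\oplus\KOTP_{1,\tilde{j}}$ for a message $M$ it largely controls, and that the client internally updates $\tilde{T}_{\SV,\tilde{j}}\|\tilde{\theta}_{\tilde{j}}=\hPA(R,x'_{P_2})$ where $x'_{P_2}$ is the outcome of measuring the adversary's chosen challenge. These are exactly the two oracle accesses already bundled into the combined channel $\bigepsilon_{R\Kh_1\KOTP_{1,\jOTP}E'\rightarrow\tilde{T}_{\SV,\tilde{j}}E}$ in the definition of $\rho_{LSE}^{\tilde{j},\tilde{j}+1,j'}$. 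Summing over the adversary's accepted index $\tilde{j}\geq j$ weighted by $p_{1\phi,\tilde{j}}$ reproduces the ideal summand $\dyad{1\phi}\otimes\sum_{\tilde{j}\geq j}p_{1\phi,\tilde{j}}\rho_{LSE}^{\tilde{j},\tilde{j}+1,j'}$ exactly.

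The main obstacle will be arguing rigorously that the composition and ordering of the client's hash calls, namely first the $\epsMACa$-almost XOR 2-universal hash producing $T_{\CV}$ and then the 2-universal $\hPA$ producing $\tilde{T}_{\SV,\tilde{j}}\|\tilde{\theta}_{\tilde{j}}$, can be identified with the single joint oracle channel in the input-state template, so that no slack parameter (no $\epsMACa$, $\epsmatch$, or phase-error penalty) needs to be invoked and no adversarial permutation of the two calls enlarges the allowed channel class. Because the server remains entirely passive throughout, no quantum or hashing bound is ever applied, and the match is a genuine equality, yielding the claimed trace distance of $0$.
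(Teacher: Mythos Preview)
Your decomposition by $\DC\in\{0,1\}$ and the identification of the two hash evaluations with the oracle channels in $\rho_{LSE}^{\tilde{j},\tilde{j}+1,j'}$ is exactly the skeleton of the paper's argument. But the paragraph you flag as ``the main obstacle'' misidentifies what the obstacle actually is, and you do not supply the mechanism the paper uses to close it.

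The difficulty is not ordering or composition of the two hash calls. It is that in the real protocol the adversary does \emph{not} choose the messages fed into $h_1$ and $\hPA$: it sends a quantum state, the client measures in basis $g(\tilde{\theta}_{\tilde{j}})$, and the resulting $x'$ determines the hash inputs. The oracle channels in the template, by contrast, let the adversary pick the message freely. These are genuinely different channel classes, and ``largely controls'' is not enough to identify them. The paper closes this gap by an explicit worse-case enlargement: it hands $\tilde{\theta}_{\tilde{j}}$ (and $\tilde{T}_{\SV,j}$) to the adversary, which lets the adversary simulate the client's measurement and $\DC$ decision itself, so that the client's action collapses to two bare oracle calls with adversary-chosen messages. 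That step is the entire content of the $\DC=1$ branch, and without it you have only argued that the real state lies in \emph{some} restricted subclass of the ideal template, not that it lies in the template at all.

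Two smaller points. First, your $\DC=0$ branch does not match $\rho_{LSE}^{j+1,j+1,j'}$ ``verbatim'': when $j\geq j'$ the target template carries an oracle for $(\Kh_1,\KOTP_{1,[j',j]})$, whereas in the real output $\KOTP_{1,j}$ was never touched and remains uniform. The match is again by worse-case inclusion (the trivial oracle that ignores $\KOTP_{1,j}$ is a valid instance of the template channel), not by equality of states. Second, the paper splits into $j<j'$ versus $j\geq j'$ because the input templates differ in which $\tilde{\theta}$'s are already leaked and which $(\Kh_1,\KOTP)$ oracles are already present; you need to track that these pre-existing leakages and oracles land in the correct index ranges of the output template after the update to $\tilde{j}$ and $\tilde{j}+1$. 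Your sketch glosses over this bookkeeping, and it is where most of the actual work in the proof lives.
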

\begin{proof}
Let us consider two scenarios separately: $j<j'$ and $j\geq j'$.\\

In the first case where $j<j'$, the client has no access to any oracles, but have knowledge of $\tilde{\theta}_{[j,j'-1]}\tilde{\theta}_{[j,j'-1]}'$.
For the adversary to trigger $\DC=1$, it can simply use the tag generated from an earlier round corresponding to $T_{\SV,j}$. 
As such, we further break down into two outcomes: $\DC=0$ and $\DC=1$.
For any event triggering $\DC=0$, the output state is simply
\begin{equation}
    \vars{P}(\rho^{jjj'}_{LSE})_{\land \FS=\phi\land \DC=0|j<j'}=\tilde{\Pi}^{0\phi,j+1,j+1,j'}_{\DC\FS\alpha\alphaU\alpha'} \otimes \mel{0}{\vars{A}_{\alpha_\rr'\tilde{T}_{\SV,j}T_{\SV,j,\rr}\rightarrow \DC}\circ \bigepsilon_{ET_{\SV,j}\rightarrow \alpha_\rr'T_{\SV,j,\rr}E'}(\rho_{SE}^{jjj'})}{0},
\end{equation}
with no information released by the client.
Consider a worse case where $\tilde{T}_{\SV,j}$ is handed to the adversary, allowing $\DC$ decision to be simulated. 
As such, the state reduces to a form matching $\rho_{SE'}^{j+1,j+1,j'}$ in the worse case.
Therefore, the state 
\begin{equation}
    \vars{P}(\rho^{jjj'}_{LSE})_{\land \FS=\phi\land \DC=0|j<j'}=\rhooutt{jjj'}_{\land \FS=\phi\land \DC=0|j<j'},
\end{equation}
and the trace distance is 0.\\

For $\DC=1$, consider a (possibly) worse case where $\tilde{\theta}_{\tilde{j}}$ is known to the adversary. 
This allows the adversary to control the measurement outcomes of the client, which serves as input to the generated hash value $T_{\CV,j}$ and the server's tag check value for the next round, $\tilde{T}_{\SV,j}$.
Consider a further worse case where the messages for both tag generations are selected by the adversary which can be simulated by letting the adversary control the measurement and syndrome choice.
Allowing further for the adversary to simulate the $\DC$ generation step by handing $\tilde{T}_{\SV,j}$, the overall output state can thus be written as
\begin{multline}
\label{eqn:2RoundCAOneSideThmProofEqn1}
    \vars{P}(\rho^{jjj'}_{LSE})_{\land \FS=\phi\land \DC=1|j<j'}\subseteq\sum_{\tilde{j}\geq j}\tilde{\Pi}^{1\phi,\tilde{j},\tilde{j}+1,j'}_{\DC\FS\alpha\alphaU\alpha'}\otimes \vars{A}_{RM_2\rightarrow T_{\SV,\tilde{j}}}\\
    \circ\vars{A}_{\Kh_1\KOTP_{1,\tilde{j}}M_1\rightarrow T_{CV,\tilde{j}}}\circ\Tr_{\DC\alpha_\rr'}[\Pi_{\DC\alpha_\rr'}^{1\tilde{j}}\bigepsilon_{E\tilde{T}_{\SV,j}T_{\SV,j}\rightarrow \alpha_\rr'T_{\SV,j,\rr}M_1M_2\DC E'}(\rho_{SE}^{jjj'})],
\end{multline}
noting that $\DC=1$ while $\FS=\phi$ would lead to $\alphaU=\alpha+1$.
The two client steps now behave as oracles, corresponding to that for $R$ to generate the server tag, and that for $(\Kh_1,\KOTP_{1,\tilde{j}})$ for client's tag generation.
A worse case of the output state matches the form of $\rhooutt{jjj'}_{SE\land \FS=\phi\land \DC=1}$, where the components are of the form $\rho_{SE}^{\tilde{j},\tilde{j}+1,j'}$, which contains both oracles generated here.\\

In the second case where $j\geq j'$, the adversary has access to the $(\Kh_1,\KOTP_{1,[j',j-1]})$ oracles, and knowledge of $\tilde{\theta}_{[j',j-1]}$.
Let us consider a worse case where $\tilde{\theta}_j$ is accessible to the adversary.
Similar to the first case, the adversary can freely trigger $\DC=1$, and we have to analyse the secrets leakage of both cases separately.
For the case of $\DC=0$, the same analysis as the first case follows, where no additional information is revealed, and no server validation tag is generated by the client.
This yields similarly
\begin{equation}
    \vars{P}(\rho^{jjj'}_{LSE})_{\land \FS=\phi\land \DC=0|j\geq j'}=\rhooutt{jjj'}_{\land \FS=\phi\land \DC=0|j\geq j'},
\end{equation}
where the additional leakage on $\tilde{\theta}_j$ fits into the ideal output state definition.
For the case of $\DC=1$, the adversary similarly has control on the measurement outcomes of Alice, and would receive a generated hash value for the client validation, and obtain a similar output state as Eq.~\eqref{eqn:2RoundCAOneSideThmProofEqn1}, where the two client steps would behave as oracles.
In this case, we have $\tilde{j}\geq j\geq j'$, where the new oracles are part of the state $\rho_{SE}^{\tilde{j},\tilde{j}+1,j'}$ due to the increase in $\alphaU$.
As such, the output state reduces to $\rhooutt{jjj'}_{\land \FS=\phi\land \DC=1}$.
Since the output of all states matches that of the ideal output state, the overall trace distance is 0.
\end{proof}

\subsection{Man-in-the-Middle Attack}

The final attack variant to consider is one where both parties are involved.
We begin the analysis for a man-in-the-middle attack by replacing the decoy state parameter estimation with actual parameter checks and the authentication check with a matching check, similar to that in Appendix~\ref{app:PRNGAKEProtocolSec_OverallSec}.
The $\tildeDPE$ decision modification would be the the same (note $\DPE$ can have slight differences), and this would result in the same $2\epsds$ penalty.
The authentication check to be replaced is the client validation step, where we replace $D_{\CV}$ with $\tilde{D}_{\CV}$, defined as
\begin{equation}
    \tilde{D}_{\CV}=\begin{cases} 1 & \substack{\{\alpha>\alpha', M=(\hat{X}_{P_2},P_{2,\rr},P_{1,\rr},X_{P_1,\rr}',S_\rr),\\
    T_{CV,\alpha'}=T_{\CV,\rr}\}\lor\{\beta=\alpha',\DC=1,T_{\CV}=T_{\CV,\rr},\\
    (X_{P_2}',P_1,P_2,X_{P_1}',S)=(\hat{X}_{P_2},P_{2,\rr},P_{1,\rr},X_{P_1,\rr}',S_\rr)\}} \\ 0 & otherwise\end{cases},
\end{equation}
where the two scenarios considers if the client validation tag with the index $\alpha'$ comes from the client's response or the adversary's oracle.
In either case, the same condition of matching message and an untouched transmission ($\DC=1$ ensures a tag is generated for $\beta=\alpha'$) gives $D_{\CV}=1$.
The probability that $D_{\CV}=1$ when $\tilde{D}_{\CV}=0$ would be bounded by $\epsMACa$, since an unmatched message would not allow the adversary to guess the correct tag, by the $\epsMACa$-strong 2-universal property of the hash function.
As such, the overall penalty can be summarised as
\begin{equation}
    \Delta(\vars{P}(\rho^{jjj'}_{LSE}),\rhooutt{jjj'})\leq\Delta(\vars{P}'(\rho^{jjj'}_{LSE}),\rhopoutt{jjj'})+2\epsMACa+2\epsds,
\end{equation}
and we can simplify the analysis, noting that the output state alters to $\tildeDPE$ and $\tilde{D}_{\CV}$ as well.\\

When both client and server are involved in the protocol, WLOG, the server's state generation is the first step of the protocol, since it does not take in any input from the adversary and thus can always be shifted to before the client's response.
It can also be argued that the client's and server's index and secrets update would be the final steps, since they do not result in any output accessible to the adversary at the end of the protocol round.
For clarity, let us define $\beta$ as the client's updated index after receiving $\alpha_\rr$ and undergoing a label alignment since $\beta$ would be the index utilised for the measurement and response steps, and $\tilde{\alpha}'$ as the server's updated index.
The most general description of the protocol thus has fixed protocol steps, namely
\begin{equation}
\begin{split}
    \vars{P}'=&\vars{A}_{RX'P_2\rightarrow \tilde{\alpha}\tilde{T}_{\SV,\tilde{\alpha}}\tilde{\theta}_{\alpha}}\circ\vars{B}_{\alpha'\alpha_\rr\tildeFS XS_{\rr}R\rightarrow \tilde{\alpha}'T_{\SV,\tilde{\alpha}'}\tilde{\theta}_{\alpha'}'}\circ\vars{C}_{\alpha'\alpha_\rr XX_{P_1,\rr}'P_{1,\rr}P_{2,\rr}S_{\rr}T_{\CV,\rr}E''\rightarrow\tildeDPE\tildeFS}\\
    &\circ\bigepsilon_{\DC X_{P_1}'P_1P_2ST_{\CV}E'\rightarrow X_{P_1,\rr}'P_{1,\rr}P_{2,\rr}S_{\rr}T_{\CV,\rr}\alpha_\rr E''}\circ\vars{A}_{\alpha \alpha_\rr'T_{\SV,\alpha,\rr}\tilde{T}_{\SV,\alpha}B\tilde{\theta}_{\beta}\Kh_1\KOTP_{1,\beta}\rightarrow \DC X'X_{P_1}'P_1P_2ST_{\CV}}\\
    &\circ\bigepsilon_{T_{\SV,\alpha}QE\rightarrow \alpha_\rr'T_{\SV,\alpha,\rr}BE'}\circ\vars{B}_{X\tilde{\theta}_{\alpha'}'\rightarrow Q},
\end{split}
\end{equation}
where all information for $\tildeDPE$ and $\tildeFS$ checks are captured within $E''$, such as the original values of the message sent by the client or the message utilised in the oracles.\\

The first case we consider for the analysis is the $\tildeFS=0,\DC=0$ case.
\begin{theorem}
Consider the CA protocol $\vars{P}$, then
\begin{equation*}
    \Delta(\tilde{\Pi}_{\DC}^{0}\vars{P}'(\rho_{LSE}^{jjj'})\tilde{\Pi}_{\DC}^{0},\rhopoutt{jjj'}_{\land \DC=\tildeFS=0})\leq\epsmatch.
\end{equation*}
\end{theorem}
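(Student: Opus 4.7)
The plan is to decompose the conditioned state by the server's outcome $\tildeFS$ and to bound the $\tildeFS=1$ mass with the Hamming-ball matching argument already used in Thm.~\ref{thm:2RoundCADS}. In the man-in-the-middle setting the server always participates, so $\tildeFS\in\{0,1\}$, and we may write $\tilde{\Pi}_{\DC}^{0}\vars{P}'(\rho_{LSE}^{jjj'})\tilde{\Pi}_{\DC}^{0}=\sigma^{0,0}+\sigma^{0,1}$, where $\sigma^{0,\tilde{f}}$ is the sub-normalised branch with $(\DC,\tildeFS)=(0,\tilde{f})$. I would first check that $\sigma^{0,0}$ matches $\rhopoutt{jjj'}_{\land \DC=\tildeFS=0}$ exactly, by tracking the index updates ($\alpha\to j+1$, $\alphaU\to j+1$, and $\alpha'$ advancing to some $\tilde{j}'\geq j'+1$ via steps~4 and~9) together with the trace over $\KOTP_{1,j}$: because neither a client tag nor a server tag is acted upon in this branch, the residual oracle/secret structure reproduces the prescribed $\rho_{LSE}^{j+1,j+1,\tilde{j}'}$ components of the ideal output. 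Triangle inequality then reduces the theorem to $\Tr(\sigma^{0,1})=\Pr[\DC=0,\tildeFS=1]\leq\epsmatch$.

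Next I would dissect the idealised check $\tilde{D}_{\CV}$ in the case $\DC=0$. Its second branch is immediately ruled out because it requires $\DC=1$, so $\tilde{D}_{\CV}=1$ can only arise from the first branch: $\alpha>\alpha'$, relayed message $M=(\hat{X}_{P_2},P_{2,\rr},P_{1,\rr},X_{P_1,\rr}',S_\rr)$, and $T_{\CV,\rr}=T_{\CV,\alpha'}$ where $T_{\CV,\alpha'}$ is produced through the authentication oracle. Since $\DC=0$ pins $\alpha=\alpha_\rr'=j$ and $\alpha'=j'$, this branch is accessible only when $j>j'$; in that regime the input state $\rho_{LSE}^{jjj'}$ supplies exactly the oracle $\bigepsilon_{\Kh_1\KOTP_{1,[j',j-1]}\,\cdot\,\rightarrow\,\cdot}$ needed to obtain a tag for index $j'$. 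The crucial observation is that this oracle is applied at input-state preparation, hence strictly before step~2 of the current round, so the committed pair $(M,T_{\CV,j'})$ is fixed independently of the server's freshly sampled current-round bit string $X$.

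The bound is then closed by the Hamming-ball count from the last step of the proof of Thm.~\ref{thm:2RoundCADS}. The $\tildeDPE=1$ condition forces $wt(X_{P_1,\rr}'\oplus X_{P_1,\rr})\leq |P_{1,\rr}|\ebittol$ with $|P_{1,\rr}|\geq P_{1,\tol}$. Because the client's responses in steps~5--7 are replaced by independent uniform strings when $\DC=0$, they carry no information about $X$, and the pre-committed $X_{P_1,\rr}'$ (part of $M$) is independent of the uniform fresh $X$. Hence
\begin{equation*}
\Pr[\DC=0,\tildeFS=1]\leq\max_{|\tilde{P}_1|\geq P_{1,\tol}}\sum_{i=0}^{\lfloor|\tilde{P}_1|\ebittol\rfloor}\binom{|\tilde{P}_1|}{i}2^{-|\tilde{P}_1|}\leq 2^{-P_{1,\tol}[1-\hbin(\ebittol)]}=\epsmatch.
\end{equation*}

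The main subtlety, which I would handle explicitly, is that whenever $j>j'$ the basis seed $\tilde{\theta}_{j'}'$ is already in $E'$, so the adversary can measure the current-round $Q$ and recover $X$ on the fly. This does not rescue the adversary, because the oracle output $T_{\CV,j'}$ is pinned by the pre-round choice of $M$: an in-round measurement of $X$ lets the adversary manufacture fresh random traffic but cannot retroactively alter the committed $M$ that the tag authenticates. Making this rigorous amounts to sequencing the oracle call strictly before the server's fresh sampling of $X$ inside $\rho_{LSE}^{jjj'}$, so that $M$ and $X$ remain independent conditioned on the rest of $E$, after which the Hamming-ball bound above applies verbatim.
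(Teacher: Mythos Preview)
Your proposal is correct and follows essentially the same route as the paper: split via triangle inequality into the $\tildeFS=1$ mass (bounded by $\epsmatch$ through the oracle-branch matching argument of Thm.~\ref{thm:2RoundCADS}, since $\DC=0$ kills the second branch of $\tilde{D}_{\CV}$) plus the $\DC=\tildeFS=0$ component, which coincides with the ideal output. The paper spends more space on the latter, breaking into the two cases $j\geq j'+1$ and $j'\geq j$ to verify that the residual oracle/basis-seed structure after the appropriate traces matches $\rho_{LSE}^{j+1,j+1,\tilde{j}'}$, whereas you compress this into a single sentence; your sketch is right in spirit, but you should be prepared to supply that case split explicitly.
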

\begin{proof}
Here, we first split the trace distance by triangle inequality,
\begin{equation}
\begin{split}
    \Delta(\Pi_{\DC}^{0}\vars{P}'(\rho_{LSE}^{jjj'})\Pi_{\DC}^{0},\rhopoutt{jjj'}_{\land \DC=\tildeFS=0})\leq&\Delta(\Pi_{\DC}^{0}\vars{P}'(\rho_{LSE}^{jjj'})\Pi_{\DC}^{0},\Pi_{\DC\tildeFS}^{00}\vars{P}'(\rho_{LSE}^{jjj'})\Pi_{\DC\tildeFS}^{00})\\
    &+\Delta(\Pi_{\DC\tildeFS}^{00}\vars{P}'(\rho_{LSE}^{jjj'})\Pi_{\DC\tildeFS}^{00},\rhooutt{jjj'}_{\land \DC=\tildeFS=0}).
\end{split}
\end{equation}
The first term has matching components for $\DC=\tildeFS=0$, and the trace distance reduces to the trace norm of $\Pi_{\DC\tildeFS}^{01}\vars{P}'(\rho_{LSE}^{jjj'})\Pi_{\DC\tildeFS}^{01}$, which can be bounded by $\Pr[\tildeFS=1,\DC=0]$.
This can be further upper bounded by $\Pr[\tilde{D}_{\CV}=1,\tildeDPE=1,\DC=0]$.
With these conditions, the only case where it can occur is when the indices $\alpha>\alpha'$, and the adversary correctly matches the messages.
Following the proof of Thm.~\ref{thm:2RoundCADS}, since $X$ is independent of the chosen message $M$, the probability that the matching passes both the error check of $\tildeDPE$ and matching condition would be bounded by $\epsmatch$.\\

The second term focuses on $\DC=\tildeFS=0$, where the adversary must have sent $\alpha_\rr'\leq j$, leading the index update to be $\tilde{\alpha}=j+1$.
The server on the other hand updates based on $\alpha_\rr=\tilde{j}'> j'$ announced to it, otherwise, it updates to $j'+1$.
As such, in general, the output indices are a mixture of $(j+1,j+1,\tilde{j}')$ where $\tilde{j}'\geq j'+1$, giving an overall output state 
\begin{multline}
    \vars{P}'(\rho^{jjj'}_{LSE})_{\land \DC=\tildeFS=0}=\sum_{\tilde{j}'\geq j'+1}\tilde{\Pi}^{00,j+1,j+1,\tilde{j}'}_{\DC\tildeFS\alpha\alphaU\alpha'}\otimes\Tr_{\DC\tildeFS\alpha'}[\Pi_{\DC\tildeFS\alpha'}^{00\tilde{j}'}\circ\vars{C}_{\alpha'\alpha_\rr XX_{P_1,\rr}'P_{1,\rr}P_{2,\rr}S_{\rr}T_{\CV,\rr}E''\rightarrow\tildeFS\alpha'}\\
    \circ\bigepsilon_{\DC E'\rightarrow X_{P_1,\rr}'P_{1,\rr}P_{2,\rr}S_{\rr}T_{\CV,\rr}\alpha_\rr E''}\circ\vars{A}^j_{\alpha_\rr'T_{\SV,j',\rr}\tilde{T}_{\SV,j}\rightarrow \DC}\circ\bigepsilon^{j'}_{QT_{\SV,j'}E\rightarrow \alpha_\rr'T_{\SV,j',\rr}E'}\circ\vars{B}_{X\tilde{\theta}_{j'}\rightarrow Q}(\rho_{SE}^{jjj'})],
\end{multline}
where $\tildeFS=0$ and $\DC=0$ result in multiple steps generating random strings as outcome.
What remains is to prove that the state within the partial trace matches the ideal output state.
For this, we need to consider two scenarios, one where $j'\leq j-1$ and one where $j'\geq j$.\\

In the first scenario where $j\geq j'+1$, $(\Kh_1,\KOTP_{1,[j',j-1]})$ oracles and $\tilde{\theta}_{[j',j-1]}\tilde{\theta}_{[j',j-1]}'$ are accessible to the adversary.
Since $\tilde{\theta}_{j'}'$ is known to the adversary, it can simulate the server's quantum state generation step.
We can further let $\tilde{T}_{\SV,j}$ and additional parameters required to determine $\tildeFS$ to be made public, allowing the adversary to simulate the client's step and the checking step.
Therefore, the entire state collapses into the adversary's channel,
\begin{equation}
\begin{split}
    \vars{P}'(\rho^{jjj'}_{LSE|j\geq j'+1})_{\land \DC=\tildeFS=0} \subseteq&\sum_{\tilde{j}'\geq j'+1}\tilde{\Pi}^{00,j+1,j+1,\tilde{j}'}_{\DC\tildeFS\alpha\alphaU\alpha'}\otimes\Tr_{\DC\tildeFS\alpha'}[\Pi_{\DC\tildeFS\alpha'}^{00\tilde{j}'}\bigepsilon^{jj'}_{E\rightarrow E'\DC\tildeFS\alpha'}(\tau_{R\KOTP_{1,[j,m]}}\\
    &\otimes\tilde{\tau}_{\tilde{\theta}_{[j,m]}\tilde{\theta}_{[j,m]}'}\otimes\bigepsilon_{\Kh_1\KOTP_{1,[j',j-1]}E''\rightarrow E}(\tau_{\Kh_1\KOTP_{1,[j',j-1]}}\otimes\rho_{E''}))],
\end{split}
\end{equation}
where we note that $\tilde{\theta}_{j'}'$ and $\tilde{T}_{\SV,j}$ are implicitly included in $E''$.
We can consider a worse case where $\tilde{\theta}_{[j,j_{\max}-1]}\tilde{\theta}_{[j,j_{\max}-1]}'$ for $j_{\max}=\max\{j+1,\tilde{j}'\}$ is provided to the adversary, along with oracle access for index $j$.
Since the minimum index is now at least $\tilde{j}'$, the hash masking keys $\KOTP_{1,[j',\tilde{j}'-1]}$ are discarded, which by Thm.~\ref{thm:OTPWegCarSec}, allows the removal of the corresponding oracles.
With these changes, the worse case state now matches that of $\rhooutt{jjj'}_{\land \DC=\tildeFS=0}$.\\

In the second scenario where $j'\geq j$, no oracle is accessible to the adversary, while $\tilde{\theta}_{[j,j'-1]}\tilde{\theta}_{[j,j'-1]}'$ would be accessible.
Similarly, we can hand the adversary access to $\tilde{\theta}_{j'}\tilde{\theta}_{j'}'$, $\tilde{T}_{\SV,j}$ and parameters for checking, allowing the adversary to simulate the client's and server's actions and collapsing the channels into a single channel.
\begin{multline}
    \vars{P}'(\rho^{jjj'}_{LSE|j'\geq j})_{\land \DC=\tildeFS=0}\subseteq\sum_{\tilde{j}'\geq j'+1}\tilde{\Pi}^{00,j+1,j+1,\tilde{j}'}_{\DC\tildeFS\alpha\alphaU\alpha'}\otimes\Tr_{\DC\tildeFS\alpha'}[\Pi_{\DC\tildeFS\alpha'}^{00\tilde{j}'}\\
    \bigepsilon^{jj'}_{E\rightarrow E'\DC\tildeFS\alpha'}(\tau_{R\Kh_1\KOTP_{1,[j,m]}}\otimes\tilde{\tau}_{\tilde{\theta}_{[j,m]}\tilde{\theta}_{[j,m]}'}\otimes\rho_{E''})],
\end{multline}
Since $\tilde{j}'\geq j'+1\geq j+1$, we have that the maximum index as $\tilde{j}'$.
Therefore, by further allowing access to $\tilde{\theta}_{[j',\tilde{j}'-1]}$ to the adversary and tracing out $\KOTP_{1,j}$, we match the output state $\rhooutt{jjj'}_{\land \DC=\tildeFS=0}$.\\

Since the protocol output state matches the ideal output state, the trace distance is simply 0.
\end{proof}

The second case to be considered is $\DC=1,\tildeFS=0$, which can occur without requiring complex attacks from the adversary since the client does not perform an authentication check of the server within the same round.
It would, however, trigger an update of $\alphaU$, indicating that server authentication check would fail in the next round involving a client.
\begin{theorem}
    Consider the CA protocol $\vars{P}$. Then,
    \begin{equation*}
        \Delta(\tilde{\Pi}_{\DC\tildeFS}^{10}\vars{P}'(\rho_{LSE}^{jjj'})\tilde{\Pi}_{\DC\tildeFS}^{10},\rhopoutt{jjj'}_{\land \DC=1\land\tildeFS=0})=0.
    \end{equation*}
\end{theorem}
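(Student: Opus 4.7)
The plan is to show that the protocol output state, restricted to the event $\DC=1, \tildeFS=0$, can be matched exactly to the ideal output state $\rhopoutt{jjj'}_{\land \DC=1\land\tildeFS=0}$, which is a convex combination of state variants $\rho^{\tilde{j},\tilde{j}+1,\tilde{j}'}_{LSE}$ with $\tilde{j}\geq j$ and $\tilde{j}'\geq j'+1$. The zero trace distance stems from the fact that nothing in the $\DC=1,\tildeFS=0$ branch reveals any information that cannot already be absorbed into the oracle channels that define the ideal input-state variant indexed by $(\tilde{j}, \tilde{j}+1, \tilde{j}')$.

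First, I would split the output state according to the client's updated index $\tilde{\alpha}=\tilde{j}$ (either $\alpha=j$ or the received $\alpha_\rr'\geq j$, so $\tilde{j}\geq j$) and the server's updated index $\tilde{\alpha}'=\tilde{j}'$ (since $\tildeFS=0$ forces an update, $\tilde{j}'\geq j'+1$). On this branch, $\DC=1$ triggers client-side secret refresh: the client produces $\tilde{T}_{\SV,\tilde{j}}=\hPA(R,X'||\cdots)$ and $T_{\CV}=h_1(\Kh_1,M)\oplus\KOTP_{1,\tilde{j}}$, and updates $\alphaU=\tilde{j}+1$. The server, seeing $\tildeFS=0$, never uses $K_2, R, \Kh_1,$ or $\KOTP_{1,\tilde{j}'}$ in the key-dependent way — it just rolls $\alpha'$ forward and discards any partially generated material.

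Second, I would absorb the client's two hash computations into exactly the two oracle channels $\bigepsilon_{\Kh_1\KOTP_{1,\tilde{j}}M\rightarrow T_{\CV}}$ and $\bigepsilon_{RM\rightarrow \tilde{T}_{\SV,\tilde{j}}}$ that appear in the definition of $\rho^{\tilde{j},\tilde{j}+1,\tilde{j}'}_{LSE}$. Because the messages fed into those hashes in the actual protocol are determined by the measurement outcomes $X'$ together with the adversary-controlled classical inputs $(P_1,P_2,X_{P_1,\rr}',S_\rr)$ and the adversary's quantum attack on $Q$, they are a special case of the fully adversary-chosen messages permitted in the ideal oracle. Hence the output is a worse-case realization of the oracle-augmented state. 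Everything else in the protocol — the server's state preparation, its $T_{\SV,\alpha'}$ generation from $R$, the adversary's channels on $Q$ and on the classical transmissions — can be folded into the single adversary channel $\bigepsilon^{j j'}_{\cdots\rightarrow E}$ appearing in $\rho^{\tilde{j},\tilde{j}+1,\tilde{j}'}_{SE}$, possibly after handing the adversary basis-seed material $\tilde{\theta}_{[j_{\max},\tilde{j}_{\max}-1]}\tilde{\theta}'_{[j_{\max},\tilde{j}_{\max}-1]}$ and allowing implicit partial traces of $\KOTP_{1,[\min\{j,j'\},\tilde{j}-1]}$ (which Thm.~\ref{thm:OTPWegCarSec} converts into uniform and independent tags, so their corresponding oracle instances vanish harmlessly).

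The main obstacle is bookkeeping: making sure the index update $(\tilde{j},\tilde{j}+1,\tilde{j}')$ is consistent with the oracle indices appearing in the worse-case state — i.e., that whenever the protocol announces or consumes $\KOTP_{1,i}$ or $\tilde{\theta}_i$ for some $i<\tilde{j}_{\min}$ these are the precise indices declared as leaked or traced in $\rho^{\tilde{j},\tilde{j}+1,\tilde{j}'}_{LSE}$, and that the case $\tilde{j}\geq j'$ vs.\ $\tilde{j}<j'$ does not require different treatment. Both sub-cases can be handed to the adversary as additional side information, producing in each instance a valid worse-case realization of the ideal state, so the sum over $(\tilde{j},\tilde{j}'$) reconstructs $\rhopoutt{jjj'}_{\land \DC=1\land\tildeFS=0}$ exactly and the trace distance is zero.
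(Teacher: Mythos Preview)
Your proposal is correct and follows essentially the same approach as the paper: hand the adversary the relevant basis seeds so it can simulate both the server's state preparation and the client's measurement, then recognize that the client's two hash evaluations (producing $T_{\CV}$ and $\tilde{T}_{\SV,\tilde{j}}$) are exactly the $(\Kh_1,\KOTP_{1,\tilde{j}})$ and $R$ oracle calls appearing in $\rho^{\tilde{j},\tilde{j}+1,\tilde{j}'}_{LSE}$, and finally reconcile the $\KOTP$ and $\tilde{\theta}$ index ranges case-by-case using Thm.~\ref{thm:OTPWegCarSec}. One small slip: the CA protocol has no $K_2$ (it was dropped relative to QAKE), so that reference is spurious, but it does not affect the argument.
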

\begin{proof}
While the adversary may have differing oracle and basis generation seed access for $j\geq j'+1$ and $j\leq j'$, we analyse both cases as one by providing the adversary access to $\tilde{\theta}_{\alpha'}'$, allowing the simulation of the server's state preparation step and handing $X$ to the server.
We can do the same for the decision of $\tildeDPE$ and $\tildeFS$, where additional parameters can be provided to the adversary can simulate the channel.
Furthermore, since $\tildeFS=0$, $T_{SV,\tilde{\alpha}'}$ is no longer generated, while $\tilde{\alpha}$ and $\tilde{\alpha}'$ updates can be explicitly computed, leaving us with the output state
\begin{equation}
\begin{split}
    &\vars{P}'(\rho^{jjj'}_{LSE})\subseteq\sum_{\tilde{j}\geq j,\tilde{j}'\geq j'+1}\tilde{\Pi}^{10,\tilde{j},\tilde{j}+1,\tilde{j}'}_{\DC\tildeFS\tilde{\alpha}\tilde{\alpha}^U\tilde{\alpha}'}\otimes\vars{A}_{RX_{P_2}'\rightarrow \tilde{T}_{\SV,\tilde{j}}}\circ\Tr_{\DC\tildeFS\alpha_\rr'\alpha_\rr}[\Pi_{\DC\tildeFS\alpha_\rr'\alpha_\rr}^{10\tilde{j}\tilde{j}'}\bigepsilon_{\alpha' X\DC X_{P_1}'P_1P_2ST_{\CV}E'\rightarrow \tildeFS\alpha_\rr E''}\\
    &\circ\vars{A}^{j\tilde{j}}_{T_{\SV,j,\rr}\tilde{T}_{\SV,j}B\tilde{\theta}_{\tilde{j}}\Kh_1\KOTP_{1,\tilde{j}}\rightarrow \DC X'X_{P_1}'P_1P_2ST_{\CV}}\circ\bigepsilon_{T_{\SV,j}\tilde{\theta}_{j'}'XE\rightarrow \alpha_\rr'T_{\SV,j,\rr}BE'}(\rho_{SE}^{jjj'})]
\end{split}
\end{equation}
We can further consider a worse case where $\tilde{\theta}_{\tilde{j}}$ is also handed to the adversary alongside $\tilde{T}_{\SV,j}$, which allows the client's measurement step and $\DC$ decision to be simulated by the adversary.
This reduces the client's channel to $\vars{A}^{j\tilde{j}}_{X'\Kh_1\KOTP_{1,\tilde{j}}\rightarrow X_{P_1}'P_1P_2ST_{\CV}}$, where $X'$ is provided by the adversary.
We can further consider a worse case where the message itself, $(X_{P_1}',X_{P_2}',P_1,P_2,S)$, is provided by the adversary, further reducing the output state to 
\begin{equation}
\label{eqn:2RoundCAMITMDC1DS0outputstate}
\begin{split}
    \vars{P}'(\rho^{jjj'}_{LSE})\subseteq&\sum_{\tilde{j}\geq j,\tilde{j}'\geq j'+1}\tilde{\Pi}^{10,\tilde{j},\tilde{j}+1,\tilde{j}'}_{\DC\tildeFS\tilde{\alpha}\tilde{\alpha}^U\tilde{\alpha}'}\otimes\vars{A}_{RM'\rightarrow \tilde{T}_{\SV,\tilde{j}}}\circ\Tr_{\DC\tildeFS\alpha_\rr'\alpha_\rr}[\Pi_{\DC\tildeFS\alpha_\rr'\alpha_\rr}^{10\tilde{j}\tilde{j}'}\bigepsilon_{\alpha' \DC X_{P_1}'P_1P_2ST_{\CV}M'E'\rightarrow \tildeFS\alpha_\rr E''}\\
    &\circ\vars{A}^{j\tilde{j}}_{M\Kh_1\KOTP_{1,\tilde{j}}\rightarrow T_{\CV}}\circ\bigepsilon_{T_{\SV,j}\tilde{T}_{\SV,j}\tilde{\theta}_{j'}'\tilde{\theta}_{\tilde{j}}E\rightarrow M\alpha_\rr'\DC E'}(\rho_{SE}^{jjj'})],
\end{split}
\end{equation}
where we allowed decoupling of the message selection for the two client steps, which is a more general quantum channel than one where the messages are correlated.
The set of quantum channels now appears as one where there are two oracles for $(\Kh_1,\KOTP_{1,\tilde{j}})$ and $R$ respectively (note the $\bigepsilon$ channels are deemed as internal computations can can be absorbed as part of the oracles).\\

Here, we demonstrate that the oracles and basis generation seed loss can match to the ideal output state for various combination of $(j,j',\tilde{j},\tilde{j}')$.
Importantly, in the protocol, $\tilde{\theta}_{j'}'\tilde{\theta}_{\tilde{j}}$ is lost, along with the generation of oracles for $R$ and $(\Kh_1,\KOTP_{1,\tilde{j}})$.\\

Let us consider the first case of $j\geq j'+1$, where the adversary has access to oracles for $(\Kh_1,\KOTP_{1,[j',j-1]})$, and $\tilde{\theta}_{[1,j-1]}\tilde{\theta}_{[1,j-1]}'$.
We consider a worse case where $\KOTP_{1,[j,\tilde{j}-1]}$ and $\tilde{\theta}_{[j,j_{\max}]}\tilde{\theta}_{[j,j_{\max}]}'$ is provided, where $j_{\max}:=\max\{\tilde{j},\tilde{j}'\}$.
Moreover, due to the increase in the minimum index, $j_{\min}:=\min\{\tilde{j},\tilde{j}'\}$, there is an implicit trace over $\KOTP_{1,[j',j_{\min}-1]}$.
If $\tilde{j}\geq\tilde{j}'$, the trace would remove all rounds up to $\tilde{j}'-1$, leaving the client tag oracles from $[\tilde{j}',\tilde{j}]$. 
If $\tilde{j}<\tilde{j}'$, the trace removes all client tag oracles up to $\tilde{j}-1$, which leaves a single oracle for $\tilde{j}$.
In both cases, the final state matches the ideal output state, a mixture of $\rho_{LSE}^{\tilde{j},\tilde{j}+1,\tilde{j}'}$.\\

For the second case of $j\leq j'$, the adversary has access to $\tilde{\theta}_{[1,j'-1]}\tilde{\theta}_{[1,j'-1]}'$, but has no oracle access.
We consider a worse case where $\tilde{\theta}_{[j',j_{\max}]}\tilde{\theta}_{[j',j_{\max}]}'$ is provided.
If $\tilde{j}<\tilde{j}'$, the two oracles present, for $R$ and for $(\Kh_1,\KOTP_{1,\tilde{j}})$ matches the state $\rho_{SE}^{\tilde{j},\tilde{j}+1,\tilde{j}'}$.
If $\tilde{j}\geq\tilde{j}'$, we can additionally allow access to oracles corresponding to $\KOTP_{1,[\tilde{j}',\tilde{j}-1]}$, which leaves the oracle access up to index $\tilde{j}$.
As such, the overall state matches the ideal output state.\\

Since the overall state matches the ideal output state in both cases, the trace distance is simply 0.
\end{proof}

The final case to consider is the case when $\DC=\tildeFS=1$.
A critical component of security is provided by $\tilde{D}_{\CV}$, where the requirement for the message selection limits the success to the case of $\beta=\alpha'$, along with an unmodified message.
This allows us to reduce the analysis to a protocol that is similar to that of Thm.~\ref{thm:BB84PRNG_MinEnt}, where the entropic uncertainty relation is applied.
\begin{theorem}
    Consider the CA protocol $\vars{P}$. Then,
    \begin{multline*}
        \Delta(\tilde{\Pi}_{\DC\tildeFS}^{11}\vars{P}'(\rhoinn{jjj'})\tilde{\Pi}_{\DC\tildeFS}^{11},\rhopoutt{jjj'}_{\land \DC=\tildeFS=1})\leq 4\sqrt{2\epsph}+2\varepsilon_2+\frac{1}{2}\times 2^{-\frac{1}{2}[N_{P_2,1}^{\tol}[1-\hbin(\ephtol')]-l_{\tilde{\theta}}-\log_2\abs{\vars{T}_{\SV}}\abs{\vars{T}_{\CV}}-\leakEC]}\\
        +\sqrt{\abs{\vars{T}_{\CV}}\epsMACa-1+2^{\log_2\abs{\vars{T}_{\CV}}+\log_2\left(\frac{2}{\varepsilon_2}+1\right)-N_{P_2,1}^{\tol}[1-\hbin(\ephtol')]+\leakEC}},
    \end{multline*}
    where $\varepsilon_2>0$ is a parameter to be optimised.
\end{theorem}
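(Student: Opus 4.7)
The plan is to reduce this hardest case to a structure analogous to Thm.~\ref{thm:QAKESecProofSSFB1} (the $\FB=1$ branch of shared secrets privacy for the original QAKE protocol), adapted to the CA setting where $\hPA(R,\cdot)$ simultaneously outputs the refreshed basis seed $\tilde{\theta}_{\tilde{j}}$ and the server tag $T_{\SV,\tilde{j}}$. First, since $\tildeFS=1$ forces $\tilde{D}_{\CV}=1$, the idealized client-validation predicate restricts us to two sub-cases: either $\beta=\alpha'$ with $M=M'$ and $T_{\CV}=T_{\CV,\rr}$ (the honest flow), or $\alpha>\alpha'$ with the tag obtained from a prior $(\Kh_1,\KOTP_{1,\alpha'})$-oracle call matching the current message. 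In both sub-cases the corrected string $\hat{X}_{P_2}$ equals $X_{P_2}'$, so by reversing the CPTP map that copies $X_{P_2}'$ to $\hat{X}_{P_2}$ (and hence the $\hPA$ step on the server side) it suffices to bound a single trace distance of the form
\begin{equation*}
\Delta\!\left(\rho^{'\text{out}}_{\tilde{\theta}_{\tilde{j}}T_{\SV,\tilde{j}}\Kh_1\KOTP_{1,\tilde{j}}E\land\tildeFS=1},\;\tilde{\tau}_{\tilde{\theta}_{\tilde{j}}}\otimes\tau_{T_{\SV,\tilde{j}}}\otimes\tau_{\Kh_1\KOTP_{1,\tilde{j}}}\otimes\rho^{'\text{out}}_{E\land\tildeFS=1}\right),
\end{equation*}
after tracing unused secrets ($R$ is consumed, and $\KOTP_{1,[j_{\min},\tilde{j}-1]}$ can be discarded via Thm.~\ref{thm:OTPWegCarSec}).

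Next I split this trace distance via the triangle inequality into two pieces: (i) privacy of the refreshed $(\tilde{\theta}_{\tilde{j}},T_{\SV,\tilde{j}})=\hPA(R,X_{P_2}')$, treating $\hPA$ as a two-universal hash with output length $l_{\tilde{\theta}}+\log_2|\vars{T}_{\SV}|$; and (ii) privacy of $(\Kh_1,\KOTP_{1,\tilde{j}})$ given that the tag $T_{\CV}=h_1(\Kh_1,M)\oplus \KOTP_{1,\tilde{j}}$ has been revealed, using that $h_1'(\Kh_1\|\KOTP_{1,\tilde{j}},\cdot)$ is an $\epsMACa$-almost strong two-universal hash and therefore a strong extractor in the sense of Thm.~\ref{thm:OTPWegCarSec} and QLHL. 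Piece (i) yields the $\tfrac{1}{2}\times 2^{-\tfrac{1}{2}[\cdots]}$ term after subtracting $l_{\tilde{\theta}}+\log_2|\vars{T}_{\SV}|$ from the smooth min-entropy of $X_{P_2}'$, while piece (ii) yields the $\sqrt{|\vars{T}_{\CV}|\epsMACa-1+\cdots}$ term with optimization parameter $\varepsilon_2$ (the $|\vars{T}_{\CV}|$ factor inside the logarithm comes from the standard chain-rule removal of the classical $T_{\CV}$).

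The common input to both extractor bounds is a lower bound on $\Hmin^{\epssma}(X_{P_2}'|\beta E')$, which I will obtain by the same chain-rule reversal used in Thm.~\ref{thm:AKEKeySecIdealisedProtocol}: reverse Eve's channels through the client-validation and test-round steps, dropping $T_{\CV}$ (chain rule, costing $\log_2|\vars{T}_{\CV}|$) and the syndrome $S$ (costing $\leakEC$), leaving a state of the form $\rho_{X_{P_2}'\beta P_1P_2 X_{P_1}\tilde{\theta}_{\beta} E}$ that matches the hypothesis of Thm.~\ref{thm:BB84PRNG_MinEnt} with the roles of Alice and Bob swapped (the server now prepares, the client measures). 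Applying that theorem yields $\Hmin^{\sqrt{2\epsph}}(X_{P_2}'|\beta E')\geq N_{P_2,1}^{\tol}[1-\hbin(\ephtol')]-\leakEC-\log_2|\vars{T}_{\CV}|$, and substituting this into the two extractor bounds above produces exactly the stated expression; the $4\sqrt{2\epsph}$ pre-factor absorbs the two smoothing-parameter contributions (one per application of QLHL/strong extractor).

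The main obstacle I anticipate is piece (ii): unlike the QAKE proof, here the server tag $T_{\SV,\tilde{j}}$ is derived from $\hPA(R,X_{P_2}')$ and so is correlated with the same randomness that must mask $(\Kh_1,\KOTP_{1,\tilde{j}})$, which couples the two extractor steps. I will resolve this by performing piece (i) first to replace $(\tilde{\theta}_{\tilde{j}},T_{\SV,\tilde{j}})$ with a uniform independent copy at the cost of its QLHL bound, after which the remaining state has $R$ traced out and the conditioning for piece (ii) reduces to $X_{P_2}'$ alone — at which point the QAKE-style argument applies verbatim, and the $T_{\SV,\tilde{j}}$ term contributes only to $E$ and costs at most $\log_2|\vars{T}_{\SV}|$ in the chain rule inside piece (ii), consistent with the $\log_2|\vars{T}_{\SV}||\vars{T}_{\CV}|$ appearing in the first bracketed exponent of the statement.
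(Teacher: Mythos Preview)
Your decomposition for the honest-flow branch ($\beta=\alpha'$) is essentially the paper's: split via the triangle inequality into (i) QLHL for $\hPA(R,\cdot)$ producing $(\tilde{\theta}_{j'},T_{\SV,j'})$ and (ii) the strong-extractor bound for $h_1'(\Kh_1\Vert\KOTP_{1,j'},\cdot)$, then feed both with a smooth-min-entropy lower bound on $X_{P_2}'$ obtained by reversing the adversary's channels and invoking Thm.~\ref{thm:BB84PRNG_MinEnt}. That part is correct.

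The genuine gap is your treatment of the oracle branch $\alpha>\alpha'$. You assert that ``in both sub-cases $\hat{X}_{P_2}=X_{P_2}'$'' and then proceed as if $X_{P_2}'$ is the client's measurement string with high min-entropy. In the oracle branch, $\tilde{D}_{\CV}=1$ requires the server's reconstructed message to equal an \emph{adversary-chosen} oracle input $M$ from a prior round; the first component of $M$ has zero min-entropy conditioned on $E$, so Thm.~\ref{thm:BB84PRNG_MinEnt} simply does not apply. The paper does not try to extract in this branch at all: it observes that $\tildeDPE=1$ forces the oracle message's test-round bits $\tilde{X}_{\tilde{P}_1}$ to agree with the server's fresh uniform $X_{\tilde{P}_1}$ up to weight $\ebittol$, and since $M$ was fixed before $X$ was sampled this happens with probability at most $\epsmatch=2^{-P_{1,\tol}[1-\hbin(\ebittol)]}$ (Thm.~\ref{thm:2RoundCADS}). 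That sub-case is therefore peeled off as a separate $\epsmatch$ contribution, and only the $\beta=\alpha'$ case feeds into the extractor analysis.

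A smaller point: your ``main obstacle'' is not one. In piece (ii) the paper simply traces out $R$, $T_{\SV,j'}$ and $\tilde{\theta}_{j'}'$ before applying the strong-extractor lemma, so there is no coupling to resolve and no $\log_2\abs{\vars{T}_{\SV}}$ penalty in the second term of the statement (check: it contains only $\log_2\abs{\vars{T}_{\CV}}$). The $\log_2\abs{\vars{T}_{\SV}}\abs{\vars{T}_{\CV}}$ in the first exponent arises as $\log_2\abs{\vars{T}_{\SV}}+l_{\tilde{\theta}}$ from the $\hPA$ output length plus $\log_2\abs{\vars{T}_{\CV}}$ from the chain-rule removal of $T_{\CV}$ in the min-entropy bound for piece (i), not from any chain rule inside piece (ii).
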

\begin{proof}
By definition of $\tildeFS$, when $\tildeFS=1$, the announced index value $\alpha_\rr\leq \alpha'$, which leaves $\alpha'$ unchanged throughout the protocol.
The definition of $\tilde{D}_{\CV}=1$ requires additionally that the client's updated index be $\beta\geq \alpha'$.
Let us consider the case of $\beta>\alpha'$, which necessarily requires $\alpha>\alpha'$ for $\tilde{D}_{\CV}=1$.
In this case, the adversary has oracle access to $(\Kh_1,\KOTP_{1,\alpha'})$, where the message $M$ inserted to the oracle has to match the server's received message.
The inclusion of condition $\tildeDPE$ leaves the protocol similar to that presented in the proof of Thm.~\ref{thm:2RoundCADS}, where the probability of the message matching and thus $\Pr[\tildeFS=1]$ is upper bounded by $\epsmatch$.
Therefore, we can split off this scenario of $\beta\geq\alpha>\alpha'$, where the trace distance between the output state and ideal output would be bounded by $\epsmatch$.\\

What remains is the case where $\beta=\alpha'$, which has $\alpha\leq\alpha'$, where $\tilde{D}_{\CV}=1$ ensures that the messages from the client are not tampered with.
Since $\tildeFS=\DC=1$, the final indices are $\alpha=\alphaU=\alpha'=j'$, i.e. the corresponding ideal output state is
\begin{equation}
    \rhooutt{jjj'}_{\land \DC=\tildeFS=1\land \beta=j'}=\tilde{\Pi}^{j'j'j'}_{\alpha\alphaU\alpha'}\otimes\tau_{R\KOTP_{1,[j',m]}}\otimes\tilde{\tau}_{\tilde{\theta}_{[j',m]}\tilde{\theta}_{[j',m]}'}\otimes\rho_{E\land \DC=\tildeFS=1\land \beta=j'}.
\end{equation}
The actual output state can be written as
\begin{multline}
    \vars{P}'(\rho^{jjj'}_{LSE|j\leq j'})_{\land \DC=\tildeFS=1\land \beta=j'}=\tilde{\Pi}^{j'j'j'}_{\alpha\alphaU\alpha'}\otimes\tau_{\KOTP_{1,[j'+1,m]}}\otimes\tilde{\tau}_{\tilde{\theta}_{[j'+1,m]}\tilde{\theta}'_{[j'+1,m]}}\otimes\rho_{R\Kh_1\KOTP_{1,j'}\tilde{\theta}_{j'}\tilde{\theta}'_{j'}T_{\SV,j'}\tilde{T}_{\SV,j'}E''\land\Omega},
\end{multline}
where $\Omega:=\{\alpha_\rr\leq j',\alpha_\rr'=j',\tildeDPE=\tilde{D}_{\CV}=\DC=1\}$ (which matches $\{\DC=\tildeFS=1,\beta=j'\}$) and
\begin{align*}
    &\rho_{R\Kh_1\KOTP_{1,j'}\tilde{\theta}_{j'}\tilde{\theta}'_{j'}T_{\SV,j'}\tilde{T}_{\SV,j'}E''\land\Omega}=\Tr\,[\Pi_{\alpha_\rr\leq j'}\Pi_{\alpha_\rr'\tildeDPE\tilde{D}_{\CV}\DC}^{j'111}\vars{A}_{RX'P_2\rightarrow \tilde{T}_{\SV,j'}\tilde{\theta}_{j'}}\circ\vars{B}_{XS_{\rr}R\rightarrow T_{\SV,j'}\tilde{\theta}_{j'}'}\\
    &\circ\vars{C}^{j'}_{XX_{P_1,\rr}'P_{1,\rr}P_{2,\rr}S_{\rr}T_{\CV,\rr}E''\rightarrow\tildeDPE\tilde{D}_{\CV}E''}\circ\bigepsilon_{X_{P_1}'P_1P_2ST_{\CV}E'\rightarrow X_{P_1,\rr}'P_{1,\rr}P_{2,\rr}S_{\rr}T_{\CV,\rr}\alpha_\rr E''}\\
    &\circ\vars{A}^{jj'}_{T_{\SV,j',\rr}\tilde{T}_{\SV,j'}B\tilde{\theta}_{j'}\Kh_1\KOTP_{1,j'}\rightarrow \DC X'X_{P_1}'P_1P_2ST_{\CV}}\circ\bigepsilon_{T_{\SV,j'}QE\rightarrow \alpha_\rr'T_{\SV,j',\rr}BE'}\circ\vars{B}_{X\tilde{\theta}_{j'}'\rightarrow Q}(\rho_{SE}^{jjj'})],
\end{align*}
where $\tilde{D}_S$ and $\DC$ values has been absorbed into $E''$.
We can simplify the trace distance as
\begin{equation}
\begin{split}
    &\Delta(\vars{P}'(\rho^{jjj'}_{LSE|j\leq j'})_{\land\Omega},\rhooutt{jjj'}_{\land\Omega})\\
    \leq&\Delta(\rho_{R\Kh_1\KOTP_{1,j'}\tilde{\theta}_{j'}\tilde{\theta}'_{j'}T_{\SV,j'}\tilde{T}_{\SV,j'}E''\land\Omega},\tau_{R}\otimes\tilde{\tau}_{\tilde{\theta}_{j'}\tilde{\theta}_{j'}'}\otimes\tilde{\tau}_{T_{\SV,j'}\tilde{T}_{\SV,j'}}\otimes\rho_{E''\land\Omega})\\
    \leq&\Delta(\rho_{R\Kh_1\KOTP_{1,j'}\tilde{\theta}_{j'}\tilde{\theta}'_{j'}T_{\SV,j'}\tilde{T}_{\SV,j'}E''\land\Omega},\tau_{R}\otimes\tilde{\tau}_{\tilde{\theta}_{j'}\tilde{\theta}_{j'}'}\otimes\tilde{\tau}_{T_{\SV,j'}\tilde{T}_{\SV,j'}}\otimes\rho_{\Kh_1\KOTP_{1,j'}E''\land\Omega})\\
    &+\Delta(\rho_{\Kh_1\KOTP_{1,j'}E''\land\Omega},\tau_{\Kh_1\KOTP_{1,j'}}\otimes\rho_{E''\land\Omega})\\
    \leq&\Delta(\rho_{R\Kh_1\KOTP_{1,j'}\tilde{\theta}_{j'}'T_{\SV,j'}E''\land\Omega},\tau_{R\tilde{\theta}_{j'}'T_{\SV,j'}}\otimes\rho_{\Kh_1\KOTP_{1,j'}E\land\Omega})+\Delta(\rho_{\Kh_1\KOTP_{1,j'}E''\land\Omega},\tau_{\Kh_1\KOTP_{1,j'}}\otimes\rho_{E\land\Omega})\\
    \leq &p_{\tildeOmegaPE}\{2\epssma+\frac{1}{2}\times 2^{-\frac{1}{2}[\Hmin^{\epssma}(\hat{X}_{P_2}'|\Kh_1\KOTP_{1,j'}E'')_{\rho_{\land\Omega'|\tildeOmegaPE}}-l_{\tilde{\theta}}-\log_2\abs{\vars{T}_{\SV}}]}\\
    &+2(\varepsilon_{sm,2}+\varepsilon_2)+\sqrt{\abs{\vars{T}_{\CV}}\epsMACa-1+2^{\log_2\abs{\vars{T}_{\CV}}+\log_2\left(\frac{2}{\varepsilon_2}+1\right)-\Hmin^{\varepsilon_{sm,2}}(\hat{X}_{P_2}'|E'')_{\rho_{\land\Omega'|\tildeOmegaPE}}}}\},
\end{split}
\end{equation}
where the first inequality considers a specific state in the class of ideal output states where $T_{SV,j}$ and $\tilde{T}_{\SV,j}$ are equal and independent from the adversary.
The second inequality applies the triangle inequality, while the third inequality invokes $\tilde{D}_{\CV}=1$, which indicates that the variables $\hat{X}'_{P_2}=X_{P_2}'$ and thus the hashed outcomes $(\tilde{\theta}_{j'},\tilde{T}_{\SV,j'})=(\tilde{\theta}_{j'}',T_{\SV,j'})$, allowing the trace distance to reduce to one set of variables (more formally, this is simulated by introducing a ``copy" channel from one set to another and removing the CPTP ``copy" channel).
The fourth inequality applies the QLHL and strong extractor property~\cite{Tomamichel2011_QLHL}, noting again that $\Omega$ implies $\hat{X}'_{P_2}=X_{P_2}'$, and that $\Omega=\Omega'\land\tildeOmegaPE$, which is defined later.\\

Here, we analyse the first min-entropy term, which would also acts as a lower bound the second term.
Let us list explicitly the conditions of $\Omega$,
\begin{enumerate}
    \item $\alpha_\rr\leq j'$, $\alpha_\rr'=j'$, $\DC=1$: Conditions associated with index selection, and passing the initial client check.
    \item $(P_{2,\rr},P_{1,\rr},X_{P_1,\rr}',S_{\rr})=(P_1,P_2,X_{P_1}',S)$: Message sent from client to server is unchanged.
    \item $T_{\CV}=T_{\CV,\rr}$: The tag sent from the client to server is unchanged.
    \item $X_{P_2}'=\hat{X}_{P_2}$: The server's corrected bitstring matches with the client's bitstring.
    \item $\frac{wt(X_{P_{1,\rr}}\oplus X'_{P_1,r}[\{P_{1,\rr}\}])}{N_{P_{1,\rr}}}\leq e_{b,tol}$, $\abs{P_{1,\rr}}\geq P_{1,\tol}$: Remaining parameter estimation checks.
    \item $\tildeOmegaPE$: Standard parameter estimation checks.
\end{enumerate}
where we list the events for the first five events from $\Omega_1$ to $\Omega_5$ as being part of $\Omega'$.
We note here that $\tildeOmegaPE$ estimates directly on the $(P_1,P_2,X_{P_1}')$ values instead of the received values, since $\OmegaPE\land\Omega_2=\tildeOmegaPE\land\Omega_2$.
We first simplify by noting that $\hat{X}_{P_2}'$ is computed from $X_{P_{2,\rr}}$ and the syndrome $S_{\rr}$ received.
Therefore, by data processing inequality and the chain rule for smooth min-entropy, we have that
\begin{equation}
    \Hmin^{\epssm}(\hat{X}_{P_2}'|\Kh_1\KOTP_{1,j'}E'')_{\rho_{\land\Omega'|\tildeOmegaPE}}\geq \Hmin^{\epssm}(X_{P_{2,\rr}}|\Kh_1\KOTP_{1,j'}S_{\rr}E'')_{\rho_{\land\Omega'|\tildeOmegaPE}},
\end{equation}
noting that $S_{\rr}$ is in general able to be determined from $E''$ (adversary stores a copy of the $S_{\rr}$ it sends to the server).
Further imposing that $P_{2,\rr}=P_2$ from the $\Omega_2$ condition, we can simplify the min-entropy before removing the conditions of $\Omega_2$ to $\Omega_5$, i.e.
\begin{equation}
    \Hmin^{\epssm}(X_{P_{2,\rr}}|\Kh_1\KOTP_{1,j'}S_{\rr}E'')_{\rho_{\land\Omega'|\tildeOmegaPE}}\geq \Hmin^{\epssm}(X_{P_2}|\Kh_1\KOTP_{1,j'}S_{\rr}E'')_{\rho_{\land\Omega_1|\tildeOmegaPE}}.
\end{equation}
Here, we can reverse the CPTP map used to generate $S_{\rr}E''$, leaving us with
\begin{equation}
    \Hmin^{\epssm}(X_{P_2}|\Kh_1\KOTP_{1,j'}S_{\rr}E'')_{\rho_{\land\Omega_1|\tildeOmegaPE}}\geq \Hmin^{\epssm}(X_{P_2}|\Kh_1\KOTP_{1,j'}X_{P_1}'P_1P_2ST_{\CV}E')_{\rho_{\land\Omega_1|\tildeOmegaPE}}.
\end{equation}
We can remove the syndrome $S$ and the tag $T_{\CV}$ utilising the min-entropy chain rule~\cite{Tomamichel2015_ITBook}.
Furthermore, with the removal of $T_{\CV}$, $\Kh_1$ and $\KOTP_{1,j'}$ is uncorrelated with any other variables in the state and can be removed, leaving us with
\begin{equation}
    \Hmin^{\epssm}(X_{P_2}|\Kh_1\KOTP_{1,j'}X_{P_1}'P_1P_2ST_{\CV}E')_{\rho_{\land\Omega_1|\tildeOmegaPE}}\geq \Hmin^{\epssm}(X_{P_2}|\tilde{\theta}X_{P_1}'P_1P_2E')_{\rho_{|\tildeOmegaPE}}-\leakEC-\log_2\abs{\vars{T}_{\CV}},
\end{equation}
where we also remove the $\Omega_1$ condition, and included additional $\tilde{\theta}$ conditioning.
Here, the protocol steps are simply:
\begin{enumerate}
    \item The server prepares decoy state BB84, using the basis generated from $\tilde{\theta}$ and bit value $X$.
    \item The adversary performs its attack on the state, mapping $QE$ to $PBE'$.
    \item The client randomly select $P_1$ rounds and measure them in the basis generated from $\tilde{\theta}$ and outputs $X_{P_1}'$.
    \item Based on $X$, $P_1$, $P_2$ and $X_{P_1}'$, the parameter estimation decision is made (event $\tildeOmegaPE$),
\end{enumerate}
This is exactly the same protocol steps as that in Thm.~\ref{thm:BB84PRNG_MinEnt}, and we can utilise the same analysis to demonstrate that
\begin{equation}
    \Hmin^{\epssm}(\hat{X}_{P_2}'|\Kh_1\KOTP_{1,j'}E'')_{\rho_{\land\Omega'|\tildeOmegaPE}}\geq N_{P_2,1}^{\tol}[1-\hbin(\ephtol')]-\leakEC-\log_2\abs{\vars{T}_{\CV}},
\end{equation}
with $\epssm=\sqrt{2\epsph}$.
The second min-entropy term can follow a similar analysis, but since $\Kh_1$ and $\KOTP_{1,j'}$ is traced out, $T_{\CV}$ would be a random string, and can be simply removed without incurring the $\log_2\abs{\vars{T}_{\CV}}$ penalty.
As such, the overall trace distance is
\begin{multline}
    \Delta(\Pi_{\DC\tildeFS}^{11}\vars{P}'(\rho^{jjj'}_{LSE})\Pi_{\DC\tildeFS}^{11},\rhopoutt{jjj'}_{\land \Omega}) \leq 4\sqrt{2\epsph}+2\varepsilon_2+\frac{1}{2}\times 2^{-\frac{1}{2}[N_{P_2,1}^{\tol}[1-\hbin(\ephtol')]-l_{\tilde{\theta}}-\log_2\abs{\vars{T}_{\SV}}\abs{\vars{T}_{\CV}}-\leakEC]}\\
    +\sqrt{\abs{\vars{T}_{\CV}}\epsMACa-1+2^{\log_2\abs{\vars{T}_{\CV}}+\log_2\left(\frac{2}{\varepsilon_2}+1\right)-N_{P_2,1}^{\tol}[1-\hbin(\ephtol')]+\leakEC}}.
\end{multline}
\end{proof}

\end{document}